\def\th@plain{%
  \thm@headfont{\bfseries}
  \itshape 
}
   \renewcommand{\headrulewidth}{0pt} 
   \renewcommand{\headrulewidth}{0pt} 
   \renewcommand{\headrulewidth}{0pt} 
   \renewcommand{\headrulewidth}{0pt} 
\numberwithin{figure}{chapter} 
\numberwithin{table}{chapter}
\numberwithin{equation}{chapter}
\numberwithin{section}{chapter}
\newcommand\addeqnumber{\addtocounter{equation}{1}\tag{\theequation}}
\newcommand{\defeq}{\stackrel{\text{def}}{=}}
\newcommand{\beq}{\begin{equation}}
\newcommand{\eeq}{\end{equation}}
\newcommand{\beqn}{\begin{equation*}}
\newcommand{\eeqn}{\end{equation*}}
\newcommand{\cA}{\mathcal{A}}
\newcommand{\Cl}{\operatorname{Cl}}
\newcommand{\g}{\mathfrak{g}}
\newcommand{\so}{\mathfrak{so}}
\newcommand{\su}{\mathfrak{su}}
\newcommand{\slc}{\mathfrak{sl}}
\newcommand{\SemiSpin}{\operatorname{SemiSpin}}
\newcommand{\cH}{\mathcal{H}}
\newcommand{\cL}{\mathcal{L}}
\newcommand{\cC}{\mathcal{C}}
\newcommand{\cU}{\mathcal{U}}
\newcommand{\cT}{\mathcal{T}}
\newcommand{\ket}[1]{\vert{#1}\rangle}
\newcommand{\braket}[1]{\langle{#1}\rangle}
\newcommand{\ketbra}[2]{\vert{#1}\rangle\langle{#2}\vert}
\newcommand{\innerprod}[2]{\langle{#1},{#2}\rangle}
\newcommand{\wvec}[1]{\ket{#1}}
\newcommand{\wpsi}[2]{\psi_{#2}^{(#1)}}
\newcommand{\cB}{\mathcal{B}}
\newcommand{\cE}{\mathcal{E}}
\newcommand{\cF}{\mathcal{F}}
\newcommand{\cM}{\mathcal{M}}
\newcommand{\cN}{\mathcal{N}}
\newcommand{\cV}{\mathcal{V}}
\newcommand{\Id}{\operatorname{Id}}
\newcommand{\ad}{\operatorname{ad}}
\newcommand{\sgn}{\operatorname{sgn}}
\newcommand{\Sym}{\mathrm{Sym}}
\newcommand{\lspan}{\operatorname{span}}
\newcommand{\tr}{\operatorname{tr}}
\newcommand{\C}{\mathbb{C}}
\newcommand{\F}{\mathbb{F}}
\newcommand{\Q}{\mathbb{Q}}
\newcommand{\R}{\mathbb{R}}
\newcommand{\Z}{\mathbb{Z}}
\newcommand{\U}{\mathrm{U}}
\newcommand{\OO}{\mathrm{O}}
\newcommand{\SO}{\mathrm{SO}}
\newcommand{\SU}{\mathrm{SU}}
\newcommand{\Spin}{\mathrm{Spin}}
\newcommand{\Pin}{\mathrm{Pin}}
\newcommand{\Isom}{\mathrm{Isom}}
\newcommand{\floor}[1]{\lfloor{#1}\rfloor}
\newcommand{\ZZ}[1]{(\Z/2\Z)^{#1}}
\newcommand{\wt}{\operatorname{wt}}
\colorlet{darkblue}{blue!70!black}
\colorlet{darkred}{red!70!black}
\newenvironment{tric}
{\begin{tikzpicture}[scale=0.869,semithick,draw=darkblue,double distance=1.1,
    baseline={([yshift=-.8ex]current bounding box.center)}] }
{\end{tikzpicture}}
\newtheorem{theorem}{Theorem}[section]
\newtheorem{prop}{Proposition}[section]
\newtheorem{lemma}{Lemma}[section]
\newtheorem{corollary}{Corollary}[section]
\newtheorem{definition}{Definition}[section]
\theoremstyle{remark}
\newtheorem{example}{Example}[section]
\begin{document}
   \frontmatter

    \pagestyle{prelim}

    %
    \fancypagestyle{plain}{%
        \fancyhf{}
        \cfoot{\thepage}
    }%
\begin{center}
    \null\vfill
    A Quantum Analog of Delsarte's Linear Programming Bounds
    \\
    \bigskip
    By \\
    \bigskip
    RUI S. OKADA \\
    DISSERTATION \\
    \bigskip
    Submitted in partial satisfaction of the requirements for the
    degree of \\
    \bigskip
    DOCTOR OF PHILOSOPHY \\
    \bigskip
    in \\
    \bigskip
    Mathematics \\
    \bigskip
    in the \\
    \bigskip
    OFFICE OF GRADUATE STUDIES \\
    \bigskip        
    of the \\
    \bigskip
    UNIVERSITY OF CALIFORNIA \\
    \bigskip
    DAVIS \\
    \bigskip
    Approved: \\
    \bigskip
    \bigskip
    \makebox[3in]{\hrulefill} \\
    Greg Kuperberg, Chair \\
    \bigskip
    \bigskip
    \makebox[3in]{\hrulefill} \\
    Eric Babson \\
    \bigskip
    \bigskip
    \makebox[3in]{\hrulefill} \\
    Bruno Nachtergaele \\
    \bigskip
    Committee in Charge \\
    \bigskip
    2023 \\
    \vfill
\end{center}

    \newpage



    %
    \doublespacing

    \tableofcontents
    \newpage

\section*{Abstract}

This thesis presents results in quantum error correction within the context of finite dimensional quantum metric spaces. In classical error correction, a focal problem is the study of large codes of metric spaces. For a class of finite metric spaces that are also metric association schemes, Delsarte introduced a method of using linear programming to compute upper bounds on the size of codes. Within quantum error correction, there is an analogous study of large quantum codes of quantum metric spaces and, in the setting of quantum Hamming space, a quantum analog of Delsarte's method was discovered by Shor and Laflamme and independently by Rains. Later, Bumgardner introduced an analogous method for single-spin codes, or quantum codes related to the Lie algebra $\su(2)$. The main contribution of this thesis is a generalization of the results of Shor, Laflamme, Rains, and Bumgardner to a class of finite dimensional quantum metric spaces analogous to metric association schemes of the classical case. This arguably gives a quantum analog of Delsarte's linear programming bounds for association schemes.

In Chapter 1, we first review classical error correction through metric spaces. We then review the mathematical framework of quantum probability, quantum operations, and quantum error correction. In Chapter 2, we review the notion of quantum metrics introduced by Kuperberg and Weaver, which play a role in quantum error correction analogous to metrics in classical error correction. Mathematically motivating examples of quantum metrics arising from the representation theory of Lie algebras are presented. We also present examples of new quantum codes for some of these quantum metrics. In Chapter 3, we present our main result, which is a method of using linear programming to compute upper bounds on the dimension of quantum codes. This method is valid for a class of finite quantum metric spaces that satisfies the conditions of being multiplicity-free and 2-homogeneous. We also present a secondary result that strengthens the bounds when the quantum metric exhibits the property of self-duality. This result is a generalization of Rains' quantum shadow enumerators for binary quantum Hamming space. Lastly, we derive formulas for different families of discrete orthogonal functions needed to compute the linear programming bounds for the quantum metrics presented in Chapter 2.
    \newpage

    \section*{Acknowledgments}
I would first like to thank my advisor, Greg Kuperberg, for his guidance and encouragement throughout my time as a graduate student. Greg introduced and taught me much about quantum information, error correction, and other mathematics, which has been an intellectually fulfilling experience. The work on this thesis also would not have been possible without his expertise and patience.

I would like to express thanks to Bruno Nachtergaele for serving as the chair for my qualifying exam and Eric Babson, Marina Radulaski, and Andrew Waldron for serving on my qualifying exam committee. I want to again thank Eric Babson and Bruno Nachtergaele for also serving on my thesis committee.

I would like to thank the members of my cohort, especially Joseph Pappe, Dong Min Roh, Haotian Sun, and Austin Tran for their friendship and their camaraderie in studying mathematics together. Thank you also to my academic brother, Haihan Wu, for being a great friend and colleague.

I would like to thank the Uriu and Adams families, especially Don, Steven, Jan, and Tom, for their hospitality during my time living in Davis.

Lastly, thank you to my family, for their love and support.

\vspace*{\fill}

The work on this thesis was funded in part by NSF grants CCF-1716990 and CCF-2009029.

    \mainmatter

    \pagestyle{maintext}

    %
    \fancypagestyle{plain}{%
        \renewcommand{\headrulewidth}{0pt}
        \fancyhf{}
        \cfoot{\thepage}
    }%

    \chapter{Introduction and Background}
    \label{ch:Intro}

Error correction is the study of reliable information processing within noisy systems. Mathematically, error correction can be formulated through metric spaces, which represent limits of certain noise models. Codes, which are subsets of the given metric space, can be used in practice to mitigate the effects of noise. Two important aspects of codes are the notion of size and the minimum distance, which is the smallest distance between distinct elements of the code. The size of the code corresponds to how much data can be encoded using the code, while the minimum distance corresponds to error mitigation capabilities. Intuitively, the larger the minimum distance a code has, the smaller the code must be hence a naturally arising problem is finding the largest code for various designated minimum distances. There are two sides to this problem, finding lower bounds on the size of codes (which usually entails explicitly constructing codes) and finding upper bounds on the size of codes. Knowing upper bounds is useful since it may prove a code to be optimal (by meeting this bound), and otherwise gives hints for possible codes. For a class of finite metric spaces exhibiting strong symmetry properties, specifically finite metric spaces that form metric association schemes, Delsarte introduced a method of computing upper bounds using linear programming \cite{Delsarte,SPLAG}. In the case of the binary Hamming metric, Delsarte's method provably implies many known elementary upper bounds. Similar methods for certain error models in quantum error correction are also known.

As the name suggests, quantum error correction is the study of reliable information processing within quantum systems. The study of quantum error correction has been of interest for the purpose of mitigating the effects of noise in quantum computers and quantum communication systems. Analogous to the classical case, quantum metrics, as introduced by Kuperberg and Weaver \cite{KW}, represent certain limits of quantum noise models, and quantum codes in quantum metric spaces can be used to mitigate noise. There is also an analogous notion of size and minimum distance of quantum codes, which leads to the problem of finding the largest codes for various designated minimum distances. For qubit codes (or codes in quantum Hamming space), Shor and Laflamme \cite{SL} and independently Rains \cite{Rains:mono} introduced a method of computing upper bounds using linear programming. Later, for single-spin codes (or quantum codes of the $\su(2)$ quantum metric), Bumgardner \cite{Bumg} introduced an analogous method.

The main contribution of this thesis is presented in Chapter 3, which is a generalization of the linear programming methods of Shor, Laflamme, Rains, and Bumgardner to a class of quantum metric spaces that exhibit strong symmetry properties. Namely, these (finite dimensional) quantum metric spaces satisfy the conditions of being multiplicity-free and 2-homogeneous, which is analogous to finite classical metric spaces that form association schemes. We provide examples of such quantum metric spaces arising from representation theory. In particular, our list of examples includes $q$-ary quantum Hamming space, the $\su(2)$ quantum metrics, and quantum metrics arising from the symmetric power representations of $\su(q)$ for $q \geq 3$, the exterior representations of $\su(n)$ for $n \geq 3$, quantum metrics related to the Clifford algebra, the spinorial representation of $\so(2n+1)$, and the semispinorial representations of $\so(2n)$. For these types of quantum metric spaces, we prove that a method of computing upper bounds using linear programming exists and give explicit methods of computing the upper bounds for our list of examples of quantum metric spaces. We refer to the upper bounds as the quantum linear programming bounds. This formulation of the quantum linear programming bounds for finite dimensional quantum metric spaces is arguably a quantum analog of Delsarte's linear programming bounds. A secondary result we contribute is a generalization of Rains' shadow enumerators \cite{Rains:shadow} to certain multiplicity-free, 2-homogeneous quantum metric spaces that we call self-dual. The shadow enumerators sharpen the quantum linear programming bounds for binary quantum Hamming space and, generalizing this, our result sharpens the quantum linear programming bounds for self-dual quantum metric spaces. We give a few tables of numerically computed upper bounds on quantum codes and derive formulas for upper bounds on the size of quantum codes of minimum distance 2 for each of our listed quantum metric spaces. Lastly, the formulations of the quantum linear programming bounds are dependent on certain invariants of the quantum metric spaces that we call the $W_t(j)$ coefficients. We give explicit formulas on how to compute these invariants for our list of examples.

This thesis is roughly divided into two parts consisting of the first two chapters and the third chapter of the main results. In this first chapter, we give a review of relevant background topics. We first review definitions and concepts from classical error correction in metric spaces. For more in-depth mathematical references, we recommend \cite[Chapter~3]{SPLAG} and \cite{VanLint}. Secondly, we review quantum probability, which serves as a mathematical foundation for quantum information. Lastly, we review the mathematical foundation of quantum error correction and motivate quantum metrics as a way to approach quantum error correction. In Chapter 2, we formulate quantum error correction in quantum metric spaces, analogous to how classical error correction can be formulated through metric spaces. The first two sections on the fundamental aspects of quantum metrics and quantum codes are a combined review of material from \cite{KW} and \cite{KLV}. In Section \ref{sec:qmetric_from_algebra}, examples of quantum metrics are presented, some of which have been studied previously in the context of quantum error correction. Each of the examples presented is finite-dimensional, multiplicity-free, and 2-homogeneous and will be relevant in Chapter 3 for the quantum linear programming bounds. In the last two sections, we present two new families of quantum codes for the $\su(2)$ quantum metrics and the Clifford quantum metrics. Finally, as stated earlier, our main contributions are presented in Chapter 3.

\section{Classical Error Correction}\label{sec:classical_ec}

A fundamental part of classical error correction is the study of error correcting codes. One setting for error correcting codes is in metric spaces, where we interpret points of the space as messages to be transmitted over a communications line or data to be stored. The distance between messages represents the likelihood that noise may transform the messages into each other. More specifically, a smaller distance represents a higher chance of two messages being confused with one another, while messages with a larger distance have a smaller chance of being confused. A strategy to mitigate the effects of noise is to restrict usage of messages to some chosen subset of the metric space, which gives a way to introduce error detection and correction processes. In this context, a subset of the metric space is called a \textbf{code} and elements of the code are called \textbf{codewords}.

The motivating example from classical error correction is binary Hamming space \cite[Chapter~3]{SPLAG} \cite{VanLint} where the metric space is the set of length $n$ binary vectors, $\F_2^n$, for some fixed $n \geq 1$ with the Hamming metric $d$ defined as \beqn d(x,y) = |\{i \mid x_i \neq y_i \}|.\eeqn Binary Hamming space arises directly from a noise model called the binary symmetric channel. In this noise model, a real number $0 < p < 1/2$ is fixed and each separate component of a transmitted or stored binary vector has a chance of flipping with probability $p$. For $x, y \in \F_2^n$, a calculation yields that this noise turns $x$ into $y$ or $y$ into $x$ with probability \beq\label{eq:prob_transition} p^{d(x,y)} (1 - p)^{n - d(x,y)} \eeq which directly illustrates that a smaller distance between messages relates to a higher likelihood of the messages being confused with each other. If $p$ is sufficiently small, then the most likely transitions from any given $x$ would be to those $y$'s where $d(x,y)$ is small. This motivates the strategy of using codes that have large distances between codewords.

A \textbf{binary code} $C$ of length $n$ is simply a subset of $C \subseteq \F_2^n$. The \textbf{minimum distance} of $C$ is defined as \beqn d(C) = \displaystyle\min_{\substack{x, y \in C \\ x \neq y}} d(x,y) \eeqn i.e. the smallest distance between any two distinct codewords of $C$. The minimum distance generally conveys how well a code may detect and correct errors. For example, suppose we have transmitted a codeword and noise has flipped $t$ of the components of the codeword where $1 \leq t \leq n$. A general error detection process is to check if the resulting binary vector is a codeword and if it is not then we may deduce that an error has occurred. If $t < d(C)$ then the resulting vector is not a codeword, so the error will be detected. A general error correction process is to replace the resulting binary vector with the nearest codeword with respect to the Hamming metric. If $t < \floor{\frac{d(C)-1}{2}}$, then the original codeword is the unique nearest codeword and hence this process will correct the errors. The reasoning behind this process can be illustrated geometrically in that the balls of radius $\floor{\frac{d(C)-1}{2}}$ centered at each codeword are disjoint and hence form a ball-packing. Thus, geometrically binary error correcting codes are exactly ball-packings of binary Hamming space. Again through equation (\ref{eq:prob_transition}), the larger the minimum distance of a code is, the more reliable these error detection and correction processes will be on average. Of course, optimizing the minimum distance of a code will necessarily decrease the size of the code and hence decrease the number of possible initial messages. One of the focal problems involving binary codes is thus to find the largest codes for each minimum distance and length.

More generally, for some given concrete metric space, a focal problem of error correction is finding the largest codes for each minimum distance. As binary codes are designed for a certain error model, studying different metric spaces corresponds to studying error correction methods for different error models. Below are a few examples of metric spaces that are relevant in error correction, and some also happen to be relevant to classic packing problems in geometry.

\begin{example}[$q$-ary Hamming Space]
Let $n \geq 1$, $q \geq 2$, and $Q$ a set of $q$ elements. $q$-ary Hamming space is the metric space $M = Q^n$ with the $q$-ary Hamming metric $d(x,y) = |\{k \mid x_k \neq y_k \}|$.
\end{example}

\begin{example}[Johnson Space]
Let $1 \leq w \leq n$. $M \subseteq F_2^n$ is the set of vectors with $w$ ones and $d$ is the binary Hamming metric restricted to $M$. Codes of $M$ are called constant weight codes.
\end{example}

\begin{example}[Lee Metric]
Let $n \geq 1$, $q \geq 2$, and $M = (\Z/q\Z)^n$. Let $d$ be the metric on $M$ given by \beqn d(x,y) = \sum_{k = 1}^{n} \min(|x_k-y_k|, q - |x_k - y_k|). \eeqn Note that each component can be viewed as the metric space with $q$ points evenly arranged on a circle, and the total distance is given by the sum of the distances of each component. Intuitively, this metric can be described as the ``combination lock" metric.
\end{example}

\begin{example}[$n$-Sphere]
Let $M = S^n \subseteq \R^{n+1}$ be the unit $n$-sphere and $d$ the arclength distance (or equivalently the angular distance between unit vectors). Codes of $M$ are called spherical codes. Spherical codes of nontrivial minimum distance correspond to a packing of spherical caps on $M$. In particular, given a spherical code $C$ of minimum distance $t$ where $t = \pi/3$ or $t = 60^{\circ}$, we may arrange unit $n$-spheres tangent to $M$ at the points of $C$. Such an arrangement of spheres is called a sphere kissing arrangement, and the problem of finding the largest minimum distance set is called the kissing number problem. Note that although $M$ is an infinite set, spherical codes with positive minimum distance must be finite.
\end{example}

\begin{example}[Sphere Packings in $\R^n$]
Let $M = \R^n$ and $d$ be the Euclidean distance. A code of minimum distance $2t$ is equivalent to a packing of $(n-1)$-dimensional spheres of radius $t$ in $M$. In this case, codes with positive minimum distance can be countably infinite sets, and hence the notion of a code being ``larger" than another must be refined. The packing density, which is roughly the ratio of the volume of space covered by the spheres in the packing, is used instead. The problem of finding the densest sphere packings is a classic problem in geometry, which also is related to error correction for analog signals \cite{SPLAG}.
\end{example}

\section{Quantum Probability}

In this section, we review quantum probability. Classical information systems are mathematically formulated in terms of classical probability, while quantum systems are formulated in terms of Hilbert spaces and operators. Quantum probability is a formulation that includes classical, quantum, and mixed classical-quantum systems. As such, quantum probability is said to be a quantum generalization of classical probability. Our motivation for introducing these notions is partially due to how quantum metrics are also a quantum generalization of classical metrics on sets.

In quantum probability, to each probabilistic system, we assign an operator algebra called a von Neumann algebra. When this operator algebra is commutative, the system can be realized as a classical probabilistic system. When this operator algebra is the least commutative, meaning that the center is trivial, the system can be realized as a quantum system. There is also the hybrid or semiquantum regime, where the operator algebra is neither commutative nor has trivial center. For a reference on quantum systems in the context of quantum information, see \cite{NC}. For a reference on quantum probability, see \cite[Chapter~1]{QPNotes}. 

\subsection{Quantum Probability Systems}

Let $\cH$ be a complex Hilbert space and $\cB(\cH)$ the algebra of bounded operators on $\cH$. A \textbf{${}^\ast$-subalgebra} of $\cB(\cH)$ is a complex subalgebra closed under the adjoint operation. A \textbf{von Neumann algebra} is a ${}^\ast$-subalgebra $\cM \subseteq \cB(\cH)$ that contains the identity operator and is closed under the weak operator topology. Although the definition of a von Neumann algebra is partially topological, there is a completely algebraic characterization. Given a subset $S \subseteq \cB(\cH)$, the \textbf{commutant} of $S$ is defined as the set \beqn S' = \{X \in \cB(\cH) \mid XY = YX \text{ for all } Y \in S\} \eeqn i.e. all operators in $\cB(\cH)$ that commute with $S$. The von Neumann double commutant theorem implies that any ${}^\ast$-subalgebra $\cM$ containing the identity is a von Neumann algebra if and only if $\cM = \cM''$. A von Neumann algebra can also be characterized as an abstract $C^\ast$-algebra that has a predual space \cite{Sakai}. This gives a definition of von Neumann algebras (or more specifically, a $W^\ast$-algebra) without having to realize $\cM$ as an algebra of operators.

The elements of $\cM$ represent measurable quantities of a given system and are called \textbf{observables}, \textbf{measurables}, or even \textbf{random variables}. Observables have a classification reflecting the type of quantity being measured. $x \in \cM$ is \textbf{real} or \textbf{self-adjoint} if $x = x^\ast$ which represents real-valued measurements. The subset of self-adjoint elements forms a real vector subspace of $\cM$. Observables are often defined to be the self-adjoint elements and not general elements of the von Neumann algebra. $x \in \cM$ is \textbf{positive} if $x = yy^\ast$ for some $y \in \cM$ which represents nonnegative real-valued measurements. $p \in \cM$ is a \textbf{projection} or \textbf{event} if $p = p^\ast$ and $p^2 = p$. These elements represent boolean-valued (0 or 1) measurements.

The other important elements related to $\cM$ are states. As the name suggests, states represent information about the system in various situations. For example, if our von Neumann algebra represents a classical system storing some message, then for each message there should be a corresponding state. If the message is somehow randomized with respect to some probability distribution, then there should also be a corresponding state that represents this scenario. Mathematically, states are elements of the predual of $\cM$ that satisfy certain properties. Every von Neumann algebra $\cM$ has a predual, meaning that $\cM$ is the dual space of some Banach space. Elements of the predual can also be naturally identified as elements of the dual space. If $\rho$ is in the predual then for any $x \in \cM$ we may identify $\rho$ as a linear functional on $\cM$ by $\rho(x) \defeq x(\rho)$. If $\rho$ is an element of the dual space, then we say $\rho$ is positive if $\rho(x) \geq 0$ for positive $x \in \cM$. $\rho$ is normalized if $\rho(I_{\cH}) = 1$. A \textbf{state} is an element of the predual of $\cM$ that is positive and normalized.

The predual of the von Neumann algebra $\cM = \cB(\cH)$ is isometrically isomorphic to the space of trace-class operators, i.e. $\rho \in \cB(\cH)$ such that $\tr(|\rho|)$ is finite. By properties of trace-class operators, each $\rho$ is identified as an element of the predual by $\rho(x) = \tr(\rho x)$. If $\rho$ is positive, then $\rho$ is a positive operator. If $\rho$ is furthermore normalized, then $\tr(\rho) = 1$. If $\cH$ is finite dimensional, then $\rho$ is just a positive semi-definite matrix of trace $1$ and may be referred to as \textbf{density matrix}. For a general von Neumann algebra $\cM \subseteq \cB(\cH)$, the predual can be realized as the space of trace-class operators quotient by the trace-class operators $\rho$ such that $\tr(\rho x) = 0$ for all $x \in \cM$.

The set of all states is, geometrically, a convex set called the \textbf{state space}. A state is called \textbf{pure} if it is not a nontrivial convex combination of two states (i.e. pure states are extreme points). Otherwise, states are called \textbf{mixed}. In error correction for classical systems, pure states correspond to transmittable messages, while mixed states correspond to randomized messages (possibly due to noise). In quantum systems, pure states additionally exhibit quantum randomness, while mixed states exhibit both classical and quantum randomness.

The case where $\cM$ is finite dimensional gives a limited but concrete view of quantum probability being a generalization of classical probability. If $\cM$ is a finite dimensional ${}^\ast$-subalgebra of $\cB(\cH)$, then $\cM$ is automatically closed under the weak operator topology and hence is a von Neumann algebra. The Artin-Wedderburn theorem furthermore implies that $\cM$ is isomorphic to $\bigoplus_{k = 1}^{r} M_{n_k}$, where each $M_{n_k}$ is the algebra of complex $n_k \times n_k$ matrices. The center of this algebra is the span of $I_{n_k} \in M_{n_k}$ for $1 \leq k \leq r$ hence $\cM$ is commutative when each $n_k = 1$ and the center is smallest when $r = 1$ (i.e. $\cM$ is a full matrix algebra). From this, we may identify examples of cases when $\cM$ is classical and when $\cM$ is quantum.

\begin{example}[Classical Bit] Let $\cH = \C^2$ and $\cM = D_2(\C)$ be the diagonal matrices. $\cM$ is the von Neumann algebra for a classical bit. $\cM$ has two nontrivial projections $P_0$ and $P_1$ given by \beqn P_0 = \begin{pmatrix}
1 & 0 \\
0 & 0
\end{pmatrix}, P_1 = \begin{pmatrix}
0 & 0 \\
0 & 1
\end{pmatrix}\eeqn and we may interpret these as the events that the bit is $0$ and the bit is $1$ respectively. Since $\cM$ is finite dimensional, the dual space of $\cM$ is isomorphic to the predual. There exists linear functions we call $[0]$ and $[1]$ such that $[0](P_0) = 1$, $[0](P_1) = 0$, $[1](P_0) = 0$, and $[1](P_1) = 1$. $[0]$ and $[1]$ form a basis of the dual space and are also pure states which represent when the system is $0$ with probability $1$ and $1$ with probability $1$ respectively. As elements of $D_2(\C)$, they are the same matrices as $P_0$ and $P_1$. A general state is therefore of the form $\rho = p[0] + (1 - p)[1]$ for some $0 \leq p \leq 1$, and represents when the bit is $0$ with probability $p$ and $1$ with probability $1 - p$. In other words, a classical state is exactly a probability distribution.
\end{example}

\begin{example}[Qubit] Let $\cH = \C^2$ and $\cM = M_2$, the algebra of $2 \times 2$ complex matrices. $\cM$ is the von Neumann algebra for a two-state quantum system or qubit. The nontrivial projections are rank one projections, i.e. matrices of the form $\ketbra{\psi}{\psi}$ for some normalized $\ket{\psi} \in \C^2$. The three Pauli matrices \beq\label{eq:pauli_matrices}
\sigma_x = \begin{pmatrix} 0 & 1 \\ 1 & 0 \end{pmatrix},
\sigma_y = \begin{pmatrix} 0 & -i \\ i & 0 \end{pmatrix},
\sigma_z = \begin{pmatrix} 1 & 0 \\ 0 & -1 \end{pmatrix}
\eeq along with $I_2$ form a real basis of the space of Hermitian matrices, hence any state is of the form \beqn \rho = \frac{I_2 + a\sigma_x + b \sigma_y + c \sigma_z}{2} \eeqn where $a, b, c \in \R$. It turns out that $\rho$ is a state if and only if $a^2 + b^2 + c^2 \leq 1$, so the state space is geometrically a 3-dimensional ball called the Bloch ball. The pure states by definition form the boundary sphere called the Bloch sphere. By the spectral theorem, every positive semi-definite matrix of trace $1$ is of the form $p \ketbra{\psi_1}{\psi_1} + (1 - p)\ketbra{\psi_2}{\psi_2}$ where $0 \leq p \leq 1$ and $\ket{\psi_1}, \ket{\psi_2} \in \C^2$ are orthonormal. We may see then that the pure states correspond to matrices of the form $\ketbra{\psi}{\psi}$ for some normalized $\ket{\psi} \in \C^2$. As random variables (or projections) we may interpret $\ketbra{\psi}{\psi}$ as the event that the state is $\ketbra{\psi}{\psi}$.
\end{example}

More generally, we may say quantum systems correspond to von Neumann algebras with center equal to the span of $I_{\cH}$. Such von Neumann algebras are called factors. All finite dimensional factors are of the form $\cB(\cH)$ where $\cH$ is of course also finite dimensional. Concretely, we may have $\cH = \C^d$ and $\cM = M_d$ gives a $d$ state quantum system called a qudit. Although the geometry of the whole state space is more complex, the pure states are still characterized by matrices of the form $\ketbra{\psi}{\psi}$ for some normalized $\ket{\psi} \in \C^d$. There are factors other than just $\cB(\cH)$ if $\cH$ is infinite dimensional. However, this is beyond the scope of this thesis.

Lastly, we mention that given two von Neumann algebras, $\cM$ and $\cN$, the weak operator topology completion of the algebraic tensor product $\cM \otimes \cN$ is also a von Neumann algebra. The tensor product represents viewing two systems as one joint system. As one can consider a collection of bits in classical information theory, a joint system of $n$ qubits can also be considered. For $n$ qubits, we have $\cH = (\C^2)^{\otimes n} \cong \C^{2^n}$ and $\cM = M_2^{\otimes n} \cong M_{2^n}(\C)$.

\subsection{Quantum Operations}

Metric spaces in classical error correction are motivated in part by probabilistic noise models given by stochastic maps. In Section \ref{sec:classical_ec}, we saw that the binary symmetric channel provides a motivation for the study of the Hamming metric and the minimum distance of binary error correcting codes. In quantum systems, evolution is formally described by quantum operations and thus quantum metrics in quantum error correction are motivated in part by noise models given by quantum operations. As such, in this section, we review quantum operations and various related mathematical results.

As quantum probability generalizes classical probability, quantum operations generalize stochastic maps. There are two different but essentially equivalent ways of viewing how quantum operations give change to systems. In the first way, we may view quantum operations as linear maps on the predual of linear functionals that transform states. In the second way, we may view quantum operations as linear maps on the von Neumann algebra that transform measurables. The relation between these two is that every valid linear map in the latter case is the transpose of a valid linear map in the former case. In our review, we will take the first view of quantum operations transforming states. We also restrict to the setting where $\cH$ is finite dimensional and $\cM = \cB(\cH)$ (i.e. the case of finite dimensional, completely quantum systems).

Consider $\cH$ a finite dimensional Hilbert space and the von Neumann algebra $\cM = \cB(\cH) = \cL(\cH)$, so a state of the system is any positive operator with trace equal to $1$. We define a \textbf{quantum operation} (also called a \textbf{quantum channel} or \textbf{quantum map}) as a completely positive, trace-preserving superoperator. In general, a \textbf{superoperator} is a linear map $\Phi:\cL(\cH) \to \cL(\cH)$ on operators. A superoperator $\Phi:\cL(\cH) \to \cL(\cH)$ is \textbf{positive} if $\Phi(X)$ is positive when $X$ is positive. $\Phi$ is \textbf{completely positive} if the superoperator \beqn \Phi \otimes \Id_{M_n(\C)}:\cL(\cH) \otimes M_n(\C) \to \cL(\cH) \otimes M_n(\C) \eeqn is positive for all $n \geq 1$. We refer to completely positive superoperators as just completely positive maps. $\Phi$ is \textbf{trace-preserving} if $\tr(\Phi(X)) = \tr(X)$ for all $X \in \cL(\cH)$. For $\Phi$ to describe the evolution of a quantum system, positivity is necessary since we would like a state to map to another state, but it is not sufficient. If the system appears as a part of a larger composite system, the smaller system may transform under the effects of $\Phi$. This also gives a superoperator transforming the whole system, but this superoperator is not positive in general. Thus, we must assume that $\Phi$ is completely positive to give a valid transformation of the whole system. The trace-preserving property ensures that the normalization of states is preserved. From another viewpoint, one may start with a map on just the state space and assume reasonable properties such as complete positivity and convex linearity \cite[Sec. 8.2.4]{NC}. It turns out that any such map is given by a trace-preserving, completely positive map restricted to the state space.

We now turn our attention to a few fundamental mathematical results of such quantum operations. The following theorem of Choi and Kraus gives a concrete description of completely positive maps on $\cL(\cH)$.

\begin{theorem}[Choi-Kraus Theorem \cite{Choi} \cite{NC}]\label{thm:choi_kraus}
Let $\Phi:\cL(\cH) \to \cL(\cH)$ be a superoperator. $\Phi$ is completely positive if and only if there exist operators $E_k \in \cL(\cH)$ such that \beqn \Phi(X) = \sum_{k = 1}^{m} E_kXE_k^\ast \eeqn for all $X \in \cL(\cH)$.
\end{theorem}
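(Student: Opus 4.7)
The plan is to prove the two directions separately. The easier direction is the "if" statement: suppose $\Phi(X) = \sum_{k=1}^m E_k X E_k^{\ast}$. For each $k$, the map $X \mapsto E_k X E_k^{\ast}$ is completely positive because for any $n$ and any positive $Z \in \cL(\cH) \otimes M_n(\C)$, the operator $(E_k \otimes I_n) Z (E_k^{\ast} \otimes I_n)$ is of the form $A Z A^{\ast}$ with $A = E_k \otimes I_n$, hence positive. A sum of completely positive maps is completely positive, so $\Phi$ is completely positive.

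The harder direction, and the main obstacle, is the "only if" statement. Here I would use the Choi matrix construction. Fix an orthonormal basis $\{\ket{i}\}_{i=1}^{d}$ of $\cH$, let $\ket{\Omega} = \sum_{i=1}^{d} \ket{i} \otimes \ket{i} \in \cH \otimes \cH$, and define the Choi matrix
\[
C_\Phi \defeq (\Phi \otimes \Id_{\cL(\cH)})\bigl(\ketbra{\Omega}{\Omega}\bigr) \in \cL(\cH \otimes \cH).
\]
Since $\ketbra{\Omega}{\Omega}$ is positive and $\Phi$ is completely positive (applied with $n = d$), $C_\Phi$ is a positive operator on $\cH \otimes \cH$.

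By the spectral theorem, I can write $C_\Phi = \sum_{k=1}^{m} \ketbra{v_k}{v_k}$ for some (not necessarily normalized) vectors $\ket{v_k} \in \cH \otimes \cH$. The key tool is the vectorization isomorphism: every vector $\ket{v} = \sum_{i,j} v_{ij} \ket{i} \otimes \ket{j} \in \cH \otimes \cH$ corresponds to an operator $E \in \cL(\cH)$ via the assignment $E\ket{j} = \sum_i v_{ij} \ket{i}$, equivalently characterized by $\ket{v} = (E \otimes I_{\cH})\ket{\Omega}$. Let $E_k$ be the operator corresponding to $\ket{v_k}$ in this way. Then I need to verify that $\Phi(X) = \sum_k E_k X E_k^{\ast}$ for all $X \in \cL(\cH)$.

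To verify this, I would show the identity on rank-one operators $X = \ketbra{a}{b}$ (which span $\cL(\cH)$) and use linearity. The central calculation rests on the identity $(\Phi \otimes \Id)(\ketbra{\Omega}{\Omega}) = \sum_k (E_k \otimes I)\ketbra{\Omega}{\Omega}(E_k^{\ast} \otimes I)$, which holds by construction, combined with the "vectorization trick" that $\tr_2[(I \otimes Y^T) \ketbra{\Omega}{\Omega}] = Y$ for $Y \in \cL(\cH)$, where $\tr_2$ denotes the partial trace over the second factor and $Y^T$ is the transpose in the chosen basis. Applying this extraction to both sides recovers $\Phi(X)$ on the left and $\sum_k E_k X E_k^{\ast}$ on the right when $X$ is chosen appropriately. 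The verification is straightforward once the isomorphism between $\cH \otimes \cH$ and $\cL(\cH)$ is set up carefully; the main subtlety is keeping track of the transpose that appears in this identification.
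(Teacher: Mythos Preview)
The paper does not actually prove this theorem; it is stated with citations to Choi and to Nielsen--Chuang and then used as a black box. Your proof sketch is the standard argument (essentially Choi's original one via the Choi matrix and spectral decomposition, combined with the vectorization isomorphism $\cH \otimes \cH \cong \cL(\cH)$), and it is correct. There is nothing to compare against in the paper itself.
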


The expression $\Phi(X) = \sum_{k = 1}^{m} E_kXE_k^\ast$ is called a \textbf{Kraus representation} of $\Phi$, and the $E_k$'s are called \textbf{Kraus operators}. A given completely positive map does not have a unique Kraus representation, however, there is a relationship between the Kraus operators of any two Kraus representations stated in the following lemma.

\begin{theorem}[Unitary Freedom \cite{NC}]\label{thm:unitary_freedom} Let $\Phi:\cL(\cH) \to \cL(\cH)$ and $\Psi:\cL(\cH) \to \cL(\cH)$ be completely positive maps with Kraus representations $\Phi(X) = \sum_{k = 1}^{m} E_kXE_k^\ast$ and $\Psi(X) = \sum_{l = 1}^{n} F_lXF_l^\ast$. For $k > m$ and $l > n$, let $E_k = 0$ and $F_l = 0$. $\Phi = \Psi$ if and only if there exists a $\max(m,n) \times \max(m,n)$ unitary matrix $U_{kl}$ such that \beqn E_k = \sum_{l = 1}^{\max(m,n)} U_{kl} F_l \eeqn for each $1 \leq k \leq m$.
\end{theorem}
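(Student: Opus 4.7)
The plan is to prove both implications by translating Kraus representations into vector decompositions of a single positive semidefinite operator — essentially the Choi matrix of $\Phi$ — and then invoking the standard fact that any two decompositions of a positive semidefinite operator as a sum of rank-one positive operators are related by a unitary.

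For the ``if'' direction, I would proceed by direct computation: substituting $E_k = \sum_l U_{kl} F_l$ into $\Phi(X) = \sum_k E_k X E_k^\ast$, expanding, and reorganizing the sum as
\begin{equation*}
\sum_{k} \Bigl(\sum_{l} U_{kl} F_l\Bigr) X \Bigl(\sum_{l'} \overline{U_{kl'}} F_{l'}^\ast\Bigr) = \sum_{l,l'} \Bigl(\sum_{k} U_{kl} \overline{U_{kl'}}\Bigr) F_l X F_{l'}^\ast.
\end{equation*}
Unitarity of $U$ gives $\sum_k U_{kl} \overline{U_{kl'}} = \delta_{ll'}$, collapsing the double sum to $\sum_l F_l X F_l^\ast = \Psi(X)$.

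For the ``only if'' direction, I would first fix an orthonormal basis $\{\ket{i}\}$ of $\cH$ and form, for any collection of operators $\{A_k\} \subseteq \cL(\cH)$, the vectorizations $\ket{A_k} \defeq \sum_i A_k\ket{i} \otimes \ket{i} \in \cH \otimes \cH$. A short verification shows that $\Phi$ is determined by the operator $J(\Phi) \defeq (\Phi \otimes \Id_{\cL(\cH)})(\ket{\Omega}\bra{\Omega}) \in \cL(\cH \otimes \cH)$, where $\ket{\Omega} = \sum_i \ket{i}\otimes\ket{i}$, and that a Kraus representation $\Phi(X) = \sum_k E_k X E_k^\ast$ corresponds exactly to the decomposition $J(\Phi) = \sum_k \ket{E_k}\bra{E_k}$. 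Thus $\Phi = \Psi$ implies $\sum_k \ket{E_k}\bra{E_k} = \sum_l \ket{F_l}\bra{F_l}$ as equal positive semidefinite operators on $\cH \otimes \cH$.

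It therefore remains to invoke the following well-known linear-algebra fact: if $\{\ket{v_k}\}_{k=1}^{N}$ and $\{\ket{w_l}\}_{l=1}^{N}$ (padded with zero vectors as needed) satisfy $\sum_k \ket{v_k}\bra{v_k} = \sum_l \ket{w_l}\bra{w_l}$, then there exists an $N \times N$ unitary $U$ such that $\ket{v_k} = \sum_l U_{kl} \ket{w_l}$. This is proved by diagonalizing both sides via the spectral theorem to produce a common orthonormal eigenbasis decomposition, and then expressing each of the $\ket{v_k}$ and $\ket{w_l}$ as images of an isometric embedding of a canonical decomposition; the change-of-basis between the two embeddings is the required unitary. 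Applying this with $\ket{v_k} = \ket{E_k}$ and $\ket{w_l} = \ket{F_l}$, and noting that the vectorization map $A \mapsto \ket{A}$ is linear and injective, yields $E_k = \sum_l U_{kl} F_l$. The main obstacle is the linear-algebra lemma above; once it is in hand the rest of the argument is bookkeeping via the vectorization correspondence.
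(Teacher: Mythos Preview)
The paper does not actually prove this theorem: it is stated as a cited result from \cite{NC} and no proof is given in the text. Your argument is correct and is essentially the standard proof (via the Choi matrix and the lemma that two rank-one decompositions of a positive semidefinite operator differ by a unitary), so there is nothing to compare against here.
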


The Kraus representation of a completely positive map also gives a necessary and sufficient condition for a completely positive map to be trace-preserving.

\begin{prop} Let $\Phi:\cL(\cH) \to \cL(\cH)$ be a completely positive map with Kraus representation $\Phi(X) = \sum_{k = 1}^{m} E_kXE_k^\ast$. $\Phi$ is trace-preserving if and only if $\sum_{k = 1}^{m} E_k^\ast E_k = I_{\cH}$.
\end{prop}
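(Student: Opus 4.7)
The plan is to reduce trace-preservation to a single operator identity via the cyclic property of the trace, and then invoke the non-degeneracy of the trace pairing on $\cL(\cH)$.

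Concretely, for any $X \in \cL(\cH)$ I would compute
\[
\tr(\Phi(X)) \;=\; \sum_{k=1}^{m} \tr(E_k X E_k^\ast) \;=\; \sum_{k=1}^{m} \tr(E_k^\ast E_k X) \;=\; \tr(B\,X),
\]
where $B \defeq \sum_{k=1}^{m} E_k^\ast E_k$, using the cyclic property of the trace at the second equality and linearity at the third. Thus the condition $\tr(\Phi(X)) = \tr(X)$ for every $X \in \cL(\cH)$ is equivalent to $\tr\bigl((B - I_\cH)\,X\bigr) = 0$ for every $X \in \cL(\cH)$.

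The $(\Leftarrow)$ direction is then immediate: if $B = I_\cH$, the above identity holds trivially. For the $(\Rightarrow)$ direction, I would invoke non-degeneracy of the Hilbert-Schmidt pairing $(Y,X) \mapsto \tr(YX)$ on $\cL(\cH)$, which is standard in finite dimensions, to conclude $B - I_\cH = 0$. If a more hands-on argument is preferred, one can simply plug in $X = \ket{\phi}\!\bra{\psi}$ for arbitrary $\ket{\psi},\ket{\phi} \in \cH$, obtaining $\bra{\psi}(B - I_\cH)\ket{\phi} = 0$ for all such choices, which forces $B = I_\cH$.

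There is no real obstacle here: the entire argument is cyclicity of the trace together with non-degeneracy of the trace pairing, and no subtlety arises beyond the routine check that the cyclic rearrangement is valid, which is automatic in finite dimensions. The only thing worth flagging is that the result is independent of the particular choice of Kraus representation, as it must be by the unitary freedom stated in Theorem \ref{thm:unitary_freedom}: if $\widetilde{E}_l = \sum_k U_{lk}^\ast E_k$ is another Kraus family for $\Phi$, then $\sum_l \widetilde{E}_l^\ast \widetilde{E}_l = \sum_{k,k'} \bigl(\sum_l U_{lk} U_{lk'}^\ast\bigr) E_k^\ast E_{k'} = \sum_k E_k^\ast E_k$, so the operator $B$ is an intrinsic invariant of $\Phi$.
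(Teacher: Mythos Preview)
The paper states this proposition without proof, so there is nothing to compare against. Your argument is correct and is the standard one: cyclicity of the trace gives $\tr(\Phi(X)) = \tr(BX)$ with $B = \sum_k E_k^\ast E_k$, and non-degeneracy of the Hilbert--Schmidt pairing finishes both directions. The additional remark on independence of Kraus representation is correct and a nice touch, though not strictly needed for the statement as phrased.
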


In Chapter \ref{ch:QLPB}, we will discuss specific completely positive maps that turn out to be self-adjoint. We note that if $\cH$ is finite dimensional, then $\cL(\cH)$ is a Hilbert space with respect to the Hilbert-Schmidt inner product, i.e. for $E, F \in \cL(\cH)$ we have \beqn \innerprod{E}{F}_{HS} \defeq \tr(E^\ast F). \eeqn With this, we may introduce the adjoint of superoperators on $\cL(\cH)$ with respect to the Hilbert-Schmidt inner product. To obtain the adjoint of a completely positive map, we simply take the adjoints of the Kraus operators.

\begin{prop} Let $\Phi:\cL(\cH) \to \cL(\cH)$ be a completely positive map with Kraus representation $\Phi(X) = \sum_{k = 1}^{m} E_kXE_k^\ast$. The adjoint $\Phi^\ast$ of $\Phi$ with respect to the Hilbert-Schmidt inner product on $\cL(\cH)$ is a completely positive map with Kraus representation \beqn \Phi^\ast(X) = \sum_{k = 1}^{m} E_k^\ast XE_k. \eeqn
\end{prop}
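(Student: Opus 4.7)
The plan is to verify the formula by a direct computation with the Hilbert-Schmidt inner product, and then to invoke the Choi-Kraus theorem to conclude complete positivity. Recall that $\langle A, B\rangle_{HS} = \tr(A^\ast B)$, so identifying $\Phi^\ast$ amounts to producing a superoperator $\Psi$ for which $\langle Y, \Phi(X)\rangle_{HS} = \langle \Psi(Y), X\rangle_{HS}$ for every $X, Y \in \cL(\cH)$; I would then check that $\Psi(Y) \defeq \sum_k E_k^\ast Y E_k$ satisfies this identity.

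The main step is a one-line trace manipulation. Starting from $\tr\bigl(Y^\ast \sum_k E_k X E_k^\ast\bigr)$, I would use cyclicity of the trace to cycle $E_k^\ast$ to the front, obtaining $\sum_k \tr(E_k^\ast Y^\ast E_k X)$, and then use the identity $(E_k^\ast Y E_k)^\ast = E_k^\ast Y^\ast E_k$ to rewrite this as $\sum_k \tr\bigl((E_k^\ast Y E_k)^\ast X\bigr) = \langle \Psi(Y), X\rangle_{HS}$. Uniqueness of the Hilbert-Schmidt adjoint then forces $\Phi^\ast = \Psi$.

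Finally, since the formula $\Phi^\ast(X) = \sum_k E_k^\ast X E_k$ is itself a Kraus representation, with Kraus operators $E_k^\ast$, the Choi-Kraus theorem (Theorem~\ref{thm:choi_kraus}) immediately gives that $\Phi^\ast$ is completely positive. There is no real obstacle in the argument---it amounts to bookkeeping with the cyclicity of the trace and the $\ast$-structure, followed by an appeal to the structure theorem just proved.
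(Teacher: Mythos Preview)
Your proposal is correct and essentially identical to the paper's proof: both verify the adjoint identity by a direct trace computation using cyclicity and the $\ast$-operation, then appeal to the Choi--Kraus theorem for complete positivity. The only cosmetic difference is that the paper computes $\langle \Phi(X), Y\rangle_{HS}$ while you compute $\langle Y, \Phi(X)\rangle_{HS}$, which amounts to the same thing.
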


\begin{proof} Using the properties of the adjoint and trace, for any $X, Y \in \cL(\cH)$, we have \begin{align*}
\tr(\Phi(X)^\ast Y) &= \sum_{k = 1}^{m} \tr((E_k XE_k^\ast)^\ast Y) \\
&= \sum_{k = 1}^{m} \tr(E_k X^\ast E_k^\ast Y) \\
&= \sum_{k = 1}^{m} \tr(X^\ast E_k^\ast Y E_k) \\
&= \tr(X^\ast \Phi^\ast(Y)),
\end{align*} hence $\Phi^\ast$ defined above is the adjoint of $\Phi$. $\Phi^\ast$ is completely positive by Theorem \ref{thm:choi_kraus}.
\end{proof}

Lastly, viewing superoperators as linear maps on the Hilbert space $\cL(\cH)$, we may also introduce the Hilbert-Schmidt inner product on the space of superoperators i.e. \beqn \innerprod{\Phi}{\Psi}_{HS} \defeq \tr(\Phi^\ast \Psi). \eeqn

\section{Quantum Error Correction}

In this section, we review the formulation of quantum error correction starting from quantum operations. The original theory was formulated by Knill and Laflamme \cite{KL} however our review mostly follows \cite[Chapter 10.3]{NC}. The starting point slightly differs from classical error correction, where the correction of errors was only loosely defined. In quantum error correction, noise is formally defined as some quantum operation and correction of errors is defined as some other quantum operation that reverses the noise quantum operation to some degree. Our first goal is to make these two notions precise.

Quantum operations representing noise are called \textbf{error operations}, and the Kraus operators of error operations are called \textbf{error operators}. Similar to classical error correction, a strategy for dealing with errors is to use only states corresponding to some Hilbert subspace $\cC \subseteq \cH$. $\cC$ is called a \textbf{quantum code} (or just code). A more formal definition will be stated in Section \ref{sec:quantum_codes}. If $\cC \subseteq \cH$ is a code, then we may identify $\cL(\cC) \subseteq \cL(\cH)$ and view the states in $\cL(\cC)$ as \textbf{code states} or \textbf{codewords}. Instead of directly defining quantum error correction processes that correct error operations, we more generally define quantum error correction processes that correct completely positive maps. This gives a more general and useful notion of quantum error correction. Given a completely positive map $\Phi:\cL(\cH) \to \cL(\cH)$ and quantum code $\cC$, we say that a quantum operation $\mathcal{R}:\cL(\cH) \to \cL(\cH)$ is a \textbf{recovery operation} for $\Phi$ if $\mathcal{R} \circ \Phi(X) \propto X$ for all $X \in \cL(\cC)$. If $\Phi$ turns out to be an error operation then $\mathcal{R}$ reverses the effects of $\Phi$ with probability 1, meaning $\mathcal{R} \circ \Phi(\rho) = \rho$ for all states $\rho \in \cL(\cC)$. If $\cE$ is an error operation where $\cE = \Phi + \Psi$ for some other completely positive map $\Psi$, then $\mathcal{R}$ only reduces effects of $\cE$ and there may be a non-zero probability of error from the effects of $\Psi$. This mirrors classical binary error correction, where the error correction process will not reverse the effects of noise with probability 1 if there's a chance that noise flips a sufficiently large number of bits.

Having defined codes and recovery, we note that the definition is not practical to work with when finding codes. The following theorem gives a more concrete necessary and sufficient condition for the existence of a recovery operation for a given $\Phi$ and code $\cC$.

\begin{theorem}[Error Correction Conditions \cite{KL,NC}]\label{thm:qec_conditions}
Let $\Phi:\cL(\cH) \to \cL(\cH)$ be a completely positive map with Kraus representation $\Phi(X) = \sum_{k = 1}^{r} E_kXE_k^\ast$. Given a quantum code $C \subseteq \cH$ with orthogonal projection $P$, there exists a recovery operation $\mathcal{R}:\cL(\cH) \to \cL(\cH)$ for $\cC$ correcting $\Phi$ if and only if for all $1 \leq k,l \leq r$ there exists $\varepsilon_{kl} \in \C$ such that \beq\label{eq:qec_condition} PE_k^\ast E_lP = \varepsilon_{kl} P. \eeq
\end{theorem}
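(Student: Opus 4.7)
The plan is to prove both directions, relying on the Choi--Kraus Theorem \ref{thm:choi_kraus} and the Unitary Freedom Theorem \ref{thm:unitary_freedom}. For sufficiency I would construct a recovery operation $\mathcal{R}$ explicitly from the $E_k$'s, while for necessity I would extract the condition from the Kraus operators of any given $\mathcal{R}$.

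For sufficiency, first note that the hypothesis forces the matrix $\varepsilon = (\varepsilon_{kl})$ to be Hermitian and positive semidefinite: Hermiticity follows from taking adjoints of both sides, and positivity from
\beqn
\sum_{k,l} \bar{v}_k v_l \varepsilon_{kl} P = P \Big(\sum_k v_k E_k\Big)^\ast \Big(\sum_l v_l E_l\Big) P \geq 0
\eeqn
for any $v \in \C^r$. I would diagonalize $\varepsilon$ by a unitary and apply Theorem \ref{thm:unitary_freedom} to replace $\{E_k\}$ by new Kraus operators $\{F_l\}$ for $\Phi$ satisfying $P F_k^\ast F_l P = d_k \delta_{kl} P$ with $d_k \geq 0$. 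For each $l$ with $d_l > 0$, set $V_l := F_l P / \sqrt{d_l}$, so that $V_l^\ast V_l = P$ shows $V_l$ is a partial isometry from $\cC$ into $\cH$, while $V_l^\ast V_k = 0$ for $l \neq k$ shows the images $V_l(\cC)$ are mutually orthogonal. Let $P_l = V_l V_l^\ast$ be the projection onto $V_l(\cC)$; then $\sum_l V_l V_l^\ast = \sum_l P_l \leq I_\cH$, and I would extend $\{V_l^\ast\}$ to a Kraus family for a trace-preserving $\mathcal{R}$ by appending an operator $R_0$ with $R_0^\ast R_0 = I_\cH - \sum_l P_l$. A direct computation using the identity $V_l^\ast F_k P = \sqrt{d_l}\,\delta_{lk}\, P$ gives
\beqn
\mathcal{R}(\Phi(X)) = \sum_{l,k} V_l^\ast F_k X F_k^\ast V_l = \Big(\sum_l d_l\Big) X
\eeqn
for every $X \in \cL(\cC)$, and the $R_0$-contribution vanishes because $R_0$ annihilates the range of each $P_l$ and hence $R_0 F_k P = \sqrt{d_k} R_0 V_k = 0$.

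For necessity, suppose $\mathcal{R}$ recovers $\Phi$ on $\cC$ and fix any Kraus decomposition $\mathcal{R}(Y) = \sum_j R_j Y R_j^\ast$ with $\sum_j R_j^\ast R_j = I_\cH$. For any $\ket{\psi} \in \cC$, the hypothesis gives
\beqn
\sum_{j,k} (R_j E_k \ket{\psi})(R_j E_k \ket{\psi})^\ast = c\, \ketbra{\psi}{\psi}
\eeqn
for some $c \geq 0$. By positivity, each summand must itself be a nonnegative multiple of $\ketbra{\psi}{\psi}$, so $R_j E_k \ket{\psi} = \mu_{jk}(\psi) \ket{\psi}$ for some scalars. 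Comparing two linearly independent codewords and their superposition shows $\mu_{jk}$ is independent of $\ket{\psi}$, whence $R_j E_k P = \mu_{jk} P$. Inserting $I_\cH = \sum_j R_j^\ast R_j$ then yields
\beqn
P E_k^\ast E_l P = \sum_j (R_j E_k P)^\ast (R_j E_l P) = \Big(\sum_j \overline{\mu_{jk}} \mu_{jl}\Big) P,
\eeqn
the desired condition with $\varepsilon_{kl} = \sum_j \overline{\mu_{jk}} \mu_{jl}$.

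The hard part will be the sufficiency direction, where the recovery $\mathcal{R}$ must be assembled as a genuine trace-preserving completely positive map on all of $\cL(\cH)$, not merely as a left inverse defined on the image $\Phi(\cL(\cC))$. The structural fact that makes this possible is that, after diagonalizing $\varepsilon$ and passing to the rotated Kraus operators $F_l$, the subspaces $V_l(\cC) \subseteq \cH$ are mutually orthogonal; this is precisely what permits the partial inverses $V_l^\ast$ to be combined, via the mutually orthogonal projections $P_l$, into a single legitimate Kraus family for $\mathcal{R}$.
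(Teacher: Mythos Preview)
Your proof is correct and essentially follows the standard argument from the cited references \cite{KL,NC}. The paper itself does not give a self-contained proof of this theorem; it only provides an informal sketch of the sufficiency direction in Section~\ref{sec:quantum_codes}, where it diagonalizes the Hermitian form $\innerprod{E}{F}_\varepsilon = \varepsilon(E^\ast F)$, observes that the resulting errors $E_k$ send $\cC$ to mutually orthogonal subspaces, and describes the recovery as a projective measurement followed by the restricted inverses of the $E_kP$'s. For the converse it simply refers the reader to \cite[Theorem~10.1]{NC}. Your argument is a complete version of exactly this approach: the diagonalization of $\varepsilon$ via Theorem~\ref{thm:unitary_freedom}, the partial isometries $V_l = F_lP/\sqrt{d_l}$, and the completion to a trace-preserving map via $R_0$ are the standard ingredients, and your necessity argument (extracting $R_jE_kP = \mu_{jk}P$ from the rank-one condition and then inserting $\sum_j R_j^\ast R_j = I_\cH$) is the textbook one.
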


The equations \ref{eq:qec_condition} are called the error correction conditions and are independent of the Kraus representation of $\Phi$. In fact, since the equations \ref{eq:qec_condition} are sesquilinear in $E_k$ and $E_l$, it follows that a code $\cC$ satisfying the error correction conditions also satisfies the error correction for any completely positive map $\Psi(X) = \sum_{l = 1}^{n} F_lXF_l^\ast$ where each $F_l \in \lspan\{E_1,\ldots,E_m\}$. Furthermore, by the following theorem, it turns out that $\mathcal{R}$ is also a recovery operation for any other such completely positive map.

\begin{theorem}[Discretization of Errors \cite{KL,NC}]\label{thm:discretization}
Let $\Phi:\cL(\cH) \to \cL(\cH)$ be a completely positive map with Kraus representation $\Phi(X) = \sum_{k = 1}^{r} E_kXE_k^\ast$ and $\cC \subseteq \cH$ a quantum code. If $\mathcal{R}:\cL(\cH) \to \cL(\cH)$ is a recovery operation for $\cC$ correcting $\Phi$, then $\mathcal{R}$ is a recovery operation for $\cC$ correcting any completely positive map with Kraus terms in $\lspan\{E_1,\ldots,E_r\}$.
\end{theorem}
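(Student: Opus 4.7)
The plan is to exploit a Kraus representation of the recovery operation $\mathcal{R}$ together with Theorem \ref{thm:unitary_freedom} (unitary freedom) to show that for each Kraus operator $R_m$ of $\mathcal{R}$ and each $E_k$, the composition $R_m E_k$ acts as a scalar multiple of the projection $P$ onto $\cC$. Once that rigid structure is in place, linearity in the $E_k$'s gives the conclusion immediately for any $F_j \in \lspan\{E_1,\ldots,E_r\}$.

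Concretely, write $\mathcal{R}(Y) = \sum_m R_m Y R_m^\ast$. The hypothesis says there is a constant $c \in \C$ with $\mathcal{R} \circ \Phi(X) = cX$ for every $X \in \cL(\cC)$, i.e.\
\[
\sum_{m,k} R_m E_k P X P E_k^\ast R_m^\ast \;=\; c\,PXP \qquad \forall\,X \in \cL(\cH).
\]
I would first use positivity: take $X = \ket{\psi}\!\bra{\psi}$ with $\ket{\psi} \in \cC$, and observe that sandwiching by $I-P$ makes the right-hand side vanish while every term on the left is positive, forcing $(I-P)R_m E_k \ket{\psi} = 0$ for each $m,k$. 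Hence $R_m E_k P = P R_m E_k P$, so each $A_{mk} \defeq R_m E_k P$ is an operator from $\cC$ into $\cC$.

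Next, the identity $\sum_{m,k} A_{mk} X A_{mk}^\ast = c\,X$ for all $X \in \cL(\cC)$ exhibits two Kraus representations of the same CP map on $\cL(\cC)$: the one with operators $\{A_{mk}\}$ and the trivial one with the single operator $\sqrt{c}\,P$. Applying Theorem \ref{thm:unitary_freedom} (padding with zeros as needed) produces scalars $\alpha_{mk} \in \C$ with
\[
A_{mk} \;=\; R_m E_k P \;=\; \alpha_{mk} P, \qquad \sum_{m,k} |\alpha_{mk}|^2 = c.
\]
This is the crucial structural rigidity, and obtaining it is the main obstacle in the proof: everything hinges on recognizing that ``$\mathcal{R}\circ\Phi$ proportional to identity on $\cL(\cC)$'' already forces each individual $R_m E_k$ to be proportional to $P$ on $\cC$, not merely in aggregate.

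Finally, let $\Psi(X) = \sum_j F_j X F_j^\ast$ with $F_j = \sum_k c_{jk} E_k \in \lspan\{E_1,\ldots,E_r\}$. Then for every $m,j$,
\[
R_m F_j P \;=\; \sum_k c_{jk}\, R_m E_k P \;=\; \Big(\sum_k c_{jk}\alpha_{mk}\Big) P \;\defeq\; \beta_{mj}\, P.
\]
Therefore, for any $X \in \cL(\cC)$,
\[
\mathcal{R}(\Psi(X)) \;=\; \sum_{m,j} R_m F_j P X P F_j^\ast R_m^\ast \;=\; \sum_{m,j} |\beta_{mj}|^2\, PXP \;=\; c'\, X,
\]
where $c' = \sum_{m,j}|\beta_{mj}|^2$. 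This establishes $\mathcal{R}\circ\Psi(X) \propto X$ on $\cL(\cC)$, so $\mathcal{R}$ is a recovery operation for $\Psi$, as desired. Note that the argument is independent of the choice of Kraus representation of $\mathcal{R}$, consistent with the fact that the conclusion depends only on the map $\mathcal{R}$ and the span of the $E_k$.
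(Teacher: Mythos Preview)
The paper does not prove this theorem; it merely states it with citation to \cite{KL,NC} and moves on, remarking only that the error-correction conditions are sesquilinear in the Kraus operators and that intuition for Theorem~\ref{thm:qec_conditions} will come later. There is therefore no proof in the paper to compare against.

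Your argument is correct and is essentially the standard one from \cite{NC}. The key steps --- using positivity on rank-one $X$ to force each $R_m E_k P$ to have range in $\cC$, then invoking Theorem~\ref{thm:unitary_freedom} to conclude $R_m E_k P = \alpha_{mk} P$, and finally extending by linearity to any $F_j \in \lspan\{E_1,\ldots,E_r\}$ --- are all sound. Two small points you might make explicit: the proportionality constant $c$ is necessarily real and nonnegative (since $\mathcal{R}\circ\Phi$ preserves positivity), so $\sqrt{c}\,P$ is a legitimate Kraus operator; and $c$ is a single constant independent of $X$ by linearity of $\mathcal{R}\circ\Phi$ on $\cL(\cC)$, which you use implicitly when you write $\mathcal{R}\circ\Phi(X)=cX$ rather than $c_X X$.
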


Intuition about the proof of Theorem \ref{thm:qec_conditions} will be given in the next chapter. Our main emphasis of these theorems is that the correction of noise can be reframed in terms of subspaces of error operators rather than error operations. For example, one may fix some subspace $\cE \subseteq \cL(\cH)$ and look for codes that satisfy the error correction conditions for some basis of $\cE$. Analogous to the binary symmetric channel, there is an error operation for systems of qubits that motivates a certain type of subspace and a notion of the ``distance" of an error operator. In one way, the distance of an error operator can be seen as a way to represent the likelihood of the error operator affecting the state. In another way, the distance of an error operator represents the degree to which it affects the state.

One particular error model for a qubit is given by a quantum operation called the depolarization channel. The qubit depolarization channel is defined as $\Phi(\rho) = (1 - p) \rho + \frac{p}{2}I_2$ where $\rho \in M_2$ is a state and $0 < p < 1$ is a fixed constant. In other words, the depolarization channel replaces any qubit state with $\frac{1}{2}I_2$ with probability $p$ and leaves the state unchanged with probability $1 - p$. Since $\frac{1}{2}I_2$ is the midpoint of any two pairs of orthogonal pure states, this quantum operation represents a probable loss of complete information of the qubit. $\Phi$ corresponds to a completely positive map on $M_2$ (which we also call $\Phi$) with Kraus representation \beqn\Phi(X) = (1 - 3p/4) I_{2}XI_{2}^\ast + \frac{p}{4}(\sigma_x X \sigma_x^\ast + \sigma_y X \sigma_y^\ast + \sigma_z X \sigma_z^\ast)\eeqn where the operators $\sigma_x$, $\sigma_y$, and $\sigma_z$ are the Pauli matrices \beq
\sigma_x = \begin{pmatrix} 0 & 1 \\ 1 & 0 \end{pmatrix},
\sigma_y = \begin{pmatrix} 0 & -i \\ i & 0 \end{pmatrix},
\sigma_z = \begin{pmatrix} 1 & 0 \\ 0 & -1 \end{pmatrix}.
\eeq If we have a system of $n$ qubits then the map $\Phi^{\otimes n}:M_2^{\otimes n} \to M_2^{\otimes n}$ is a quantum operation that applies the depolarization channel on each qubit. $\Phi^{\otimes n}$ independently changes the state of each qubit to $\frac{1}{2}I_2$ with probability $p$ and has no effect with probability $1 - p$. If $\cB_k$ is the set of operators of the form $U_1 \otimes U_2 \otimes \cdots U_n$ where each $U_i \in \{I_2,\sigma_x,\sigma_y,\sigma_z\}$ and exactly $k$ of the $U_i$ are not the identity, then we may write a Kraus representation for $\Phi^{\otimes n}$, \beqn \Phi^{\otimes n}(X) = \sum_{k = 0}^{n} \left(\frac{p}{4}\right)^k \left(1 - \frac{3p}{4}\right)^{n-k} \sum_{E \in \cB_k} EXE^\ast. \eeqn Each $E \in \cB_k$ can essentially be described as an error operator that affects $k$ of the qubits, and intuitively it would be natural for error operators that affect a larger number of qubits to be less likely to occur. This can be seen directly by noting that if $X$ is a state then each $EXE^\ast$ is a state and, for $p$ sufficiently small, $\left(\frac{p}{4}\right)^k \left(1 - \frac{3p}{4}\right)^{n-k}$ decreases as $k$ increases. Additionally, if $p$ is sufficiently small, then the probabilistic support of the state $\Phi^{\otimes n}(X)$ is mostly on the terms where $k$ is small. From this, we see that it is strategic to have codes and recovery operations correcting all error operators affecting up to some number of qubits. More precisely, we introduce a parameter $d' \geq 1$ for codes such that the code satisfies the error correction conditions for all error operators in $\cB_k$ for $0 \leq k \leq d'$. The larger $d'$ is, the more reliable the recovery operation will be.

The parameter $k$ introduced in the previous paragraph has a relation to the likelihood of the occurrence of an error operator of $\Phi^{\otimes n}$. This is exactly analogous to how the Hamming metric has a relation to the likelihood of a transition between messages by the effect of the binary symmetric channel. It thus makes intuitive sense to define the ``distance" of the error operators in $\cB_k$ to be $k$. In even greater generality, the distance of an arbitrary operator $E \in M_2^{\otimes n}$ can be defined and, in this context, we call $E$ an error. It is not the case that every error $E$ is in some $\cB_k$, or even in $\lspan(\cB_k)$, thus, we instead define $\cE_t = \lspan(\cB_0,\ldots,\cB_{\floor{t}})$ for each $t \geq 0$. Intuitively, $\cE_t$ is the space of errors of distance at most $t$, and since $\cup_{k = 0}^{n} \cB_k$ forms a basis of $M_2^{\otimes n}$, $E \in M_2^{\otimes n}$ belongs to an $\cE_t$ for some $t \geq 0$. For each $E$, the minimum of such $t$'s is called the distance of $E$. When expanding $E$ as a linear combination of simple tensors, the distance corresponds to exactly the largest number of qubits a simple tensor in the expansion may affect.

The family of subspaces $\cE_t$ plays a role in quantum error correction that is analogous to the Hamming metric's role in classical error correction. $\cE_t$ fulfills the definition of a quantum metric given by Kuperberg and Weaver \cite{KW} and thus is called the binary quantum Hamming metric (see \ref{subsec:qham_space} for a formal definition). At first, these subspaces may simply appear to be a method of tabulating the parameter $t$ for errors. However, these subspaces have a certain structure that resembles classical metrics. In the next chapter, we discuss this when we review the notion of a quantum metric and the formulation of quantum error correction in terms of quantum metrics.

    \chapter[%
        Quantum Metrics and Quantum Codes
    ]{%
        Quantum Metrics and Quantum Codes
    }%
    \label{ch:QEC}

In this chapter, we review the definition of a quantum metric given by Kuperberg and Weaver. Just as quantum probability puts classical and quantum systems under one formulation, quantum metrics on von Neumann algebras encompass classical metrics and give a notion of distance for quantum systems similar to the quantum Hamming metric.

\section{Quantum Metrics}

To reiterate, the idea of a quantum metric is to assign a notion of distance to error operators $E \in \cB(\cH)$, which is done through a filtration of subspaces $\cE_t \subseteq \cB(\cH)$ parameterized by $t \geq 0$. Intuitively, $\cE_t$ contains the error operators of distance at most $t$. The filtration $\cE_t$ satisfies certain properties, and to state them we introduce some notation for sets of operators. For $E \in \cB(\cH)$, we define $\C E = \lspan\{E\}$. For a subset $\cE \subseteq \cB(\cH)$, we define $\cE^\ast = \{E^\ast: E \in \cE \}$. For subsets $\cE, \cF \subseteq \cB(\cH)$, we define $\cE\cF = \lspan\{EF: E \in \cE, F \in \cF\}$.

\begin{definition}[\cite{KW}] Let $\cM \subseteq \cB(\cH)$ be a von Neumann algebra. A \textbf{quantum metric} on $\cM$ is a family of weak${}^\ast$ closed subspaces $\cE_t \subseteq \cB(\cH)$ parametrized by $t \in [0, \infty)$ such that \begin{enumerate}
    \item $\cE_0 = \cM'$
    \item $\cE_t^\ast = \cE_t$ for all $t \geq 0$
    \item $\cE_s\cE_t \subseteq \cE_{s+t}$ for all $s,t \geq 0$
    \item $\cE_t = \cap_{s > t} \cE_s$ for all $t \geq 0$
\end{enumerate}
\end{definition}

We call the pair $(\cM, \cE_t)$ a \textbf{quantum metric space}. In the finite dimensional completely quantum case, we may refer to the Hilbert space $\cH$ instead of $\cM = \cB(\cH)$ and say that $(\cH, \cE_t)$ is a quantum metric space. In this case, note that also $\cM' = \C I_{\cH}$ and hence $\cE_0 = \C I_{\cH}$. By convention, we let $\cE_\infty = \cB(\cH)$ and given $E \in \cB(\cH)$ the smallest $t \geq 0$ such that $E \in \cE_t$ is the \textbf{distance} of $E$. The smallest $0 < r \leq \infty$ such that $\cE_r = \cB(\cH)$ is the \textbf{diameter} of the quantum metric space.

Regarding property (1) of the definition, the operators in the commutant $\cM'$ of a von Neumann algebra have no observable effect on measurements and hence are errors of distance zero. Property (2) is analogous to the symmetry condition of metrics and states that an error should have the same distance as its adjoint. Property (3) is the triangle inequality for the distance of errors; a distance $t$ error followed by a distance $s$ error should have distance at most $s + t$. Property (4) ensures that the $\cE_t$'s form a filtration and the parameter $t$ is upper semicontinuous. Although property (1) seems to be the only property that restricts the valid choices of $\cE_t$ for $\cM$, we mention that properties (1) and (3) together imply that $\cM' \cE_t \cM' \subseteq \cE_t$ for all $t \geq 0$, meaning $\cE_t$ is a $\cM'$-bimodule. In other words, these properties together essentially distinguish the cases of quantum metrics from the fully classical to the fully quantum. One last remark is that the definition of a quantum metric ``respects" isomorphisms of von Neumann algebras. Even if two von Neumann algebras are ${}^\ast$-isomorphic, the ambient operator algebra $\cB(\cH)$ can differ and so the commutants are not necessarily ${}^\ast$-isomorphic. Despite this, there is still a bijection that identifies each quantum metric of one von Neumann algebra with a quantum metric on the other.

Now, we give two examples of quantum metrics that have essentially been introduced in one form or another.

\begin{example}[Classical Binary Hamming Space] Let $\cH = (\C^2)^{\otimes n}$ and let $\cM \subseteq \cL(\cH)$ be the algebra of diagonal matrices $\cM = \lspan\{\ketbra{x}{x}: x \in \F_2^n\}$, which is the von Neumann algebra for a system of $n$ classical bits. We let \beqn \cE_t = \lspan\{\ketbra{x}{y}: x, y \in \F_2^n \text{ and } d(x,y) \leq t\} \eeqn for $t \geq 0$. $\cE_t$ is a quantum metric formulation of the binary Hamming metric.
\end{example}

More generally, there is a correspondence between classical quantum metrics on abelian von Neumann algebras and metric spaces (see \cite{KW}). The other example is the binary quantum Hamming metric, which is an example of the completely quantum case.

\begin{example}[Binary Quantum Hamming Space] Let $\cH = (\C^2)^{\otimes n}$ and $\cM = \cB(\cH) = \cL(\cH)$ so this system represents a register of $n$ qubits. We let \beqn \cE_{t} = \lspan\{A_1 \otimes \cdots \otimes A_n \mid A_k \in M_2(\C) \text{ and at most $t$ of the $A_k$'s are not proportional to $I_2$} \}\eeqn for $t \geq 0$ and $\cE_t$ is called the binary quantum Hamming metric. See Section \ref{subsec:qham_space} for a more general formulation.
\end{example}

Note that for both the classical Hamming metric and quantum Hamming metric, all errors are essentially ``generated" by $\cE_1$, meaning $\cE_t = \cE^{\floor{t}}$ for $t \neq 1$. Such quantum metrics are called \textbf{graph metrics}, which are generalizations of classical metrics given by a path metric on finite graphs. Graph metrics are generally constructed by first specifying a subspace $\cE \subseteq \cL(\cH)$ such that $\cM' \subseteq \cE$ and $\cE^\ast = \cE$. Then we define $\cE_t = \cM'$ for $0 \leq t < 1$ and $\cE_t = \overline{\cE^{\floor{t}}}^{weak\ast}$ for $t \geq 1$, which turns out to be a quantum metric on $\cH$. We may intuitively describe $\cE$ as the space of lowest degree nontrivial errors. A graph metric is \textbf{connected} if $\cE_r = \cL(\cH)$ for some $0 \leq r < \infty$, meaning that $\cE$ generates $\cL(\cH)$ as an algebra. The examples of quantum metrics we introduce in Section \ref{sec:qmetric_from_algebra} are all graph metrics that naturally arise from representation theory.

\section{Quantum Codes}\label{sec:quantum_codes}

In this section, we give definitions related to quantum codes of quantum metrics. Many of the notions of quantum error correction can be realized as analogies of classical error correction. For our purposes, we restrict to the case of completely quantum graph metrics, where $\cH$ is finite dimensional. In other words, $\cM = \cB(\cH)$ and $\cE_t$ is a graph metric (thus, $\cE_t$ varies only on integer values of $t$). Since $\cH$ is finite dimensional, we will refer to $\cL(\cH)$ as the whole space of errors instead of $\cB(\cH)$. We also note that any subspace of $\cL(\cH)$ will be automatically weak${}^\ast$ closed. Lastly, the commutant of $\cL(\cH)$ is spanned by $I_{\cH}$, hence the errors of distance zero are the scalars. We remark that our restricted case is equivalent to the interaction algebra formulation of quantum error correction in \cite{KLV}. For a more general setting, see \cite{Bumg} where error correction was formulated for the case of when $\cM$ is finite dimensional and $\cE_t$ is a general quantum metric.

We first start with a fundamental definition. A \textbf{quantum code} is a subspace $\cC \subseteq \cH$ or equivalently the orthogonal projection $P \in \cL(\cH)$ onto $\cC$. For the rest of this paper, unless stated otherwise, $P$ and $\cC$ will refer to the same quantum code. From the definition alone, we do not assume any sort of error correction capabilities of $\cC$. The first capability of a code we introduce is the concept of quantum error detection.

\begin{definition} Let $\cC$ be a quantum code with orthogonal projection $P$. $\cC$ \textbf{detects} the error $E \in \cL(\cH)$ if there exists $\varepsilon(E) \in \C$ such that $PEP = \varepsilon(E)P$.
\end{definition}

Reading from right to left, we may interpret the expression $PEP$ as the scenario where we are given an initial state in $\cC$, an error $E$ occurs, then we check if an error occurred by projecting the given state back onto $\cC$. After projecting back into $\cC$, we either want $0$ (meaning $E$ maps $\cC$ into $\cC^\perp$ so we know an error occurred) or a nonzero state proportional to the original state. These two cases are summarized by $PEP = \varepsilon(E) P$ where $\varepsilon(E) = 0$ in the first case and $\varepsilon(E) \neq 0$ in the second. This is analogous to the classical case where detectable errors take code words to the complement of the code or have no effect on the code (although in the latter case the ``error" has no effect and thus technically is not detectable). One can realize the above process formally as a quantum operation through the measurement operation \beqn \mathcal{D}(\rho) = P \rho P + (I_\cH - P)\rho(I_\cH - P)\eeqn and some error operation $\Phi$ with Kraus operators that are detectable by $\cC$. One may also note that all codes of dimension one trivially detect all errors. There are a few equivalent formulations of detectable errors and depending on the context one may be more convenient to work with than another.

\begin{prop} Let $\cC$ be a quantum code and $E \in \cL(\cH)$. The following are equivalent: \begin{enumerate}
    \item $\cC$ detects $E$.
    \item For some orthonormal basis $\{\ket{\psi_i}\}_{i = 1}^{\dim(\cC)}$ of $\cC$, there exists $\varepsilon(E) \in \C$ such that \beqn \braket{\psi_i\vert{E}\vert\psi_j} = \varepsilon(E)\delta_{ij} \eeqn for all $1 \leq i,j \leq \dim(\cC)$.
    \item There exists $\varepsilon(E) \in \C$ such that $\braket{\psi\vert{E}\vert\psi} = \varepsilon(E)$ for all unit vectors $\ket{\psi} \in \cC$.
\end{enumerate}
\end{prop}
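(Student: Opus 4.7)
The plan is to prove the three statements equivalent via a short cycle, with the main tool being the polarization identity on a complex Hilbert space. I would handle $(1) \Leftrightarrow (2)$ first, since it is essentially a restatement of $PEP = \varepsilon(E)P$ in coordinates. Given any orthonormal basis $\{\ket{\psi_i}\}_{i=1}^{\dim(\cC)}$ of $\cC$, the operator $PEP$ is zero on $\cC^\perp$ and its matrix on $\cC$ has entries $\braket{\psi_i\vert E \vert \psi_j}$. Thus $PEP = \varepsilon(E) P$ holds iff these entries equal $\varepsilon(E)\delta_{ij}$: in one direction one sandwiches $PEP = \varepsilon(E)P$ between $\bra{\psi_i}$ and $\ket{\psi_j}$, and in the other one reassembles $PEP = \sum_{i,j} \braket{\psi_i \vert E \vert \psi_j} \ketbra{\psi_i}{\psi_j}$ using the resolution $P = \sum_i \ketbra{\psi_i}{\psi_i}$.

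Next, for $(1) \Rightarrow (3)$, note that any unit vector $\ket{\psi}\in\cC$ satisfies $P\ket{\psi} = \ket{\psi}$, so $\braket{\psi\vert E \vert \psi} = \braket{\psi\vert PEP \vert \psi} = \varepsilon(E)\braket{\psi\vert P\vert\psi} = \varepsilon(E)$. The reverse implication $(3) \Rightarrow (1)$ is the only substantive step. By homogeneity, the hypothesis in (3) extends to $\braket{\phi\vert E \vert \phi} = \varepsilon(E) \braket{\phi\vert\phi}$ for every $\ket{\phi} \in \cC$. Now for an arbitrary $\ket{\eta} \in \cH$, the vector $P\ket{\eta}$ lies in $\cC$, so
\beqn
\braket{\eta\vert PEP \vert \eta} = \braket{P\eta\vert E \vert P\eta} = \varepsilon(E)\lVert P\eta\rVert^2 = \varepsilon(E)\braket{\eta\vert P\vert\eta}.
\eeqn
Consequently the operator $A \defeq PEP - \varepsilon(E) P$ satisfies $\braket{\eta \vert A \vert \eta} = 0$ for every $\ket{\eta} \in \cH$, and the complex polarization identity then forces $A = 0$, which is condition (1).

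The argument is essentially routine, and the only point requiring care is the final appeal to polarization in $(3) \Rightarrow (1)$: the conclusion that a quadratic form vanishing identically implies the underlying operator vanishes holds in complex Hilbert spaces but fails over the reals (where one would additionally need $A$ to be self-adjoint, which is not automatic here since $E$ need not be). Since $\cH$ is complex throughout, this step goes through unconditionally and no additional hypothesis on $E$ is required.
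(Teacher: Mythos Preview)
Your proof is correct. The overall structure matches the paper's: both handle $(1)\Leftrightarrow(2)$ by writing $PEP$ in coordinates via $P=\sum_i\ketbra{\psi_i}{\psi_i}$, and both treat the passage from (3) back to (1)/(2) as the only nontrivial step, resolved by a polarization argument.

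The difference is in how that polarization is executed. You invoke the complex polarization identity as a black box on the operator $A=PEP-\varepsilon(E)P$, concluding directly that $A=0$ from $\braket{\eta\vert A\vert\eta}=0$ for all $\ket{\eta}$. The paper instead works in coordinates toward (2): it first extracts $\braket{\psi\vert E^\ast\vert\psi}=\overline{\varepsilon(E)}$ by taking adjoints, forms the Hermitian part $E_\R=\tfrac{1}{2}(E+E^\ast)$, and then evaluates $\braket{\phi_{kl}\vert E_\R\vert\phi_{kl}}$ for $\ket{\phi_{kl}}=\tfrac{1}{\sqrt{2}}(\ket{\psi_k}+\ket{\psi_l})$ in two ways to force the off-diagonal entries $\braket{\psi_k\vert E\vert\psi_l}$ to vanish. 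Your route is cleaner and coordinate-free; the paper's is more explicit and avoids quoting the polarization identity by effectively rederiving the piece of it that is needed (using the Hermitian-part trick so that only the real combination $\ket{\psi_k}+\ket{\psi_l}$ is required). Your closing remark about the necessity of working over $\C$ is apt and applies equally to both arguments.
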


\begin{proof} We first prove that (1) and (2) are equivalent. If $\{\ket{\psi_i}\}_{i = 1}^{\dim(\cC)}$ is an orthonormal basis of $\cC$, then $P = \sum_{i = 1}^{\dim(\cC)} \ketbra{\psi_i}{\psi_i}$. From this, we have \begin{align*}
PEP &= \left(\sum_{i = 1}^{\dim(\cC)} \ketbra{\psi_i}{\psi_i}\right)E\left(\sum_{j = 1}^{\dim(\cC)} \ketbra{\psi_j}{\psi_j}\right) \\
&= \sum_{i = 1}^{\dim(\cC)} \sum_{j = 1}^{\dim(\cC)} \braket{\psi_i\vert{E}\vert{\psi_j}} \ketbra{\psi_i}{\psi_j}. \addeqnumber\label{eq:detect_equiv}
\end{align*} Since the operators $\ketbra{\psi_i}{\psi_j}$ are linearly independent, it follows from the expression on line (\ref{eq:detect_equiv}) that $PEP = \varepsilon(E)P$ if and only if $\braket{\psi_i\vert{E}\vert{\psi_j}} = \varepsilon(E) \delta_{ij}$ for all $i,j$.

Lastly, we prove that (2) and (3) are equivalent. Assume (2) is true. If $\{\ket{\psi_i}\}_{i = 1}^{\dim(\cC)}$ is an orthonormal basis of $\cC$ then any unit vector $\ket{\psi} \in \cC$ can be written as $\ket{\psi} = \sum_{i = 1}^{\dim(\cC)} a_i \ket{\psi_i}$ where $\sum_{i = 1}^{\dim(\cC)} |a_i|^2 = 1$ so \beqn \braket{\psi\vert{E}\vert\psi} = \sum_{i = 1}^{\dim(\cC)}\sum_{j = 1}^{\dim(\cC)} a_i\overline{a_j} \braket{\psi_i\vert{E}\vert\psi_j} = \sum_{i = 1}^{\dim(\cC)} |a_i|^2\varepsilon(E) = \varepsilon(E). \eeqn Now, conversely, assume that (3) holds. Since each $\ket{\psi_i}$ is a unit vector, we have $\braket{\psi_i\vert{E}\vert\psi_i} = \varepsilon(E)$. Next, for any unit vector $\ket{\psi} \in \cC$ we have $\braket{\psi\vert{E}\vert\psi} = \varepsilon(E)$ and taking the adjoint of this equation yields $\braket{\psi\vert{E^\ast}\vert\psi} = \overline{\varepsilon(E)}$. Let $E_\R = \frac{1}{2}(E + E^\ast)$ and for $1 \leq k,l \leq \dim(\cC)$ and $k \neq l$, let \beqn \ket{\phi_{kl}} = \frac{1}{\sqrt{2}}(\ket{\psi_k} + \ket{\psi_l}). \eeqn We compute $\braket{\phi_{kl}\vert{E_\R}\vert\phi_{kl}}$. On one hand, expanding $\ket{\phi_{kl}}$ and then $E_\R$ yields \beqn \braket{\phi_{kl}\vert{E_\R}\vert\phi_{kl}} = \frac{1}{2}(\braket{\psi_{k}\vert + \langle{\psi_{l}\vert)E_\R(\vert\psi_{k}\rangle+\vert\psi_{l}}}) = \frac{1}{2}\varepsilon(E) + \braket{\psi_{k}\vert{E}\vert{\psi_{l}}} + \frac{1}{2}\overline{\varepsilon(E)}.\eeqn On the other hand, expanding only $E_\R$ yields \beqn \braket{\phi_{kl}\vert{E_\R}\vert\phi_{kl}} = \frac{1}{2}(\braket{\phi_{kl}\vert{E}\vert\phi_{kl}} + \braket{\phi_{kl}\vert{E^\ast}\vert\phi_{kl}}) = \frac{1}{2}(\varepsilon(E) + \overline{\varepsilon(E)}) \eeqn and taking the difference of this equation and the previous equation gives $\braket{\psi_{k}\vert{E}\vert{\psi_{l}}} = 0$.
\end{proof}

Now, we relate quantum codes directly to quantum graph metrics. Analogous to the distance of a classical code, we define the distance of a quantum code. The minimum distance is the smallest distance in which an error will nontrivially map the code into itself.

\begin{definition} Let $\cC$ be a quantum code. The \textbf{minimum distance} or \textbf{distance} of $\cC$ is the largest $d$ such that $\cC$ detects all errors in $\cE_{d-1}$.
\end{definition}

From the left-hand side of the equation $PEP = \varepsilon(E)P$, it follows that the function $E \mapsto \varepsilon(E)$ is a linear functional $\varepsilon:\cE_{d-1} \to \C$. This linear functional encodes geometric information of quantum codes in terms of the inner product on $\cH$, which we discuss later.

\begin{definition} Let $\cC$ be a quantum code of distance $d \geq 1$. The linear functional $\varepsilon:\cE_{d-1} \to \C$ where $PEP = \varepsilon(E)P$ for $E \in \cE_{d-1}$ is called the \textbf{slope} of $\cC$.
\end{definition}

There is a connection between the distance of a quantum code and its error correction capabilities, which is another analogy to classical error correction.

\begin{theorem}[\cite{KLV}] \label{thm:distance_qec} If $\cC$ has distance $d \geq 1$a then $\cC$ corrects all errors in $\cE_{\floor{\frac{d-1}{2}}}$.
\end{theorem}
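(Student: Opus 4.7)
The plan is to reduce the theorem to the error correction conditions of Theorem \ref{thm:qec_conditions} together with the discretization statement of Theorem \ref{thm:discretization}, with the quantum metric axioms supplying exactly the algebra that converts a detection radius of $d-1$ into a correction radius of $\floor{\frac{d-1}{2}}$.

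First I would set $t = \floor{\frac{d-1}{2}}$, pick any basis $\{E_1,\ldots,E_r\}$ of the finite-dimensional subspace $\cE_t \subseteq \cL(\cH)$, and form the completely positive map $\Phi(X) = \sum_{k=1}^{r} E_k X E_k^\ast$. To invoke Theorem \ref{thm:qec_conditions} the task is to verify the error correction conditions, i.e.\ to produce scalars $\varepsilon_{kl}$ with $P E_k^\ast E_l P = \varepsilon_{kl} P$ for all $1 \leq k, l \leq r$.

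This verification is where the metric axioms do essentially all the work. By axiom (2), $E_k^\ast \in \cE_t^\ast = \cE_t$, and by axiom (3) (the triangle inequality) $E_k^\ast E_l \in \cE_t \cE_t \subseteq \cE_{2t}$. The choice of $t$ gives $2t \leq d-1$, so $E_k^\ast E_l \in \cE_{d-1}$. Since $\cC$ has minimum distance $d$, by definition it detects every element of $\cE_{d-1}$, and in particular $P E_k^\ast E_l P = \varepsilon(E_k^\ast E_l) P$ where $\varepsilon$ is the slope of $\cC$; setting $\varepsilon_{kl} := \varepsilon(E_k^\ast E_l)$ meets the hypothesis. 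Theorem \ref{thm:qec_conditions} then yields a recovery operation $\mathcal{R}$ for $\cC$ correcting $\Phi$.

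Finally, I would apply Theorem \ref{thm:discretization} to upgrade this single $\mathcal{R}$ to a recovery that simultaneously corrects every completely positive map whose Kraus operators lie in $\lspan\{E_1,\ldots,E_r\} = \cE_t$, which is the intended meaning of ``$\cC$ corrects all errors in $\cE_{\floor{(d-1)/2}}$.'' I do not anticipate a substantive obstacle: axioms (2) and (3) precisely encode the classical slogan ``if you detect errors of weight $\leq d-1$, you correct errors of weight $\leq \floor{(d-1)/2}$,'' and the only minor caveat is that Theorem \ref{thm:qec_conditions} requires only complete positivity (not trace preservation) of $\Phi$, so no extra normalization of the chosen basis is needed.
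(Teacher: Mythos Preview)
Your proposal is correct and follows essentially the same route as the paper: use the self-adjointness axiom $\cE_t^\ast = \cE_t$ and the triangle inequality $\cE_t\cE_t \subseteq \cE_{2t}$ to place each product $E_k^\ast E_l$ in $\cE_{d-1}$, then invoke detection at distance $d$ to verify the error correction conditions of Theorem \ref{thm:qec_conditions}. The paper's proof is simply a more compressed version of your argument (it writes $P\cE_{\floor{(d-1)/2}}\cE_{\floor{(d-1)/2}}^\ast P \subseteq \C P$ in one line and does not explicitly cite Theorem \ref{thm:discretization}).
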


\begin{proof}
This theorem is a corollary of the error correction conditions, Theorem \ref{thm:qec_conditions}. The symmetry property and the triangle inequality imply that $\cE_{\floor{\frac{d-1}{2}}}\cE_{\floor{\frac{d-1}{2}}}^\ast \subseteq \cE_{d-1}$. Now, \beqn P\cE_{\floor{\frac{d-1}{2}}}\cE_{\floor{\frac{d-1}{2}}}^\ast P \subseteq P\cE_{d-1} P = \C P \eeqn which implies that $\cC$ satisfies the error correction conditions.
\end{proof}

Next, we will introduce some more general quantum error correction concepts related to the slope $s$ that gives intuition about the error correction conditions.

Let $d' = \floor{\frac{d-1}{2}}$ and define a sesquilinear form on $\cE_{d'}$ by $\innerprod{E}{F}_{\varepsilon} = \varepsilon(E^\ast F)$. We note that this is well-defined since if $E, F \in \cE_{d'}$ then $\varepsilon(E^\ast F) \in \C$ and, in fact, $\innerprod{\cdot}{\cdot}_{\varepsilon}$ is a Hermitian form on $\cE_{d'}$. This Hermitian form gives geometric information of how errors in $\cE_{d'}$ act on $\cC$. First and foremost, if $E, F \in \cE_{d'}$ are orthogonal with respect to $\innerprod{\cdot}{\cdot}_{\varepsilon}$, then \beqn \braket{\psi\vert{E^\ast F}\vert\psi} = \varepsilon(E^\ast F) = 0 \eeqn for all $\ket{\psi} \in \cC$. This implies that the images of $\cC$ under $E$ and $F$ are orthogonal. Furthermore, we may unitarily diagonalize $\innerprod{\cdot}{\cdot}_{\varepsilon}$, which means that there exists a basis $X = \{E_1,\ldots,E_m\}$ of $\cE_{d'}$ such that $\innerprod{E_k}{E_l}_{\varepsilon} = 0$ for all $k \neq l$ and $\innerprod{E_k}{E_k}_{\varepsilon} \in \{0,1\}$ for $1 \leq k \leq m$. The span of the $E_k$'s where $\innerprod{E_k}{E_k}_{\varepsilon} = 0$ is called the kernel of $\innerprod{\cdot}{\cdot}_{\varepsilon}$. Each error $E_k$ not in the kernel takes $\cC$ to a distinct mutually orthogonal subspace of $\cH$, and this characterization of correctable errors can be seen as the quantum version of the fact that correctable errors in classical error correction each affect the code in a unique way. For such errors $E_k$ of this set, \beqn (E_kP)^\ast E_kP = PE_k^\ast E_kP = P,\eeqn so $E_k$ is a unitary operator when restricted to $\cC$. The recovery operation is to thus perform a projection-valued measurement given by the projections onto the images of the $E_kP$'s and then apply the restricted inverse of $E_kP$ for each measurement outcome. We mention that, conversely, if a recovery operation exists for a set of error operators then one can prove that the quantum code satisfies the error correction conditions (see Theorem 10.1 in \cite{NC}). The correspondence between correctable errors and orthogonal subspaces of $\cH$ gives the following upper bound on the dimension of the code.

\begin{theorem}[Quantum Volume Bound \cite{KW}] Let $\cC \subseteq \cH$ be a quantum code of distance $d$ and $\mathcal{K} \subseteq \cV_{d'}$ the kernel of $\innerprod{\cdot}{\cdot}_{\varepsilon}$. Then \beqn \dim(\cC)(\dim(\cE_{d'}) - \dim(\mathcal{K})) \leq \dim(\cH). \eeqn
\end{theorem}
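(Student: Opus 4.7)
The plan is to use the diagonalization of the Hermitian form $\innerprod{\cdot}{\cdot}_{\varepsilon}$ on $\cE_{d'}$ described in the paragraphs leading up to the statement, and then show that the images of $\cC$ under the basis elements outside the kernel are mutually orthogonal copies of $\cC$ sitting inside $\cH$. The dimension count then gives the bound directly.

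First I would choose a basis $\{E_1,\ldots,E_m\}$ of $\cE_{d'}$, where $m = \dim(\cE_{d'})$, that unitarily diagonalizes $\innerprod{\cdot}{\cdot}_{\varepsilon}$, so that $\innerprod{E_k}{E_l}_{\varepsilon} = 0$ for $k \neq l$ and $\innerprod{E_k}{E_k}_{\varepsilon} \in \{0,1\}$. After reindexing, assume $\innerprod{E_k}{E_k}_{\varepsilon} = 1$ for $1 \leq k \leq r$ and $\innerprod{E_k}{E_k}_{\varepsilon} = 0$ for $r < k \leq m$, so that $E_{r+1},\ldots,E_m$ span $\mathcal{K}$ and $r = \dim(\cE_{d'}) - \dim(\mathcal{K})$.

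Next I would verify that for any $1 \leq k,l \leq r$, the products $E_k^\ast E_l$ lie in $\cE_{d-1}$ and hence are detected by $\cC$ with slope $\varepsilon(E_k^\ast E_l) = \innerprod{E_k}{E_l}_\varepsilon = \delta_{kl}$. Indeed, by the symmetry and triangle-inequality axioms for $\cE_t$ together with $2d' \leq d-1$, we have $E_k^\ast E_l \in \cE_{d'}^\ast \cE_{d'} \subseteq \cE_{2d'} \subseteq \cE_{d-1}$, so $P E_k^\ast E_l P = \delta_{kl} P$. Consequently, for any $\ket{\psi},\ket{\phi} \in \cC$,
\begin{equation*}
\innerprod{E_k \psi}{E_l \phi} = \braket{\psi\vert E_k^\ast E_l \vert \phi} = \delta_{kl} \braket{\psi\vert\phi}.
\end{equation*}
Taking $k = l$ shows that $E_k$ acts as a linear isometry on $\cC$, so $\dim(E_k \cC) = \dim(\cC)$; taking $k \neq l$ shows that the subspaces $E_k \cC$ and $E_l \cC$ of $\cH$ are mutually orthogonal.

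Since the $r$ subspaces $E_1 \cC, \ldots, E_r \cC$ are pairwise orthogonal in $\cH$, their dimensions add, yielding
\begin{equation*}
r \cdot \dim(\cC) = \sum_{k=1}^{r} \dim(E_k \cC) = \dim\!\Bigl(\bigoplus_{k=1}^{r} E_k \cC\Bigr) \leq \dim(\cH),
\end{equation*}
which is the claimed inequality. The only nontrivial step is recognizing that the diagonalization of $\innerprod{\cdot}{\cdot}_\varepsilon$ is available on all of $\cE_{d'}$ (not just on correctable errors in the Knill--Laflamme sense) and that the products $E_k^\ast E_l$ land in $\cE_{d-1}$; both follow immediately from the quantum metric axioms and the choice $d' = \floor{(d-1)/2}$, so I do not anticipate a serious obstacle.
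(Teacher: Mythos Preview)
Your proof is correct and follows essentially the same approach as the paper: the paragraphs preceding the theorem already sketch exactly this argument, observing that after diagonalizing $\innerprod{\cdot}{\cdot}_\varepsilon$ each $E_k$ outside the kernel satisfies $(E_kP)^\ast E_kP = P$ and sends $\cC$ to mutually orthogonal subspaces of $\cH$. Your write-up simply formalizes that discussion, including the verification that $E_k^\ast E_l \in \cE_{2d'} \subseteq \cE_{d-1}$ via the quantum metric axioms.
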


Unlike the classical case, there is the aspect that the kernel of $\innerprod{\cdot}{\cdot}_{\varepsilon}$ is involved in this upper bound. Recall that a Hermitian (or bilinear) form on a finite dimensional vector space with trivial kernel is equivalent to the form being nondegenerate. If the Hermitian form $\innerprod{\cdot}{\cdot}_{\varepsilon}$ of a code is nondegenerate, then we also call the code \textbf{nondegenerate} and \textbf{degenerate} otherwise. If $\cC$ is nondegenerate, then $\dim(\mathcal{K}) = 0$ so, in this case, the bound involves only the dimension of $\cC$. We call this the nondegenerate quantum volume bound. A nondegenerate quantum code that meets the (nondegenerate) quantum volume bound exactly is called \textbf{perfect}. On the other hand, since a larger kernel relaxes this bound there is a question of whether there are certain parameters for codes where the largest degenerate codes are strictly larger than the largest nondegenerate codes. For each quantum metric space, there is also a related question of whether every quantum code, nondegenerate or not, must obey the nondegenerate quantum volume bound. For quantum Hamming space, the quantum volume bound is called the quantum Hamming bound as an analogy of the classical Hamming bound. It is conjectured that all quantum codes must obey the nondegenerate quantum Hamming bound.

\begin{definition}\label{def:nondegen_code} A quantum code $\cC$ of distance $d$ is \textbf{nondegenerate} if the Hermitian form $\innerprod{E}{F}_{\varepsilon} = \varepsilon(E^\ast F)$ for $E, F \in \cE_{d'}$ is nondegenerate. Otherwise, $\cC$ is \textbf{degenerate}.
\end{definition}

The following proposition formalizes our discussion between nondegeneracy and the correspondence between subspaces and correctable errors.

\begin{prop}\label{prop:nondegen_equivs} Let $\cC$ be a quantum code $\cC$ of distance $d$. The following are equivalent: \begin{enumerate}
    \item $\cC$ is nondegenerate.
    \item The kernel of $\innerprod{\cdot}{\cdot}_{\varepsilon}$ is trivial.
    \item For any $E \in \cE_{d'}$, if $EP = 0$ then $E = 0$.
\end{enumerate}
\end{prop}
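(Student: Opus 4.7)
The plan is to dispose of the equivalence (1) $\Leftrightarrow$ (2) immediately as the linear-algebraic fact that a Hermitian form on a finite dimensional vector space is nondegenerate exactly when its kernel is zero, and then reduce (2) $\Leftrightarrow$ (3) to a single identity relating $\innerprod{\cdot}{\cdot}_\varepsilon$ to how elements of $\cE_{d'}$ act on the code subspace $\cC$.

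The identity I would establish is as follows. Let $d' = \floor{\frac{d-1}{2}}$. For any $E, F \in \cE_{d'}$, the symmetry and triangle inequality axioms of a quantum metric yield $F^\ast E \in \cE_{d'}^\ast \cE_{d'} \subseteq \cE_{2d'} \subseteq \cE_{d-1}$, so the slope $\varepsilon$ is defined on $F^\ast E$ and one computes
\[
\innerprod{F}{E}_\varepsilon\, P \;=\; \varepsilon(F^\ast E)\, P \;=\; P F^\ast E P \;=\; (FP)^\ast (EP).
\]
Specializing to $F = E$ shows $\innerprod{E}{E}_\varepsilon \, P = (EP)^\ast (EP)$. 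Assuming $P \neq 0$, the left side vanishes iff $\innerprod{E}{E}_\varepsilon = 0$, and the right side (a positive operator) vanishes iff $EP = 0$; so $\innerprod{E}{E}_\varepsilon = 0$ is equivalent to $EP = 0$. More generally, if $EP = 0$ the displayed formula forces $\innerprod{F}{E}_\varepsilon = 0$ for every $F \in \cE_{d'}$, i.e.\ $E \in \ker \innerprod{\cdot}{\cdot}_\varepsilon$; and conversely if $E \in \ker \innerprod{\cdot}{\cdot}_\varepsilon$ then in particular $\innerprod{E}{E}_\varepsilon = 0$, whence $EP = 0$ by the previous sentence. Thus for $E \in \cE_{d'}$, the conditions $EP = 0$ and $E \in \ker \innerprod{\cdot}{\cdot}_\varepsilon$ are equivalent.

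Once this sharper equivalence is in hand, the two remaining implications fall out immediately: if (2) holds and $EP = 0$, then $E \in \ker \innerprod{\cdot}{\cdot}_\varepsilon = 0$, giving (3); conversely if (3) holds and $E \in \ker \innerprod{\cdot}{\cdot}_\varepsilon$, then $EP = 0$ and hence $E = 0$, so the kernel is trivial, giving (2). I do not anticipate a substantive obstacle. The only nontrivial ingredient is the chain $\cE_{d'}^\ast \cE_{d'} \subseteq \cE_{d-1}$, which is exactly what underlies Theorem~\ref{thm:distance_qec} and permits us to speak of $\varepsilon(F^\ast E)$ at all; everything else is book-keeping around the identity $(FP)^\ast(EP) = \varepsilon(F^\ast E) P$.
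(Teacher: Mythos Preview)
Your proof is correct and follows essentially the same approach as the paper. Both arguments rest on the identity $\varepsilon(F^\ast E)\,P = P F^\ast E P = (FP)^\ast(EP)$; the only organizational difference is that you prove the sharper statement $\ker\innerprod{\cdot}{\cdot}_\varepsilon = \{E \in \cE_{d'} : EP = 0\}$ and read off (2)~$\Leftrightarrow$~(3) directly, whereas the paper argues (1)~$\Rightarrow$~(3) and (3)~$\Rightarrow$~(1) separately using the same identity.
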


\begin{proof} (1) and (2) are equivalent from properties of Hermitian forms on finite dimensional complex vector spaces. Next, we assume (1) and prove (3). If $EP = 0$ then \beqn \innerprod{E}{E}_{\varepsilon}P = \varepsilon(E^\ast E)P = PE^\ast EP = 0 \eeqn and the nondegeneracy of $\innerprod{\cdot}{\cdot}_{\varepsilon}$ implies $E = 0$. Lastly, we assume (3) and prove (1) by contraposition. In particular, assume that there exists $E \in \cE_{d'}$ where $E \neq 0$ but $EP = 0$. Now, for all $F \in \cE_{d'}$, we have \beqn \innerprod{E}{F}_{\varepsilon}P = \varepsilon(E^\ast F) P = PE^\ast FP = (EP)^\ast FP = 0, \eeqn hence $\innerprod{\cdot}{\cdot}_{\varepsilon}$ is degenerate.
\end{proof}

In the quantum error correction literature, a quantum code is roughly defined to be degenerate if two linearly independent errors act identically on the quantum code. Restating property (3) in terms of degeneracy, we may connect Definition \ref{def:nondegen_code} to this definition of degenerate codes.

\begin{prop} Let $\cC$ be a quantum code of distance $d$. $\cC$ is degenerate if and only if there exists $E, F \in \cE_{d'}$ such that $E$ and $F$ are linearly independent and $EP = FP$.
\end{prop}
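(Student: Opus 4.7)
The plan is to deduce this proposition directly from Proposition \ref{prop:nondegen_equivs}, specifically the equivalence of (1) and (3), which says that $\cC$ is nondegenerate if and only if $EP = 0$ (for $E \in \cE_{d'}$) forces $E = 0$. Equivalently, $\cC$ is degenerate if and only if there exists a nonzero $G \in \cE_{d'}$ with $GP = 0$. Since $EP = FP$ is the same as $(E-F)P = 0$, the statement we want is essentially a rephrasing, but with the extra subtlety that linear independence of $E,F$ is stronger than merely $E \neq F$.

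For the ``if'' direction, I would suppose that $E, F \in \cE_{d'}$ are linearly independent with $EP = FP$. Linear independence implies $E \neq F$, hence $G \defeq E - F$ is a nonzero element of $\cE_{d'}$ (using that $\cE_{d'}$ is a subspace), and $GP = EP - FP = 0$. By the contrapositive of Proposition \ref{prop:nondegen_equivs}, $\cC$ is degenerate.

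For the ``only if'' direction, suppose $\cC$ is degenerate. By Proposition \ref{prop:nondegen_equivs}, there exists a nonzero $G \in \cE_{d'}$ with $GP = 0$. The naive guess $E = G$, $F = 0$ fails because $\{G, 0\}$ is not linearly independent. The fix is to add the identity: take $F = I_\cH$ and $E = G + I_\cH$. Both lie in $\cE_{d'}$ since $I_\cH \in \cE_0 \subseteq \cE_{d'}$, and $EP = GP + P = P = FP$. To verify linear independence, I would observe that $G$ cannot be a scalar multiple of $I_\cH$: if $G = \lambda I_\cH$, then $GP = \lambda P$, and since a nontrivial code has $P \neq 0$, the condition $GP = 0$ would force $\lambda = 0$ and hence $G = 0$, contradicting the choice of $G$. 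Consequently $\{I_\cH, G + I_\cH\}$ spans the same plane as $\{I_\cH, G\}$ and is linearly independent.

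The only mild obstacle is this last construction in the ``only if'' direction, specifically ensuring the pair is genuinely linearly independent rather than just distinct; the identity trick resolves this cleanly using the fact that $\cE_{d'}$ contains $\cE_0 = \C I_\cH$.
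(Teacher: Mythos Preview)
Your proof is correct and follows the same overall approach as the paper: both directions reduce to the characterization in Proposition~\ref{prop:nondegen_equivs}(3). The only difference is in the construction for the ``only if'' direction. The paper takes an arbitrary $B \in \cE_{d'}$ linearly independent from the null element $A$ (appealing to the assumption that $\cE_{d'}$ has dimension at least two) and sets $E = A + B$, $F = -A + B$. Your choice $F = I_\cH$, $E = G + I_\cH$ is more concrete and avoids that side assumption: you directly argue that $G$ cannot be a scalar multiple of $I_\cH$ since $GP = 0$ and $P \neq 0$, which is cleaner.
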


\begin{proof} We assume that condition (3) of Proposition \ref{prop:nondegen_equivs} does not hold, so there exists $A \in \cE_{d'}$ where $AP = 0$ but $A \neq 0$. For the sake of convenience, we assume that $\cC$ corrects at least two linearly independent errors, so let $B \in \cE_{d'}$ where $A$ and $B$ are linearly independent. Now, take $E = A + B$ and $F = -A + B$, which are linearly independent, and we have \beqn EP = (A + B)P = (-A + B)P = FP. \eeqn For the converse, given such $E$ and $F$, we have $E - F \neq 0$ and $(E - F)P = 0$ so condition (3) of Proposition \ref{prop:nondegen_equivs} does not hold.
\end{proof}

Concluding our discussion, we lastly define pure codes, which have a stronger condition than nondegenerate.

\begin{definition} A quantum code $\cC$ of distance $d$ is \textbf{pure} if $\varepsilon(E) = 0$ for all $E \in \cE_{d-1}$ such that $\tr(E) = 0$.
\end{definition}

Equivalently, the slope of a pure code is a scalar multiple of the trace functional and, since $PI_{\cH}P = P$, we necessarily have $\varepsilon(E) = \frac{1}{\dim(\cH)}\tr(E)$. An immediate geometric interpretation of pure codes is that all detectable errors of distance at most $d-1$ and trace $0$ map the code to a subspace orthogonal to the code. Moreover, for pure codes the sesquilinear form $\varepsilon$ induced by $\varepsilon$ is a scalar multiple of the Hilbert-Schmidt inner product and hence is nondegenerate. It follows that pure codes are nondegenerate and, moreover, operators of distance at most $d-1$ that are orthogonal with respect to the Hilbert-Schmidt inner product send the code to orthogonal subspaces.

\begin{prop} If $\cC$ is a pure quantum code then $\cC$ is nondegenerate.
\end{prop}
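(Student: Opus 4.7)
The plan is to exploit the explicit form of $\varepsilon$ that purity forces, namely $\varepsilon(E) = \tfrac{1}{\dim(\cH)}\tr(E)$ for all $E \in \cE_{d-1}$, which the paragraph preceding the proposition already observes. Once this identification is in hand, the Hermitian form $\innerprod{\cdot}{\cdot}_\varepsilon$ on $\cE_{d'}$ becomes (up to a positive scalar) a restriction of the Hilbert-Schmidt inner product, whose nondegeneracy is automatic. So the whole argument reduces to verifying that the formula for $\varepsilon$ applies to products $E^\ast F$ with $E, F \in \cE_{d'}$.

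The key step is therefore the following containment. Recall $d' = \floor{\tfrac{d-1}{2}}$, so $2d' \le d-1$. By axiom (2) of a quantum metric, $\cE_{d'}^\ast = \cE_{d'}$, and by the triangle inequality (axiom (3)), $\cE_{d'} \cE_{d'} \subseteq \cE_{2d'} \subseteq \cE_{d-1}$. Hence for any $E, F \in \cE_{d'}$ one has $E^\ast F \in \cE_{d-1}$, so purity applies and yields
\beqn
\innerprod{E}{F}_\varepsilon \;=\; \varepsilon(E^\ast F) \;=\; \frac{1}{\dim(\cH)} \tr(E^\ast F) \;=\; \frac{1}{\dim(\cH)} \innerprod{E}{F}_{HS}.
\eeqn

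Since the Hilbert-Schmidt inner product is positive definite on $\cL(\cH)$, its restriction to the subspace $\cE_{d'}$ remains positive definite, and in particular nondegenerate. A nonzero positive scalar multiple of a nondegenerate form is nondegenerate, so $\innerprod{\cdot}{\cdot}_\varepsilon$ is nondegenerate on $\cE_{d'}$, which is exactly the definition of $\cC$ being nondegenerate.

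There is no real obstacle here beyond being careful with the two different bounds $d-1$ and $d' = \floor{\tfrac{d-1}{2}}$: one must check that products from $\cE_{d'}$ land inside the domain $\cE_{d-1}$ on which purity is defined, which is precisely what the inequality $2d' \le d-1$ ensures. The only mild subtlety is the scalar $\tfrac{1}{\dim(\cH)}$: it is determined by the normalization $\varepsilon(I_\cH) = 1$, which follows from $PI_\cH P = P$, and it is crucially nonzero, so the proportionality between $\innerprod{\cdot}{\cdot}_\varepsilon$ and $\innerprod{\cdot}{\cdot}_{HS}$ preserves nondegeneracy.
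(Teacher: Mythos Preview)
Your proof is correct and follows essentially the same argument as the paper: the paragraph preceding the proposition already observes that for a pure code the form $\innerprod{\cdot}{\cdot}_\varepsilon$ is a scalar multiple of the Hilbert-Schmidt inner product and hence nondegenerate, and you have spelled out the details of that observation carefully, including the verification that $E^\ast F \in \cE_{d-1}$ via the triangle inequality.
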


Our last remark about pure codes regards viewing $P$ itself as an element of the quantum metric $\cE_t$. We may deduce that $P \not\in \cE_{d-1}$ and, furthermore, must be orthogonal to $E \in \cE_{d-1}$ such that $\tr(E) = 0$.

\begin{prop} If $\cC$ is a pure quantum code of distance $d$ then $P$ is orthogonal to all $E \in \cE_{d-1}$ such that $\tr(E) = 0$.
\end{prop}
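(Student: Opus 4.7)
The plan is to unpack the two key facts and combine them via a single trace computation. First, the ``orthogonality'' here must refer to the Hilbert--Schmidt inner product on $\cL(\cH)$, since that is the only inner product structure in play on the quantum metric $\cE_{d-1}$; so the goal reduces to showing that $\tr(P^\ast E) = \tr(PE) = 0$ for every $E \in \cE_{d-1}$ with $\tr(E) = 0$.

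Next I would invoke the detection condition: since $\cC$ has distance $d$, every $E \in \cE_{d-1}$ is detected, so $PEP = \varepsilon(E) P$. Taking the trace of both sides, using $\tr(P) = \dim(\cC)$ and cyclicity together with $P^2 = P$, gives
\begin{equation*}
\tr(PE) = \tr(P^2 E) = \tr(PEP) = \varepsilon(E)\dim(\cC).
\end{equation*}

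Now I would bring in purity. By the characterization stated just above the proposition, a pure code of distance $d$ has slope $\varepsilon(E) = \tfrac{1}{\dim(\cH)}\tr(E)$ on $\cE_{d-1}$. Substituting, $\tr(PE) = \tfrac{\dim(\cC)}{\dim(\cH)}\tr(E)$. When $\tr(E) = 0$ this vanishes, so $\innerprod{P}{E}_{HS} = 0$, which is the claimed orthogonality.

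There is no real obstacle: the result is essentially a one-line trace identity. The only subtle point is the interpretation of ``orthogonal'' — once one recognizes that the ambient inner product on errors is Hilbert--Schmidt, the pure-code formula for $\varepsilon$ makes the proof immediate. (In fact this also confirms, consistently with the preceding remark, that $P \notin \cE_{d-1}$ unless $\dim(\cC) = 0$, since $\tr(P \cdot P) = \dim(\cC) \neq 0$.)
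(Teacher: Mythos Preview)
Your proof is correct and follows essentially the same approach as the paper: apply the detection condition $PEP = \varepsilon(E)P$, take the trace using $P^2 = P$ and cyclicity, and then use purity to conclude $\varepsilon(E) = 0$ when $\tr(E) = 0$. The paper's version is marginally more direct in that it invokes the definition of purity ($\varepsilon(E) = 0$ for traceless $E$) immediately rather than passing through the equivalent formula $\varepsilon(E) = \tfrac{1}{\dim(\cH)}\tr(E)$, but the argument is the same.
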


\begin{proof} By definition, for $E \in \cE_{d-1}$ with $\tr(E) = 0$, we have $PEP = \varepsilon(E)P = 0$. Taking the trace of this equation yields $\tr(EP) = \tr(PEP) = 0$.
\end{proof}

\section{Quantum Metrics from Representations of Algebras}\label{sec:qmetric_from_algebra}

In this section, we introduce examples of quantum graph metrics such that the space of generating errors arises from the action of an algebra on a Hilbert space $\cH$. In most cases, the algebra will be a Lie algebra and, in one case, we will consider the complex Clifford algebra. For the examples, we aim to describe how the errors of distance one act on each $\cH$ and thereby give a relatively concrete description of the quantum metrics. We also give some background review on the derivations of the actions. We will assume familiarity with the representation theory of semisimple Lie groups and Lie algebras. For a reference on these topics, see \cite{Hall} or \cite{FH}.

Consider a semisimple compact real Lie algebra $\g$ and let $\cH$ be a representation of $\g$ through a Lie algebra map $\phi:\g \to \cL(\cH)$. Furthermore, assume that $\cH$ is a unitary representation, meaning that $\phi(X)$ is skew self-adjoint for all $X \in \g$. We make a side note that if $\g$ is the Lie algebra of a connected Lie group $G$, then $\cH$ is a unitary representation of $\g$ if and only if $\cH$ is a unitary representation of $G$ \cite[Proposition 4.8]{Hall}. Now, let $\cE = \lspan_\C(I_\cH,\phi(\g))$. $\cE$ trivially satisfies $I_{\cH} \in \cE$ and also satisfies $\cE = \cE^\ast$ since $\cH$ is a unitary representation. Now, from $\cE$, we may construct a quantum graph metric on $\cH$ and thus we have the following definition.

\begin{definition}\label{def:algebraic_metric} Let $\g$ be a semisimple compact real Lie algebra and $\cH$ a finite dimensional unitary representation of $\g$ that is given by a Lie algebra map $\phi:\g \to \cL(\cH)$. The quantum graph metric on $\cH$ generated by $\g$ is the quantum metric $\cE_t = \cE^{\floor{t}}$ where $\cE = \lspan_\C\{I_{\cH},\phi(\g)\}$.
\end{definition}

We note that if $\g_\C$ is the complexification of $\g$, then we may also define the quantum graph metric generated by $\g_\C$ in the same way. The resulting quantum metric will equal the quantum graph metric generated by $\g$ since $\phi(\g_\C)$ is the complex linear span of $\phi(g)$. This fact allows us to use the weight decomposition of representations to more clearly describe each quantum metric space. These quantum metrics are motivated by the fact that they often have nice symmetry properties, which we discuss in Chapter 3, and are physically motivated as noise models given by environmental interaction \cite{KLV}. In the case that $\cH$ is irreducible, Burnside's Theorem \cite[p. 182]{CR} implies that the action of $\g$ generates $\cL(\cH)$, hence $(\cH, \cE_t)$ is a connected quantum metric space. We note that the quantum metric defined in Definition \ref{def:algebraic_metric} depends on $\cH$ and $\phi$, but it turns out that isomorphic irreducible representations of $\g$ induce, in some sense, equivalent quantum metric spaces. We discuss this notion of equivalence.

\begin{definition}\label{def:quantum_isometry} Let $(\cH, \cE_t)$ and $(\cH', \cE_t')$ be quantum metric spaces. An isometry $U:\cH \to \cH'$ is a \textbf{quantum metric isometry} (or quantum isometry) if $\cE_t \subseteq U^\ast \cE_t' U$ for all $t \geq 0$. If $U$ is moreover unitary then $U$ is a \textbf{quantum metric space isomorphism}, and we say that $(\cH, \cE)$ is \textbf{isometrically isomorphic} to $(\cH', \cE_t')$.
\end{definition}

If the Hilbert spaces are the same, say $\cH$, then a quantum isometry is necessarily a quantum metric space isomorphism. The set of quantum isometries of a quantum metric space forms a group which we denote $\Isom(H, \cE_t)$, or $\Isom(H)$ if the quantum metric is unambiguous. Like the case for classical metric spaces, the group of quantum isometries encodes how much symmetry the quantum metric space exhibits. This will be a central topic in Chapter 3. Quantum metric space isomorphisms also define a notion of equivalence of quantum codes. This is motivated by the following proposition.

\begin{prop}\label{prop:equiv_qcode} Let $(\cH, \cE_t)$ and $(\cH', \cE_t')$ be quantum metric spaces. If $U:\cH \to \cH'$ is a quantum metric space isomorphism then, for any quantum code $\cC \subseteq \cH$ of distance $d$, $U\cC \subseteq \cH'$ is a quantum code of distance $d$.
\end{prop}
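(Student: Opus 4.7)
The plan is to use the projection $P' = UPU^\ast$ of the image code $U\cC$ and directly conjugate the detection identity for $\cC$ by $U$. First I would unpack the definition of a quantum metric space isomorphism: since $U$ is unitary and $\cE_t \subseteq U^\ast \cE_t' U$, the adjoint $U^\ast : \cH' \to \cH$ is also a unitary quantum isometry (swapping the roles of the two spaces in Definition~\ref{def:quantum_isometry}), which gives the reverse inclusion $\cE_t' \subseteq U \cE_t U^\ast$. Combining the two yields the equality $U \cE_t U^\ast = \cE_t'$ for every $t \geq 0$, and similarly $U^\ast \cE_t' U = \cE_t$.

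With this equality in hand, the main step is a straightforward conjugation. For any $E' \in \cE'_{d-1}$, I would write $E' = UEU^\ast$ for a unique $E \in \cE_{d-1}$ and insert factors of $U^\ast U = I_{\cH}$ to get
\beqn
P' E' P' = (UPU^\ast)(UEU^\ast)(UPU^\ast) = U(PEP)U^\ast = \varepsilon(E)\,UPU^\ast = \varepsilon(E)\,P',
\eeqn
using that $\cC$ detects $E$. This shows $U\cC$ detects every error in $\cE'_{d-1}$, so the distance of $U\cC$ is at least $d$.

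To rule out a strictly larger distance, I would apply the same conjugation trick in reverse. By the definition of distance, there is some $E_0 \in \cE_d$ for which $PE_0P$ is not a scalar multiple of $P$. Then $E_0' \defeq UE_0U^\ast \in \cE_d'$ satisfies $P' E_0' P' = U(PE_0P)U^\ast$, which is a scalar multiple of $P' = UPU^\ast$ if and only if $PE_0P$ is a scalar multiple of $P$; by choice of $E_0$ it is not, so $U\cC$ fails to detect $E_0'$, giving distance exactly $d$.

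The only subtle point, and arguably the main obstacle, is justifying the upgrade from the one-sided inclusion in Definition~\ref{def:quantum_isometry} to the symmetric equality $U\cE_t U^\ast = \cE_t'$; the paper's use of the word ``isomorphism'' strongly suggests this is intended, but it needs to be extracted from the asymmetric inclusion together with unitarity. Once that is settled, the remainder of the proof is purely a bookkeeping computation with the projection $P' = UPU^\ast$.
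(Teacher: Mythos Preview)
Your proof follows the same conjugation approach as the paper's: set $P'=UPU^\ast$, write an arbitrary $E'\in\cE'_{d-1}$ as $UEU^\ast$, and compute $P'E'P'=U(PEP)U^\ast=\varepsilon(E)P'$. The paper stops there, showing only that $U\cC$ has distance \emph{at least} $d$; your extra paragraph producing an undetectable $E_0'\in\cE_d'$ from an undetectable $E_0\in\cE_d$ is a genuine addition and is needed for the ``exactly $d$'' statement.

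On the symmetry issue you flag: your argument that $U^\ast$ is automatically a quantum isometry is circular as written. Saying ``$U^\ast:\cH'\to\cH$ is a quantum isometry'' means precisely $\cE_t'\subseteq U\cE_t U^\ast$, which is the conclusion you want, not something you already have from $\cE_t\subseteq U^\ast\cE_t'U$. Unitarity alone does not give the reverse inclusion (one can have a unitary $U$ with $U\cE_tU^\ast\subsetneq\cE_t'$). The paper's proof uses the same equality without comment---it simply asserts that every $F\in\cE_{d-1}'$ is of the form $UEU^\ast$---so this gap is shared. You are right that the word ``isomorphism'' in Definition~\ref{def:quantum_isometry} is doing unearned work here; the honest fix is to read the definition as requiring both $U$ and $U^\ast$ to be quantum isometries, after which everything you wrote goes through.
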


\begin{proof}
Let $P$ be the orthogonal projection onto $\cC$, so $UPU^\ast$ is the orthogonal projection onto $U\cC$. Since $U$ is unitary, for any $F \in \cE_{d-1}'$, there exists $E \in \cE_{d}$ such that $F = UEU^\ast$. Now, \beqn UPU^\ast F UPU^\ast = UPU^\ast UEU^\ast UPU^\ast = UPEPU^\ast = \varepsilon(E)UPU^\ast, \eeqn hence $F$ is detectable.
\end{proof}

Now, we have the following definition of equivalent quantum codes.

\begin{definition} Let $(\cH, \cE_t)$ and $(\cH', \cE_t')$ be quantum metric spaces. A quantum code $\cC \subseteq \cH$ is \textbf{equivalent} to a quantum code $\cC' \subseteq \cH'$ if there exists a quantum metric space isomorphism $U: \cH \to \cH'$ such that $U\cC = \cC'$.
\end{definition}

Lastly, we return to our brief discussion on the quantum metric relation between isomorphic representations of $\g$. If $\cH$ and $\cH'$ are two (not necessarily unitarily) isomorphic irreducible unitary representations, then we may obtain a unitary isomorphism by ``unitarization." More precisely, given an isomorphism $T:\cH \to \cH'$, one may show that there exists a scalar multiple of $T$ that is unitary by showing that $T^\ast T$ is a positive scalar multiple of $I_{\cH}$. It is then clear that two isomorphic unitary actions of a Lie algebra generate isometrically isomorphic quantum metric spaces. This fact is useful if there is a particular choice of representation that is simpler to work with when constructing quantum codes. We will see that this is the case for then quantum metrics related to the spinorial representation of $\so(2n+1)$ and semispinorial representations of $\so(2n)$.

\subsection{\texorpdfstring{$q$}{q}-ary Quantum Hamming Space}\label{subsec:qham_space} We have already introduced binary quantum Hamming space and more generally we introduce $q$-ary quantum Hamming space which appears as an example of Definition \ref{def:algebraic_metric}. For each $q \geq 2$ and $n \geq 1$, let $\cH = (\C^q)^{\otimes n}$ and \beqn \cE_{t} = \lspan\{A_1 \otimes \cdots \otimes A_n \mid A_k \in M_q(\C) \text{ and at most $t$ of the $A_k$'s are not proportional to $I_q$} \}.\eeqn $\cE_t$ is called the $q$-ary quantum Hamming metric. $(\cH, \cE_t)$ is connected and has diameter $n$. Although easily motivated as an error model for systems of qudits, the quantum Hamming metric also arises as a quantum graph metric generated by the action of the Lie algebra $\g = \su(q)^{n}$ on $\cH$. $\g$ acts on $\cH$ through a Lie algebra homomorphism $\phi:\g \to \cL(\cH)$ where \beqn \phi(A_1,A_2,\ldots,A_n)\ket{\psi} \defeq \sum_{k = 1}^{n} I_{q}^{\otimes {k-1}} \otimes A_k \otimes I_{q}^{\otimes {n-k}} \ket{\psi}\eeqn for $(A_1,A_2,\ldots,A_n) \in \su(q)^n$ and $\ket{\psi} \in \cH$. In the sum on the right-hand side, we view $A_k$ as a $q \times q$ matrix acting on the $k$th tensor component. Note that $\phi(A_1,A_2,\ldots,A_n)$ is a linear combination of errors affecting at most one tensor component, hence $\phi(A_1,A_2,\ldots,A_n) \in \cE_1$. Conversely, for each $A_k \in \su(q)$, \beqn \phi(0,0,\ldots,A_k,\ldots,0) = I_{q}^{\otimes {k-1}} \otimes A_k \otimes I_{q}^{\otimes {n-k}} \eeqn so $\cE_1 = \lspan_\C(I_{\cH}, \phi(\g))$ and thus the quantum Hamming metric is the quantum graph metric generated by $\g$.

\subsection{\texorpdfstring{$\su(2)$}{𝔰𝔲(2)} Quantum Metrics} Let $\g = \su(2)$. For each $n \geq 1$, let $\cH$ be the complex irreducible representation of $\g$ of dimension $n + 1$ and $\cE_t$ the quantum metric generated by the action of $\g$ on $\cH$. We call the family of quantum metrics $\cE_t$ parametrized by $n$ the $\su(2)$ quantum metrics. For each dimension $n + 1$, there is one representation up to isomorphism and these exhaust all irreducible representations of $\g$. The complexification of $\su(2)$ is $\slc(2)$, which has a basis consisting of the matrices \beqn E = \begin{pmatrix} 0 & 1 \\ 0 & 0 \end{pmatrix},
F = \begin{pmatrix} 0 & 0 \\ 1 & 0 \end{pmatrix},
H = \begin{pmatrix} 1 & 0 \\ 0 & -1 \end{pmatrix}. \eeqn $\cH$ has a basis consisting of orthonormal vectors $\ket{k}$ where $k$ is an integer congruent to $n$ modulo $2$ and $-n \leq k \leq n$. The actions of $E$, $F$, and $H$ on this orthonormal basis of $\cH$ are \beq\label{eq:sl2_action}\begin{gathered} E\ket{k} = \begin{cases}
\sqrt{\frac{(n - k)(n + k + 2)}{4}} \ket{k + 2} & \text{if $k < n$} \\
0 & \text{if $k = n$}
\end{cases} \\ F\ket{k} = \begin{cases}
\sqrt{\frac{(n - k + 2)(n + k)}{4}} \ket{k - 2} & \text{if $k > -n$} \\
0 & \text{if $k = -n$}
\end{cases} \\
H\ket{k} = k\ket{k}
\end{gathered}\eeq and so concretely $\cE_t$ is equal to the quantum graph metric generated by $\cE = \lspan\{I_{\cH}, E, F, H\}$. The quantum metric space can be represented visually using the weight diagram of $\cH$ as shown in Figure \ref{fig:v6_diagram}. $(\cH, \cE_t)$ is connected and has diameter $n$.
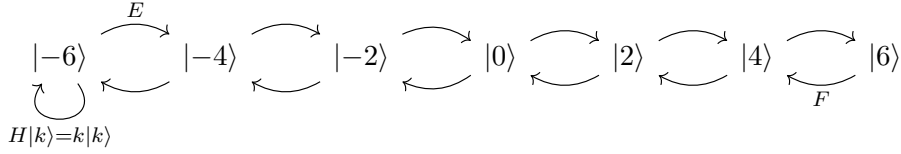
\begin{figure}
\begin{tikzcd}
\wvec{-6} \arrow[r, "E", bend left] \arrow["H\wvec{k} = k\wvec{k}", loop, distance=2em, in=235, out=305] & \wvec{-4} \arrow[r, bend left] \arrow[l, bend left] & \wvec{-2} \arrow[l, bend left] \arrow[r, bend left] & \wvec{0} \arrow[r, bend left] \arrow[l, bend left] & \wvec{2} \arrow[r, bend left] \arrow[l, bend left] & \wvec{4} \arrow[l, bend left] \arrow[r, bend left] & \wvec{6} \arrow[l, "F", bend left]
\end{tikzcd}
\caption{The $\su(2)$ quantum metric space for $n = 6$.}\label{fig:v6_diagram}
\end{figure}

The action of the operators $E$, $F$, and $H$ are derived by realizing $\cH$ as the space of complex homogeneous polynomial in variables $x$ and $y$ of degree $n$ \cite{FH}. $\cH$ has a basis consisting of monomials $p_k(x,y) = x^{\frac{n+k}{2}}y^{\frac{n-k}{2}}$ for $-n \leq k \leq n$ and $k$ is an integer congruent to $n$ modulo $2$. In particular, we call such a $k$ admissible and define $p_k = 0$ if $k$ is not admissible. The actions of $E$, $F$, and $H$ on this basis of $\cH$ are given by \beqn E(p_k) \defeq x \frac{\partial}{\partial y} p_k = \frac{n-k}{2} p_{k+2} \eeqn \beqn F(p_k) \defeq y \frac{\partial}{\partial x} p_k = \frac{n+k}{2} p_{k-2} \eeqn \beqn H(p_k) \defeq k p_k \eeqn and note that here we omit the usage of the notation of $\phi:\slc(2) \to \cL(\cH)$. Next, let $\innerprod{\cdot}{\cdot}$ be an inner product on $\cH$ where the action of $\g$ is unitary. Since $iH \in \g$, the action of $iH$ must be skew self-adjoint, meaning \beqn \innerprod{iH(p_k)}{p_l} = -\innerprod{p_k}{iH(p_l)} \eeqn for all admissible $k$ and $l$. Now, from definition of the action of $H$, this equation becomes \beqn k\innerprod{p_k}{p_l} = l\innerprod{p_k}{p_l}. \eeqn Thus, if $k \neq l$ then $\innerprod{p_k}{p_l} = 0$, so the basis of monomials is an orthogonal set. We would like to compute the norms of each $p_k$. Note that $\cH$ is a unitary representation with respect to the inner product of any positive scalar multiple of $\innerprod{\cdot}{\cdot}$, hence we may assume that $\innerprod{p_{-n}(x,y)}{p_{-n}(x,y)} = 1$. We will prove that $\innerprod{p_k}{p_k} = \binom{n}{(n+k)/2}^{-1}$ inductively by using the actions of $E$ and $F$. Note that $E - F, i(E + F) \in \g$ and, as operators on $\cH$, $(E - F)$ and $i(E+F)$ are skew self-adjoint (i.e. $(E - F)^\ast = -(E - F)$ and $(i(E+F))^\ast = -i(E + F)$) if and only if $E^\ast = F$. From the definition of the action of $E$ and $F$ and the fact that $E^\ast = F$, we have \beqn \frac{n-k}{2} \innerprod{p_{k+2}}{p_{k+2}} = \innerprod{Ep_k}{p_{k+2}} = \innerprod{p_k}{Fp_{k+2}} = \frac{n+k+2}{2} \innerprod{p_k}{p_k}, \eeqn hence \beqn \innerprod{p_{k+2}}{p_{k+2}} = \frac{n+k+2}{n-k} \innerprod{p_k}{p_k}. \eeqn Using the induction hypothesis $\innerprod{p_k}{p_k} = \binom{n}{(n+k)/2}^{-1}$, the last expression in the previous equation becomes \beqn \frac{n+k+2}{n-k} \innerprod{p_k}{p_k} = \frac{n+k+2}{n-k} \binom{n}{\frac{n+k}{2}}^{-1} = \frac{\frac{n+k}{2}+1}{\frac{n-k}{2}}\frac{\left(\frac{n-k}{2}\right)!\left(\frac{n+k}{2}\right)!}{n!} = \binom{n}{\frac{n+k+2}{2}}^{-1}. \eeqn Since $\binom{n}{(n+(-n))/2}^{-1} = \binom{n}{0}^{-1} = 1$, this completes the induction. We now orthonormalize the monomial basis by defining $\ket{k} = \binom{n}{(n+k)/2} p_k$ and the actions of $E$, $F$, and $H$ on this orthonormal basis of $\cH$ is given by equation (\ref{eq:sl2_action}).

\subsection{\texorpdfstring{$\su(q)$}{𝔰𝔲(q)} Symmetric Power Quantum Metrics} Related to the previous example, the irreducible representations of $\su(2)$ can also be realized as the symmetric powers of the defining representation of $\su(2)$. More generally, for $q \geq 2$, we may define a quantum metric on the symmetric power representation of the defining representation of $\su(q)$, so we let $\g = \su(q)$ for $q \geq 2$. For $n \geq 1$, let $\cH$ be the $n$th symmetric power of the defining representation of $\g$ and let $\cE_t$ be the quantum graph metric generated by $\g$. We call $\cE_t$ the $\su(q)$ symmetric power quantum metrics. The complexification of $\su(q)$ is $\slc(q)$, which has a basis consisting of the matrix units $E_{ij}$ for $1 \leq i, j \leq q$ where $i \neq j$ and diagonal matrices $H_i = E_{ii} - E_{(i+1)(i+1)}$ for $1 \leq i \leq q - 1$. Similar to the case of $\su(2)$, we may realize $\cH$ has the space of complex homogeneous polynomials in $q$ variables of degree $n$. $\cH$ has an orthonormal basis consisting of vectors $\ket{x}$ labeled by $x \in \Z_{\geq 0}^{q}$ where $\sum_{k = 1}^{q} x_k = n$ and this vector represents a normalized monomial. The number of valid $x$ labels can be counted by the number of ways to place $n$ balls into $q$ bins, hence the dimension of this Hilbert space is $\binom{n + q - 1}{q - 1}$. $E_{ij}$ acts on this basis of $\cH$ by \beqn E_{ij}\ket{x_1, \ldots, x_q} = \begin{cases}
\sqrt{(x_i+1)x_j}\ket{y_1, \ldots, y_q} & \text{if } x_j \geq 1 \\
0 & \text{otherwise}
\end{cases}\eeqn where $y_i = x_i + 1$, $y_j = x_j - 1$, and $y_k = x_k$ for $k \not\in \{i,j\}$. $H_{i}$ acts on the basis of $\cH$ by \beqn H_i\ket{x_1, \ldots, x_q} = (x_{i} - x_{i+1})\ket{x_1,\ldots,x_q}. \eeqn $\cE_t$ is equal to the quantum graph metric generated by \beqn \cE = \{I_\cH\} \cup \{E_{ij}: i \neq j\} \cup \{H_i: 1 \leq i \leq q - 1\}. \eeqn $(\cH, \cE_t)$ is connected and has diameter $n$.

Our choice of naming the parameter $q$ is analogous to $q$ being used for $q$-ary classical or quantum Hamming space. $n$ is then analogous to the length parameter. The next example can be seen as a loose quantum analog of Johnson space or error correction with constant weight binary codes. As such, we make the choice of naming the parameter $n$ for $\su(n)$ as an analogy for length and $w$ as a parameter analogous to the weight of a binary vector.

\subsection{\texorpdfstring{$\su(n)$}{𝔰𝔲(n)} Exterior Power Quantum Metrics} Let $\g = \su(n)$ for $n \geq 2$. For $1 \leq w \leq n - 1$, let $\cH$ be the $w$th exterior power of the defining representation of $\su(n)$ and let $\cE_t$ be the quantum graph metric generated by $\g$. We call $\cE_t$ the $\su(n)$ exterior power quantum metrics. If $V$ is the defining representation of $\su(n)$ with orthonormal basis $\{\ket{1},\ldots,\ket{n}\}$ then, for $x \in \{1,2,\ldots,n\}^w$ where $1 \leq x_1 < x_2 < \cdots < x_w \leq n$, the vectors \beqn \ket{x} = \frac{1}{w!} \sum_{\sigma \in S_w} \ket{\sigma(x_1)} \otimes \cdots \otimes \ket{\sigma(x_w)} \eeqn form an orthonormal basis of $\cH$. The inner product on $\cH$ is taken as the tensor power of the inner product on $V$, hence the $\ket{x}$'s are indeed orthonormal. Note that $\ket{x} = \sgn(\sigma)\ket{x_{\sigma(1)} \cdots x_{\sigma(w)}}$ for $\sigma \in S_w$ and $\ket{x} = 0$ if there exists $k \neq l$ where $x_k = x_l$.

We again consider the complexification of $\su(n)$ which is $\slc(n)$. For $1 \leq i,j \leq n$ and $i \neq j$, the action of $E_{ij} \in \slc(n)$ is given by $E_{ij}\ket{x} = \ket{y}$ where $y_k = i$ if $x_k = j$ and $y_l = x_l$ otherwise. In other words, $E_{ij}$ replaces the component of $x$ where $x_k = j$ with $i$. For $1 \leq i \leq n - 1$, the action of $H_{i} \in \slc(n)$ is given by \beqn H_{i}\ket{x} = (\delta_{x}(i)-\delta_{x}(i+1))\ket{x} \eeqn where $\delta_{x}(i) = 1$ if $x_k = i$ for some $k$ and otherwise $\delta_{x}(i) = 0$. $\cE_t$ is equal to the quantum graph metric generated by \beqn \cE = \{I_\cH\} \cup \{E_{ij}: i \neq j\} \cup \{H_i: 1 \leq i \leq n - 1\}. \eeqn $(\cH, \cE_t)$ is connected and has diameter $\min(w,n-w)$.

\subsection{Clifford Quantum Metrics} The quantum graph metrics we have introduced in the previous section are constructed from representations of Lie algebras. More generally, we may consider the case where $\cH$ is a representation of an algebra $\cA$ through an algebra map $\phi:\cA \to \cL(\cH)$. Instead of defining a graph metric from the entire action of $\cA$, we choose a set of generators $S \subseteq \cA$ and define a graph metric from $\cE = \lspan_\C(I_{\cH},\phi(S))$. Note that graph metrics generated by a Lie algebra $\g$ also may be realized in this way by taking $\cA$ as the universal enveloping algebra $U(\g)$ of $\g$ and $S$ as the copy of $\g$ contained in $U(\g)$. The algebras we consider in this section are the complex Clifford algebras. The family of Clifford algebras is typically motivated by its relation to the spinorial representation of the Lie algebra $\so(2n+1)$ and semispinorial representations of the Lie algebra $\so(2n)$, both of which we will introduce quantum graph metrics for in the next two sections. We will first review background material of Clifford algebras and then introduce the quantum metrics arising from Clifford algebras. We will also use the background material to define and then give simpler descriptions of the quantum metrics related to the spinorial and semispinorial representations. For a reference on these topics, see \cite{FH}.

The construction of the complex Clifford algebra starts with a complex vector space $V$ of dimension $m \geq 2$ and $Q:V \times V \to \C$ a nondegenerate symmetric bilinear form. Fixing a basis $e_1, \ldots, e_m$ of $V$, we may assume that $Q(u,v) = u^TMv$ where $u, v \in \C^m$ are coordinate vectors and $M \in M_m(\C)$ is a symmetric matrix of rank $m$. By applying the Gram-Schmidt process to a basis of $V$, we may assume that $M = I_m$ without loss of generality so $Q(e_k,e_l) = \delta_{kl}$ for elements of the basis $e_1, \ldots, e_m$. The Clifford algebra $\Cl(m)$ is defined as the complex unital algebra (the unit denoted by $1_{\Cl}$) generated by $V$ such that for all $u, v \in V$ the equation \beq\label{eq:cl_rel} uv + vu = 2Q(u,v)1_{\Cl} \eeq is satisfied. Equation (\ref{eq:cl_rel}) in particular implies that $e_k^2 = 1$ and $e_ke_l = -e_le_k$ if $k \neq l$. $\Cl(m)$ can formally be realized as the quotient $T(V)/I(Q)$, where $T(V)$ is the tensor algebra of $V$ and $I(Q)$ is the ideal generated by elements of the form $u \otimes v + v \otimes u - 2Q(u,v)$. From this view, we get a concrete expression for elements of $\Cl(m)$ in that $1_{\Cl}$ and the elements of the form $e_{k_1} \cdots e_{k_l}$ where $1 \leq l \leq m$ and $1 \leq k_1 < k_2 < \cdots < k_l \leq m$ constitute a vector space basis of $\Cl(m)$.

The definition of the Lie algebra $\so(m)$ also begins with the complex vector space $V$ of dimension $m$ and a nondegenerate symmetric bilinear form $Q:V \times V \to \C$. As before in the previous paragraph, we may assume $Q(u,v) = u^Tv$ for coordinate vectors $u, v \in \C^m$ with respect to the basis $e_1, \ldots, e_m$ of $V$. We define $\so(m)$ to be the set of all linear operators on $V$ that are skew-symmetric with respect to $Q$, meaning \beqn
\so(m) = \{A \in \cL(V): Q(Au,v) = -Q(u,Av) \text{ for all } u, v \in V\}
\eeqn and, we may write concretely as a set of matrices
\beq \so(m) = \{A \in M_{m}(\C): A = -A^T\}. \eeq
$\so(m)$ is a Lie algebra by defining the Lie bracket operation on $\so(m)$ by $[A, B] = AB - BA$ for $A, B \in \so(m)$. If $E_{kl}$ is the $m \times m$ matrix unit with a $1$ in the $(k,l)$ entry and $0$'s elsewhere, then the matrices of the form $E_{kl} - E_{lk}$ where $1 \leq k < l \leq m$ form a basis of $\so(m)$. The compact real form of $\so(m)$ can be realized as the real linear span of these matrices.

The connection between $\Cl(m)$ and $\so(m)$ is that $\Cl(m)$ contains $\so(m)$ as a Lie subalgebra. Since $\Cl(m)$ is an associative algebra, we may define a Lie bracket operation by the formula $[a, b] := ab - ba$ for $a, b \in \Cl(m)$. The linear map from $\so(m)$ to $\Cl(m)$ where \beq\label{eq:so_cl_iso} E_{kl} - E_{lk} \mapsto \frac{e_ke_l - e_le_k}{4} = \frac{1}{2}e_ke_l \eeq for $1 \leq k < l \leq m$ is a Lie algebra isomorphism.

The idea behind this map is that $\cL(V)$ is isomorphic to $V \otimes V$ as a vector space and $\so(m) \subseteq \cL(V)$ can be identified as the rank $2$ alternating tensors $\wedge^2(V) \subseteq V \otimes V$. Since the elements of $\Cl(m)$ themselves satisfy a relation similar to that of the alternating tensors, the degree $2$ elements of $\Cl(m)$ in particular can be viewed as rank $2$ alternating tensors. Now, we show that the map in (\ref{eq:so_cl_iso}) is a map of Lie algebras. The Lie bracket relations of the basis elements $E_{ij} - E_{ji}$ and $E_{kl} - E_{lk}$ of $\so(m)$ are
\begin{align*}
[E_{ij} - E_{ji}, E_{kl} - E_{lk}] &= \delta_{kj}E_{il} - \delta_{il}E_{kj} - \delta_{ki}E_{jl} + \delta_{jl}E_{ki} \\
&- \delta_{jl}E_{ik} + \delta_{ik}E_{lj} + \delta_{il}E_{jk} - \delta_{jk}E_{li} \\
&= \delta_{jk}(E_{il} - E_{li}) - \delta_{il}(E_{kj} - E_{jk}) \\
&+ \delta_{ik}(E_{lj} - E_{jl}) + \delta_{jl}(E_{ki} - E_{ik})
\end{align*}
From this, we would like to show that \beqn [e_ie_j, e_ke_l] = 2\delta_{jk}e_ie_l - 2\delta_{il}e_ke_j + 2\delta_{ik}e_le_j + 2\delta_{jl}e_ke_i \eeqn By repeatedly using the equation \beqn e_ie_j = 2Q(e_i,e_j) - e_je_i = \delta_{ij}2 - e_je_i,\eeqn we indeed have
\begin{multline*}
[e_ie_j, e_ke_l] = e_ie_je_ke_l - e_ke_le_ie_j
= e_i(\delta_{jk}2 - e_ke_j)e_l - e_ke_le_ie_j
= 2\delta_{jk}e_ie_l - e_ie_ke_je_l - e_ke_le_ie_j \\
= 2\delta_{jk}e_ie_l - e_ie_k(\delta_{jl}2 - e_le_j) - e_ke_le_ie_j 
= 2\delta_{jk}e_ie_l - 2\delta_{jl}e_ie_k + e_ie_ke_le_j - e_ke_le_ie_j \\
= 2\delta_{jk}e_ie_l - 2\delta_{jl}e_ie_k + (\delta_{ik}2 - e_ke_i)e_le_j - e_ke_le_ie_j
= 2\delta_{jk}e_ie_l - 2\delta_{jl}e_ie_k + 2\delta_{ik}e_le_j - e_ke_ie_le_j - e_ke_le_ie_j \\
= 2\delta_{jk}e_ie_l - 2\delta_{jl}e_ie_k + 2\delta_{ik}e_le_j - e_k(\delta_{il}2 - e_le_i)e_j - e_ke_le_ie_j
= 2\delta_{jk}e_ie_l - 2\delta_{il}e_ke_j + 2\delta_{ik}e_le_j + 2\delta_{jl}e_ke_i.
\end{multline*} This concludes our discussion on the relation between $\Cl(m)$ and $\so(m)$ as algebras.

A complex representation of $\Cl(m)$ is a pair $(\cH, \phi)$ where complex vector space $\cH$ and an algebra homomorphism $\phi:\Cl(m) \to \cL(\cH)$ (i.e. an action of $\Cl(m)$ on $\cH$). Viewing $V \subseteq \Cl(m)$, every such homomorphism must satisfy \beq\label{eq:cl_hom_rel} \phi(u)\phi(v) + \phi(v)\phi(u) = 2Q(u,v) I_{\cH}\eeq for all $u,v \in V$ and conversely every linear map $\phi:V \to \cL(\cH)$ satisfying this condition uniquely extends to an algebra homomorphism of $\Cl(m)$. The irreducible representations of $\Cl(m)$ can be defined through an action on the complex vector space $\cH^{(n)} = (\C^2)^{\otimes n}$ where $n = \floor{m/2}$ (so $m = 2n+1$ if $m$ is odd and $m = 2n$ if $m$ is even). In fact, $\cH^{(n)}$ is a Hilbert space with the usual inner product where the usual ``computational basis," i.e. the set of vectors $\ket{x}$ for $x \in \{0,1\}^n$, is orthonormal. For review, we recall that if $\ket{0},\ket{1} \in \C^2$ form an orthonormal basis then for $x \in \{0,1\}^n$ where $x = x_1x_2\cdots x_n$ we identify \beqn \ket{x} = \ket{x_1} \otimes \ket{x_2} \otimes \cdots \ket{x_n}. \eeqn Now, we may define the action of $\Cl(m)$ on $\cH^{(n)}$ through the Weyl-Brauer matrices \beqn\begin{gathered}U_k = \sigma_{z}^{\otimes k - 1} \otimes \sigma_{x} \otimes I_2^{\otimes n - k} \\
U_{n+k} = \sigma_{z}^{\otimes k - 1} \otimes \sigma_{y} \otimes I_2^{\otimes n - k}
\end{gathered}\eeqn for $1 \leq k \leq n$ and $U_{2n+1} = \sigma_{z}^{\otimes n}$. Here $\sigma_x,\sigma_y,\sigma_z$ are the Pauli matrices as defined in (\ref{eq:pauli_matrices}). An action of $\Cl(m)$ on $\cH^{(n)}$ can be given by an algebra homomorphism $\phi:\Cl(m) \to \cL(\cH^{(n)})$ defined on the generators $e_1,\ldots,e_m$ by $\phi(e_k) = U_k$ for $1 \leq k \leq m$. Intuitively, we may describe $U_k$ for $1 \leq k \leq 2n$ as $\sigma_x$ or $\sigma_y$ acting on the $k$th tensor component along with a parity check of the first $k-1$ tensor components. We also note that $U_{2n+1}$ is still a valid operator on $\cH^{(n)}$ even if $m = 2n$. If $m$ is even, then $(\cH^{(n)}, \phi)$ is the only nontrivial irreducible representation of $\Cl(m)$. If $m$ is odd, $\Cl(m)$ has one other nontrivial irreducible representation which can also be defined by an action on $\cH^{(n)}$. The action is given by $\psi:\Cl(m) \to \cL(\cH^{(n)})$ where $\psi(e_k) = -U_k$ for $1 \leq k \leq m$. For reasons we explain later, we will work with the action given by $\phi$. Next, prove some useful properties of the operators $U_k$ and also prove that (\ref{eq:cl_hom_rel}) holds.

\begin{prop}\label{prop:cliff_gens} For $1 \leq k, l \leq 2n+1$, the following holds:
\begin{enumerate}[(i)]
\item $U_k$ is Hermitian
\item $U_k$ is unitary
\item $U_k^2 = I_{\cH^{(n)}}$
\item $\tr(U_k) = 0$
\item $U_kU_l = -U_lU_k$ if $k \neq l$
\end{enumerate}
\end{prop}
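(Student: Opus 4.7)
The plan is to verify each of the five properties by reducing to the corresponding properties of the Pauli matrices $\sigma_x,\sigma_y,\sigma_z$ and then exploiting how those properties propagate through tensor products. The key facts to recall at the outset are: each Pauli matrix is Hermitian, unitary, squares to $I_2$, and has trace $0$; any two distinct Pauli matrices anticommute; and of course $I_2$ is Hermitian, unitary, squares to $I_2$, and commutes with everything. I would state these as a preliminary observation and then invoke them without comment below.

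For (i)--(iii), since each $U_k$ is an $n$-fold tensor product whose factors are drawn from $\{I_2,\sigma_x,\sigma_y,\sigma_z\}$, I would apply the identities $(A\otimes B)^{\ast}=A^{\ast}\otimes B^{\ast}$ and $(A\otimes B)(C\otimes D)=AC\otimes BD$ componentwise. Hermiticity (i) and $U_k^{2}=I_{\cH^{(n)}}$ (iii) are then immediate from the factorwise properties, and (ii) follows from (i) and (iii) since $U_k U_k^{\ast}=U_k^{2}=I_{\cH^{(n)}}$.

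For (iv), I would use the multiplicativity of the trace across tensor products, $\tr(A_1\otimes\cdots\otimes A_n)=\tr(A_1)\cdots\tr(A_n)$, together with the fact that every $U_k$ contains at least one Pauli factor (the $\sigma_x$ in position $k$ for $1\le k\le n$, the $\sigma_y$ in position $k-n$ for $n+1\le k\le 2n$, or a $\sigma_z$ in every position when $k=2n+1$). Any single $\sigma_x,\sigma_y,\sigma_z$ factor makes the whole product vanish.

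The only substantive step is (v), and that is where I would spend most of the writeup. The underlying principle is that if $U_k=A_1\otimes\cdots\otimes A_n$ and $U_l=B_1\otimes\cdots\otimes B_n$, then $U_kU_l=\bigl(\prod_j\epsilon_j\bigr)U_lU_k$, where $\epsilon_j=-1$ precisely when $A_j$ and $B_j$ are two \emph{distinct} Pauli matrices and $\epsilon_j=+1$ otherwise. So the claim reduces to showing that for any $k\neq l$ in $\{1,\dots,2n+1\}$ there is an odd number of positions at which this occurs; I expect to show there is exactly one such position. I would handle the cases (a) both indices in $\{1,\dots,n\}$, (b) both in $\{n+1,\dots,2n\}$, (c) one in each of those ranges, split according to whether the two ``active'' positions coincide or not, and (d) one index equal to $2n+1$. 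In every case the common $\sigma_z$-prefixes commute among themselves, the trailing $I_2$'s commute with everything, and exactly one position ends up with two different Paulis --- for instance, in case (a) with $k<l$ the anticommutation sits at position $k$ between $\sigma_x$ (from $U_k$) and $\sigma_z$ (from $U_l$). The main obstacle, such as it is, is just being systematic about this case split so as not to miss the case $c$ sub-case where the two active Pauli positions actually coincide and the anticommutation is between $\sigma_x$ and $\sigma_y$ rather than between a Pauli and a $\sigma_z$ from the prefix.
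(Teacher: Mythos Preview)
Your proposal is correct and follows essentially the same approach as the paper: reduce (i)--(iv) to the corresponding properties of Pauli matrices via tensor-product identities, and for (v) use the parity criterion that tensor-product Paulis anticommute exactly when the number of positions carrying two distinct nontrivial Paulis is odd. The paper's proof is terser---it states the parity criterion, works one illustrative case, and asserts the rest---whereas you plan a full case split; your version is simply a more explicit execution of the same argument.
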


\begin{proof} Since the Pauli matrices are Hermitian, unitary, involutory, and have trace zero, the tensor product of Pauli matrices also satisfy these properties and thus properties (i) to (iv) hold from the fact that each $U_k$ is the tensor product of Pauli matrices. To prove property (v) we recall the commutation property of multi-qubit Paulis in that two multi-qubit Paulis $A$ and $B$ anticommute if the number of tensor components of $A$ and $B$ where both are non-trivial and differ is odd. Otherwise, the two multi-Paulis commute. For example, $U_k$ and $U_{n+l}$ both have nontrivial factors in the first to $\min(k,l)$-th components and differ only in the $\min(k,l)$-th component, hence they anticommute. We see that the anticommuting condition is satisfied for all $U_k$ and $U_l$ if $k \neq l$.
\end{proof}

Note that the anticommuting property is equivalent to \beqn \phi(e_{k})\phi(e_{l}) + \phi(e_{l})\phi(e_{k}) = 2\delta_{kl} = 2Q(e_k,e_l) \eeqn for $1 \leq k < l \leq m$ and extending bilinearly implies that (\ref{eq:cl_hom_rel}) holds. It follows that $\cH^{(n)}$ is indeed a representation of $\Cl(m)$. Now, we may define quantum metrics on $\cH^{(n)}$ using the action of $\Cl(m)$. We define the Clifford quantum metric, $\cE_t^{\Cl(m)}$, as the quantum graph metric on $\cH^{(n)}$ generated by $\lspan\{I_{\cH^{(n)}},\phi(e_1),\phi(e_2),\ldots,\phi(e_m)\} = \lspan\{I_{\cH^{(n)}},U_1,U_2,\ldots,U_m\}$. We note that if $\phi$ is replaced by $\psi$, this results in the same quantum metric, thus working with $\psi$ is redundant. Using properties (3) and (5) from Proposition \ref{prop:cliff_gens}, the errors of distance at most $t$ can be expressed as \beqn \cE_t^{\Cl(m)} = \lspan\{U_{k_1}U_{k_2} \cdots U_{k_j}: \\ 1 \leq k_1 < k_2 < \cdots < k_j \leq m, 0 \leq j \leq \floor{t}\} \eeqn where the empty product (i.e. $t = 0$) is defined as $I_{\cH^{(n)}}$. We note that when $m = 2n$ is even, the operator $U_{2n+1}$ is not used and hence there are two distinct families of quantum metric spaces where in one case $m$ is even and in the other $m$ is odd. Thus, for $n \geq 1$, we refer to $\cE_t^{\Cl(2n+1)}$ as the odd Clifford quantum metric and $\cE_t^{\Cl(2n)}$ as the even Clifford quantum metric. These quantum metric spaces are connected but have different diameters. $(\cH, \cE_t^{\Cl(2n+1)})$ has diameter $n$ while $\cE_t^{\Cl(2n)}$ has diameter $2n$. Quantum codes of the $\Cl(m)$ quantum metric will be called $\Cl(m)$ codes, or just even or odd Clifford codes (depending on $m$). Next, we introduce some useful notation for operators on $\cL(\cH^{(n)})$.

\begin{definition} Let $l = 2n$ or $l = 2n+1$ and for $x \in \F_2^l$ let $\tau(x) = \frac{\wt(x)(\wt(x)-1)}{2}$ where $\wt(x)$ is the number of nonzero components of $x$. For $x \in \F_2^{2n}$, we define the operator $\Gamma_x \in \cL(\cH^{(n)})$ by \beqn \Gamma_x = i^{\tau(x)} \prod_{k = 1}^{n} U_k^{x_k}U_{n+k}^{x_{n+k}}\eeqn and, in this case, we call $\Gamma_x$ an \textbf{even Clifford operator}. For $x \in \F_2^{2n+1}$, we define $\Gamma_x \in \cL(\cH^{(n)})$ by \beqn \Gamma_x = i^{\tau(x)} \left(\prod_{k = 1}^{n} U_k^{x_k}U_{n+k}^{x_{n+k}}\right)U_{2n+1}^{x_{2n+1}}\eeqn and, in this case, we call $\Gamma_x$ an \textbf{odd Clifford operator}. Note, in both of these equations, $i$ is the imaginary unit.
\end{definition}

The families of operators $\Gamma_x$ can be described as a Clifford analog of multi-qubit Pauli operators and, as such, satisfy similar properties. Since the $U_k$ themselves are multi-qubit Pauli operators and each $\Gamma_x$ is the product of $U_k$'s, every pair of Clifford operators either commute or anticommute. If the two Clifford operators are both even or both odd, the commutation is determined by a bilinear form $q$ on $\F_2^{l}$ defined by \beqn q(x,y) = \sum_{i \neq j} x_iy_j.\eeqn Specifically, we have that \beqn \Gamma_x\Gamma_y = (-1)^{q(x,y)}\Gamma_y\Gamma_x \eeqn for $x,y \in \F_2^{l}$. Since $q$ takes values in $\F_2$, we may rewrite $q$ as \beqn q(x,y) = \sum_{i,j} x_iy_j + \sum_{i} x_iy_i = \left(\sum_{i} x_i\right)\left(\sum_{j} y_j\right) + \sum_{i} x_iy_i \eeqn hence \beqn q(x,y) = \wt(x)\wt(y) + x \cdot y \eeqn by taking values modulo 2. The Clifford operators also satisfy other properties listed in the following proposition.

\begin{prop}\label{prop:gamma_props} The Clifford operators satisfy the following properties.

\begin{enumerate}[(i)]
\item $\Gamma_x$ is unitary for all $x \in \F_2^l$.
\item $\Gamma_x$ is Hermitian for all $x \in \F_2^l$.
\item $\tr(\Gamma_0) = 2^n$ and $\tr(\Gamma_x) = 0$ for $x \neq 0$.
\item If $1_{2n} \in \F_2^{2n}$ is the all $1$'s vector then $\Gamma_{1_{2n}} = U_{2n+1}$.
\item $\Gamma_x\Gamma_y = (-1)^{q(x,y)}\Gamma_y\Gamma_x$ for $x,y \in \F_2^l$.
\item $\Gamma_x\Gamma_y$ is proportional to $\Gamma_{x+y}$ for $x,y \in \F_2^l$.
\end{enumerate}
\end{prop}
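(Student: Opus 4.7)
The plan is to verify each of the six properties by direct calculation from Proposition \ref{prop:cliff_gens}, since every $\Gamma_x$ is, up to a phase, an ordered product of distinct generators $U_{k_1}U_{k_2}\cdots U_{k_w}$ with $w=\wt(x)$ and $k_1<\cdots<k_w$.

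For (i), each $U_k$ is unitary and $|i^{\tau(x)}|=1$, so $\Gamma_x$ is a product of unitaries. For (ii), take the adjoint, use $U_k^\ast=U_k$ to reverse the order of the product, and then use the pairwise anticommutation to restore the increasing order; the required number of transpositions is $\binom{\wt(x)}{2}=\tau(x)$, each contributing a factor $-1$. Combining this with $\overline{i^{\tau(x)}}=(-i)^{\tau(x)}=(-1)^{\tau(x)}i^{\tau(x)}$ makes the two signs cancel, giving $\Gamma_x^\ast=\Gamma_x$. For (iii), $\Gamma_0$ is the empty product $I_{\cH^{(n)}}$ with trace $2^n$; for $x\neq 0$, $\Gamma_x$ is (up to a scalar) a tensor product of Pauli matrices with at least one nontrivial tensor factor, and hence traceless by property (iv) of Proposition \ref{prop:cliff_gens} applied slot by slot.

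For (iv), the key computation is
\[
U_k U_{n+k}=\bigl(\sigma_z^{\otimes(k-1)}\otimes\sigma_x\otimes I^{\otimes(n-k)}\bigr)\bigl(\sigma_z^{\otimes(k-1)}\otimes\sigma_y\otimes I^{\otimes(n-k)}\bigr)=I^{\otimes(k-1)}\otimes i\sigma_z\otimes I^{\otimes(n-k)},
\]
and these factors commute for different $k$, so their product over $k=1,\dots,n$ is $i^n\sigma_z^{\otimes n}=i^n U_{2n+1}$. Combining with the scalar $i^{\tau(1_{2n})}=i^{n(2n-1)}$ yields $i^{2n^2}U_{2n+1}$, which identifies $\Gamma_{1_{2n}}$ with $U_{2n+1}$ up to the appropriate sign. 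For (v), write $\Gamma_x\Gamma_y$ and slide each generator appearing in $\Gamma_y$ past each generator appearing in $\Gamma_x$. Pairs $(k,j)$ with $x_k=y_j=1$ and $k=j$ commute freely (since $U_k^2=I$), while pairs with $k\neq j$ each contribute $-1$. The total sign is
\[
(-1)^{\sum_{k\neq j}x_ky_j}=(-1)^{\wt(x)\wt(y)-x\cdot y}=(-1)^{q(x,y)}
\]
modulo 2, establishing the commutation relation.

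For (vi), concatenate the two ordered products, use the anticommutation relations to bring matching indices $k$ (where $x_k=y_k=1$) adjacent to each other, and collapse each pair via $U_k^2=I$. The surviving generators are exactly those indexed by $(x+y)$, in increasing order; this is precisely the generator sequence of $\Gamma_{x+y}$ up to an overall scalar in $\{\pm 1,\pm i\}$ coming from the rearrangement and the three prefactors $i^{\tau(x)}, i^{\tau(y)}, i^{-\tau(x+y)}$. Hence $\Gamma_x\Gamma_y\propto\Gamma_{x+y}$.

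The main obstacle is not conceptual but bookkeeping: carefully tallying the signs produced by anticommutation and by the $i^{\tau(\cdot)}$ prefactors, especially in (ii), (iv), and (vi). In each case the argument reduces, after these sign accounts, to the three elementary facts $U_k^2=I$, $U_kU_l=-U_lU_k$ for $k\neq l$, and that each $U_k$ is self-adjoint, all supplied by Proposition \ref{prop:cliff_gens}. The odd Clifford case ($x\in\F_2^{2n+1}$) is handled identically, since $U_{2n+1}$ satisfies the same three properties as the other $U_k$.
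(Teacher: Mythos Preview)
Your proof is correct and follows essentially the same approach as the paper's: each item is verified directly from the generator properties of Proposition~\ref{prop:cliff_gens}, with (ii) handled by counting the $\binom{\wt(x)}{2}$ transpositions needed to reverse the product and (v) by counting anticommuting pairs. Your caution in (iv)---claiming the identification only ``up to the appropriate sign''---is in fact warranted, since $i^{2n^2}=(-1)^{n^2}$ equals $-1$ when $n$ is odd; the paper's proof performs the same computation but asserts the equality without addressing this sign.
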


\begin{proof}
(i) $\Gamma_x$ is the product of unitary operators and thus is unitary. (ii) Since each $U_k$ is Hermitian, the adjoint operation on $\Gamma_x$ reverses the order of the $U_k$'s appearing in $\Gamma_x$. Using the anticommutativity properties of the $U_k$'s we may reorder the $U_k$'s in $\Gamma_x^\ast$ so that the indices have increasing order. This requires $\sum_{k = 1}^{\wt(x) - 1} k = \frac{\wt(x)(\wt(x) - 1)}{2}$ transpositions, hence introduces a factor $(-1)^{\frac{\wt(x)(\wt(x) - 1)}{2}}$. The adjoint operation also introduces another factor of $(-1)^{\frac{\wt(x)(\wt(x) - 1)}{2}}$ from the complex conjugation of $i^{\frac{\wt(x)(\wt(x) - 1)}{2}}$ hence $\Gamma_x^\ast = \Gamma_x$. (iii) $\Gamma_0 = I_{\cH^{(n)}}$ hence $\tr(\Gamma_0) = 2^n$. If $x \neq 0$ then $\Gamma_x$ is a scalar multiple of a multi-qubit Pauli operator which has $0$ trace. (iv) Since $\sigma_x\sigma_y = i\sigma_z$, it follows that \beqn U_kU_{n+k} = i I_{2}^{\otimes k - 1} \otimes \sigma_z \otimes I_{2}^{\otimes n - k}\eeqn hence \beqn\Gamma_{1_{2n}} = i^{\frac{2n(2n-1)}{2}} i^{n}\sigma_z^{\otimes n} = \sigma_z^{\otimes n} = U_{2n+1}.\eeqn (v) Follows from the previous discussion on the bilinear form $q$. (vi) Follows from the fact that the $U_k$'s commute or anticommute and $U_k^2 = I_{\cH^{(n)}}$.
\end{proof}

Our next discussion involves the property of the Clifford operators, where each family can be used to describe the space of errors for each distance. First, we give a relation between even and odd Clifford operators. Typically, the odd Clifford operators are used to describe $\cH^{(n)}$ as a representation of $\Cl(2n+1)$ (i.e. $m$ odd) and on the other hand the even Clifford operators are used to describe $\cH^{(n)}$ as a representation of $\Cl(2n)$ (i.e. $m$ even). However, both the even and odd Clifford operators are of course operators on $\cH^{(n)}$ in either case of $m$ even or odd, and we may express certain spaces of odd Clifford operators in terms of even Clifford operators. For $l = 2n$ or $l = 2n + 1$, define the vector space $\cV_t^{\Cl(l)} \subseteq \cL(\cH^{(n)})$ by \beqn \cV_t^{\Cl(l)} = \lspan\{\Gamma_x: x \in \F_2^l, \wt(x) = t\} \eeqn for $0 \leq t \leq l$ so $\cV_t^{\Cl(l)}$ is the span of weight $t$ Clifford operators given by vectors in $\F_2^l$. Using property (iv) from Proposition \ref{prop:gamma_props} we have that if $x \in \F_2^{2n}$ and we view $(x,1) \in \F_2^{2n+1}$ then the odd Clifford operator $\Gamma_{(x,1)}$ equals $\Gamma_x\Gamma_{1_{2n}}$. Now, property (vi) implies that $\Gamma_{(x,1)}$ is proportional to $\Gamma_{x+1_{2n}}$ so $\Gamma_{(x,1)}$ is proportional to a weight $2n + 1 - \wt(x)$ even Clifford operator. On the other hand, we have $\Gamma_{(x,0)} = \Gamma_x$ thus we may conclude \beqn \cV_t^{\Cl(2n+1)} = \cV_t^{\Cl(2n)} \oplus \cV_{2n+1-t}^{\Cl(2n)} \eeqn for $0 \leq t \leq 2n + 1$. This also implies that $\cV_t^{\Cl(2n+1)} = \cV_{2n+1-t}^{\Cl(2n+1)}$ for $0 \leq t \leq n$. With the two families of Clifford operators, we have by definition \beqn \cE_t^{\Cl(m)} = \lspan\{\Gamma_x: 0 \leq \wt(x) \leq t, x \in \F_2^m \} \eeqn and, furthermore, we can also write \beqn \cE_t^{\Cl(2n+1)} = \bigoplus_{j = 0}^{\min(\floor{t},n)} \cV_j^{\Cl(2n+1)} = \bigoplus_{j = 0}^{\min(\floor{t},n)} (\cV_j^{\Cl(2n)} \oplus \cV_{2n+1-j}^{\Cl(2n)}) \eeqn and \beqn \cE_t^{\Cl(2n)} = \bigoplus_{j = 0}^{\min(\floor{j},2n)} \cV_j^{\Cl(2n)}. \eeqn We now give bases for each space $\cV_t^{\Cl(l)}$.

\begin{prop}\label{prop:even_basis} If $0 \leq t \leq 2n$, then the set \beqn\left\{\frac{1}{\sqrt{2^n}} \Gamma_x: x \in \F_2^{2n}, \wt(x) = t \right\}\eeqn is an orthonormal basis of $\cV_t^{\Cl(2n)}$.
\end{prop}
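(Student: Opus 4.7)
The plan is to establish orthonormality with respect to the Hilbert–Schmidt inner product $\langle A, B\rangle_{HS} = \tr(A^\ast B)$ on $\cL(\cH^{(n)})$, which automatically yields linear independence; spanning $\cV_t^{\Cl(2n)}$ is immediate from the very definition of that subspace.

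First, I would compute $\langle \Gamma_x, \Gamma_y\rangle_{HS}$ for $x,y \in \F_2^{2n}$ with $\wt(x) = \wt(y) = t$. Using property (ii) of Proposition \ref{prop:gamma_props} we have $\Gamma_x^\ast = \Gamma_x$, so the inner product reduces to $\tr(\Gamma_x \Gamma_y)$. By property (vi), the product $\Gamma_x \Gamma_y$ is a scalar multiple of $\Gamma_{x+y}$. When $x \neq y$, the vector $x + y \in \F_2^{2n}$ is nonzero, and property (iii) then gives $\tr(\Gamma_{x+y}) = 0$, whence $\langle \Gamma_x, \Gamma_y\rangle_{HS} = 0$.

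Next, for the diagonal case $x = y$, properties (i) and (ii) together imply $\Gamma_x^2 = \Gamma_x \Gamma_x^\ast = I_{\cH^{(n)}}$, so $\tr(\Gamma_x^2) = \dim(\cH^{(n)}) = 2^n$. Thus $\langle \frac{1}{\sqrt{2^n}}\Gamma_x, \frac{1}{\sqrt{2^n}}\Gamma_y\rangle_{HS} = \delta_{xy}$, establishing orthonormality of the rescaled family. In particular the set is linearly independent, and since it spans $\cV_t^{\Cl(2n)}$ by definition, it is an orthonormal basis.

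I do not expect any serious obstacle here: the argument is essentially a direct unwinding of the properties collected in Proposition \ref{prop:gamma_props}. The only mild subtlety is keeping straight that (vi) alone does not pin down the proportionality constant between $\Gamma_x\Gamma_y$ and $\Gamma_{x+y}$, but this is harmless because the constant is irrelevant when the trace of $\Gamma_{x+y}$ vanishes (the off-diagonal case), and in the diagonal case the value $\Gamma_x^2 = I_{\cH^{(n)}}$ is fixed independently by unitarity plus self-adjointness.
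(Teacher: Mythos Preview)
Your proposal is correct and follows essentially the same approach as the paper: both note that spanning is by definition, use self-adjointness to reduce $\Gamma_x^\ast\Gamma_y$ to $\Gamma_x\Gamma_y$, invoke property (vi) to identify this as proportional to $\Gamma_{x+y}$, and then use property (iii) to conclude orthonormality. The only cosmetic difference is that the paper handles the diagonal case by writing $\Gamma_x\Gamma_x = \Gamma_0 = I_{\cH^{(n)}}$ directly, whereas you derive $\Gamma_x^2 = I_{\cH^{(n)}}$ from unitarity plus self-adjointness.
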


\begin{proof}
Note that the set spans $\cV_t^{\Cl(2n)}$ by definition, so it suffices to prove that the set is orthonormal. If $x, x' \in \F_2^{2n}$ where $\wt(x) = \wt(x') = t$, then $\frac{1}{2^n}\tr(\Gamma_x^\ast\Gamma_{x'}) = \frac{1}{2^n}\tr(\Gamma_x\Gamma_{x'})$ is proportional to $\tr(\Gamma_{x+x'})$. If $x = x'$, then $\Gamma_x\Gamma_{x'} = \Gamma_{x+x'} = \Gamma_{0} = I_{\cH^{(n)}}$, otherwise $\Gamma_{x+x'}$ is proportional to a Clifford operator that is not proportional to the identity, hence has trace $0$. Thus, $\frac{1}{2^n}\tr(\Gamma_x^\ast\Gamma_{x'}) = \delta_{xx'}$, hence the set is orthonormal.
\end{proof}

Recalling the fact that the identity $1_{\Cl}$ and the elements of the form $e_{k_1}e_{k_2}\cdots e_{k_t}$ where $1 \leq k_1 < k_2 < \cdots < k_t$ and $1 \leq t \leq 2n$ form a basis of $\Cl(2n)$, a dimension count along with Proposition \ref{prop:even_basis} implies that $\phi$ is an algebra isomorphism, hence we obtain the known fact that $\Cl(2n) \cong \cL(\cH^{(n)})$. Next, we have an analogous proposition for odd Clifford operators.

\begin{prop}\label{prop:odd_basis} If $0 \leq t \leq n$, then the sets \beqn \left\{\frac{1}{\sqrt{2^n}} \Gamma_x: x \in \F_2^{2n+1}, \wt(x) = t \right\} \eeqn and \beqn\left\{\frac{1}{\sqrt{2^n}} \Gamma_x: x \in \F_2^{2n+1}, \wt(x) = 2n + 1 - t \right\}\eeqn are orthonormal bases of $\cV_t^{\Cl(2n+1)} = \cV_{2n + 1 - t}^{\Cl(2n+1)}$.
\end{prop}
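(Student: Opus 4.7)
The plan is to mirror the proof of Proposition \ref{prop:even_basis}, but with extra care because the all-ones vector $1_{2n+1} \in \F_2^{2n+1}$ satisfies $\Gamma_{1_{2n+1}} \propto I_{\cH^{(n)}}$, so the trace-zero argument for $\Gamma_y$ with $y \neq 0$ needs a small refinement. The equality $\cV_t^{\Cl(2n+1)} = \cV_{2n+1-t}^{\Cl(2n+1)}$ was already observed in the text via the decomposition $\cV_t^{\Cl(2n+1)} = \cV_t^{\Cl(2n)} \oplus \cV_{2n+1-t}^{\Cl(2n)}$, so both proposed sets live in a common space. By definition each set spans its subspace, so only orthonormality needs to be verified.

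For the first set, fix $x, x' \in \F_2^{2n+1}$ with $\wt(x) = \wt(x') = t$. Using properties (ii) and (vi) from Proposition \ref{prop:gamma_props}, $\Gamma_x^\ast \Gamma_{x'} = \Gamma_x \Gamma_{x'}$ is a scalar multiple of $\Gamma_{x+x'}$. If $x = x'$, unitarity gives $\Gamma_x^\ast \Gamma_x = I_{\cH^{(n)}}$, whose trace is $2^n$, yielding the diagonal entry $1$. If $x \neq x'$, set $y = x + x' \neq 0$ and note that $y \neq 1_{2n+1}$: for otherwise $\wt(x') = 2n+1 - t$, forcing $t = 2n+1-t$, which is impossible because $2n+1$ is odd (and a fortiori because $t \leq n < n+1 \leq 2n+1-t$). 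The same argument works for the second set with weights $2n+1-t$: if $x + x' = 1_{2n+1}$ then $\wt(x') = t$, again a contradiction.

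It remains to argue that $\Gamma_y$ has trace $0$ whenever $y \in \F_2^{2n+1}$ with $y \notin \{0, 1_{2n+1}\}$. The plan is to show that the set $\{y \in \F_2^{2n+1} : \Gamma_y \propto I_{\cH^{(n)}}\}$ is exactly $\{0, 1_{2n+1}\}$. Writing $y = (y', y_{2n+1})$ with $y' \in \F_2^{2n}$, property (vi) gives $\Gamma_y \propto \Gamma_{y'}$ if $y_{2n+1} = 0$ and $\Gamma_y \propto \Gamma_{y'} U_{2n+1} = \Gamma_{y'} \Gamma_{1_{2n}} \propto \Gamma_{y' + 1_{2n}}$ if $y_{2n+1} = 1$. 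In the first case $\Gamma_{y'} \propto I$ iff $y' = 0$ (Proposition \ref{prop:even_basis} implies the $\Gamma_{y'}$ form a basis of $\cL(\cH^{(n)})$, so only $\Gamma_0$ is scalar); in the second case $\Gamma_{y'+1_{2n}} \propto I$ iff $y' = 1_{2n}$, i.e.\ $y = 1_{2n+1}$. For any other $y$, $\Gamma_y$ is a nonzero scalar multiple of a product of distinct Weyl-Brauer matrices that is not a scalar, hence a nontrivial multi-qubit Pauli operator, whose trace vanishes. Combining this with the previous paragraph yields $\tfrac{1}{2^n}\tr(\Gamma_x^\ast \Gamma_{x'}) = \delta_{xx'}$ on both sets.

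Finally, a dimension count confirms that each orthonormal set is indeed a basis: there are $\binom{2n+1}{t}$ vectors of weight $t$ (resp.\ $\binom{2n+1}{2n+1-t}$ of weight $2n+1-t$, which is the same), and by Pascal's identity
\begin{equation*}
\dim \cV_t^{\Cl(2n+1)} = \binom{2n}{t} + \binom{2n}{2n+1-t} = \binom{2n}{t} + \binom{2n}{t-1} = \binom{2n+1}{t}.
\end{equation*}
The main subtlety is the single-element exception $\Gamma_{1_{2n+1}} \propto I$; the hypothesis $t \leq n$ is exactly what rules it out, which is why the proposition is stated in this range.
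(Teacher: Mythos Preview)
Your proof is correct and follows the same approach as the paper: compute the Hilbert--Schmidt inner product as a scalar multiple of $\tr(\Gamma_{x+x'})$ and argue this vanishes off the diagonal. The paper's proof is a one-line deferral to the even case (``by similar arguments''), so you have essentially expanded that deferral.

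One remark worth making: you are right to be careful about the exceptional vector $1_{2n+1}$, since $\Gamma_{1_{2n+1}}$ is indeed a scalar multiple of the identity and hence has nonzero trace. The paper's Proposition~\ref{prop:gamma_props}(iii) as stated does not flag this exception for odd Clifford operators, so your explicit observation that $x+x'=1_{2n+1}$ is impossible when $\wt(x)=\wt(x')\in\{t,2n+1-t\}$ with $t\le n$ is a genuine refinement of the argument, not just pedantry. Your dimension count via Pascal's identity is also a nice sanity check that the paper omits.
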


\begin{proof}
By similar arguments to the previous proposition, each set is an orthonormal basis.
\end{proof}

\subsection{\texorpdfstring{$\so(2n+1)$}{𝔰𝔬(2n+1)} Spinorial Quantum Metrics} Let $\g = \so(2n+1)$ for $n \geq 1$. In the previous section, we saw that $\so(2n+1)$ is a Lie subalgebra of $\Cl(2n+1)$ and thus through $\phi$, $\cH^{(n)}$ is a representation of $\so(2n+1)$. As mentioned earlier, $\cH^{(n)}$ is the spinorial representation of $\so(2n+1)$ and is irreducible. We define the quantum metric $\cE_t^{\Spin(2n+1)}$ as the quantum graph metric generated by $\g$, i.e. the quantum graph metric generated by $\lspan(\{I_{\cH}\} \cup \{\phi(e_k)\phi(e_l): 1 \leq k < l \leq 2n + 1\})$. In other words, $\cE_t^{\Spin(2n+1)}$ consists of the even weighted $\Cl(2n+1)$ errors of weight at most $2t$. In terms of the subspaces $\cV_{t}^{\Cl(2n+1)}$, we have \beqn \cE_t^{\Spin(2n+1)} = \bigoplus_{j = 0}^{\max(\floor{t},n)} \cV_{2j}^{\Cl(2n+1)}. \eeqn $\cH^{(n)}$ is connected and has diameter $n$. Quantum codes of the $\so(2n+1)$ spinorial quantum metric will be called $\so(2n+1)$ spinorial codes, or just spinorial codes for short.

$\cE_t^{\Spin(2n+1)}$ has a clearer description through a relation to the even Clifford quantum metrics in that $\cE_t^{\Spin(2n+1)}$ can be viewed as the quantum metric generated by even Clifford errors of distance at most $2$. Namely, we have the following theorem.

\begin{theorem} $(\cH^{(n)}, \cE_t^{\Spin(2n+1)})$ is isometrically isomorphic to $(\cH^{(n)}, \cE_t)$ where $\cE_t$ is the quantum graph metric generated by $\cE_2^{\Cl(2n)}$.
\end{theorem}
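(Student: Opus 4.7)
The plan is to realize the spinorial metric $\cE_t^{\Spin(2n+1)}$ as the graph metric generated by $\cE_2^{\Cl(2n)}$ under an alternative $\ast$-representation of $\Cl(2n)$ on $\cH^{(n)}$, and then to invoke uniqueness of the irreducible representation of $\Cl(2n)$ to produce an intertwining unitary.

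First I would introduce auxiliary operators $V_k \defeq iU_kU_{2n+1}$ for $1 \leq k \leq 2n$. Using $U_k^\ast = U_k$, $U_k^2 = I_{\cH^{(n)}}$, and $U_kU_{2n+1} = -U_{2n+1}U_k$ from Proposition \ref{prop:cliff_gens}, a short direct computation shows that each $V_k$ is Hermitian, $V_k^2 = I_{\cH^{(n)}}$, and $V_kV_l + V_lV_k = 0$ for $k \neq l$. Thus the $V_k$'s satisfy the defining Clifford relations, so $e_k \mapsto V_k$ extends to a unital $\ast$-algebra homomorphism $\psi:\Cl(2n) \to \cL(\cH^{(n)})$.

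Next I would rewrite the generating set of $\cE_1^{\Spin(2n+1)}$ in terms of the $V_k$'s. Direct computation yields $V_kV_l = U_kU_l$ for $1 \leq k < l \leq 2n$ and $U_kU_{2n+1} = -iV_k$ for $1 \leq k \leq 2n$, so
\beqn \cE_1^{\Spin(2n+1)} = \lspan\{I_{\cH^{(n)}}\} \cup \{V_k : 1 \leq k \leq 2n\} \cup \{V_kV_l : 1 \leq k < l \leq 2n\} = \psi(\cE_2^{\Cl(2n)}). \eeqn
Consequently $\cE_t^{\Spin(2n+1)}$ is precisely the graph metric generated by $\psi(\cE_2^{\Cl(2n)})$.

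Finally, I would unitarily identify the $\psi$-action with the standard $\phi$-action on $\cH^{(n)}$. As noted after Proposition \ref{prop:even_basis}, $\phi$ is a $\ast$-isomorphism $\Cl(2n) \cong \cL(\cH^{(n)}) \cong M_{2^n}(\C)$; since $\Cl(2n)$ is simple and $\psi(1_{\Cl}) = I_{\cH^{(n)}} \neq 0$, the same dimension count shows $\psi$ is also a $\ast$-isomorphism. Because $M_{2^n}(\C)$ has a unique irreducible $\ast$-representation up to unitary equivalence, there exists a unitary $W:\cH^{(n)} \to \cH^{(n)}$ with $W\phi(a)W^\ast = \psi(a)$ for all $a \in \Cl(2n)$. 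Applying $W(\,\cdot\,)W^\ast$ to $\floor{t}$-fold products of elements of $\cE_2^{\Cl(2n)}$ then yields $W\cE_t W^\ast = \cE_t^{\Spin(2n+1)}$ for every $t \geq 0$, so $W$ is the required isometric isomorphism of quantum metric spaces by Definition \ref{def:quantum_isometry}. I expect the main obstacle to be the existence of this intertwining unitary in the last step, but this reduces to the standard fact that faithful $\ast$-representations of $M_{2^n}(\C)$ on a $2^n$-dimensional Hilbert space are unitarily equivalent.
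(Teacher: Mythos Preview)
Your proof is correct and follows essentially the same strategy as the paper: define an alternative Clifford action via $V_k = iU_kU_{2n+1}$, observe that the $\so(2n+1)$ generators become exactly the degree-$\leq 2$ part of this new Clifford action, and then invoke uniqueness of the irreducible representation to produce an intertwiner. The one substantive difference is that the paper extends the $V_k$'s by $\phi'(e_{2n+1}) = U_{2n+1}$ to obtain a representation of $\Cl(2n+1)$, which has two inequivalent irreducibles; this forces the paper to carry a $\pm$ sign on the intertwiner and then appeal separately to unitarization of the resulting $\so(2n+1)$-isomorphism. By working instead with $\Cl(2n) \cong M_{2^n}(\C)$, whose $\ast$-automorphisms are all inner by unitaries, you obtain the unitary intertwiner in one step and avoid both the sign ambiguity and the separate unitarization argument. (One minor notational point: $\cE_2^{\Cl(2n)}$ is already a subspace of $\cL(\cH^{(n)})$, so $\psi(\cE_2^{\Cl(2n)})$ is type-mismatched; you mean $\psi$ applied to the abstract filtration piece $\lspan\{1_{\Cl}, e_k, e_ke_l\} \subseteq \Cl(2n)$, or equivalently $W\cE_2^{\Cl(2n)}W^\ast$.)
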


\begin{proof} The main idea of the proof is to construct another representation $(\cH^{(n)}, \phi')$ of $\Cl(2n+1)$ so that through $\phi'$ the quantum graph metric generated by $\so(2n+1)$ equals the quantum graph metric generated by \beqn \lspan_\C(\{I_{\cH^{(n)}} \cup \{U_k: 1 \leq k \leq 2n \} \cup \{U_kU_l: 1 \leq k < l \leq 2n \}). \eeqn This other action of $\so(2n+1)$ turns out to be unitary, hence the isomorphism of the representations of $\Cl(2n+1)$ induces a unitary isomorphism of the representations of $\so(2n+1)$.

Recall that to define a Clifford algebra homomorphism $\phi':\Cl(2n+1) \to \cL(\cH)$ it suffices to define $\phi'$ on $V$ and show that \beq\label{eq:phi_prime_hom_eq} \phi'(u)\phi'(v) + \phi'(v)\phi'(u) = 2Q(u,v) I_{\cH^{(n)}} \eeq holds for all $u, v \in V$. Let $\phi':\Cl(2n+1) \to \cL(\cH^{(n)})$ be a linear map where $\phi'(e_k) = iU_kU_{2n+1}$ for $1 \leq k \leq 2n$ and $\phi'(e_{2n+1}) = U_{2n+1}$. We have that \begin{align*}
\phi'(e_k)\phi'(e_l) + \phi'(e_l)\phi'(e_k) &= (iU_kU_{2n+1})(iU_lU_{2n+1}) + (iU_lU_{2n+1})(iU_kU_{2n+1}) \\
&= U_kU_l + U_lU_k \\
&= 2Q(e_k,e_l) I_{\cH^{(n)}}
\end{align*} for all $1 \leq k,l \leq 2n$, \begin{align*}
\phi'(e_k)\phi'(e_{2n+1}) + \phi'(e_{2n+1})\phi'(e_k) &= (iU_kU_{2n+1})U_{2n+1} + U_{2n+1}(iU_kU_{2n+1}) \\
&= 0 \\
&= 2Q(e_k,e_{2n+1}) I_{\cH^{(n)}}
\end{align*} for all $1 \leq k \leq 2n$, and $2\phi'(e_{2n+1})^2 = 2U_{2n+1}^2 = 2Q(e_{2n+1},e_{2n+1}) I_{\cH^{(n)}}$. Extending these equations bilinearly to all elements of $V$ shows that equation (\ref{eq:phi_prime_hom_eq}) holds. $\phi'$ must then be an algebra homomorphism from $\Cl(2n+1)$ and so $(\cH^{(n)}, \phi')$ is isomorphic to one of the two representations of $\Cl(2n+1)$. In particular, there exists an invertible linear map $T:\cH^{(n)} \to \cH^{(n)}$ such that $\pm T\phi(e_k)T^{-1} = \phi'(e_k)$ for each $1 \leq k \leq 2n+1$ and, since $T\phi(e_k)\phi(e_l)T^{-1} = \phi'(e_k)\phi'(e_l)$ for each $1 \leq k < l \leq 2n+1$, $T$ is also an isomorphism between the corresponding spinorial representations of $\so(2n+1)$. With respect to $\phi'$, the action of $E_{kl} - E_{lk} \in \so(2n+1)$ for $1 \leq k < l \leq 2n$ is given by $\phi'(e_k)\phi'(e_l) = (iU_kU_{2n+1})(iU_lU_{2n+1}) = U_kU_l$ and the action of $E_{(2n+1)k} - E_{k(2n+1)} \in \so(2n+1)$ for $1 \leq k \leq 2n$ is given by $\phi'(e_k)\phi'(e_{2n+1}) = (iU_kU_{2n+1})U_{2n+1} = iU_k$. The quantum graph metric generated by this action is thus \beqn \lspan_\C(\{I_{\cH^{(n)}}\} \cup \{iU_k: 1 \leq k \leq 2n \} \cup \{U_kU_l: 1 \leq k < l \leq 2n \}) = \cE_2^{\Cl(2n)}. \eeqn The operators of the action are also all skew self-adjoint, hence this action is unitary. Thus, this quantum metric is isometrically isomorphic to $\cE_t^{\Spin(2n+1)}$.
\end{proof}

In particular, this theorem implies that every $\so(2n+1)$ spinorial code of distance $d$ is equivalent to a $\Cl(2n)$ code of distance $2d$.

\subsection{\texorpdfstring{$\so(2n)$}{𝔰𝔬(2n)} Semispinorial Quantum Metrics} Similar to what we saw in the previous section, $\so(2n)$ is a Lie subalgebra of $\Cl(2n)$ and thus, through $\phi:\Cl(2n) \to \cL(\cH^{(n)})$, $\cH^{(n)}$ is a representation of $\so(2n)$. $\cH^{(n)}$ is still called the spinorial representation of $\so(2n)$, however this representation is not irreducible. We define the subspaces $\cH^{(n)}_{+}, \cH^{(n)}_{-} \subseteq \cH^{(n)}$ as \beqn \cH^{(n)}_{+} = \lspan\{\ket{x}: x \in \{0,1\}^n, \wt(x) \text{ is even} \}\eeqn and \beqn\cH^{(n)}_{-} = \lspan\{\ket{x}: x \in \{0,1\}^n, \wt(x) \text{ is odd} \}.\eeqn For $1 \leq k < l \leq 2n$, $U_kU_l$ maps $\ket{x}$ to a scalar multiple of another vector $\ket{y}$ such that $\wt(x)$ is congruent to $\wt(y)$ modulo $2$, hence $\cH^{(n)}_{+}$ and $\cH^{(n)}_{-}$ are $\so(2n)$-invariant. In fact, both $\cH^{(n)}_{+}$ and $\cH^{(n)}_{-}$ are distinct irreducible representations known as the semispinorial representations of $\so(2n)$. This gives a definable quantum metric on $\cH^{(n)}_{+}$ and $\cH^{(n)}_{-}$ both generated by $\g$, which we denote $\cE_t^{\SemiSpin_{\pm}(2n)}$. If $P_{\pm}$ is the orthogonal projection onto $\cH^{(n)}_{\pm}$ then $\cE_t^{\SemiSpin_{\pm}(2n)}$ is generated by \beqn \lspan(\{P_{\pm}\} \cup \{P_{\pm} \phi(e_k)\phi(e_l) P_{\pm}: 1 \leq k < l \leq 2n \}) \eeqn and so $\cE_t^{\SemiSpin_{\pm}(2n)}$ is the compression (with respect to $P_{\pm}$) of even weighted $\Cl(2n)$ errors of weight at most $2t$. Quantum codes of these quantum metrics will be called $\so(2n)$ semispinorial codes, or just semispinorial codes for short. If $1_{2n} \in \F_2^{2n}$ is the vector of all $1$'s, then $\Gamma_{1_{2n}}\ket{x} = (-1)^{\wt(x)}$ for each $x \in \F_2^{n}$ so $\cH^{(n)}_{\pm}$ is an eigenspace of $\Gamma_{1_{2n}}$ of eigenvalue $\pm 1$. We may thus write $P_{\pm} = \frac{1}{2}(I_{\cH^{(n)}} \pm \Gamma_{1_{2n}})$, which commutes with all even Clifford operators of even weight. Moreover, by (5) and (6) of Proposition \ref{prop:gamma_props}, we have that \beqn P_{\pm} \cV_t^{\Cl(2n)} P_{\pm} = P_{\pm} \cV_{2n-t}^{\Cl(2n)} P_{\pm}. \eeqn This further implies that \beqn \cE_t^{\SemiSpin_{\pm}(2n)} = \bigoplus_{j = 0}^{\min(\floor{t}, \floor{n/2})} P_{\pm}\cV_{2j}^{\Cl(2n)}P_{\pm}. \eeqn We may also describe a basis of each $P_{\pm} \cV_t^{\Cl(2n)} P_{\pm}$.

\begin{prop}\label{prop:semispin_basis} For $0 \leq t < n$, the sets \beqn\left\{\frac{1}{\sqrt{2^{n-1}}} P_{\pm} \Gamma_x P_{\pm}: x \in \F_2^{2n}, \wt(x) = 2t \right\}\eeqn and \beqn\left\{\frac{1}{\sqrt{2^{n-1}}} P_{\pm} \Gamma_x P_{\pm}: x \in \F_2^{2n}, \wt(x) = 2n - 2t \right\}\eeqn are orthonormal bases of $P_{\pm}\cV_{2t}^{\Cl(2n)}P_{\pm} = P_{\pm}\cV_{2n - 2t}^{\Cl(2n)}P_{\pm}$. If $n$ is even, then let $X \subseteq \F_2^{2n}$ such that for all $x \in \F_2^{2n}$ and $\wt(x) = n$, either $x \in X$ or $x + 1_{2n} \in X$ but not both. The set \beqn\left\{\frac{1}{\sqrt{2^{n-1}}} P_{\pm} \Gamma_x P_{\pm}: x \in X \right\}\eeqn is an orthonormal basis of $P_{\pm}\cV_{n}^{\Cl(2n)}P_{\pm}$.
\end{prop}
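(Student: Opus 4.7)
The plan is to first reduce everything to computing Hilbert--Schmidt inner products of the compressed Clifford operators, then to handle the weight $t < n/2$ and $t = n/2$ cases separately. The key preliminary observation is that each $U_k$ with $1 \leq k \leq 2n$ is a Pauli string containing exactly one factor of $\sigma_x$ or $\sigma_y$, so $U_k$ flips total qubit parity. Hence for $x \in \F_2^{2n}$, the operator $\Gamma_x$ preserves parity when $\wt(x)$ is even and swaps $\cH^{(n)}_+$ with $\cH^{(n)}_-$ when $\wt(x)$ is odd. In particular, $P_\pm \Gamma_x P_\pm = 0$ for odd $\wt(x)$, and $\Gamma_x$ commutes with $P_\pm$ for even $\wt(x)$, giving $P_\pm \Gamma_x P_\pm = \Gamma_x P_\pm$. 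The equality of the two spans follows from the earlier identity $P_\pm \cV_t^{\Cl(2n)} P_\pm = P_\pm \cV_{2n-t}^{\Cl(2n)} P_\pm$; concretely, using $\Gamma_{1_{2n}} P_\pm = \pm P_\pm$ and property (vi) of Proposition \ref{prop:gamma_props}, one has $P_\pm \Gamma_x P_\pm = \pm c_x \, P_\pm \Gamma_{x + 1_{2n}} P_\pm$ for some scalar $c_x$ of modulus one.

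Spanning is then immediate by the definition of $\cV_{2t}^{\Cl(2n)}$. For orthonormality of the first set, I would compute, for $\wt(x) = \wt(y) = 2t$,
\beqn
\tr\bigl((P_\pm \Gamma_x P_\pm)^\ast (P_\pm \Gamma_y P_\pm)\bigr) = \tr(\Gamma_x \Gamma_y P_\pm) = \tfrac{1}{2}\tr(\Gamma_x \Gamma_y) \pm \tfrac{1}{2}\tr(\Gamma_x \Gamma_y \Gamma_{1_{2n}}).
\eeqn
By property (vi), $\Gamma_x \Gamma_y$ is proportional to $\Gamma_{x+y}$ and $\Gamma_x \Gamma_y \Gamma_{1_{2n}}$ is proportional to $\Gamma_{x+y+1_{2n}}$, so by property (iii) the trace is nonzero precisely when $x+y = 0$ or $x + y = 1_{2n}$. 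Since $\wt(x + 1_{2n}) = 2n - 2t$, the second case would force $2t = 2n - 2t$, i.e.\ $t = n/2$. Under the strict hypothesis $2t < n$ this is excluded, so the only surviving case is $x = y$, which gives $\tr(P_\pm) = 2^{n-1}$. Rescaling by $1/\sqrt{2^{n-1}}$ yields an orthonormal set of size $\binom{2n}{2t}$, and linear independence forces it to be a basis. The same argument applies verbatim to the weight $2n - 2t$ set.

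For the $n$ even, $t = n/2$ case, the above trace computation still holds but the second case $x+y = 1_{2n}$ is now possible, so distinct weight-$n$ strings $x, y$ give zero inner product unless $y = x + 1_{2n}$, in which case the proportionality $P_\pm \Gamma_x P_\pm = \pm c_x P_\pm \Gamma_{x+1_{2n}} P_\pm$ shows the two operators are scalar multiples of one another. Choosing exactly one representative from each pair $\{x, x + 1_{2n}\}$ via the set $X$ eliminates this redundancy, producing an orthonormal set of size $\tfrac{1}{2}\binom{2n}{n}$ after rescaling, which again spans by construction.

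The main bookkeeping obstacle will be tracking the unit-modulus scalars arising from rearranging Clifford generators (in relating $\Gamma_x \Gamma_y$ to $\Gamma_{x+y}$ and in the $\Gamma_{1_{2n}}$ multiplication), but since the orthonormality argument only needs vanishing or unit magnitude of traces of $\Gamma_z$, these scalars never need to be computed explicitly.
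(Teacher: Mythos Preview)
Your proposal is correct and follows essentially the same approach as the paper: both establish spanning from the definition of $\cV_{2t}^{\Cl(2n)}$, compute the Hilbert--Schmidt inner product via $\tr(\Gamma_x\Gamma_y P_\pm) = \tfrac{1}{2}\tr(\Gamma_x\Gamma_y) \pm \tfrac{1}{2}\tr(\Gamma_x\Gamma_y\Gamma_{1_{2n}})$, and rule out the cross term by the weight obstruction $x+y \neq 1_{2n}$, then handle the middle weight $n$ case by choosing one representative per complementary pair. Your preliminary parity observation (that even-weight $\Gamma_x$ commute with $P_\pm$) makes explicit a step the paper uses without comment, and your phrasing of the weight obstruction as ``$2t = 2n - 2t$ forces $t = n/2$'' is a bit cleaner than the paper's ``$\wt(x),\wt(x') < n$,'' but the substance is identical.
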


\begin{proof} For $0 \leq t < n$, since the operators $\Gamma_x$ where $x \in \F_2^{2n}$ and $\wt(x) = 2t$ span $\cV_{2t}^{\Cl(2n)}$, it follows that the operators $P_{\pm}\Gamma_xP_{\pm}$ span $P_{\pm}\cV_{2t}^{\Cl(2n)}P_{\pm}$. Let $x, x' \in \F_2^{2n}$ where $\wt(x) = \wt(x') = 2t$ for $0 \leq t < n$ and so \begin{multline*} \frac{1}{2^{n-1}}\tr(P_{\pm} \Gamma_x P_{\pm} P_{\pm} \Gamma_{x'} P_{\pm}) = \frac{1}{2^{n-1}}\tr(P_{\pm} \Gamma_x \Gamma_{x'}) \\ = \frac{1}{2^n} \tr(\Gamma_x \Gamma_{x'}) \pm \frac{1}{2^n} \tr(\Gamma_{1_{2n}} \Gamma_x \Gamma_{x'}) = \delta_{xx'} \pm \frac{1}{2^n} \tr(\Gamma_{1_{2n}} \Gamma_x \Gamma_{x'}) \end{multline*} By (6) of Proposition \ref{prop:gamma_props}, $\Gamma_{1_{2n}} \Gamma_x \Gamma_{x'}$ is proportional to $\Gamma_{x'+x+1_{2n}}$ which has trace zero if and only if $x' + x + 1_{2n} \neq 0$. Since $\wt(x), \wt(x') < n$, we cannot have $x' + x + 1_{2n} = 0$ and thus $\frac{1}{2^n} \tr(\Gamma_{1_{2n}} \Gamma_x \Gamma_{x'}) = 0$. It follows that the first set of operators is orthonormal. By a similar argument, the set of operators $P_{\pm}\Gamma_xP_{\pm}$ given by $x \in \F_2^{2n}$ where $\wt(x) = 2n - 2t$ also form a orthonormal basis.

Now, we address the case for $t = \frac{n}{2}$ when $n$ is even. Again, the operators $\Gamma_x$ where $x \in \F_2^{2n}$ is of weight $n$ span $\cV_{n}^{\Cl(2n)}$ so the operators $P_{\pm}\Gamma_xP_{\pm}$ span $P_{\pm}\cV_{n}^{\Cl(2n)}P_{\pm}$. Now, for any $x \in \F_2^{2n}$ with of weight $n$, we also have \beqn P_{\pm}\Gamma_{x+1_{2n}}P_{\pm} \propto \Gamma_{x}\Gamma_{1_{2n}}P_{\pm} = \pm P_{\pm}\Gamma_{x}P_{\pm} \eeqn where $\wt(x + 1_{2n}) = n$ and the proportionality in the above expression is nonzero. It follows that, for any $X \subseteq \F_2^{2n}$ described in the proposition, \beqn \lspan\{P_{\pm}\Gamma_xP_{\pm}: x \in X\} = P_{\pm}\cV_{n}^{\Cl(2n)}P_{\pm}. \eeqn By a similar argument, the operators $P_{\pm}\Gamma_xP_{\pm}$ for $x \in X$ form an orthonormal set.
\end{proof}

This proposition will be used in the next chapter on the quantum linear programming bounds. Our next result, however, will seemingly make this proposition unmotivated. Like the case for the spinorial quantum metric, there is a simpler description of the semispinorial quantum metrics in terms of the odd Clifford quantum metric. More precisely, there is a relationship between the quantum metrics on $\cH^{(n+1)}_{\pm}$ in terms of the quantum metric $\cE_t^{\Cl(2n+1)}$ on $\cH^{(n)}$, stated as the following theorem.

\begin{theorem} $(\cH^{(n+1)}_{\pm},\cE_t^{\SemiSpin_{\pm}(2(n+1))})$ is isometrically isomorphic to $(\cH^{(n)}, \cE_t)$, where $\cE_t$ is the quantum graph metric generated by $\cE_{2}^{\Cl(2n+1)}$.
\end{theorem}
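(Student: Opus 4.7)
The plan is to mimic the approach of the preceding theorem by constructing an alternative Clifford representation and applying Schur's lemma to obtain a unitary intertwiner. The key observation is that the even subalgebra $\Cl^0(2(n+1)) \subseteq \Cl(2(n+1))$ preserves each semispinorial subspace $\cH^{(n+1)}_{\pm}$, and there is a canonical algebra isomorphism $\iota:\Cl(2n+1) \to \Cl^0(2(n+1))$ defined on generators by $\iota(e_k) = i\, e_k e_{2(n+1)}$ for $1 \leq k \leq 2n+1$. A direct calculation using $e_{2(n+1)}^2 = 1_{\Cl}$ and anticommutativity verifies the Clifford relations $\iota(e_k)\iota(e_l) + \iota(e_l)\iota(e_k) = 2\delta_{kl}\, 1_{\Cl}$, and a dimension count ($2^{2n+1}$ on both sides) confirms $\iota$ is an isomorphism. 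Composing with $\phi$, we obtain a representation of $\Cl(2n+1)$ on $\cH^{(n+1)}_{\pm}$ in which $e_k$ acts as $iU_kU_{2(n+1)}$ restricted to $\cH^{(n+1)}_{\pm}$, and this operator is self-adjoint since $U_k$ and $U_{2(n+1)}$ are anticommuting self-adjoint unitaries.

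Next, I would show that this new representation of $\Cl(2n+1)$ on $\cH^{(n+1)}_{\pm}$ is irreducible: any $\Cl^0(2(n+1))$-invariant subspace is in particular $\so(2(n+1))$-invariant, and the semispinorial representation is already known to be irreducible. Since $\cH^{(n+1)}_{\pm}$ has dimension $2^n$, it must be isomorphic as a $\Cl(2n+1)$-module to one of the two inequivalent irreducibles of that dimension, namely $(\cH^{(n)},\phi)$ or $(\cH^{(n)},\psi)$. This gives an invertible linear map $T:\cH^{(n+1)}_{\pm} \to \cH^{(n)}$ with $T(iU_kU_{2(n+1)})T^{-1} = \pm U_k$ for $1 \leq k \leq 2n+1$; since $T^\ast T$ commutes with the irreducible Clifford action, it is a positive scalar multiple of the identity by Schur's lemma, and rescaling $T$ makes it unitary.

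Finally, I would transport the $\so(2(n+1))$ action through $T$ and verify that it generates the graph metric on $\cH^{(n)}$ generated by $\cE_2^{\Cl(2n+1)}$. Under $\iota^{-1}$, the generator $\frac{1}{2} e_k e_l$ of $\so(2(n+1))$ sits as $\frac{1}{2} e_k e_l$ in $\Cl(2n+1)$ when $1 \leq k < l \leq 2n+1$ (since $\iota(e_k)\iota(e_l) = e_k e_l$ by direct calculation), and as $-\frac{i}{2} e_k$ when $l = 2(n+1)$ (since $e_k e_{2(n+1)} = -i\, \iota(e_k)$). Applying $\phi$, these transport on $\cH^{(n)}$ to $\frac{1}{2} U_k U_l$ and $\mp \frac{i}{2} U_k$ respectively, whose complex span, together with the identity, is precisely $\cE_2^{\Cl(2n+1)}$; sign and phase factors wash out when taking the $\C$-linear span. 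Consequently, $T$ is a quantum metric space isomorphism between $(\cH^{(n+1)}_{\pm}, \cE_t^{\SemiSpin_{\pm}(2(n+1))})$ and the graph metric on $\cH^{(n)}$ generated by $\cE_2^{\Cl(2n+1)}$. The main obstacle will be carefully tracking the $i$-factors and signs through $\iota$, through the $\pm$ ambiguity in $T$, and through the unitarization step, though none of these affect the final $\C$-linear span that defines the graph metric.
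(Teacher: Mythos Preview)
Your proof is correct and essentially mirrors the strategy the paper used for the \emph{preceding} theorem on $\so(2n+1)$ spinorial metrics, but it differs from the paper's own proof of this theorem. The paper proceeds concretely: it writes down an explicit unitary $T_{\pm}:\cH^{(n+1)}_{\pm} \to \cH^{(n)}$ on the computational basis (sending $\ket{x}\otimes\ket{x_{\pm}} \mapsto \ket{x}$, where $x_{\pm}$ is the parity bit determined by $\wt(x)$) and then directly computes $T_{\pm}U_k'U_l'T_{\pm}^\ast$ case by case, expanding each Weyl--Brauer matrix as a tensor product of Paulis. You instead invoke the standard algebra isomorphism $\Cl(2n+1) \cong \Cl^0(2(n+1))$, obtain $T$ abstractly via irreducibility of the semispinorial representation and Schur's lemma, and transport the $\so(2(n+1))$ generators algebraically through $\iota^{-1}$. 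Your route avoids any explicit Pauli bookkeeping and makes the structural reason for the isometry transparent; the paper's route has the advantage of pinning down $T_{\pm}$ explicitly (useful if one later wants to translate concrete codes across the isomorphism) and of sidestepping the $\phi$-versus-$\psi$ ambiguity that you correctly note washes out in the final $\C$-span.
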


\begin{proof} Recall that $\cH^{(n + 1)} = \cH^{(n + 1)}_{+} \oplus \cH^{(n + 1)}_{-}$. The idea of the proof comes from the fact that $\cH^{(n + 1)}_{+} \oplus \cH^{(n + 1)}_{-} \cong \cH^{(n)} \oplus \cH^{(n)}$ as vector spaces. We prove that this isomorphism is a unitary isomorphism of representations of $\so(2n)$ such that the quantum graph metric generated by $\so(2n)$ on each $\cH^{(n)}$ is equal to the quantum graph metric generated by \beqn \lspan_{\C}(\{I_{\cH^{(n)}},U_1,U_2,\ldots,U_{2n+1}\} \cup \{U_kU_l: 1 \leq k < l \leq 2n+1\}). \eeqn

For $x \in \{0,1\}^{n}$, let $x_{+} \in \{0,1\}$ be congruent to $\wt(x)$ modulo 2 and $x_{-} \in \{0,1\}$ be congruent to $\wt(x) + 1$ modulo 2. With this we may write $\cH^{(n+1)}_{+} = \lspan\{\ket{x}\otimes\ket{x_{+}}: x \in \{0,1\}^n\}$ and $\cH^{(n+1)}_{-} = \lspan\{\ket{x}\otimes\ket{x_{-}}: x \in \{0,1\}^n\}$. The idea of the isomorphism is that we identify the first $n$ tensor factors of $\cH^{(n+1)}_{+}$ or $\cH^{(n+1)}_{-}$ as $\cH^{(n)}$ since the last component is completely dependent on the first $n$ components. Formally, we make this identification through the linear maps $T_{+}:\cH^{(n+1)}_{+} \to \cH^{(n)}$ and $T_{-}:\cH^{(n+1)}_{-} \to \cH^{(n)}$ where $T_{\pm}\ket{x}\otimes\ket{x_{\pm}} = \ket{x}$. $T_{\pm}$ is unitary since $T_{\pm}$ maps an orthonormal basis of $\cH^{(n+1)}_{\pm}$ to an orthonormal basis of $\cH^{(n)}$. Let the action of $\Cl(2(n+1)$ be given by an algebra homomorphism $\phi:\Cl(2(n+1)) \to \cL(\cH^{(n+1)})$, so $\phi$ also gives the action of $\so(2(n+1))$. Since $T_{\pm}$ is unitary, $(\cH^{(n)}, T_{\pm}\cE_t^{\SemiSpin_{\pm}(2(n+1))} T_{\pm}^{\ast})$ is isometrically isomorphic to $(\cH^{(n)}_{\pm}, \cE_t^{\SemiSpin_{\pm}(2(n+1))})$.

For each $1 \leq k \leq 2(n+1) + 1$, let $U_k'$ be the Weyl-Brauer matrix acting on $\cH^{(n+1)}$ and for each $1 \leq k \leq 2n+1$, let $U_k$ be the Weyl-Brauer matrix acting on $\cH^{(n)}$. By computing $T_{\pm}U_k'U_l'T_{\pm}^{\ast}$ for $1 \leq k < l \leq 2(n+1)$, one may show that $T_{\pm}\cE_1^{\SemiSpin(2(n+1))} T_{\pm}^{\ast} = \cE_2^{\Cl(2n+1)}$. For $1 \leq k < l \leq 2(n+1)$ where $k$ and $l$ both are not equal to $n+1$ or $2(n+1)$, $T_{\pm}U_{k}'U_{l}'T_{\pm}^{\ast} = U_kU_l$. For $1 \leq k \leq n$, we have that \beqn\begin{gathered} T_{\pm}U_{k}'U_{n+1}'T_{\pm}^{\ast} = U_kU_{2n+1} \\ T_{\pm}U_{n+1}'U_{(n+1)+k}'T_{\pm}^{\ast} = U_{n+k}U_{2n+1} \\ T_{\pm}U_{k}'U_{2(n+1)}'T_{\pm}^{\ast} = \pm iU_k \\ T_{\pm}U_{(n+1)+k}'U_{2(n+1)}'T_{\pm}^{\ast} = \pm iU_{n+k}^{(n)}. \end{gathered}\eeqn Lastly, $TU_{n+1}'U_{2(n+1)}'T^\ast = \pm iU_{2n+1}$. Each of these equations can be shown to hold by expressing $U_{k}'$ as a tensor product of Pauli matrices. For the last equation, for example, using the fact that $(-1)^{\wt(x)+x_{\pm}} = \pm 1$ we may show \begin{align*} U_{n+1}'U_{2(n+1)}'\ket{x}\otimes\ket{x_{\pm}} &= (\sigma_z^2\ket{x})\otimes(\sigma_x\sigma_y \ket{x_{\pm}}) \\
&= \ket{x} \otimes (i \sigma_z \ket{x_{\pm}}) \\
&= \ket{x} \otimes (i (-1)^{x_{\pm}} \ket{x_{\pm}}) \\
&= \pm i (-1)^{\wt(x)} \ket{x} \otimes \ket{x_{\pm}} \\
&= (\pm i U_{2n+1} \ket{x}) \otimes \ket{x_{\pm}}
\end{align*} and so conjugating by $T_{\pm}$ yields the equation. The operators on the right-hand sides of these equations along with $I_{\cH^{(n)}}$ span $\cE_2^{\Cl(2n+1)}$, and so it follows that $(\cH^{(n)}_{\pm}, \cE_t^{\SemiSpin_{\pm}(2n)})$ is isometrically isomorphic to the quantum graph metric on $\cH^{(n)}$ generated by $\cE_2^{\Cl(2n+1)}$.
\end{proof}

From this theorem and from the diameter of $(\cH^{(n)},\cE_t^{\Cl(2n+1)})$, we see that the diameter of this quantum metric space is $\frac{n}{2}$ if $n$ is even and $\frac{n-1}{2}$ if $n$ is odd. Moreover, this theorem implies that every $\so(2(n+1))$ semispinorial code of distance $d$ is equivalent to a $\Cl(2n+1)$ code of distance $2d$.

\section{Distance 2 Codes for the \texorpdfstring{$\su(2)$}{𝔰𝔲(2)} Quantum Metrics}

Constructions of quantum codes for the $\su(2)$ quantum metrics were given by Bumgardner in \cite{Bumg} and by Gross in \cite{Gross}. Bumgardner gave a general construction for quantum codes of any parameter $n$ and distance $2 \leq d \leq n$. Gross gave constructions for dimension $2$ quantum codes as representations of finite subgroups of $\SU(2)$ for various parameters $n$ and distances $d$. In this section, we present a family of quantum codes of distance 2 for the $\su(2)$ quantum metric. The motivation for our results was to search for codes that are larger than known existing codes. Before presenting our main family of codes, we first discuss a family of quantum codes of distance $2$ that is a special case of Bumgardner's construction. Our main construction is partially motivated by this first family of codes.

\subsection{Quantum Codes of Density 1/4} Let $n \geq 1$ and $\cH$ be the corresponding irreducible representation of dimension $n + 1$. For each $k > 1$ such that $\ket{k}$ is in $\cH$, we define $\ket{\phi_k} \in \cH$ by \beqn \ket{\phi_k} = \frac{1}{\sqrt{2}}\wvec{k} + \frac{1}{\sqrt{2}}\wvec{-k}. \eeqn We take $\cC$ to be the code \beqn \cC = \begin{cases}
\lspan\{\ket{\phi_{n}}, \ket{\phi_{n-4}},\ldots,\ket{\phi_{4}}, \wvec{0}\} & \text{ if } n \equiv 0 \pmod{4} \\
\lspan\{\ket{\phi_{n}}, \ket{\phi_{n-4}},\ldots,\ket{\phi_{5}}\} & \text{ if } n \equiv 1 \pmod{4} \\
\lspan\{\ket{\phi_{n}}, \ket{\phi_{n-4}},\ldots,\ket{\phi_{6}}, \wvec{0}\} & \text{ if } n \equiv 2 \pmod{4} \\
\lspan\{\ket{\phi_{n}}, \ket{\phi_{n-4}},\ldots,\ket{\phi_{3}}\} & \text{ if } n \equiv 3 \pmod{4} \\
\end{cases} \eeqn which has distance $2$. Note that $\cC$ is the span of $\ket{\phi_k}$'s with indices spaced apart by $4$, hence $E\ket{\phi_k}$ and $F\ket{\phi_k}$ are orthogonal to the code. In particular, this guarantees the detection condition for $E$ and $F$ since \beqn\begin{gathered}\braket{\phi_k\vert{E}\vert\phi_l} = 0, \\ \braket{\phi_k\vert{F}\vert\phi_l} = 0 \end{gathered}\eeqn if $\ket{\phi_k}, \ket{\phi_l} \in \cC$. The spacing of indices also guarantees that $H\ket{\phi_k}$ is orthogonal to $\ket{\phi_l}$ if $l \neq k$. On the other hand, $\ket{\phi_k}$ is equally supported only on $\wvec{k}$ and $\wvec{-k}$ hence $\braket{\phi_k\vert{H}\vert\phi_k} = 0$. The dimension of $\cC$ is \beqn \dim(\cC) = \begin{cases}
\frac{n}{4} + 1 & \text{ if } n \equiv 0 \pmod{4} \\
\frac{n-1}{4} & \text{ if } n \equiv 1 \pmod{4} \\
\frac{n+2}{4} & \text{ if } n \equiv 2 \pmod{4} \\
\frac{n+1}{4} & \text{ if } n \equiv 3 \pmod{4}
\end{cases} \eeqn so $\cC$ has a density of approximately 1/4 of the dimension of the whole space. Bumgardner's general construction for quantum codes of distance $d$ gives this family of codes when $d = 2$.

\subsection{Quantum Codes of Density 1/3}\label{subsec:codes_density_third} Motivated by Bumgardner's construction, we construct a family of distance 2 codes of density 1/3. Note that the codes of density 1/4 satisfy the detection conditions since the supports of the basis vectors are disjoint after an error. To achieve a higher density, we search for orthogonal vectors where the supports are not disjoint after an error but still satisfy the error detection conditions. With this in mind, we present our construction.

For $k \geq 4$ such that $\ket{k}$ is in $\cH$, we define $\ket{\wpsi{k}{1}}, \ket{\wpsi{k}{2}} \in \cH$ by \beqn\begin{gathered}\ket{\wpsi{k}{1}} = \sqrt{\frac{k}{2k-2}}\wvec{-(k-2)} - \sqrt{\frac{k-2}{2k-2}}\wvec{k} \\ \ket{\wpsi{k}{2}} = \sqrt{\frac{k-2}{2k-2}}\wvec{-k} + \sqrt{\frac{k}{2k-2}}\wvec{k-2}.\end{gathered}\eeqn These vectors form an orthonormal set, and let $\cC_k$ be the span of these two vectors. It is straightforward to verify that \beqn
\braket{\wpsi{k}{i}\vert{A}\vert\wpsi{k}{j}} = 0 \eeqn for $1\leq i,j\leq 2$ and $A \in \cE_1$ and thus $\cC_k$ is a quantum code of distance $2$. Note that these vectors may have the same support after applying $E$ or $F$ (e.g. $E\ket{\wpsi{k}{2}}$ and $\ket{\wpsi{k}{1}}$ have the same support) but the detection condition is still satisfied. If $4 \leq k \leq l \leq n$ and $l \geq k + 6$ then $\cC_k$ is orthogonal to $\cC_l$ and \beqn\begin{gathered}\braket{\wpsi{k}{i}\vert{A}\vert\wpsi{l}{j}} = 0 \\ \braket{\wpsi{l}{j}\vert{A}\vert\wpsi{k}{i}} = 0\end{gathered}\eeqn for all $1\leq i,j\leq 2$ and $A \in \cE_1$. The direct sum $\cC_k \oplus \cC_l$ is thus also a quantum code of distance $2$. In the same way that we construct a larger quantum code from the span of adequately spaced $\ket{\phi_k}$'s, we construct a larger quantum code from the span of adequately spaced $\cC_k$'s. Specifically, we take \beqn \cC = \begin{cases}
\lspan(\cC_{n}, \cC_{n-6},\ldots,\cC_{6}, \{\wvec{0}\}) & \text{ if } n \equiv 0 \pmod{6} \\
\lspan(\cC_{n}, \cC_{n-6},\ldots,\cC_{7}) & \text{ if } n \equiv 1 \pmod{6} \\
\lspan(\cC_{n}, \cC_{n-6},\ldots,\cC_{8},\{\wvec{0}\}) & \text{ if } n \equiv 2 \pmod{6} \\
\lspan(\cC_{n}, \cC_{n-6},\ldots,\cC_{9},\{\ket{\phi_3}\}) & \text{ if } n \equiv 3 \pmod{6} \\
\lspan(\cC_{n}, \cC_{n-6},\ldots,\cC_{4}) & \text{ if } n \equiv 4 \pmod{6} \\
\lspan(\cC_{n}, \cC_{n-6},\ldots,\cC_{5}) & \text{ if } n \equiv 5 \pmod{6}
\end{cases} \eeqn and since $\dim(\cC_k) = 2$ the dimension in each case is \beqn \dim(\cC) = \begin{cases}
2\left(\frac{n}{6}\right) + 1 & \text{ if } n \equiv 0 \pmod{6} \\
2\left(\frac{n-1}{6}\right) & \text{ if } n \equiv 1 \pmod{6} \\
2\left(\frac{n-2}{6}\right) + 1 & \text{ if } n \equiv 2 \pmod{6} \\
2\left(\frac{n-3}{6}\right) + 1 & \text{ if } n \equiv 3 \pmod{6} \\
2\left(\frac{n+2}{6}\right) & \text{ if } n \equiv 4 \pmod{6} \\
2\left(\frac{n+1}{6}\right) & \text{ if } n \equiv 5 \pmod{6}
\end{cases} \eeqn so $\cC$ has a density of approximately 1/3.

For $n = 4$ and $n = 5$, the dimension of the code is $2$ in both cases which equals the dimension of the code of density 1/4. For $n = 6$, $\cC$ is the span of the vectors \beqn\wvec{0}, \sqrt{\frac{2}{3}}\wvec{-2} - \sqrt{\frac{1}{3}}\wvec{4},\sqrt{\frac{1}{3}}\wvec{-4} + \sqrt{\frac{2}{3}}\wvec{2}\eeqn and has dimension $3$. This is the first case where the codes of density 1/3 have a larger dimension than the codes of density 1/4. For $n = 7$, $\cC$ has dimension $2$, which again is the same as the code of density 1/4.

\section{Distance 3 Codes for the Clifford Quantum Metrics}

Constructions of quantum codes for the quantum Hamming metric related to the Clifford algebra were introduced in \cite{ZLGL}. It turns out that these codes are also quantum error correcting codes for the even and odd Clifford quantum metrics. Motivated by quantum error correction for even Clifford error, the same family of codes and other codes were also given in \cite{VF}. These codes fall under a general construction that can be described as a Clifford analog of stabilizer codes \cite{Gott:stab,NC} or, equivalently, codes from binary orthogonal geometry \cite{CRSS}. In this section, we introduce another family of codes that fall under this general construction; namely a family of quantum codes of distance 3 for the even and odd Clifford quantum metrics. First, we review one form of the general construction.

\subsection{\texorpdfstring{$q$}{q}-isotropic Binary Subspaces}

Similar to quantum Hamming space, there is a correspondence between certain subspaces of $\F_2^{2n}$ and quantum codes of $\cH^{(n)}$ for the Clifford quantum metrics. Recall the bilinear form $q$ on $\F_2^{2n}$ where $q(x, y) = \wt(x)\wt(y) + x \cdot y$. Two vectors $x, y \in \F_2^{2n}$ are \textit{$q$-orthogonal} if $q(x,y) = 0$. A subspace $C \subseteq \F_2^{2n}$ is \textit{$q$-isotropic} (or totally isotropic) if $x$ and $y$ are $q$-orthogonal for all $x,y \in C$. We define the dual of $C$ with respect to $q$ by \beqn C^{\perp_q} = \{x \in \F_2^{m} \mid q(x,c) = 0 \text{ for all $c \in C$}\} \eeqn and so $C$ is $q$-isotropic if and only if $C \subseteq C^{\perp_q}$.

If $C \subseteq \F_2^{2n}$ is a $q$-isotropic subspace, then the set of even Clifford operators $\{\Gamma_x: x \in C\}$ is commutative and thus are simultaneously diagonalizable (i.e. the operators share the same eigenspaces, but not necessarily the same eigenvalues on each eigenspace). The eigenspaces of these operators are potentially good candidates for quantum codes with error detection capabilities that are relatively simple to deduce. Since $\Gamma_x^2 = 1$, the eigenvalues of $\Gamma_x$ for $x \neq 0$ are $\pm 1$ and the orthogonal projection onto the $\pm 1$ eigenspace is thus $\frac{1}{2}(I_{\cH^{(n)}} \pm \Gamma_x)$. Furthermore, $\tr(\Gamma_x) = 0$ for $x \neq 0$, hence the two eigenspaces both have dimension $2^{n-1}$. If $\dim(C) = n - k$ for some $0 \leq k \leq n$ and $S \subseteq C$ is a basis, then the operator \beqn P = \frac{1}{2^{n-k}} \prod_{x \in S} (I_{\cH^{(n)}} \pm \Gamma_x) \eeqn is the orthogonal projection onto a simultaneous eigenspace of the Clifford operators $\Gamma_x$ for $x \in C$. There are $2^{|S|} = 2^{n-k}$ different choices of signs for the coefficient on the $\Gamma_x$ in the product, hence there are $2^{n-k}$ different possible $P$. $P$ is a projection since we have \begin{multline*} P^2 = \frac{1}{2^{2n-2k}} \prod_{x \in S} (I_{\cH^{(n)}} \pm \Gamma_x) \prod_{y \in S} (I_{\cH^{(n)}} \pm \Gamma_y) = \frac{1}{2^{2n-2k}} \prod_{x \in S} (I_{\cH^{(n)}} \pm \Gamma_x)^2 \\
= \frac{1}{2^{2n-2k}} \prod_{x \in S} (2I_{\cH^{(n)}} \pm 2\Gamma_x) = \frac{1}{2^{n-k}} \prod_{x \in S} (I_{\cH^{(n)}} \pm \Gamma_x) = P
\end{multline*} and $P$ is self-adjoint since the $\Gamma_x$ are self-adjoint and commutative. For any $x \in S$, we have $(I_{\cH^{(n)}} + \Gamma_x)(I_{\cH^{(n)}} - \Gamma_x) = 0$, hence the $2^{n-k}$ different $P$'s are each mutually orthogonal. The eigenvalue of $\Gamma_y$ for the eigenspace corresponding to $P$ is the coefficient of $\Gamma_y$ within the product in $P$. More precisely, if $c_y = \pm 1$ then we have \beqn \Gamma_yP = \frac{1}{2^{n-k}} \Gamma_y(I_{\cH^{(n)}} + c_y\Gamma_y)\prod_{x \in S \setminus \{y\}} (I_{\cH^{(n)}} \pm \Gamma_x) = \frac{1}{2^{n-k}} (\Gamma_y + c_yI_{\cH^{(n)}})\prod_{x \in S \setminus \{y\}} (I_{\cH^{(n)}} \pm \Gamma_x) = c_y P. \eeqn The dimension of the corresponding eigenspaces can be computed from the trace of $P$. Expanding the product in $P$ gives \beqn P = \frac{1}{2^{n-k}}I_{\cH^{(n)}} + \frac{1}{2^{n-k}}\sum_{x \in \lspan(S) \setminus\{0\}} (c_x \Gamma_x) \eeqn where each $c_x \in \{-1,1\}$. Each $\Gamma_x$ where $x \neq 0$ has trace zero, so the dimension of the eigenspace corresponding to $P$ is $\tr(P) = 2^k$.

We want to use $P$ as a quantum code, and the error detection capabilities of this code will also be described in terms of the even Clifford operators. If $x \in \lspan(S)$ (and so $q(x,y) = 0$ for all $y \in S$) then we have that $\Gamma_x P = c_x P$ for some $c_x \in \{-1,1\}$. In this case, $\Gamma_x$ is detectable since $P\Gamma_xP = c_x P$ and so the slope on this error is $\varepsilon(\Gamma_x) = c_x \neq 0$. If $x \not\in \lspan(S)$ but $q(x,y) = 1$ for some $y \in S$, then $\Gamma_x$ anticommutes with $\Gamma_y$. It follows that $P \Gamma_x = \Gamma_x P'$ where $P'$ is a projection onto a different simultaneous eigenspace, hence $P \Gamma_x P = \Gamma_x P' P = 0$ and thus $\Gamma_x$ is detectable with slope value $\varepsilon(\Gamma_x) = 0$. Lastly, if $x \not\in \lspan(S)$ and $q(x,y) = 0$ for all $y \in S$, then $P\Gamma_x P = \Gamma_x P$. The only way for $\Gamma_x$ to be detectable is if $P$ corresponds to an eigenspace of $\Gamma_x$. By the definition of $P$, this is not possible, since this would imply $x \in S$. Thus, $\Gamma_x$ must be undetectable. In summary, we have that $\Gamma_x$ is detectable if and only if $x \in \lspan(S)$ or $q(x, y) = 1$ for some $y \in S$. Although the detection properties are in terms of even Clifford operators, we can easily compare this to error detection capabilities for the odd Clifford quantum metric since $\cV_t^{\Cl(2n+1)} = \cV_t^{\Cl(2n)} \oplus \cV_{2n+1-i}^{\Cl(2n)}$ for $1 \leq t \leq n$.

The construction in the preceding paragraphs can be summarized as the following lemma.

\begin{lemma}\label{lemma:qiso_codes} Let $C \subseteq \F_2^{2n}$ be a $q$-isotropic subspace of dimension $n - k$ for some $0 \leq k \leq n$. Each simultaneous eigenspace of the operators $\Gamma_x$ for $x \in C$ is a quantum code of dimension $2^k$ that detects all operators in the set $\{\Gamma_y: y \in C \text{ or } y \not\in C^{\perp_q} \}$.
\end{lemma}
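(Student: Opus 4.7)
The plan is to verify the lemma by explicitly constructing the projection onto each simultaneous eigenspace as a product over a basis of $C$, computing its dimension via the trace, and then checking the three cases of where $y$ can sit relative to $C$ and $C^{\perp_q}$.

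First I would fix a basis $S$ of $C$, so $|S| = n - k$, and for each choice of signs $(c_x)_{x \in S} \in \{-1,+1\}^{S}$ define
\[
P = \frac{1}{2^{n-k}} \prod_{x \in S} (I_{\cH^{(n)}} + c_x \Gamma_x).
\]
Because the $\Gamma_x$ for $x \in C$ mutually commute (by $q$-isotropy together with Proposition~\ref{prop:gamma_props}(v)), the factors commute, so the order in the product is immaterial. Each factor $\frac{1}{2}(I_{\cH^{(n)}} + c_x\Gamma_x)$ is a projection (using $\Gamma_x^2 = I_{\cH^{(n)}}$ and $\Gamma_x^{*}=\Gamma_x$), and commuting projections that are each self-adjoint multiply to a self-adjoint idempotent; hence $P$ is an orthogonal projection. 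The $2^{n-k}$ different sign choices give mutually orthogonal $P$ since for any $x\in S$ the two possible factors $\frac{1}{2}(I_{\cH^{(n)}}\pm\Gamma_x)$ multiply to zero, and their sum over all sign choices telescopes to $I_{\cH^{(n)}}$.

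Next I would compute the dimension by expanding the product. Since each $\Gamma_x\Gamma_y$ is proportional to $\Gamma_{x+y}$ (Proposition~\ref{prop:gamma_props}(vi)), the expansion of $P$ is a linear combination of $\Gamma_z$ for $z \in \lspan(S) = C$, with the coefficient of $\Gamma_0 = I_{\cH^{(n)}}$ equal to $2^{-(n-k)}$. All other $\Gamma_z$ with $z\neq 0$ have trace zero (Proposition~\ref{prop:gamma_props}(iii)), so $\tr(P) = 2^{-(n-k)}\cdot 2^n = 2^k$, giving the code dimension.

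Finally I would check detectability by casework on $y \in \F_2^{2n}$. If $y \in C$, then by the same product-telescope argument $\Gamma_y P = c_y P$ for some $c_y \in \{-1,+1\}$ determined by the sign choices, whence $P\Gamma_y P = c_y P$ and $\Gamma_y$ is detected with slope $c_y$. If $y \notin C^{\perp_q}$, then there exists $x \in S$ with $q(x,y) = 1$, so $\Gamma_y \Gamma_x = -\Gamma_x \Gamma_y$; conjugating, $\Gamma_y (I_{\cH^{(n)}} + c_x\Gamma_x) = (I_{\cH^{(n)}} - c_x\Gamma_x)\Gamma_y$, i.e.\ $\Gamma_y$ swaps the $\pm$ eigenspaces of $\Gamma_x$. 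Thus $P\Gamma_y = \Gamma_y P'$ where $P'$ is the analogous projection with the sign of $c_x$ flipped; since $P P' = 0$ by the orthogonality established above, $P\Gamma_y P = 0$ and $\Gamma_y$ is detected with slope zero. This exhausts the two cases in the lemma. The only step requiring care is the bookkeeping of signs when showing the two families of projections multiply to zero, but this follows immediately from $(I_{\cH^{(n)}}+c_x\Gamma_x)(I_{\cH^{(n)}}-c_x\Gamma_x) = I_{\cH^{(n)}} - \Gamma_x^2 = 0$, so there is no real obstacle.
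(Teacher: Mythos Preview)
Your proof is correct and follows essentially the same approach as the paper: the paper's argument (given in the paragraphs immediately preceding the lemma statement) also constructs $P$ as a product of commuting projections over a basis of $C$, computes $\dim = 2^k$ by expanding and taking traces, and verifies detectability via the same casework on whether $y \in C$ or $y \notin C^{\perp_q}$.
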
 In classical error correction, we say $C$ is a $[2n,n-k]$ linear code if $C \subseteq \F_2^{2n}$ and $\dim(C) = n - k$. Given such a subspace $C$ satisfying the assumptions in Lemma \ref{lemma:qiso_codes}, we call any of the corresponding quantum codes an $[[n,k]]_{\Cl}$.

\subsection{Quantum Codes of Distance 3}\label{sec:clifford_hamming_codes}

Now, that we have reviewed the general construction of codes using even Clifford operators, we may now give our examples of such codes. Similar to the $[[7,2,3]]$ Steane code of the quantum Hamming metric, there exists a $[[7,3]]_{\Cl}$ of even and odd Clifford distance $3$ that is partially based on the classical binary $[7,4,3]$ Hamming code. We will call this the $[[7,3]]_{\Cl}$ Clifford Hamming code.

\begin{example}[$\lbrack\lbrack 7,3 \rbrack\rbrack_{\Cl}$ Clifford Hamming Code]
Consider the subspace $C \subseteq \F_2^{14}$ spanned by the vectors \beqn\begin{matrix}
00011110001111 \\
01100110110011 \\
10101011010101 \\
11111110000000.
\end{matrix}\eeqn Note that here we write the vectors as rows. The first three vectors are of the form $(x,x)$ where $x$ is a basis vector of the dual code of the $[7,4,3]$ Hamming code. It is straightforward to verify that the rows are $q$-isotropic as vectors in $\F_2^{14}$, and thus $C$ corresponds to an $8$ dimensional quantum code of $\cH^{(7)}$.

The Clifford operators in $\cV_1^{\Cl(14)}$ are of the form $\Gamma_{e_k}$ where $e_k$ is a standard basis vector of $\F_2^{14}$. If $x \in C$ is one of the first three basis vectors, then we have $q(x,e_k) = 4 + x \cdot e_k = x_k$. For each $1 \leq k \leq 2n$, there is at least one vector $x$ such that $x_k \neq 0$ hence the $\Gamma_{e_k}$'s are all detectable. The Clifford operators in $\cV_2^{\Cl(14)}$ correspond to $y \in \F_2^{2n}$ where $\wt(y) = 2$. If the two nonzero components of $y$ satisfy $y_k = y_{n+k} = 1$ for some $1 \leq k \leq n$, then we see that the fourth basis vector $x$ of $C$ satisfies $q(x,y) = 1$. Now, suppose the two nonzero components of $y$ do not satisfy $y_k = y_{n+k}$. Then if $x$ is one of the first three basis vectors, then $q(x, y) = x_k + x_l$ where $k$ and $l$ are the indices of the nonzero components of $y$. Note that the columns of the matrix \beqn\begin{matrix}
00011110001111 \\
01100110110011 \\
10101011010101
\end{matrix}\eeqn are all the nonzero vectors of $\F_2^3$ each repeated twice. Any two distinct nonzero columns of this matrix are therefore linearly dependent, hence the sum of any two distinct columns of this matrix is not zero. Note that $q(x, y) = x_k + x_l$ is a component of the sum of two columns of this matrix and since $k \neq n + l$ the two columns must be distinct. It follows that there exists at least one $x$ such that $q(x, y) = 1$, and hence $\Gamma_y$ is detectable. This proves that the code has even Clifford distance at least $3$.

Since $\cV_1^{\Cl(15)} = \cV_1^{\Cl(14)} \oplus \cV_{14}^{\Cl(14)}$ and $\cV_2^{\Cl(15)} = \cV_2^{\Cl(14)} \oplus \cV_{13}^{\Cl(14)}$, to show that the code has an odd Clifford distance of at least $3$, it suffices to show that the code detects even Clifford errors $\Gamma_y$ where $\wt(y) = 14$ and $\wt(y) = 2n - 1$. For the case of $\wt(y) = 14$, the only such vector is $y = 1_{14}$. If $x$ is the last basis vector of $C$, then $q(x, 1_{14}) = \wt(x) = 1$ hence $\Gamma_{y}$ is detectable. Now, for even Clifford errors of weight $13$, we note that every weight $13$ vector of $\F_2^{14}$ is of the form $y + 1_{14}$ for some weight $1$ vector $y$. Note that if $x$ is one of the first three basis vectors of $C$ then $x$ has even weight and thus \beqn q(x, y + 1_{14}) = \wt(x)\wt(y + 1_{14}) + x \cdot (y + 1_{14}) = x \cdot y + x \cdot 1_{14} = x \cdot y + \wt(x) = x \cdot y. \eeqn This shows that the detection of even Clifford errors of weight $13$ reduces to the detection of even Clifford errors of weight $1$, hence these errors are detectable. Thus, the code has odd Clifford distance at least $3$. It turns out that this code cannot detect even Clifford errors of distance $3$ (e.g. the error corresponding to $1110000\ 0000000$ is not detectable) hence both minimum distances must be $3$. We note that the detection capability is due to the fact that the errors anticommute with the Clifford operators corresponding to the nonzero vectors of $C$, hence the slope of the quantum code is zero on these errors. It follows that this quantum code is pure and thus nondegenerate.
\end{example}

More generally, we may construct a quantum code of distance $3$ for each $n = 2^s-1$ where $s \geq 3$.

\begin{prop}[Clifford Hamming Codes]
Let $s \geq 3$. There exists a $[[2^s-1,2^s - s - 2]]_{\Cl}$ of even and odd Clifford distance $3$.
\end{prop}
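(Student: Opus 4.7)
The plan is to generalize the $[[7,3]]_{\Cl}$ example in the natural way. Let $H$ be the $s \times (2^s-1)$ parity check matrix of the binary Hamming code, whose $2^s - 1$ columns are exactly the nonzero vectors of $\F_2^s$. Writing $n = 2^s - 1$ and letting $x_1, \ldots, x_s \in \F_2^n$ denote the rows of $H$ (a basis of the $[n, s, 2^{s-1}]$ simplex code), I would define
\[ C = \lspan\{(x_1, x_1), (x_2, x_2), \ldots, (x_s, x_s), (1_n, 0_n)\} \subseteq \F_2^{2n}, \]
and then apply Lemma \ref{lemma:qiso_codes}. Linear independence of the $s+1$ generators is immediate since the $x_i$ are independent and the asymmetry of $(1_n, 0_n)$ between the two halves places it outside their span. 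Thus $\dim(C) = s+1$, giving $k = n - \dim(C) = 2^s - s - 2$ as required.

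The first step is to check that $C$ is $q$-isotropic, which is a short direct calculation: each simplex codeword $x_i$ has weight $2^{s-1}$, even for $s \geq 2$, so each $(x_i,x_i)$ has even weight, whence $q((x_i,x_i),(x_j,x_j)) \equiv 2(x_i\cdot x_j) \equiv 0$ and $q((x_i,x_i),(1_n,0_n)) \equiv \wt(x_i) \equiv 0$; while $q((1_n,0_n),(1_n,0_n)) = n(n+1)$ is even since $n+1 = 2^s$. The main work is verifying the distance: by Lemma \ref{lemma:qiso_codes}, it suffices to show that every $y \in \F_2^{2n}$ of weight $1$, $2$, $2n-1$, or $2n$ satisfies $q(x,y) = 1$ for some generator $x$ of $C$. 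The last two weights are needed to handle the odd Clifford metric via the decomposition $\cV_j^{\Cl(2n+1)} = \cV_j^{\Cl(2n)} \oplus \cV_{2n+1-j}^{\Cl(2n)}$; detection of weight $2n$ and weight $2n-1$ errors reduces, using the even parity of each $(x_i,x_i)$, to the weight $0$ and weight $1$ cases via the identity $q(x,y+1_{2n}) \equiv \wt(x)\wt(y+1_{2n}) + x \cdot y + \wt(x)$.

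For weight $1$, $y = e_k$, detection reduces to the fact that every coordinate position is covered by some row of $H$, which is immediate since the columns of $H$ are all nonzero. For weight $2$, $y = e_k + e_l$ with $k < l$, there are two cases: if $l = k + n$ (the \emph{dual-strand} case) then $q((1_n, 0_n), y) = 1$; otherwise $k$ and $l$ reduce modulo $n$ to two distinct positions in $\{1, \ldots, n\}$, and because the corresponding columns of $H$ are distinct nonzero vectors, some row $x_i$ differs in those two positions, so $q((x_i,x_i), y) = 1$. The key structural observation—and the only substantive ``hard part''—is that the $(x,x)$ block structure reduces weight-$2$ detection to the defining combinatorial property of the Hamming parity check matrix, namely that its columns are pairwise distinct, while the auxiliary generator $(1_n, 0_n)$ is precisely what is needed to break the block symmetry and pick up the dual-strand weight-$2$ errors that the $(x_i, x_i)$'s miss.
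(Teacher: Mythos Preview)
Your proposal is correct and follows essentially the same construction and verification as the paper: the subspace $C$ spanned by $(x_i,x_i)$ for $x_i$ ranging over a basis of the simplex code together with $(1_n,0_n)$, checked to be $q$-isotropic and of dimension $s+1$, with the distance argument reducing weight-$1$ and weight-$2$ detection to the fact that the columns of $H$ are the distinct nonzero vectors of $\F_2^s$. One small imprecision: your reduction of the weight-$2n$ case to weight $0$ via the even-parity generators $(x_i,x_i)$ does not by itself detect $\Gamma_{1_{2n}}$, since $q((x_i,x_i),1_{2n})=0$; you need $q((1_n,0_n),1_{2n})=n\equiv 1$ here, exactly as the paper does in the $[[7,3]]_{\Cl}$ example.
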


\begin{proof} For $s \geq 3$, let $C' \subseteq \F_2^{2^s-1}$ be the dual of the Hamming code of length $2^s - 1$ and dimension $s$. $C'$ has a basis of $s$ vectors row vectors where the columns of these vectors are the $2^s - 1$ nonzero binary vectors of length $s$. We denote this basis by $S'$. The previous example is the case when $s = 3$ and the basis of $C'$ is given by \beqn\begin{matrix} 0001111 \\
0110011 \\
1010101. \end{matrix}\eeqn One may prove by induction that the sum of all binary vectors of length $s$ is the zero vector (i.e $\sum_{x \in F_2^s} x = 0$), hence $C'$ consists of only even weighted vectors. Now, we let $S \subseteq \F_2^{2(2^s-1)}$ be the set of vectors \beqn S = \{(x,x): x \in S'\} \cup \{(1_{2^s - 1}, 0_{2^s-1})\} \eeqn and, in the case $s = 3$, this set contains the four vectors \beqn\begin{matrix}
00011110001111 \\
01100110110011 \\
10101011010101 \\
11111110000000.
\end{matrix}\eeqn The vectors in $\{(x,x): x \in S'\}$ are $q$-orthogonal since \beqn q((x,x),(y,y)) = (2\wt(x))(2\wt(y)) + (x,x) \cdot (y,y) = 2 x \cdot y = 0 \eeqn for any $x, y \in S'$. The vector $(1_{2^s - 1}, 0_{2^s-1})$ is trivially $q$-orthogonal to itself. Lastly, since each $x \in S'$ is even weighted we have \beqn q((x,x), (1_{2^s - 1}, 0_{2^s-1})) = (2\wt(x))(2^s - 1) + (x,x) \cdot (1_{2^s - 1}, 0_{2^s-1}) = x \cdot 1_{2^s - 1} = \wt(x) = 0 \eeqn hence $C = \lspan(S)$ is $q$-isotropic. Note that $|S| = s + 1$ so, by Lemma \ref{lemma:qiso_codes}, $C$ corresponds to a $[[2^s - 1, 2^s - s - 2]]_{\Cl}$. By similar argument to the case of the $[[7,3]]_{\Cl}$, we may argue that this $[[2^s-1, 2^s-s-2]]_{\Cl}$ has both even and odd Clifford minimum distance $3$. Moreover, this code is pure and thus nondegenerate.
\end{proof}

We recall that a distance 3 even Clifford code is equivalent to a distance 2 spinorial code, hence the Clifford Hamming codes are also distance 2 spinorial codes. On the other hand, a distance 3 odd Clifford code of $\cH^{(n)}$ is equivalent to a semispinorial code of $\cH^{(n+1)}_{\pm}$.

Lastly, we make a small note on the optimality of these codes. For a nondegenerate quantum code $\cC \subseteq \cH^{(n)}$, the distance $3$ quantum volume bound for the $\Cl(2n+1)$ quantum metric states that \beqn \dim(\cC) \leq \frac{\dim(\cH^{(n)})}{\dim(\cE_1^{\Cl(2n+1)})} = \frac{2^n}{2n+2}. \eeqn If $n = 2^s - 1$ then the right-hand side of the inequality becomes $2^{2^s - s - 2}$. This bound is met by the Clifford Hamming codes, and so the Clifford Hamming codes are perfect quantum codes of the odd Clifford quantum metric. When we introduce the quantum linear programming bounds, we will see that this is also an upper bound for degenerate codes and even Clifford codes.


    \chapter[%
        Quantum Linear Programming Bounds
    ]{%
        Quantum Linear Programming Bounds
    }%
    \label{ch:QLPB}
In the previous chapter, we presented a couple of constructions of quantum codes. Now in this final chapter, we present the main results of this thesis, which revolve around the derivation of upper bounds on the size of quantum codes. Our main result is the quantum linear programming bound, a method of using linear programming to compute upper bounds on the dimension of quantum codes of quantum metric spaces with a high degree of symmetry. We will also give several related results, including methodology for computing the bounds, a method of sharpening the bounds for some quantum metric spaces exhibiting extra symmetry, and numerical and analytical upper bounds. As was in the previous chapter, we assume that our quantum metric spaces are completely quantum and finite dimensional.

\section{Multiplicity-Free, 2-Homogeneous Quantum Metric Spaces}\label{sec:mf_2hom_spaces}

In this section, we introduce multiplicity-free, 2-homogeneous quantum metric spaces. These two conditions address two different notions of symmetry of quantum metric spaces. A quantum metric space must be multiplicity-free to formulate the quantum linear programming bounds. 2-homogeneity is not required for the quantum linear programming bounds, but allows for a simplification of the formulation. Each of the quantum metric spaces in Section \ref{sec:qmetric_from_algebra} are multiplicity-free and 2-homogeneous.

In Chapter 2, the quantum isometry group, denoted $\Isom(\cH, \cE_t)$, was introduced as the group of distance-preserving unitary operators of a quantum metric space. By definition, $\Isom(\cH)$ acts on each subspace $\cE_t$ by conjugation, and thus each $\cE_t$ is a projective unitary representation of $\Isom(\cH)$ with respect to the Hilbert-Schmidt inner product. If $t \geq 1$ is an integer and $\cE_t \neq \cE_{t-1}$, then the reducibility of unitary representations implies the existence of an $\Isom(\cH)$-invariant subspace $\cV_{t} \subseteq \cE_t$ such that $\cV_{t} \perp \cE_{t-1}$ and \beqn \cE_t = \cV_{t} \oplus \cE_{t-1}. \eeqn One may note that $\cV_t = \cE_t \cap \cE_{t-1}^{\perp}$, which is effectively stating that $\cV_t$ is the space of errors of distance exactly $t$. By induction, we obtain a direct sum decomposition \beqn \cE_t = \bigoplus_{j = 0}^{t} \cV_j. \eeqn We assume that $\cE_t$ is connected, hence \beqn \cL(\cH) = \bigoplus_{t = 0}^{r} \cV_t \eeqn where $r$ is the diameter of $(\cH, \cE_t)$. The first condition we seek is that each $\cV_t$ is an irreducible representation of $\Isom(\cH)$ and so we have the following definition.

\begin{definition} Let $(\cH, \cE_t)$ be a quantum metric space and $\cV_t$ the space of errors of distance $t$. $(\cH, \cE_t)$ is \textbf{2-homogeneous} if each $\cV_t$ is irreducible as a representation of $\Isom(\cH)$.
\end{definition}

For completely quantum metrics, recall the properties $\cE_0 = \C I_{\cH}$ and the self-adjoint property $\cE_t^\ast = \cE_t$. These properties respectively imply that $\cV_0 = \C I_{\cH}$ and $\cV_t^\ast = \cV_t$. Although we use ${}^\ast$ to denote the set of all adjoint operators of a given set, this notation coincidentally overlaps the notation for the dual representation. Specifically, if $\cV \subseteq \cL(\cH)$ is a representation of a group $G$ where the action is given by the conjugation by unitary operators, then it turns out that $\cV^\ast$ is isomorphic to the dual representation of $\cV$. For a 2-homogeneous quantum metric, it follows that each $\cV_t$ is a self-dual representation of $\Isom(\cH)$.

Relating to classical metrics, the 2-homogeneous condition is a quantum metric space analog of the 2-point homogeneous condition for a metric space. A metric space $(X,d)$ is 2-point homogeneous if, for all $x, y, x', y' \in X$ where $d(x,y) = d(x',y')$, there exists an isometry $f$ where $f(x) = x'$ and $f(y) = y'$. We may also view a metric as a family of relations $V_t$ on $X$ where $(x,y) \in V_t$ if and only if $d(x,y) = t$. In this case, 2-point homogeneity is equivalent to the isometry group of $X$ acting transitively on each $V_t$, which we may compare to the quantum case where $\Isom(\cH)$ acts irreducibly on each $\cV_t$.

The second condition we seek is that each $\cV_t$ is a distinct representation of $\Isom(\cH)$, hence we aptly have the following definition.

\begin{definition} A quantum metric space $(\cH, \cE_t)$ is \textbf{multiplicity-free} if $\cL(\cH)$ is multiplicity-free as a representation of $\Isom(\cH)$.
\end{definition}

Each quantum metric space introduced in Section \ref{sec:qmetric_from_algebra} is multiplicity-free and 2-homogeneous. For each of the quantum metric spaces, we will not directly identify $\Isom(\cH)$, but identify a subgroup of $\Isom(\cH)$ that gives a multiplicity-free decomposition of $\cL(\cH)$ such that each irreducible component is $\cV_t$. The existence of such a group would imply that the quantum metric space is multiplicity-free and 2-homogeneous. Finding such a group is also equivalent to identifying $\cH$ as a representation of some group $G$ where the action is given by quantum metric isometries and has an induced action on $\cL(\cH)$ that satisfies multiplicity-free and 2-homogeneous conditions. We will take this perspective and thus state the following definition.

\begin{definition} Let $G$ be a group. A quantum metric space $(\cH, \cE_t)$ is a \textbf{quantum metric} $\boldsymbol{G}$\textbf{-space} if $\cH$ is a representation of $G$ given by a homomorphism $R:G \to \Isom(\cH, \cE_t)$. A quantum metric $G$-space $(\cH, \cE_t)$ is \textbf{multiplicity-free} if $\cL(\cH)$ is a multiplicity-free representation of $G$ with respect to the action $g \cdot E = R(g)ER(g)^\ast$ for $g \in G$. A quantum metric $G$-space $(\cH, \cE_t)$ is \textbf{2-homogeneous} if each $\cV_t$ is irreducible with respect to the same action of $G$.
\end{definition}

For the rest of this paper, unless stated otherwise, we assume that $(\cH, \cE_t)$ is a multiplicity-free, 2-homogeneous quantum metric $G$-space for some group $G$ and homomorphism $R:G \to \Isom(\cH,\cE_t)$. Before we discuss our examples, we will review a few concepts about complex semisimple Lie algebra representations and prove a few propositions that we will use to find suitable $G$'s.

We recall the fact that if $\cH$ is a complex representation of a complex semisimple Lie algebra $\g$, then $\cL(\cH)$ is also a representation of $\g$. First for $X \in \cL(\cH)$ we define the linear map $\ad_X:\cL(\cH) \to \cL(\cH)$ by \beqn \ad_X(E) \defeq [E,X] \eeqn where $[E,X] = EX - XE$ is the commutator. Now, if the action of $\g$ on $\cH$ is given by a Lie algebra homomorphism $\phi:\g \to \cL(\cH)$, then the action of $X \in \g$ on $E \in \cL(\cH)$ is given by \beqn X(E) \defeq \ad_{\phi(X)}(E) = [\phi(X), E].\eeqn We refer to such an action as an adjoint action of a Lie algebra. Our motivation for studying these Lie algebra representations is that representations of $\g$ are also representations of the simply connected Lie group of $\g$ when $\g$ is a complex semisimple Lie algebra. In particular, an adjoint action of Lie algebras corresponds to a Lie group action by conjugation by unitary operators, which we also call an adjoint action of a Lie group. This Lie group action will identify a suitable group of quantum isometries, and we have the following proposition which will help us identify this group.

\begin{prop}\label{prop:lie_iso} Let $\cH$ be a representation of a complex Lie algebra $\g$ given by a Lie algebra homomorphism $\phi:\g \to \cL(\cH)$ and $\cE_t$ the quantum graph metric generated by $\g$. Each $\cE_t$ is a representation of $\g$ under the adjoint action of $\g$.
\end{prop}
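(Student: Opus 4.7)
The plan is to exploit the fact that $\cE_t$ is a graph metric, so $\cE_t = \cE^{\lfloor t \rfloor}$ where $\cE = \lspan_{\C}\{I_{\cH}, \phi(\g)\}$. Since the adjoint action $X \cdot E = [\phi(X), E]$ is linear in $E$, it suffices to show (i) that $\cE$ itself is closed under the adjoint action, and (ii) that closure under the adjoint action is preserved by taking products of subspaces. Then an induction on $k$ gives closure of each $\cE^{k}$, and hence of each $\cE_{t}$.

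For step (i), I would check the adjoint action on the generators of $\cE$. On the identity, $[\phi(X), I_{\cH}] = 0 \in \cE$. On an element $\phi(Y)$ with $Y \in \g$, the fact that $\phi$ is a Lie algebra homomorphism gives
\[
[\phi(X), \phi(Y)] = \phi([X,Y]) \in \phi(\g) \subseteq \cE.
\]
Extending linearly in $E \in \cE$ yields $\ad_{\phi(X)}(\cE) \subseteq \cE$.

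For step (ii), the crucial observation is the derivation (Leibniz) property of the commutator: for any $E, F \in \cL(\cH)$,
\[
[\phi(X), EF] = [\phi(X), E]\, F + E\, [\phi(X), F].
\]
Thus if $\cV, \cW \subseteq \cL(\cH)$ are both closed under $\ad_{\phi(X)}$, then for any simple product $EF$ with $E \in \cV$ and $F \in \cW$, the right-hand side lies in $\cV\cW + \cV\cW = \cV\cW$, and linearity extends this to all of $\cV\cW$. Applying this inductively with $\cV = \cE$ and $\cW = \cE^{k-1}$, and using step (i) as the base case, we conclude that $\cE^{k}$ is $\g$-invariant for every $k \geq 0$. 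Since $\cE_{t} = \cE^{\lfloor t \rfloor}$ by the definition of a graph metric, each $\cE_{t}$ is a representation of $\g$ under the adjoint action, completing the proof.

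There is no real obstacle here; the proof is essentially an application of the Leibniz rule for the commutator together with the defining relation $\phi([X,Y]) = [\phi(X),\phi(Y)]$ of a Lie algebra homomorphism. The only subtle point is making sure one invokes the graph-metric structure so that the argument reduces to closure of finitely generated products, rather than attempting to handle an abstract filtration.
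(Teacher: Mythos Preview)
Your proof is correct and follows essentially the same approach as the paper: both arguments verify that $\cE_1 = \lspan_\C\{I_\cH, \phi(\g)\}$ is closed under the adjoint action using $[\phi(X), I_\cH] = 0$ and $[\phi(X), \phi(Y)] = \phi([X,Y])$, and then extend to $\cE_t = \cE_1^{\lfloor t \rfloor}$ via the Leibniz rule for the commutator on products. Your formulation of step (ii) as a general closure property of $\cV\cW$ is a slightly cleaner packaging of the induction, but the content is identical.
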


\begin{proof} Since $\cE_t = \C I_{\cH}$ for $0 \leq t < 1$ and $I_{\cH}$ commutes with all operators, it follows that $[\phi(X), I_{\cH}] = 0$ for all $X \in \g$. So for $0 \leq t < 1$, $\cE_t$ is invariant under the action of $\g$ and these $\cE_t$ are isomorphic to the trivial representation of $\g$. Next, we have that \beqn \cE_t = \C I_{\cH} \oplus \phi(\g)\eeqn for $1 \leq t < 2$ and since $\phi$ is a Lie algebra homomorphism for all $X, Y \in \g$ we have \beqn [\phi(X), \phi(Y)] = \phi([X, Y]_\g) \in \phi(\g)\eeqn where $[X, Y]_\g$ is the Lie bracket operation on $\g$. It thus follows that $[\phi(X), E] \in \cE_t$ for all $X \in \g$ and $1 \leq t < 2$. For $t > 2$, we use the fact that the commutator acts like a derivation on products of operators, meaning if $F, E_1, E_2, \ldots, E_j \in \cL(\cH)$ for some $t \geq 1$, then \beq\label{eq:commutator_derivation} [F, E_1E_2\cdots E_t] = \sum_{k = 1}^{t} E_1 E_2 \cdots [F, E_k] \cdots E_t. \eeq Let $E \in \cE_t$ so $E$ is a linear combination of products of at most $t$ elements in $\cE_1$. Now, combining the fact that $\cE_1$ is invariant under the adjoint action with the commutator property given by equation (\ref{eq:commutator_derivation}), we have that $[\phi(X), E]$ is again a linear combination of products of at most $t$ elements in $\cE_1$ for all $X \in \g$.
\end{proof}

Now we state and prove that an adjoint action of a compact real Lie algebra on $\cL(\cH)$ exponentiates to an adjoint action of a Lie group.

\begin{prop}\label{prop:adjoint_corres} Let $\cH$ be a unitary representation of a compact real Lie algebra $\g$ and $\cV \subseteq \cL(\cH)$ be a representation of $\g$ given by the adjoint action. If $G$ is the simply connected Lie group of $\g$ and $G$ acts on $\cH$ by a homomorphism $R:G \to \U(\cH)$, then the action of $G$ on $\cV$ is given by \beqn g \cdot E = R(g)ER(g)^\ast \eeqn for $g \in G$.
\end{prop}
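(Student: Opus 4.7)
The plan is to use the standard correspondence between representations of a simply connected Lie group $G$ and representations of its Lie algebra $\g$, which says that any Lie algebra representation $\psi:\g \to \cL(\cV)$ exponentiates to a unique Lie group representation $\Psi:G \to \GL(\cV)$ satisfying $\Psi(\exp X) = \exp(\psi(X))$. My strategy is to produce a candidate group action on $\cV$ defined by the formula $g \cdot E = R(g)ER(g)^\ast$, check that it is well-defined as a $G$-representation, compute its derivative at the identity, and show this derivative equals the given adjoint action of $\g$; then uniqueness of integration forces the two $G$-actions to agree.

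First I would define the map $\Phi:G \to \GL(\cL(\cH))$ by $\Phi(g)(E) = R(g)ER(g)^\ast$. Since $R$ is a group homomorphism into $\U(\cH)$, a short computation shows $\Phi$ is a group homomorphism. The next step is to confirm that $\cV$ is $\Phi(G)$-invariant; this will follow once I compute the differential of $\Phi$, because a connected Lie group is generated by exponentials of Lie algebra elements, so $\Phi(G)$-invariance reduces to invariance under $\exp(t\phi(X))(\cdot)\exp(-t\phi(X))$ for $X \in \g$ and $t \in \R$, which in turn follows from $\g$-invariance of $\cV$ under the adjoint action together with analyticity.

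The key computation is the derivative at the identity. Because $G$ is simply connected with Lie algebra $\g$ and $R$ is a Lie group homomorphism, the differential $dR:\g \to \cL(\cH)$ coincides (up to the identification used in defining the action) with $\phi$, and $R(\exp(tX)) = \exp(t\phi(X))$ for $X \in \g$. Differentiating at $t = 0$,
\beqn
\frac{d}{dt}\bigg\vert_{t = 0}\Phi(\exp(tX))(E) = \frac{d}{dt}\bigg\vert_{t = 0}\exp(t\phi(X))\,E\,\exp(-t\phi(X)) = \phi(X)E - E\phi(X) = [\phi(X),E],
\eeqn
which is exactly the adjoint action $X \cdot E$ of $\g$ on $\cV$. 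Note that the minus sign in the exponent comes from unitarity: $\phi(X)^\ast = -\phi(X)$ for $X$ in the compact real Lie algebra $\g$, so $R(\exp(tX))^\ast = \exp(t\phi(X))^\ast = \exp(-t\phi(X))$.

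Finally, I invoke uniqueness of the integrated action. The restriction $\Phi\vert_{\cV}:G \to \GL(\cV)$ is a Lie group representation of $G$ whose derivative at the identity equals the adjoint action of $\g$ on $\cV$. By hypothesis the $G$-action on $\cV$ is the one obtained by integrating this same $\g$-action, and since $G$ is simply connected and connected, the Lie algebra representation integrates to a unique Lie group representation. Therefore the $G$-action on $\cV$ must equal $\Phi\vert_{\cV}$, i.e.\ $g \cdot E = R(g)ER(g)^\ast$, as claimed. The main potential obstacle is not an obstacle so much as a bookkeeping point: making sure the identification $dR = \phi$ is set up consistently with Proposition \ref{prop:lie_iso}, and justifying that invariance of $\cV$ under $\g$ implies invariance under $R(G)$, both of which are standard once connectedness of $G$ is used.
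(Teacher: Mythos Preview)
Your proposal is correct and follows essentially the same approach as the paper: both differentiate the conjugation action $R(g)ER(g)^\ast$ at the identity, use skew-adjointness $\phi(X)^\ast = -\phi(X)$ from unitarity, and obtain $[\phi(X),E]$, matching the given adjoint $\g$-action. The paper's version is more terse (it works with a generic smooth path $g(t)$ through the identity rather than $\exp(tX)$ and leaves the uniqueness-of-integration step implicit), but the argument is the same.
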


\begin{proof}
Let the action of $\g$ on $\cH$ be given by a real Lie algebra homomorphism $\phi:\g \to \cL(\cH)$. Since $G$ is the simply connected Lie group of $\g$, for every $X \in \g$, we may assume that there exists a smooth path $g:(-1,1) \to G$ such that $\frac{d}{dt}\big\vert_{t = 0} R(g(t)) = \phi(X)$ and $g(0)$ is the identity of $G$. Since $\cH$ is a unitary representation of $\g$, $\phi(X)$ is skew-self-adjoint, which means $\phi(X)^\ast = -\phi(X)$. By taking the differential of the action of $G$ on $\cL(\cH)$, we have \begin{multline*}
\frac{d}{dt}\Big\vert_{t = 0} R(g(t)) E R(g(t))^\ast = \frac{d}{dt}\Big\vert_{t = 0} R(g(t)) E R(g(0))^\ast + \frac{d}{dt}\Big\vert_{t = 0} R(g(0)) E R(g(t))^\ast \\
= \phi(X) E + E \phi(X)^\ast = \phi(X) E - E \phi(X) = [\phi(X), E],
\end{multline*} so the adjoint action of $\g$ on $\cL(\cH)$ corresponds to the adjoint action of $G$ by unitary operators.
\end{proof}

We now show that each of the quantum metric spaces introduced in Section \ref{sec:qmetric_from_algebra} is a multiplicity-free, 2-homogeneous quantum metric $G$-space by describing each $\cV_t$ and finding a suitable $G$. For some of the examples, we directly identify $G$ as a group of unitaries acting on $\cH$.

\begin{example}[$q$-ary Quantum Hamming Space]
Let $\cE_t$ be the quantum Hamming metric on $\cH = (\C^q)^{\otimes n}$. Each $\cE_{t}$ is invariant under conjugation by elements of $\SU(q)^{\otimes n}$ and the unitary operators on $\cH$ that permute the tensor factors. We may take $G$ to be the group generated by these two types of unitary operators and, as an abstract group, $G$ is isomorphic to $S_n \ltimes \SU(q)^{n}$. Under the action of $G$, the decomposition of $\cL(\cH)$ is given by $\cL(\cH) = \bigoplus_{t = 0}^{n} \cV_t$ where \beqn \cV_{t} = \lspan\{E_1 \otimes \cdots \otimes E_n \mid E_j \in M_q(\C) \text{ and exactly $t$ of the $E_j$'s are not proportional to $I_q$} \} \eeqn and, moreover, it is easy to see that $\cV_t$ is exactly the space of errors of distance $t$. To realize the decomposition, first note that $\su(q) \subseteq M_q(\C)$ and $\C I_q \subseteq M_q(\C)$ are irreducible representations of $\SU(q)$ by conjugation. This implies that $\su(q)^{t} \otimes (\C I_q)^{\otimes n - t}$ and any subspace obtained by permutation of the tensor factors are irreducible representations of $\SU(q)^{\otimes n}$ by conjugation. $\cV_t$ is the direct sum of these subspaces and is invariant under permutations of the tensor factors, hence each $\cV_t$ is an irreducible representation of $G$. If $\cB$ is an orthonormal basis of $M_q(\C)$ such that $I_q \in \cB$, then the set \beqn \{E_1 \otimes \cdots \otimes E_n \mid E_j \in \cB \text{ and exactly $t$ of the $E_j$'s are not equal to $I_q$}\} \eeqn is an orthonormal basis of $\cV_t$. A counting argument shows that $\dim(\cV_t) = q^t \binom{n}{t}$.
\end{example}

\begin{example}[$\su(2)$ Quantum Metrics]
Let $n \geq 0$ and $\cH$ be the irreducible representation of $\su(2)$ of dimension $n + 1$. $\cH$ is a representation of $\SU(2)$ through a homomorphism $R:\SU(2) \to \U(\cH)$ and we take $G = \SU(2)$. To describe $\cV_t$ we appeal to the representation theory of $\slc(2)$ since each $\cV_t$ is a representation of $\slc(2)$ by Proposition \ref{prop:lie_iso}.

For $0 \leq t \leq n$, $E^t$ is a highest weight vector of weight $2t$. These operators are orthogonal with respect to the Hilbert-Schmidt inner product, so we may inductively deduce that $E^t \in \cV_t$. It follows that $\cV_t$ contains the irreducible representation of $\slc(2)$ of highest weight $2t$. By the Clebsch-Gordan formula for $\SU(2)$, $\cL(\cH)$ contains the representation of highest weight $2t$ exactly once for each $0 \leq t \leq n$, so each $\cV_t$ must be irreducible. It follows that the internal direct sum decomposition of $\cL(\cH)$ into irreducible representations of $\slc(2)$ is given by \beqn \cL(\cH) = \bigoplus_{t = 0}^{n} \cV_t \eeqn where \beqn \cV_t = \lspan\{\ad_F^k(E^t): 0 \leq k \leq 2t\} \eeqn for $0 \leq t \leq n$. Since $\su(2)$ is the compact real form of $\slc(2)$, Proposition \ref{prop:adjoint_corres} implies that $\SU(2)$ acts on $\cV_t$ by the adjoint action through $R$. The highest weight $j$ representation of $\su(2)$ has dimension $j + 1$, hence $\dim(\cV_t) = 2t + 1$.
\end{example}

\begin{example}[$\su(q)$ Symmetric Quantum Metrics]
Let $\cE_t$ be the $\su(q)$ symmetric quantum metric for $q \geq 2$ and $n \geq 1$. Similar to the case of $\su(2)$, $\cH$ is a representation of $\SU(q)$ through a homomorphism $R:\SU(q) \to \U(\cH)$ and we take $G = \SU(q)$. Again, by Proposition \ref{prop:lie_iso}, each $\cV_t$ is a representation of $\slc(q)$ through the adjoint action.

For $0 \leq t \leq n$, $E_{1q}^t$ is a highest weight vector of weight $(t,0,\ldots,0,t) \in \Z_{\geq 0}^{q-1}$. These operators are orthogonal with respect to the Hilbert-Schmidt inner product, so we may inductively deduce that $E_{1q}^t \in \cV_t$. It follows that $\cV_t$ contains the irreducible representation of $\slc(q)$ of highest weight $(t,0,\ldots,0,t)$. By Steinberg's formula \cite{Humph}, $\cL(\cH)$ contains the representation of highest weight $(t,0,\ldots,0,t)$ exactly once for each $0 \leq t \leq n$, so each $\cV_t$ must be irreducible. We get that \beqn \cL(\cH) = \bigoplus_{t = 0}^{n} \cV_t \eeqn is the decomposition of $\cL(\cH)$ into irreducible representations of $\slc(q)$ where \beqn\cV_t = \lspan\{\ad_{A_1}\ad_{A_2}\cdots\ad_{A_k}(E_{1q}^t): k \geq 0 \text{ and } A_1,\ldots,A_k \in \slc(q)\}.\eeqn $\su(q)$ is the compact real form of $\slc(q)$ hence Proposition \ref{prop:adjoint_corres} implies that $\SU(q)$ acts on $\cV_t$ by the adjoint action through $R$. A computation using the Weyl dimension formula \cite{Humph} yields \beqn \dim(\cV_t) = \frac{2t + q - 1}{q - 1}\binom{q + t - 2}{q - 2}^2. \eeqn
\end{example}

\begin{example}[$\su(n)$ Exterior Quantum Metrics] Let $\cE_t$ be the $\su(n)$ exterior quantum metric for $n \geq 2$ and $1 \leq w \leq n - 1$. $\cH$ is a representation of $\SU(n)$ through a homomorphism $R:\SU(n) \to \U(\cH)$ and we take $G = \SU(n)$. This example is similar to the case for the $\su(q)$ symmetric quantum metric in that for $1 \leq t \leq \min(w,n-w)$, the operator $E_{1,n}E_{2,n-1} \cdots E_{t,n-(t-1)} \in \cE_t$ is a highest weight vector of weight $\lambda \in \Z_{\geq 0}^{n-1}$ where $\lambda_t = 1$, $\lambda_{n-t} = 1$, and $\lambda_k = 0$ otherwise. In the case $t = 0$, $I_{\cH}$ is a highest weight vector of weight $0 \in \Z_{\geq 0}^{n-1}$. These operators are all orthogonal, so we may deduce that $E_t \in \cV_t$. Steinberg's formula gives us that each of these representations appears exactly once, and hence each $\cV_t$ is irreducible. We get that \beqn\cV_t = \lspan\{\ad_{A_1}\cdots\ad_{A_k}(X_t): k \geq 0 \text{ and } A_1,\ldots,A_k \in \slc(n)\}\eeqn and, by Proposition \ref{prop:adjoint_corres}, $\SU(n)$ acts on $\cV_t$ by the adjoint action through $R$. A computation using the Weyl dimension formula yields \beqn \dim(\cV_t) = \frac{n - 2t + 1}{n + 1}\binom{n + 1}{t}^2. \eeqn
\end{example}

\begin{example}[Odd Clifford Quantum Metrics]\label{ex:odd_cliff_qm_sym}
Let $n \geq 0$ and $m = 2n + 1$. From the previous chapter, we have that \beqn \cV_t = \cV_t^{\Cl(2n+1)} = \lspan\{\Gamma_x: \wt(x) = t, x \in \F_2^{2n+1} \} \eeqn for $0 \leq t \leq n$. $G$ can be identified as a group of unitaries on $\cH$ that is isomorphic to $\Spin(2n+1)$. The first observation is that each $\cV_t$ is isomorphic to the $t$-th exterior power representation of $\SO(2n+1)$ (or $\so(2n+1)$) for $0 \leq t \leq n$, which are irreducible and mutually non-isomorphic (see Theorem 19.14 in \cite{FH}). However, $\cH$ is not a representation of $\SO(2n+1)$, so each $\cV_t$ cannot given by an adjoint action. $\Spin(2n+1)$ is the simply connected form of $\SO(2n+1)$ and thus it must be possible to identify an adjoint action from $\Spin(2n+1)$. We will see that this adjoint action matches with the aforementioned action of $\SO(2n+1)$.

First, $\cV_1$ is a complex vector space of dimension $2n+1$ with a quadratic form defined by $Q(X,Y) = \frac{\tr(XY)}{2^n}$ and hence can be identified as the defining representation $V$ of $\SO(2n+1)$ with a quadratic form we also call $Q$. From the Clifford relations, one may then intuitively view $\cV_t$ as the $t$-th exterior power of $V$. The isomorphism can be constructed from the fact that $\Cl(2n+1)$ is isomorphic to the tensor algebra $T(V)$ of $V$ quotient by the ideal generated by elements of the form $u \otimes v + v \otimes u - Q(u,v)1$ for $u,v \in V$. We may identify $\wedge^{t} V \subseteq T(V)$ as the alternating tensors of rank $t$ and we define $f:T(V) \to \Cl(2n+1)$ as the quotient map. If $v_1,\ldots,v_{2n+1} \in V$ form an orthogonal basis then for $k \neq l$, $v_kv_l = -v_kv_l$ as elements of $\Cl(2n+1)$ by the Clifford relation. Thus, for any rank $t$ alternating tensor \beqn v_{k_1} \wedge \cdots \wedge v_{k_t} = \frac{1}{t!} \sum_{\sigma \in S_t} \sgn(\sigma) v_{k_\sigma(1)} \otimes \cdots \otimes v_{k_\sigma(i)} \eeqn of orthogonal vectors, we have \beq\label{eq:f_def} f\left(\frac{1}{t!} \sum_{\sigma \in S_t} \sgn(\sigma) v_{k_\sigma(1)} \otimes \cdots \otimes v_{k_\sigma(t)}\right) = v_{k_1} \cdots v_{k_t}. \eeq Moreover, if we expand $v_k$ in terms of the Clifford generators $e_l$, then $v_{k_1} \wedge \cdots \wedge v_{k_t}$ is a linear combination of rank $t$ simple alternating tensors of $e_l$. This implies that $f(v_{k_1} \wedge \cdots \wedge v_{k_t})$ is a linear combination of products of $t$ distinct $e_l$'s. Thus, $f$ maps $\wedge^{t} V$ to $\cV_t$ and is, in fact, bijective. Lastly, $f$ commutes with linear transformations preserving $Q$ since $f$ maps any other set of orthogonal vectors in the same way as the $v_k$'s in (\ref{eq:f_def}). It follows that $f$ gives isomorphisms $\wedge^{t} V \cong \cV_t$ as representations of $\SO(2n+1)$. Next, we attempt to view $\cV_t$ as a representation of $\Spin(2n+1)$.

Let $G'$ be the subgroup of self-adjoint unitary operators fixing $\cV_1$ under conjugation, meaning \beqn G' = \{U \in \cU(\cH) \mid U \cV_1 U^\ast \subseteq \cV_1 \}.\eeqn Since each $\cV_t$ is the span of products of operators in $\cV_1$, each $\cV_t$ is also fixed under conjugation by $G'$. Note that the subspace of $\cV_1$ of self-adjoint operators forms a real space of dimension $2n+1$ (which we will also refer to as $\cV_1$) with a quadratic form given by the trace bilinear form. From this, we use the idea of the proof of Proposition 20.28 in \cite{FH} to show that $G'$ is isomorphic to $\Pin_{+}(2n+1)$. First, conjugation by $G'$ preserves the trace quadratic form on $\cV_1$ since \beqn \tr(UXU^\ast UXU^\ast) = \tr(X^2) \eeqn for all $U \in G'$ and $X \in \cV_1$. This implies that the action of $G'$ on $\cV_1$ gives a homomorphism from $G'$ to $\OO(2n+1)$ and we show that this homomorphism is surjective by constructing any negative reflection from the action of $G'$. From the Clifford relation (\ref{eq:cl_rel}), if $U \in \cV_1$ is invertible, then $U^{-1} = \frac{2^n}{\tr(U^2)}U$ so necessarily $\tr(U^2) \neq 0$. Now let $X \in \cV_1$ and $U \in \cV_1 \cap G'$ where $\tr(U^2) = 2^n$ and, again using the Clifford relation, we have \beqn UXU^{\ast} = \left(\frac{\tr(UX)}{2^n}I_{\cH} - XU\right)U^\ast = \frac{\tr(UX)}{\tr(U^2)}I_{\cH} - X. \eeqn This equation shows that conjugation by $U$ acts as the negative of the reflection across the hyperplane perpendicular to $U$. On the other hand, if $U \in \cV_1$ is non-zero where $U = \sum_{k = 1}^{2n+1} x_k U_k$ and $x_k \in \R$, then it is straightforward to verify that $U = U^\ast$ and $UU^\ast = U^2 = I_{\cH}$. It follows that conjugation by $G'$ gives any negative reflection of $\cV_1$, and so the homomorphism from $G'$ to $\OO(2n+1)$ is surjective. The kernel of this homomorphism is formed by $U \in G'$ such that $UX = XU$ for all $X \in \cV_1$ so, since $\cV_1$ generates $\cL(\cH)$, each such $U$ is in the center of $\cL(\cH)$. The only central, self-adjoint, unitary operators are $\pm I_{\cH}$, hence the kernel is $\{\pm I_{\cH}\}$. Since negative reflections generate $\OO(2n+1)$, the preimage of the homomorphism is generated by all invertible $U \in \cV_1$ and $\pm I_{\cH}$ so $G'$ is isomorphic to $\Pin_{+}(2n+1)$. However, we take $G$ to be the elements of $G'$ that are products of an even number of elements of $\cV_1$. $G$ is isomorphic to $\Spin(2n+1)$, and so each $\cV_t$ is a representation of $\Spin(2n+1)$.
\end{example}

\begin{example}[$\so(2n+1)$ Spinorial Quantum Metrics]
Let $\cH$ be the spinorial representation of $\so(2n+1)$ of dimension $2^n$. The $\so(2n+1)$ spinorial quantum metric is generated by $\cV_2^{\Cl(2n+1)}$, thus we may deduce that \beqn \cV_{t} = \cV_{2t}^{\Cl(2n+1)} = \lspan\{\Gamma_x: \wt(x) = 2t, x \in \F_2^{2n+1} \}.\eeqn Noting that $\cV_{2t}^{\Cl(2n+1)} = \cV_{2n+1-t}^{\Cl(2n+1)}$, the results from the previous example implies each $\cV_t$ is a distinct irreducible representation under the adjoint action of $\Spin(2n+1)$.
\end{example}

\begin{example}[Even Clifford Quantum Metrics]
Let $n \geq 0$ and $m = 2n$. From the previous chapter, we have that \beqn \cV_t = \cV_{t}^{\Cl(2n)} = \lspan\{\Gamma_x: \wt(x) = t, x \in \F_2^{2n} \} \eeqn for $0 \leq t \leq 2n$. Similar to the case for when $m$ is odd, we may define the groups $\Pin_{+}(2n)$ and $\Spin(2n)$. Each $\cV_t$ is isomorphic to the $t$th exterior power of the defining representation of $\so(2n)$, and each is invariant under conjugation by $\Spin(2n)$. However, $\cV_t$ is isomorphic to $\cV_{2n-t}$ and $\cV_{n}$ is not irreducible (see Theorem 19.2 in \cite{FH}), so the symmetries from $\Spin(2n)$ do not give the multiplicity-free nor the 2-homogeneous condition. We remedy this by instead taking the action of $\Pin_{+}(2n)$, which makes each $\cV_t$ distinct and irreducible. One may see why this is the case by comparing the action of $\SO(2n)$ with $\OO(2n)$ on the exterior powers of the defining representation, as we do next.

Let $V$ be the defining representation of $\OO(2n)$. We note that $\bigwedge^{0} V$ is the trivial representation and $\bigwedge^{2n} V$ is the determinant representation since if $e_1,\ldots,e_{2n}$ are orthonormal basis vectors then \beqn g \cdot (e_1 \wedge \cdots \wedge e_{2n}) = (ge_1) \wedge \cdots \wedge (ge_{2n}) = \det(g) e_1 \wedge \cdots \wedge e_{2n} \eeqn for $g \in \OO(2n)$. These are the first examples where the exterior powers are not isomorphic as $\OO(2n)$ representations. We show that the remaining representations are nonisomorphic by first describing the isomorphisms as $\SO(2n)$ representations. $V$ has an $\OO(2n)$-invariant inner product that extends to a $\OO(2n)$-invariant inner product $\langle,\rangle$ on $\bigwedge^{t} V$ for each $1 \leq t \leq 2n-1$. Given $\xi \in \bigwedge^{t} V$ for $1 \leq t \leq n$, the element $\star \xi \in \bigwedge^{2n-t} V$ is defined as the (unique) element such that $\det(\eta \wedge \star \xi) = \langle\eta,\xi\rangle$ for all $\eta \in \bigwedge^{t} V$. The map $\star:\bigwedge^{t} V \to \bigwedge^{2n-t} V$ is called the Hodge star operator, which is a vector space isomorphism since $\langle,\rangle$ is positive definite. For any $g \in \OO(2n)$, we have that \beqn \det(\eta \wedge \star (g\xi)) = \langle\eta,g\xi\rangle = \langle g^{-1} \eta,\xi\rangle = \det((g^{-1} \eta) \wedge \star \xi) = \det(g^{-1})\det(\eta \wedge (g\star \xi)), \eeqn hence the Hodge star operator is $\SO(2n)$-invariant but not $\OO(2n)$-invariant. For the case of $1 \leq t < n$, $\bigwedge^{t} V$ is irreducible, thus Schur's lemma implies that any $\SO(2n)$-invariant linear map from $\bigwedge^{t} V$ to $\bigwedge^{2n-t} V$ must be a scalar multiple of the Hodge star operator. Every $\OO(2n)$-invariant linear map is $\SO(2n)$-invariant, however the Hodge star operator is not $\OO(2n)$-invariant, hence there exists no $\OO(2n)$-invariant linear map from $\bigwedge^{t} V$ to $\bigwedge^{2n-t} V$. Thus, $\bigwedge^{t} V$ is not isomorphic to $\bigwedge^{2n-t} V$.

In the case that $t = n$, we only need to show that $\bigwedge^{n} V$ is irreducible as a representation of $\OO(2n)$. As a representation of $\SO(2n)$, $\bigwedge^{n} V$ has two irreducible components and the Hodge star and identity operators are two linearly independent automorphisms. The Hodge star operator is not $\OO(2n)$-invariant thus, by similar argument as before, $\bigwedge^{n} V$ can have only one $\OO(2n)$-invariant automorphism up to multiplication by a scalar. Thus, by Schur's lemma, $\bigwedge^{n} V$ is irreducible.
\end{example}

\begin{example}[$\so(2n)$ Semispinorial Quantum Metrics]
Let $\cH^{(n)} = (\C^{2})^{\otimes n}$, $\cH_{\pm}^{(n)} \subseteq \cH^{(n)}$ be one of the semispinorial representations of $\so(2n)$, and $P_{\pm}$ the orthogonal projection onto $\cH^{(n)}_{\pm}$. The $\so(2n)$ spinorial quantum metric is generated by operators of the form $P_{\pm}\Gamma_x P_{\pm}$ where $x \in \F_2^{2n}$ and $\wt(x) = 2$ so \beqn \cV_{t} = \lspan\{P_{\pm}\Gamma_xP_{\pm}: \wt(x) = 2t\}\eeqn for $0 \leq t \leq \frac{n-1}{2}$ if $n$ is odd and $0 \leq t \leq \frac{n}{2}$ if $n$ is even. Like the previous examples, there is an adjoint action of $\Spin(2n)$ on each $\cV_t$. In both even and odd cases, each $\cV_t$ for $t < \frac{n}{2}$ is isomorphic to the $2t$-th exterior power of the defining representation of $\so(2n)$, which are all irreducible and distinct. If $n$ is even then it turns out that $\cV_{\frac{n}{2}}$ is one of the irreducible components of the $n$th exterior power representation. To realize this, we first recall that \beqn\cV_n^{\Cl(2n)} = \{\Gamma_x: x \in \F_2^{2n}, \wt(x) = n\}\eeqn is isomorphic to the $n$th exterior power representation of $\so(2n)$ and, second, we claim that the map $f:\cV_n^{\Cl(2n)} \to \cV_{\frac{n}{2}}$ given by $f(X) = P_{\pm}XP_{\pm}$ is a nontrivial homomorphism of $\so(2n)$ representations. $f$ cannot be an isomorphism since $P_{\pm}\Gamma_xP_{\pm}$ is a scalar multiple of $P_{\pm}\Gamma_{x + 1_{2n}}P_{\pm}$ and $\wt(x + 1_{2n}) = \wt(x)$ if $\wt(x) = n$. Schur's lemma would thus imply that $\cV_{\frac{n}{2}}$ is isomorphic to one of the irreducible components. To prove that $f$ is a homomorphism, recall that $P_{\pm}$ is the projection onto the span of even or odd weight computational basis vectors so $P_{\pm} = \frac{1}{2}(I_{\cH^{(n)}} \pm \Gamma_{1_{2n}})$. Both $I_{\cH^{(n)}}$ and $\Gamma_{1_{2n}}$ commute with the $\so(2n)$ action operators (i.e. the operators $\phi(e_k)\phi(e_l)$ for $k \neq l$) so $P_{\pm}$ does as well. Now we have that \begin{align*} [\phi(e_k)\phi(e_l),P_{\pm}XP_{\pm}] &= \phi(e_k)\phi(e_l)P_{\pm}XP_{\pm} - P_{\pm}XP_{\pm}\phi(e_k)\phi(e_l) \\
&= P_{\pm}\phi(e_k)\phi(e_l)XP_{\pm} - P_{\pm}X\phi(e_k)\phi(e_l)P_{\pm} \\
&= P_{\pm}[\phi(e_k)\phi(e_l),X]P_{\pm} \end{align*} hence the action of each $e_ke_l$ commutes with $f$.
\end{example}

This list of multiplicity-free, 2-homogeneous quantum metrics is not exhaustive. For the multiplicity-free condition, a theorem by Stembridge \cite{Stembridge} classifies all multiplicity-free quantum metrics generated by the action of a complex semisimple Lie algebra. Stembridge's theorem is in fact a more general result that identifies all irreducible representations $\cH_1$ and $\cH_2$ of a complex semisimple Lie algebra $\g$ such that $\cH_1 \otimes \cH_2$ is a multiplicity-free representation of $\g$. The classification of multiplicity-free quantum metrics is the special case of when $\cH_1 = \cH$ and $\cH_2 = \cH^\ast$ since $\cL(\cH) \cong \cH \otimes \cH^\ast$.

\subsection{\texorpdfstring{$G$}{G}-Invariant Superoperators} Multiplicity-free, 2-homogeneous quantum metric $G$-spaces are related to the linear programming bounds through two sets of superoperators that are invariant under a certain action of $G$. Specifically, let $\cL(\cH)_\R \subseteq \cL(\cH)$ be the real subspace of self-adjoint operators on $\cH$ and consider superoperators $\Phi:\cL(\cH)_\R \to \cL(\cH)_\R$. To describe an action of $G$ on such $\Phi$'s, we first note that $\cL(\cH)_\R$ is a real unitary representation of $G$ through the action $g \cdot X = R(g)XR(g)^\ast$ for $g \in G$ and $X \in \cL(\cH)_\R$. This induces an action by conjugation on the space of linear maps on $\cL(\cH)_\R$, meaning for $\Phi:\cL(\cH)_\R \to \cL(\cH)_\R$ we define $g \cdot \Phi$ by \beqn (g \cdot \Phi)(X) \defeq g^{-1} \cdot \Phi(g \cdot X) = U^\ast\Phi(U X U^\ast)U \eeqn where $U = R(g)$. We say $\Phi$ is \textbf{$G$-invariant} if $g \cdot \Phi = \Phi$ for all $g \in G$.

Such $G$-invariant superoperators form a real Hilbert space with respect to the Hilbert-Schmidt inner product that has two important orthogonal bases. One of the bases consists of completely positive maps denoted $\Phi_t$ for $0 \leq t \leq r$, and another consists of orthogonal projections denoted $\Pi_t$ for $0 \leq t \leq r$.

\begin{definition} For each $0 \leq t \leq r$, let $\cB_t$ be an orthonormal basis of $\cV_t$. We define the superoperator $\Phi_t:\cL(\cH)_\R \to \cL(\cH)_\R$ by \beqn\Phi_t(X) = \sum_{E \in \cB_t} EXE^\ast.\eeqn
\end{definition} 

The Choi-Kraus theorem implies that each $\Phi_t$ is completely positive, and unitary freedom implies any other such map constructed from another orthonormal basis of $\cV_t$ is equal to $\Phi_t$. The action of $g \in G$ gives \beqn (g \cdot \Phi_t)(X) = \sum_{E \in \cB_t} U E U^\ast X U E^\ast U^\ast = \sum_{E \in \cB_t} (UEU^\ast) X(UEU^\ast)^\ast \eeqn where $U = R(g)$. Since $\{UEU^\ast: E \in \cB_t\}$ is another orthonormal basis of $\cV_t$, the independence of choice of $\cB_t$ implies that $\Phi_t$ is $G$-invariant.

\begin{definition} For each $0 \leq t \leq r$, we define the superoperator $\Pi_t:\cL(\cH)_{\R} \to \cL(\cH)_{\R}$ as the orthogonal projection onto $\cV_t$.
\end{definition} 

If $\cB_t$ is an orthonormal basis of $\cV_t$ then we may concretely write \beqn\Pi_t(X) = \sum_{E \in \cB_t} \tr(E^\ast X)E.\eeqn $\Pi_t$ is not necessarily completely positive, but is independent of the choice of $\cB_t$ which implies that each $\Pi_t$ is also $G$-invariant. We now prove that these two families of linear maps are orthogonal bases on the space of $G$-invariant linear maps.

\begin{lemma}\label{lemma:ginvbases} $\{\Phi_t\}_{t = 0}^{r}$ and $\{\Pi_t\}_{t = 0}^{r}$ are orthogonal bases of the space of $G$-invariant superoperators $\Phi:\cL(\cH)_\R \to \cL(\cH)_\R$.
\end{lemma}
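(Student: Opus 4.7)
The plan is to combine Schur's lemma with two Hilbert--Schmidt computations. I will first show that the real vector space of $G$-invariant superoperators on $\cL(\cH)_{\R}$ has dimension at most $r+1$, and then verify that both families $\{\Pi_t\}$ and $\{\Phi_t\}$ consist of $r+1$ pairwise orthogonal nonzero elements; together these give orthogonal bases.

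For the dimension bound, I would extend any $G$-invariant $\R$-linear $\Phi$ to a $\C$-linear map $\Phi_\C:\cL(\cH)\to\cL(\cH)$ via the decomposition $\cL(\cH) = \cL(\cH)_\R \oplus i\cL(\cH)_\R$; this extension is automatically $G$-invariant and intertwines the adjoint involution, i.e.\ $\Phi_\C(X^*) = \Phi_\C(X)^*$ for all $X$. The multiplicity-free, 2-homogeneous hypothesis gives $\cL(\cH) = \bigoplus_{t=0}^{r} \cV_t$ as a decomposition into pairwise non-isomorphic complex irreducible $G$-modules, so Schur's lemma forces $\Phi_\C$ to preserve each $\cV_t$ and to act there by a single complex scalar $\lambda_t$. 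Since $\cV_t = \cV_t^*$, the adjoint-compatibility applied to any $X \in \cV_t$ gives $\lambda_t X^* = \overline{\lambda_t}\,X^*$, so $\lambda_t \in \R$ and $\Phi$ is determined by the tuple $(\lambda_0,\ldots,\lambda_r) \in \R^{r+1}$.

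Orthogonality of $\{\Pi_t\}$ is essentially a tautology: the $\Pi_t$ are orthogonal projections onto the mutually orthogonal subspaces $\cV_t$ of $\cL(\cH)$, so $\Pi_s \Pi_t = \delta_{st}\Pi_t$ and $\Pi_t^* = \Pi_t$, giving $\langle \Pi_s, \Pi_t\rangle_{HS} = \tr(\Pi_s\Pi_t) = \delta_{st}\dim\cV_t$. For $\{\Phi_t\}$, I would first establish the identity $\tr(\Phi_E^*\Phi_F) = |\langle E, F\rangle_{HS}|^2$ for the rank-one superoperators $\Phi_E(X) := EXE^*$; a short matrix-unit calculation yields $\tr(\Phi_E^*\Phi_F) = \tr(E^*F)\,\tr(F^*E) = |\tr(E^*F)|^2$. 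Summing over orthonormal bases $\cB_s$ of $\cV_s$ and $\cB_t$ of $\cV_t$ and invoking the orthogonality $\cV_s \perp \cV_t$ for $s \neq t$ from the direct sum decomposition $\cL(\cH) = \bigoplus_t \cV_t$ then yields $\langle \Phi_s, \Phi_t\rangle_{HS} = \delta_{st}\dim\cV_t$.

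The main subtlety will be the self-adjointness bookkeeping in the Schur step, in particular ensuring that the a priori complex Schur scalars collapse to real ones and that Hilbert--Schmidt inner products computed via the complex extension agree with those on the real space of $G$-invariant superoperators $\cL(\cH)_\R \to \cL(\cH)_\R$; once past that, each orthogonality claim reduces to a short computation, and the dimension bound immediately upgrades "orthogonal family of $r+1$ nonzero elements" to "orthogonal basis."
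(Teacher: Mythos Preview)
Your proposal is correct and follows essentially the same approach as the paper: Schur's lemma on the complexification to get the dimension bound with real scalars, the projection identity $\Pi_s\Pi_t=\delta_{st}\Pi_t$ for orthogonality of the $\Pi_t$, and the matrix-unit computation yielding $\tr(\Phi_s^*\Phi_t)=\sum_{E,F}|\tr(E^*F)|^2=\delta_{st}\dim\cV_t$ for the $\Phi_t$. The only cosmetic difference is that you argue $\lambda_t\in\R$ via the adjoint-compatibility $\Phi_\C(X^*)=\Phi_\C(X)^*$ together with $\cV_t^*=\cV_t$, whereas the paper picks a self-adjoint element of $\cV_t$; these are equivalent.
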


\begin{proof} We first prove the $\Pi_t$'s form an orthogonal basis. The $\Pi_t$ are projections onto mutually orthogonal spaces, hence $\Pi_t\Pi_j = \delta_{tj}\Pi_t$, which implies that the $\Pi_t$'s are orthogonal. Now let $\Phi:\cL(\cH)_\R \to \cL(\cH)_\R$ be a $G$-invariant linear map, so we want to show $\Phi$ is a real linear combination of the $\Pi_t$'s. Let $\tilde\Phi:\cL(\cH) \to \cL(\cH)$ be the complexification of $\Phi$, meaning $\tilde\Phi(X + iY) = \Phi(X) + i\Phi(Y)$ for all $X, Y \in \cL(\cH)_\R$ (here, $i$ is the imaginary unit). Since $\Phi$ is $G$-invariant, it is clear that $\tilde\Phi$ is $G$-invariant. The isomorphism class of each $\cV_j$ appears only once in $\cL(\cH)$, hence by Schur's lemma we must have $\tilde\Phi(\cV_j) \subseteq \cV_j$. Schur's lemma further implies that on each $\cV_j$, $\tilde\Phi$ acts by scalar multiplication by some $\lambda_j \in \C$ hence $\tilde\Phi = \sum_{j = 0}^{r} \lambda_j \Pi_j$. For $X_j \in \cV_j$ where $X_j$ is self-adjoint, we have $\Phi(X_j) = \tilde\Phi(X_j) = \lambda_j X_j$, so $\lambda_j X_j$ is self-adjoint. It follows that $\lambda_j \in \R$, and restricting $\tilde\Phi$ to $\cL(\cH)_\R$ gives $\Phi = \sum_{j = 0}^{r} \lambda_j \Pi_j$.

Next, we prove that the $\Phi_t$'s form a basis. It suffices to prove that the $\Phi_t$'s are orthogonal since the number of $\Phi_t$'s and the number of $\Pi_t$'s are equal. Let $\{\ket{\psi_k}\}$ be an orthonormal basis of $\cH$ and so $\{\ketbra{\psi_k}{\psi_l}\}$ are the matrix units with respect to this basis. The complex trace (which equals the real trace) of $\Phi_t^\ast\Phi_j$ is \begin{align*}
\tr(\Phi_t^\ast\Phi_j) &= \sum_{kl} \tr(\ketbra{\psi_l}{\psi_k}\Phi_t^\ast\Phi_j(\ketbra{\psi_k}{\psi_l})) \\
&= \sum_{kl} \sum_{E \in \cB_t} \sum_{F \in \cB_j} \tr(\ketbra{\psi_l}{\psi_k}E^\ast F \ketbra{\psi_k}{\psi_l} F^\ast E) \\
&= \sum_{E \in \cB_t} \sum_{F \in \cB_j} \left(\sum_{k} \braket{\psi_k\vert{E^\ast F}\vert\psi_k}\right) \left(\sum_{l} \braket{\psi_l\vert{F^\ast E}\vert\psi_l}\right) \\
&= \sum_{E \in \cB_t} \sum_{F \in \cB_j} \tr(E^\ast F)\tr(F^\ast E) \\
&= \sum_{E \in \cB_t} \sum_{F \in \cB_j} |\tr(E^\ast F)|^2 \\
&= \delta_{tj} \dim(\cV_t),
\end{align*} hence the $\Phi_t$'s are orthogonal.
\end{proof}

Lemma \ref{lemma:ginvbases} implies that any $G$-invariant superoperator $\Phi:\cL(\cH)_\R \to \cL(\cH)_\R$ can uniquely be expanded as a real linear combination \beqn \Phi = \sum_{t = 0}^{r} \mu_t \Phi_t \eeqn or \beqn \Phi = \sum_{t = 0}^{r} \lambda_t \Pi_t. \eeqn In particular, for $\Phi_t$ and $\Pi_t$, the coefficients for expanding in the other basis are central to the quantum linear programming bounds.

\begin{lemma}\label{lemma:wtjcoeff} For each $0 \leq t \leq r$, there exist scalars $W_t(j) \in \R$ for $0 \leq j \leq r$ such that \beq\label{eq:phi_expansion} \Phi_t = \sum_{j = 0}^{r} W_t(j)\Pi_j \eeq and \beqn \Pi_t = \sum_{j = 0}^{r} W_t(j)\Phi_j. \eeqn
\end{lemma}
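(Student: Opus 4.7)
The plan is to show that the real change-of-basis matrix $W$ between the two orthogonal bases $\{\Pi_j\}$ and $\{\Phi_j\}$ of $G$-invariant superoperators is involutory, i.e.\ $W^2 = I$. Existence of some expansion $\Phi_t = \sum_{j} W_t(j)\Pi_j$ with real coefficients is immediate from Lemma \ref{lemma:ginvbases}, since the $\Pi_j$ form an $\R$-basis of the space of $G$-invariant superoperators and each $\Phi_t$ lies in that space. The crucial observation is that, because each $\Pi_j$ is the projection onto $\cV_j$, evaluating $\Phi_t = \sum_j W_t(j)\Pi_j$ on $X \in \cV_k$ gives $\Phi_t(X) = W_t(k) X$, so $W_t(k)$ is precisely the scalar by which $\Phi_t$ acts on $\cV_k$.

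The heart of the argument is a symmetry identity that I will call detailed balance:
\[
W_t(j)\,d_j \;=\; W_j(t)\,d_t, \qquad d_i \defeq \dim(\cV_i).
\]
To prove it, I choose orthonormal bases $\cB_t \subseteq \cV_t$ and $\cB_j \subseteq \cV_j$ consisting of self-adjoint operators; this is possible because $\cV_t^\ast = \cV_t$ and similarly for $\cV_j$, so each space is the complexification of its real subspace of self-adjoint operators. Computing
\[
W_t(j)\,d_j \;=\; \sum_{Y \in \cB_j} \tr\bigl(Y\,\Phi_t(Y)\bigr) \;=\; \sum_{Y \in \cB_j}\sum_{E \in \cB_t} \tr(Y E Y E),
\]
I use the cyclic property of trace to see that $\tr(YEYE) = \tr(EYEY)$. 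Switching the order of summation and folding back gives $\sum_{E \in \cB_t} \tr\bigl(E\,\Phi_j(E)\bigr) = W_j(t)\,d_t$, yielding the identity.

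With detailed balance in hand, the fact that the two expansions share the same coefficients follows by combining it with the orthogonality relation $\langle \Phi_t,\Phi_s\rangle_{HS} = \delta_{ts}\,d_t$ from the proof of Lemma \ref{lemma:ginvbases}. Writing $\Pi_t = \sum_j A_t(j)\Phi_j$ and substituting, one checks that $A$ is the matrix inverse of $W$, so it suffices to show $W^2 = I$. Expanding $\langle \Phi_t,\Phi_s\rangle_{HS}$ in the $\Pi$-basis gives $\sum_j W_t(j) W_s(j)\,d_j = \delta_{ts}\,d_t$, and substituting the detailed balance rearrangement $W_s(j)\,d_j = W_j(s)\,d_s$ transforms this into
\[
d_s \sum_{j} W_t(j)\,W_j(s) \;=\; \delta_{ts}\,d_t,
\]
which (since the Kronecker delta forces $d_t = d_s$ on its support) gives $\sum_j W_t(j)W_j(s) = \delta_{ts}$. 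Thus $W^2 = I$, so $A_t(j) = W_t(j)$, completing the proof.

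The main obstacle is Step~3, the detailed balance identity; once one has the idea of writing out $W_t(j)\,d_j$ as a double sum of traces and exploiting the $Y\leftrightarrow E$ symmetry of $\tr(YEYE)$, the rest is essentially bookkeeping against the orthogonality relations already established.
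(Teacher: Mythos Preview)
Your proof is correct and takes essentially the same approach as the paper. The paper defers the proof to the discussion in Section~\ref{sec:wtj_functions} (Lemma~\ref{lemma:wtj_properties}), where it normalizes to orthonormal bases $\{\Phi_t/\sqrt{d_t}\}$ and $\{\Pi_j/\sqrt{d_j}\}$, defines $u_{tj} = W_t(j)\sqrt{d_j/d_t}$, and shows this matrix is both orthogonal and symmetric; your detailed balance identity $W_t(j)\,d_j = W_j(t)\,d_t$ is exactly $u_{tj} = u_{jt}$, and your $W^2 = I$ is the paper's property~(a). The only cosmetic difference is that you choose self-adjoint orthonormal bases so that $\tr(YEYE)$ is manifestly symmetric under $Y\leftrightarrow E$, whereas the paper works with arbitrary orthonormal bases and implicitly reindexes $X \mapsto X^\ast$ using $\cV_j^\ast = \cV_j$.
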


In light of this, we state the following definition.

\begin{definition} Given a multiplicity-free, 2-homogeneous quantum metric $G$-space $(\cH, \cE_t)$, the scalars $W_t(j)$ for $0 \leq t,j \leq r$ in Lemma \ref{lemma:wtjcoeff} are the $\boldsymbol{W_t(j)}$ \textbf{coefficients} of $(\cH, \cE_t)$.
\end{definition}

Equation (\ref{eq:phi_expansion}) can also be interpreted as the spectral decomposition of the self-adjoint operators $\Phi_t$, hence we see that the $\Phi_t$'s are simultaneously diagonalizable and commute. This fact will be used to deduce certain properties of the functions $W_t(j)$. Lemma \ref{lemma:ginvbases} implies that expansions in the lemma above exist, but does not prove the second part in which both expansions are given by the same coefficients $W_t(j)$. We will prove a property that implies Lemma \ref{lemma:wtjcoeff} and discuss more about the family of functions $W_t(j)$ in Section \ref{sec:wtj_functions}.

We conclude this section with brief discussions relating the $W_t(j)$ coefficients to other parts of representation theory and algebraic coding theory. First, the $W_t(j)$ coefficients are a special case of rescaled $6j$ symbols from representation theory. This fact was first realized by Bumgardner \cite{Bumg} in the case of the $\su(2)$ quantum metrics. Recall that we assumed that we have the decomposition $\cL(\cH) = \bigoplus_{t = 0}^{r} \cV_t$ into distinct irreducible representations. Since $\cL(\cH) \cong \cH \otimes \cH^\ast$, we also have the decomposition \beq\label{eq:HH_decomp} \cH \otimes \cH^\ast = \bigoplus_{t = 0}^{r} \cV_t \eeq where we assume $\cV_t \subseteq \cH \otimes \cH^\ast$ so this decomposition is internal. Next, we have the isomorphism \beqn \cL(\cH) \otimes \cL(\cH)^\ast \cong \cH \otimes \cH^\ast \otimes \cH \otimes \cH^\ast \eeqn and we would like to partially decompose this tensor product to find $G$-invariant tensors. Using associativity of the tensor product, this tensor product may be viewed as $(\cH \otimes \cH^\ast) \otimes (\cH \otimes \cH^\ast)$. Decomposing this tensor product by using equation (\ref{eq:HH_decomp}), we have \beqn (\cH \otimes \cH^\ast) \otimes (\cH \otimes \cH^\ast) = \left(\bigoplus_{t = 0}^{r} \cV_t\right) \otimes \left(\bigoplus_{j = 0}^{r} \cV_j\right) = \bigoplus_{t = 0}^{r} \bigoplus_{j = 0}^{r} \cV_t \otimes \cV_j \eeqn By Schur's lemma, there exists a unique nonzero $G$-invariant tensor (up to scalar) in each $\cV_t \otimes \cV_j$ if and only $t = j$. Up to some scalars, this $G$-invariant tensor corresponds to $\Pi_t$, and these form a basis of the space $G$-invariant tensors of $\cH \otimes \cH^\ast \otimes \cH \otimes \cH^\ast$. On the other hand, we may view the tensor product as $\cH \otimes (\cH^\ast \otimes \cH) \otimes \cH^\ast$ where the first $\cH$ and last $\cH^\ast$ are tensored together. These pairs of tensored components may be again decomposed using equation (\ref{eq:HH_decomp}) and, by Schur's lemma, another basis of the space of $G$-invariant tensors of $\cH \otimes \cH^\ast \otimes \cH \otimes \cH^\ast$ may be found. Up to scalar, this basis of $G$-invariant tensors correspond to the $\Phi_t$'s. We may expand each element of the second basis in terms of the first, and the coefficients in these linear expansions are defined to be the $6j$ symbols. In the physics literature, the $j$ in $6j$ refers to a variable corresponding to isomorphism classes of irreducible representations, and the $6$ comes from the fact that each $6j$ symbol involves six representations of $G$ (namely, two copies of $\cH$, two copies of $\cH^\ast$, $\cV_t$, and $\cV_j$ in this case).

Secondly, the theory presented in this section can be interpreted as a quantum analog of the relationship between symmetric association schemes and Bose-Mesner algebras \cite[Theorem 2.1]{Delsarte}. An association scheme on a finite set $X$ is a family of undirected graphs $\{V_t\}_{t = 0}^{r}$ on $X$ satisfying certain properties. These properties can be reframed in terms of the adjacency matrices $D_t$ of the graphs $V_t$ in that the adjacency matrices satisfy the following properties.

\begin{enumerate}
    \item $\sum_{t = 0}^{r} D_t$ is the $|X| \times |X|$ matrix of all ones.
    \item $D_0$ is the $|X| \times |X|$ identity matrix.
    \item The complex span of the $D_t$'s is a commutative complex ${}^\ast$-algebra of dimension $r+1$.
\end{enumerate}

The algebra mentioned in property (3) is called the Bose-Mesner algebra of the association scheme. By the Artin-Wedderburn theorem, there exist $r + 1$ mutually orthogonal projections $P_j$ that span the algebra and hence there exist scalars $K_t(j) \in \C$ for $0 \leq t,j \leq r$ such that \beqn D_t = \sum_{j = 0}^{r} K_t(j) P_j. \eeqn These expansions are also the spectral decompositions of $D_t$ and each $D_t$ is a real symmetric matrix, hence each $K_t(j)$ is also real. On the other hand, the $D_t$'s also form a basis, hence there exist scalars $L_t(j) \in \R$ such that $P_t = \sum_{j = 0}^{r} L_t(j) D_j$. The two matrices formed by the $K_t(j)$ and $L_t(j)$ are called the eigenmatrices of the Bose-Mesner algebra (or the association scheme) and these two matrices are not equal in general. If $X$ is a finite metric space and the distance relations $V_t = \{(x, y) \in X \mid d(x, y) = t \}$ form an association scheme (or can be refined to such a family of relations), then there is a method of computing upper bounds on the size of codes of this metric space using linear programming \cite{Delsarte}.

There are clear analogies between finite metrics that form association schemes and multiplicity-free, 2-homogeneous finite dimensional quantum metric spaces. Each subspace $\cV_t \subseteq \cL(\cH)$ is analogous to the distance relations of a classical metric and, in fact, these subspaces are symmetric quantum relations \cite{Weaver}. The $\Phi_t$'s play the same role as the $D_t$'s and the $\Phi_t$'s also span a commutative complex algebra of dimension $r+1$. The $\Phi_t$'s may also be reasonably called quantum adjacency matrices (see \cite{CW}). Like the $P_t$'s, the $\Pi_t$'s form a basis of mutually orthogonal projections of this algebra, hence the $\Phi_t$'s may be expanded in terms of the $\Pi_t$'s and vice versa where in both cases the coefficients are given by the $W_t(j)$ coefficients. As we will see in the next section, for these types of quantum metric spaces, linear programming may be used to compute upper bounds on the dimension of quantum codes. Moreover, one may see that requiring the $\Phi_t$'s to form a commutative algebra (like the Bose-Mesner algebra of an association scheme) with the $\Pi_t$'s as the primitive idempotents will also guarantee the linear programming method for quantum codes. With these parallels, this method may arguably be called a quantum analog of Delsarte's linear programming bounds.

\section{The Quantum Linear Programming Bound}

\subsection{Quantum Distance Distributions}

The $G$-invariant maps $\Phi_t$ and $\Pi_t$ give important invariants of quantum codes that are analogous to the distance distribution of classical codes. Given a code $\cC \subseteq \cH$ with orthogonal projection $P$, we define the \textbf{quantum distance distributions} of $\cC$ as the sequences of scalars \beqn A_t \defeq \frac{\dim(\cH)}{\dim(\cC)}\tr(P\Pi_t(P)) = \frac{\dim(\cH)}{\dim(\cC)} \sum_{E \in \cB_t} |\tr(E^\ast P)|^2 \eeqn \beqn B_t \defeq \frac{\dim(\cH)}{\dim(\cC)}\tr(P\Phi_t(P)) = \frac{\dim(\cH)}{\dim(\cC)} \sum_{E \in \cB_t} \tr(PEPE^\ast), \eeqn both for $0 \leq t \leq r$. The quantum distance distributions are a generalization of the quantum weight enumerators in \cite{SL,Rains:enum}, and equivalent to a definition given in \cite{Bumg}. An immediate property of the distance distribution is that for any code $\cC$, $A_0 = \dim(\cC)$ and $B_0 = 1$. From the last expression of $A_t$, we see that each $A_t$ is a nonnegative real number. The same holds for $B_t$ since $EPE^\ast$ and $P$ are positive, and the trace of a product of positive operators is nonnegative and real. There are two main relationships between the distance distributions, the first of which is the following lemma that states that each distance distribution can be computed from the other.

\begin{lemma}\label{lemma:dist_transform} The quantum distance distributions of any code satisfies \beqn B_t = \sum_{j = 0}^{n} W_t(j) A_j \eeqn and \beqn A_t = \sum_{j = 0}^{n} W_t(j) B_j \eeqn for $0 \leq t \leq r$.
\end{lemma}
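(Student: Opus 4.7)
The plan is to derive both identities as immediate consequences of Lemma \ref{lemma:wtjcoeff}, using only linearity of the trace and the definitions of $A_t$ and $B_t$. No additional structure about the quantum metric space, the group $G$, or the code $\cC$ beyond these definitions is required.

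First, I would substitute the expansion $\Phi_t = \sum_{j=0}^{r} W_t(j) \Pi_j$ into the definition of $B_t$. Since $P$ is self-adjoint and hence lies in $\cL(\cH)_\R$, both $\Phi_t$ and each $\Pi_j$ act on it, and linearity of the trace gives
\beqn
B_t = \frac{\dim(\cH)}{\dim(\cC)}\tr(P\Phi_t(P)) = \sum_{j=0}^{r} W_t(j) \cdot \frac{\dim(\cH)}{\dim(\cC)}\tr(P\Pi_j(P)) = \sum_{j=0}^{r} W_t(j) A_j,
\eeqn
which is the first identity. The second identity follows by the symmetric argument: substituting $\Pi_t = \sum_{j=0}^{r} W_t(j) \Phi_j$ into the definition of $A_t$ and again applying linearity yields $A_t = \sum_{j=0}^{r} W_t(j) B_j$.

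There is essentially no obstacle in this argument; the entire content of the lemma is packaged into Lemma \ref{lemma:wtjcoeff}, specifically the fact that the \emph{same} coefficients $W_t(j)$ govern the change of basis in both directions between $\{\Phi_t\}_{t=0}^{r}$ and $\{\Pi_t\}_{t=0}^{r}$. The real substantive work, which the excerpt defers, is proving Lemma \ref{lemma:wtjcoeff} itself — this will presumably come from identifying the matrix of $W_t(j)$ coefficients as being symmetric with respect to a suitable inner product (for instance by computing $\tr(\Phi_t^{\ast}\Pi_j)$ in two ways using each basis expansion), so that the transition matrix and its inverse coincide.
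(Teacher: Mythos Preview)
Your proof is correct and essentially identical to the paper's: substitute the expansions from Lemma \ref{lemma:wtjcoeff} into the definitions of $B_t$ and $A_t$ and use linearity of the trace. Your remark that the real content lies in the symmetry of the transition matrix is also on point, as this is exactly what the paper establishes later in Lemma \ref{lemma:wtj_properties}.
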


\begin{proof} By the first equation in Lemma \ref{lemma:wtjcoeff}, we have \begin{align*} B_t &= \frac{\dim(\cH)}{\dim(\cC)}\tr(P\Phi_t(P)) \\
&= \frac{\dim(\cH)}{\dim(\cC)} \tr\left(P\sum_{j = 0}^{r} W_t(j) \Pi_j(P)\right) \\
&= \sum_{j = 0}^{r} W_t(j) \frac{\dim(\cH)}{\dim(\cC)} \tr\left(P \Pi_j(P)\right) \\
&= \sum_{j = 0}^{r} W_t(j) A_j.
\end{align*} On the other hand, the second equation follows from the second equation of Lemma \ref{lemma:wtjcoeff}.
\end{proof}

We note that these equations hold for all quantum codes and not just ones that have error detection properties. The second relationship ties into the error detection properties of quantum codes, as stated in the following lemma.

\begin{lemma}\label{lemma:dist_ineq} If $\cC \subseteq \cH$ is a quantum code then $A_t \leq KB_t$ for each $0 \leq t \leq r$. Equality holds for $t$ if and only if $\cC$ detects all errors in $\cV_t$.
\end{lemma}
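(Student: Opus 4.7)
The plan is to prove the inequality termwise over $E \in \cB_t$ using a Cauchy--Schwarz argument on $PEP$, and then read off the equality case from the Cauchy--Schwarz equality condition together with linearity of the error detection property.

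First I would rewrite both summands in the defining expressions for $A_t$ and $B_t$ in a common form. Using cyclicity of the trace and $P^2 = P$, for each $E \in \cB_t$ one has $\tr(E^\ast P) = \tr(PEP)$, so $|\tr(E^\ast P)|^2 = |\tr(PEP)|^2$. Similarly, $\tr(PEPE^\ast) = \tr(PE^\ast PEP) = \|PEP\|_{HS}^2$, where the Hilbert--Schmidt norm is taken with respect to the inner product on $\cL(\cH)$. Thus the inequality to establish reduces to the termwise bound
\beqn
|\tr(PEP)|^2 \le \dim(\cC)\,\|PEP\|_{HS}^2
\eeqn
for every $E \in \cB_t$.

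Next I would prove this termwise bound by viewing $PEP$ as an operator supported on $\cC$, a subspace of dimension $\dim(\cC)$. The Cauchy--Schwarz inequality on $\cL(\cC)$ applied to the Hilbert--Schmidt inner product with $\innerprod{P}{PEP}_{HS} = \tr(PEP)$ and $\|P\|_{HS}^2 = \dim(\cC)$ immediately yields the bound, with equality if and only if $PEP$ is a scalar multiple of $P$, i.e., if and only if $\cC$ detects $E$. Summing over $E \in \cB_t$ and multiplying through by $\dim(\cH)/\dim(\cC)$ then gives $A_t \le K B_t$ with $K = \dim(\cC)$.

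Finally I would handle the equality clause. If $\cC$ detects all errors in $\cV_t$, then in particular $PEP = \varepsilon(E)P$ for every $E \in \cB_t$, so equality holds termwise and hence in the sum, giving $A_t = KB_t$. Conversely, if $A_t = KB_t$, then since every termwise inequality is in the same direction, equality must hold for each $E \in \cB_t$; by the Cauchy--Schwarz equality case, $PEP$ is a scalar multiple of $P$ for every $E \in \cB_t$. Since $\cB_t$ spans $\cV_t$ and the detection condition $PEP \in \C P$ is preserved under linear combinations, $\cC$ detects every element of $\cV_t$. The main subtlety is just being careful that equality in a sum of nonnegative inequalities forces termwise equality, and then invoking linearity of detection so we only need to check the basis $\cB_t$; both points are routine once the termwise Cauchy--Schwarz is set up.
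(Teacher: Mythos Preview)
Your proof is correct and follows essentially the same route as the paper: both rewrite each summand as $|\langle P, PEP\rangle_{HS}|^2$ and $\|PEP\|_{HS}^2$ and then apply Cauchy--Schwarz termwise with $\|P\|_{HS}^2 = \dim(\cC)$, reading off the equality case from $PEP \in \C P$. One tiny slip: the identity $\tr(E^\ast P) = \tr(PEP)$ is not literally true (you get $\tr(PE^\ast P)$ instead), but since you only use $|\tr(E^\ast P)|^2 = |\tr(PEP)|^2$, which does hold, this does not affect the argument.
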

\begin{proof}
We have \beq\label{eq:Ai_trace} A_t = \frac{\dim(\cH)}{\dim(\cC)} \sum_{E \in \cB_t} |\tr(E^\ast P)|^2 = \frac{\dim(\cH)}{\dim(\cC)} \sum_{E \in \cB_t} |\tr(P(PEP)^\ast)|^2. \eeq Each term in the last sum is the modulus squared of the Hilbert-Schmidt inner product of $P$ and $PEP$, hence we may apply the Cauchy-Schwarz inequality, yielding \begin{align*}
A_t &\leq \frac{\dim(\cH)}{\dim(\cC)} \sum_{E \in \cB_t} \tr(P^\ast P)\tr((PEP)^\ast (PEP)) \\
&= \tr(P) \frac{\dim(\cH)}{\dim(\cC)} \sum_{E \in \cB_t} \tr(PEPE^\ast) \\
&= K B_t.
\end{align*} Equality holds for the Cauchy-Schwarz inequality if and only if $PEP$ is a scalar multiple of $P$ for each $E \in \cB_t$, which is exactly the detection condition for all errors of distance $t$.
\end{proof}

\subsection{Feasible Distance Distributions}

Using the properties of the quantum distance distributions, we may now formulate the linear programming bounds. Lemma \ref{lemma:dist_transform}, Lemma \ref{lemma:dist_ineq}, and the other properties of distance distributions give a linear system in which the distance distributions of a quantum code of minimum distance $d$ must be a solution of. If no solution to this linear system exists, then no quantum code of minimum distance $d$ exists. We state this result as the following theorem.

\begin{theorem}\label{thm:qlp_system} Let $\cC \subseteq \cH$ be a code of dimension $K$ and distance $1 \leq d \leq r$. If $A_t$ is the distance distribution of $\cC$ then $A_t$ is a solution to the linear system \beqn \begin{gathered} A_t \geq 0 \text{ for } 0 \leq t \leq r \\ A_0 = K \\ K \sum_{j = 0}^{r} W_t(j)A_j = A_t \text{ for } 0 \leq t \leq d - 1 \\ K \sum_{j = 0}^{r} W_t(j)A_j \geq A_t \text{ for } 0 \leq t \leq r
\end{gathered} \eeqn If there is no solution to the above linear system then there exists no dimension $K$ code of distance $d$.
\end{theorem}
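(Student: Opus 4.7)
The plan is to verify each of the four constraints of the linear system directly from the definition of the quantum distance distribution and from the two key lemmas already established (Lemma \ref{lemma:dist_transform} and Lemma \ref{lemma:dist_ineq}). Since the final statement of the theorem (``If there is no solution\ldots'') is just the contrapositive of the first part, the entire content reduces to showing that the distance distribution $A_t$ of any code $\cC$ of dimension $K$ and distance $d$ satisfies all four listed constraints.

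First I would dispatch the easy constraints. Nonnegativity $A_t \geq 0$ is immediate from the formula
\beqn
A_t = \frac{\dim(\cH)}{K}\sum_{E \in \cB_t} |\tr(E^\ast P)|^2,
\eeqn
since this is a sum of squared moduli. For $A_0 = K$, the key observation is that $\cV_0 = \C I_{\cH}$, so we may take $\cB_0 = \{I_{\cH}/\sqrt{\dim(\cH)}\}$ as an orthonormal basis; plugging in and using $\tr(P) = K$ gives $A_0 = \frac{\dim(\cH)}{K}\cdot\frac{K^2}{\dim(\cH)} = K$ by a short calculation.

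Next I would handle the two constraints involving the $W_t(j)$ coefficients. The global inequality $K\sum_{j=0}^r W_t(j)A_j \geq A_t$ for all $0 \leq t \leq r$ follows by combining Lemma \ref{lemma:dist_transform}, which gives $B_t = \sum_{j} W_t(j) A_j$, with Lemma \ref{lemma:dist_ineq}, which gives $A_t \leq KB_t$. For $0 \leq t \leq d-1$, the hypothesis that $\cC$ has minimum distance $d$ means $\cC$ detects all errors in $\cV_t$ (since $\cE_{d-1} = \bigoplus_{j=0}^{d-1} \cV_j$ and detectability is preserved under linear combinations, hence equivalent to detecting a basis of each $\cV_j$ for $j < d$). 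The equality clause of Lemma \ref{lemma:dist_ineq} then upgrades the inequality to $A_t = KB_t$, yielding the equality $K\sum_{j} W_t(j) A_j = A_t$ in that range.

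There is no substantial obstacle here; the work was done in setting up Lemmas \ref{lemma:dist_transform} and \ref{lemma:dist_ineq}, and the theorem is essentially a bookkeeping statement that packages those lemmas and the nonnegativity/normalization conditions into a linear program. The one subtle point to be careful about is the step that translates ``minimum distance $d$'' into ``detects every $E \in \cV_t$ for $t < d$'': this requires observing that if $\cC$ detects every $E \in \cE_{d-1}$, then since $\cV_t \subseteq \cE_{d-1}$ for $t \leq d-1$, the equality condition in Lemma \ref{lemma:dist_ineq} applies for each such $t$, and conversely the failure of equality at some $t < d$ would produce an undetectable error of distance at most $d-1$, contradicting the minimum distance assumption. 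Once this is noted, the contrapositive---that nonexistence of a feasible solution rules out a $K$-dimensional code of distance $d$---is automatic.
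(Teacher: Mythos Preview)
Your proposal is correct and matches the paper's approach exactly. The paper does not even write out a formal proof of this theorem; it simply states in the preceding paragraph that the result follows from Lemma~\ref{lemma:dist_transform}, Lemma~\ref{lemma:dist_ineq}, and the already-noted properties $A_t \geq 0$ and $A_0 = \dim(\cC)$, which is precisely the argument you have spelled out in detail.
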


An immediate application of this theorem is the computation of numerical bounds by first designating a distance $d$ and then using computer software to find the smallest integer $K$ such that the linear system is feasible for $K$ but not $K + 1$. Since any subspace of a quantum code of distance $d$ is also a quantum code of distance $d$, such a $K$ is an upper bound on the dimension of quantum codes of distance $d$. Other methods are to study the linear systems analytically to derive bounds for a variety of cases. Before we discuss these topics, we first discuss properties of the linear systems themselves.

For a fixed distance $d$, rather than considering only nonnegative integer values of $K$, we may more generally consider linear systems where $K$ is nonnegative and real. In this case, we may study the supremum of all $K \geq 0$ such that the linear system is feasible and, a priori, we call this supremum the linear programming bound. We will state a few properties of the set of feasible solutions to the linear system to justify this notion of the linear programming bound. From here on, we assume that $d$ is fixed. We say that $(A_0,\ldots,A_r) \in \R^{r+1}$ is a \textbf{feasible distance distribution}, or just \textbf{feasible}, with value $K$ if the $A_t$'s are a solution to the linear system in Theorem \ref{thm:qlp_system} with the parameter $K$. On the other hand, we also say $K \geq 0$ is a feasible value if there exists a feasible distance distribution with value $K$. The first property we deduce is that the set of feasible distance distributions is compact, and so the set of feasible values is bounded.

\begin{lemma} The set of feasible distance distributions is compact. Moreover, if $K \geq 0$ is a feasible value then $K \leq \dim(\cH)$.\end{lemma}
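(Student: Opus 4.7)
The plan is to extract a single linear equation from the system at $t=0$ that pins down the sum $\sum_j A_j$ explicitly, and then use nonnegativity to get uniform boundedness for free. Closedness will be immediate from the fact that all constraints in the linear system are continuous (indeed polynomial of degree at most two) equalities and inequalities in the entries of $(A_0,\ldots,A_r)$, once one remembers that $K=A_0$ is determined by the distance distribution itself.

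First I would compute the $t=0$ case directly. Since $\cV_0=\C I_{\cH}$, the operator $\{I_{\cH}/\sqrt{\dim(\cH)}\}$ is an orthonormal basis, so
\[
\Phi_0(X)=\frac{1}{\dim(\cH)}\,X=\frac{1}{\dim(\cH)}\sum_{j=0}^{r}\Pi_j(X).
\]
Comparing with the expansion $\Phi_0=\sum_j W_0(j)\Pi_j$ from Lemma~\ref{lemma:wtjcoeff} gives $W_0(j)=1/\dim(\cH)$ for every $0\le j\le r$. Since the case $t=0$ is always included in the equality constraints of Theorem~\ref{thm:qlp_system} (we have $d\ge 1$, hence $0\le 0\le d-1$), any feasible distance distribution must satisfy
\[
K\sum_{j=0}^{r}W_0(j)A_j=A_0,\qquad A_0=K,
\]
which simplifies to $\sum_{j=0}^{r}A_j=\dim(\cH)$.

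Combining this identity with the nonnegativity constraints $A_j\ge 0$ immediately yields $0\le A_j\le\dim(\cH)$ for every $j$, so the set of feasible distance distributions is a bounded subset of $\R^{r+1}$. Since $K$ is identified with $A_0$, the second assertion $K\le\dim(\cH)$ follows at once. For closedness, I would observe that the set of feasible distributions is the intersection of the closed half-spaces $\{A_j\ge 0\}$ together with the closed sets cut out by the equalities and inequalities $A_0\sum_j W_t(j)A_j=A_t$ and $A_0\sum_j W_t(j)A_j\ge A_t$; each of these is the preimage of a closed set under a continuous (quadratic) map, hence closed. A closed and bounded subset of $\R^{r+1}$ is compact, finishing the proof.

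There is no real obstacle here; the only point requiring care is noting that $K$ is not an extra free variable but is fixed by $K=A_0$, so the feasible set genuinely lives in $\R^{r+1}$ and the boundedness of $K$ is a consequence of the boundedness of the $A_j$'s rather than an independent assumption.
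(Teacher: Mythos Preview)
Your proposal is correct and follows essentially the same approach as the paper: both extract the $t=0$ equality constraint, use $W_0(j)=1/\dim(\cH)$ to conclude $\sum_j A_j=\dim(\cH)$, and then combine nonnegativity with closedness of the constraint set to obtain compactness and the bound $K\le\dim(\cH)$. Your version is slightly more explicit in justifying $W_0(j)=1/\dim(\cH)$ and in spelling out why the feasible set is closed, but the argument is the same.
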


\begin{proof} The solution set of the (quadratic) system of inequalities and equalities in Theorem \ref{thm:qlp_system} is closed in $\R^{r+1}$. Any feasible distance distribution must satisfy \beqn K \sum_{t = 0}^{r} \frac{1}{\dim(\cH)} A_t = KB_0 = A_0 \eeqn or, equivalently, \beq\label{eq:sum_of_At}\sum_{t = 0}^{r} A_t = \dim(\cH).\eeq Since the $A_t$'s are nonnegative, this implies that the set of feasible distance distributions is also bounded and hence compact. The second part of the lemma also follows from equation (\ref{eq:sum_of_At}).
\end{proof}

The trivial implication of this lemma is that the linear programming bound is finite and at least as sharp as the most trivial upper bound (i.e. every quantum code has dimension at most $\dim(\cH)$). A more useful corollary is that the linear programming bound is the maximum of all feasible values and is attained by some feasible distance distribution. When computing the maximum feasible value, it would be useful to know whether the feasibility for $K$ implies the feasibility for all values less than $K$. Rains proved that this is true for quantum Hamming space \cite{Rains:mono}. Namely, for any distance distribution with value $K \geq 1$ and any value $1 \leq K' \leq K$, Rains gave a concrete formula for a distance distribution of value $K'$. This formula turns out to be applicable to all multiplicity-free, 2-homogeneous quantum metric spaces.

\begin{lemma} If there exists a feasible distance distribution with value $K \geq 1$, then for all $1 \leq K' \leq K$ there exists a feasible distance distribution with value $K'$.
\end{lemma}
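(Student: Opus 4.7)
The plan is to construct explicitly, from the given feasible distribution $(A_t)$ of value $K$, a feasible distribution $(A_t')$ of value $K'$ for each $K' \in [1, K]$. The first step is to observe that the endpoint $K' = 1$ is always feasible: for any unit vector $\ket\psi \in \cH$, the one-dimensional code $\C\ket\psi$ has $PEP = \braket{\psi|E|\psi}P$ for every $E \in \cL(\cH)$, so it detects every error, and its distance distribution is feasible for any $d$. The task therefore reduces to interpolating between $(A_t)$ and a reference distribution $(\hat A_t)$ of value $1$.

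The central difficulty is that the equality constraint $K \sum_j W_t(j) A_j = A_t$ for $t < d$ is bilinear in $(K, A)$, so naive convex combinations of feasible solutions do not preserve feasibility at an intermediate value of $K$. Following Rains' strategy for the binary quantum Hamming case, I would work at the operator level. When $(A_t)$ is realized by a rank-$K$ projection $P$, I would form the $G$-invariant mixture $X_\lambda = \lambda P + (1-\lambda)\mu I$ with $\lambda \in [0, 1]$ and $\mu > 0$ chosen so that $X_\lambda$ is positive and $\tr(X_\lambda) = K'$. Because $\Phi_t(I) = (\dim(\cV_t)/\dim(\cH)) I$ and operators in $\cV_t$ for $t \geq 1$ are traceless, the cross terms between $P$ and $I$ in the quadratic quantities defining the $A_t$ and $B_t$ of $X_\lambda$ admit a closed form. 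A direct computation should then show that the generalized distance distribution of $X_\lambda$ is LP-feasible at value $K'$, with the detection equalities preserved because the $I$-contribution to $P E P$ for $E$ of positive distance is scalar.

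The hardest step will be adapting this operator-level argument to a purely LP-level statement, since the hypothesis of the lemma provides only a feasible distribution, not a realizing positive operator. This will require translating the LP constraints into algebraic identities involving the $W_t(j)$ coefficients: in particular, the involution $W^2 = I$ derived immediately after Lemma \ref{lemma:wtjcoeff}, together with $W_j(0) = \dim(\cV_j)/\dim(\cH)$ and the identity $\sum_j W_t(j)\dim(\cV_j) = \dim(\cH)\delta_{t,0}$ that follows from it. The key technical content will be checking that, for every $K' \in [1, K]$, the quadratic-in-$K'$ terms arising from the action of $K'$ on the LP identities cancel through these $W$-identities, yielding a distribution $(A_t')$ that satisfies both the equalities for $t < d$ and the inequalities for $t \geq d$.
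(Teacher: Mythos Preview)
The paper does not actually prove this lemma: it simply states that Rains' explicit formula from the quantum Hamming case carries over verbatim to any multiplicity-free, 2-homogeneous quantum metric space. So the benchmark to compare against is that formula, which is purely LP-level: given a feasible $(A_t)$ with value $K$ and its transform $B_t = \sum_j W_t(j)A_j$, set
\[
A_t' \;=\; \frac{KK'-1}{K^2-1}\,A_t \;+\; \frac{K-K'}{K^2-1}\,B_t.
\]
Both coefficients are nonnegative for $1 \le K' \le K$, so $A_t' \ge 0$. The involution $W^2 = I$ (Lemma~\ref{lemma:wtj_properties}(a)) gives $B_t' = \frac{KK'-1}{K^2-1}B_t + \frac{K-K'}{K^2-1}A_t$, and a two-line computation yields $A_0' = K'$ and
\[
K'B_t' - A_t' \;=\; \frac{K'^2-1}{K^2-1}\,\bigl(KB_t - A_t\bigr),
\]
which is zero when $KB_t = A_t$ and nonnegative when $KB_t \ge A_t$. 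No operator $P$ is ever invoked.

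Your outline is headed in the right direction but takes a detour that creates an artificial gap. The mixture $X_\lambda = \lambda P + (1-\lambda)\mu I$ is a perfectly good heuristic for \emph{discovering} the formula, but you then correctly worry that the hypothesis gives only a feasible LP point, not a projection $P$. That worry dissolves once you notice that the ansatz $A_t' = \alpha A_t + \beta B_t$ already lives at the LP level: the only structural fact you need is $W^2 = I$, which you identified. The ``quadratic-in-$K'$ terms'' you anticipate do not arise, because the two scalar conditions $\alpha K + \beta = K'$ and $\alpha + \beta K = 1$ are linear in $(\alpha,\beta)$ and determine them uniquely; the verification of the inequality constraints is then a single linear identity, not a cancellation of quadratics. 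So your plan is not wrong, but it is more complicated than necessary, and the ``hardest step'' you flag is in fact the easy step once you commit to the linear ansatz in $(A,B)$ rather than to an operator interpolation.
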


We will not directly use these lemmas, but they are assuring to keep in mind when working with the linear programming bounds, both analytically and numerically. Moreover, they provide some insight into the geometric properties of the linear programming bounds. Lastly, we may now aptly call the supremum of all feasible values the \textbf{quantum linear programming bound} and formally state the following theorem.

\begin{theorem}[Quantum Linear Programming Bound]\label{thm:qlp_bound} For each $K \geq 1$ and integer $d \geq 2$, let $\Omega(K, d) \subseteq \R^{r+1}$ be the set of all solutions to the system of inequalities \beqn \begin{gathered} A_t \geq 0 \text{ for } 0 \leq t \leq r \\ A_0 = K \\ K \sum_{j = 0}^{r} W_t(j)A_j = A_t \text{ for } 0 \leq t \leq d - 1 \\ K \sum_{j = 0}^{r} W_t(j)A_j \geq A_t \text{ for } 0 \leq t \leq r
\end{gathered} \eeqn For any quantum code $\cC \subseteq \cH$ of minimum distance $d$, \beq\label{eq:qlp_bound} \dim(\cC) \leq \max\{K \mid \Omega(K, d) \neq \varnothing \}. \eeq
\end{theorem}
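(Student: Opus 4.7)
The plan is to obtain Theorem \ref{thm:qlp_bound} as a direct corollary of Theorem \ref{thm:qlp_system}, which does essentially all of the real work. Fix a quantum code $\cC \subseteq \cH$ of minimum distance $d \geq 2$, set $K_0 = \dim(\cC)$, and let $(A_0, A_1, \ldots, A_r)$ denote the $A$-quantum distance distribution of $\cC$. The constraints listed in the definition of $\Omega(K,d)$ are exactly the constraints appearing in Theorem \ref{thm:qlp_system}; hence that theorem, applied with $K = K_0$, asserts precisely that $(A_0, \ldots, A_r)$ is an element of $\Omega(K_0, d)$. In particular $\Omega(K_0, d)$ is nonempty, so $K_0$ belongs to the set $\{K \mid \Omega(K,d) \neq \varnothing\}$, which gives $\dim(\cC) = K_0 \leq \max\{K \mid \Omega(K,d) \neq \varnothing\}$, as required.

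The one technical point that deserves comment is whether the maximum on the right-hand side is actually attained. Consider the joint set of tuples $(K, A_0, \ldots, A_r) \in \R^{r+2}$ with $K \geq 1$ satisfying all the defining inequalities and equalities of $\Omega(K,d)$. Any such tuple must also satisfy $A_0 = K$ and, as observed just before Theorem \ref{thm:qlp_bound}, $\sum_{t=0}^r A_t = \dim(\cH)$, so together with nonnegativity the joint set is closed and bounded in $\R^{r+2}$, hence compact. Its image under the projection $(K, A_0, \ldots, A_r) \mapsto K$ is a compact subset of $[1, \dim(\cH)]$, so the supremum is achieved; writing $\max$ rather than $\sup$ is therefore justified.

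The substance of the argument thus lies upstream in Theorem \ref{thm:qlp_system}, whose proof I would sketch for completeness even though it is taken as given. Nonnegativity of each $A_t$ is the trace of a product of positive operators (equivalently, $A_t$ is a sum of squared moduli of Hilbert--Schmidt inner products). The equation $A_0 = K_0$ is immediate from $\Pi_0$ being the orthogonal projection onto $\C I_{\cH}$: unpacking the definition gives $A_0 = (\dim \cH / K_0)\tr(P \Pi_0(P)) = K_0$. For the remaining constraints, one applies Lemma \ref{lemma:dist_transform}, which writes $B_t = \sum_{j=0}^r W_t(j) A_j$, together with Lemma \ref{lemma:dist_ineq}, which states $A_t \leq K_0 B_t$ with equality if and only if $\cC$ detects every error in $\cV_t$. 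Because $\cC$ has minimum distance $d$, it detects every error in $\cE_{d-1} = \bigoplus_{j \leq d-1} \cV_j$, so equality holds for all $0 \leq t \leq d-1$; for $t \geq d$ only the inequality is asserted. Combining these observations yields exactly the linear system in Theorem \ref{thm:qlp_system}, after which the bound in Theorem \ref{thm:qlp_bound} follows as described above. The only conceptual obstacle in the whole argument was the translation, encoded in Lemmas \ref{lemma:dist_transform} and \ref{lemma:dist_ineq}, between error-detection in the quantum metric and a finite linear system in the $W_t(j)$ coefficients; once that translation is in hand, Theorem \ref{thm:qlp_bound} is essentially a tautology.
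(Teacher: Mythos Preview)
Your proposal is correct and matches the paper's approach: the paper also presents Theorem~\ref{thm:qlp_bound} as an immediate consequence of Theorem~\ref{thm:qlp_system} together with the compactness lemma preceding it, and does not give a separate formal proof. Your additional sketch of Theorem~\ref{thm:qlp_system} via Lemmas~\ref{lemma:dist_transform} and~\ref{lemma:dist_ineq} is likewise exactly how the paper sets things up.
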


\subsection{\texorpdfstring{$W_t(j)$}{W\_t(j)} Coefficients}\label{sec:wtj_functions}

Knowing the relevant $W_t(j)$ coefficients for the quantum metric space at hand is necessary for working with Theorem \ref{thm:qlp_system}. In this section, we give expressions for the $W_t(j)$ coefficients of each of the quantum metric spaces listed earlier in this chapter and a few general properties of the $W_t(j)$ coefficients of any multiplicity-free, 2-homogeneous quantum metric space. The derivations of the analytic expressions are given in Section \ref{sec:wtj_deriv}.

\begin{example}[$q$-ary Quantum Hamming Space] The $W_t(j)$ coefficients for $q$-ary quantum Hamming space are given by \beqn W_t(j) = \frac{1}{q^n}\sum_{s = 0}^{t} (-1)^{s}(q^2 - 1)^{t-s} \binom{j}{s} \binom{n - j}{t - s} \eeqn and are $q^2$-ary Krawtchouk polynomials. The formulation of the quantum linear programming bounds and derivation of the $W_t(j)$ coefficients for binary quantum Hamming space is due to Shor and Laflamme \cite{SL} and independently Rains \cite{Rains:enum}.
\end{example}

\begin{example}[$\su(2)$ Quantum Metrics] The $W_t(j)$ coefficients for the $\su(2)$ quantum metric of dimension $n+1$ are \beqn W_t(j) = \frac{(-1)^{t+j}(2t+1)t!^2j!^2(n-t)!(n-j)!}{(n+t+1)!(n+j+1)!} \sum_{s = \max(t,j)}^{\min(t + j,n)} \frac{(-1)^s(n + s + 1)!}{(s - t)!^2(s - j)!^2(t + j- s)!^2(n - s)!}\eeqn for $0 \leq t,j \leq n$. The coefficients are a special case of the Wigner $6j$ symbols, or recoupling coefficients for $\SU(2)$, and so $W_t(j)$ is equivalent to a Racah polynomial up to rescaling. The formulation of the quantum linear programming bounds for $\su(2)$ and the idea for computing the $W_t(j)$ coefficients using the Wigner $6j$ symbols is due to Bumgardner \cite{Bumg}.
\end{example}

\begin{example}[$\su(q)$ Symmetric Quantum Metrics] The $W_t(j)$ coefficients for the $n$th symmetric power of the defining representation of $\su(q)$ are \begin{multline*} W_t(j) = \frac{(2 t + q - 1) (n - j)!(n + j + q - 1)!}{(n - t)!(n + t + q - 1)!} \\ \times\sum_{s = \max(0,t + j - n)}^{t} (-1)^{s} \frac{(2 t + q - 2 - s)! (s + n - t)!^2}{s!(s - (t + j - n))!(s + n - t + j + q - 1)!(t - s)!^2} \end{multline*} for $0 \leq t, j \leq n$. For $q = 2$, we note that this equals the $W_t(j)$ coefficients for the $\su(2)$ quantum metrics, but is not exactly the same formula.
\end{example}

\begin{example}[$\su(n)$ Exterior Quantum Metrics] Recall that $r = \min(w, n - w)$ for the $w$th exterior power of the defining representation of $\su(q)$. The $W_t(j)$ coefficients for this quantum metric space are \begin{multline*} W_t(j) = \frac{(n - 2t + 1)(r - j)!(n - r - t)!}{(r - t)!(n - r - j)!} \\
\times \sum_{s = \max(0, t + j - r)}^{t} (-1)^s \frac{(n - r + t - j - s)!(s + r - t)!^2}{s! (s - (t + j - r))!(s + n - 2t + 1)! (t - s)!^2} \end{multline*} for $0 \leq t,j \leq r$.
\end{example}

\begin{example}[$\Cl(m)$ Quantum Metrics] The $W_t(j)$ coefficients for the $\Cl(m)$ quantum metric on $\cH^{(n)}$ are \beqn W_t(j) = \frac{(-1)^{tj}}{2^n}\sum_{s = 0}^{t} (-1)^{s} \binom{j}{s}\binom{m - j}{t - s} \eeqn for $0 \leq t, j \leq n$.
\end{example}

\begin{example}[$\so(2n+1)$ Spinorial Quantum Metrics] The $W_t(j)$ coefficients for the $\so(2n+1)$ spinorial quantum metric on $\cH^{(n)}$ are \beqn W_t(j) = \frac{1}{2^n}\sum_{s = 0}^{2t} (-1)^{s} \binom{2j}{s}\binom{2n + 1 - 2j}{2t - s} \eeqn for $0 \leq t,j \leq n$.
\end{example}

\begin{example}[$\so(2n)$ Semispinorial Quantum Metrics] The $W_t(j)$ coefficients for the $\so(2n)$ spinorial quantum metric on $\cH_{\pm}^{(n)}$ are \beqn W_t(j) = \frac{1}{2^{n-1}}\sum_{s = 0}^{2t} (-1)^{s} \binom{2j}{s}\binom{2n - 2j}{2t - s} \eeqn for $0 \leq t < n/2\rfloor$ and $0 \leq j \leq n/2$. If $n$ is even, then \beqn W_{n/2}(j) = \frac{1}{2^n}\sum_{s = 0}^{2t} (-1)^{s} \binom{2j}{s}\binom{2n - 2j}{2t - s} \eeqn for $0 \leq j \leq n/2$.
\end{example}

In general, each set of $W_t(j)$ coefficients are a family of discrete orthogonal functions indexed by $0 \leq t \leq r$. We list a few basic properties of these functions in the following lemma.

\begin{lemma}\label{lemma:wtj_properties} The $W_t(j)$ coefficients of a multiplicity-free, 2-homogeneous quantum metric space satisfy the following properties.
\begin{enumerate}[(a)]
    \item $\sum_{k = 0}^{r} W_t(k)W_k(j) = \delta_{tj}$
    \item $W_t(j) = \frac{\dim(\cV_t)}{\dim(\cV_j)} W_j(t)$
    \item $W_t(0) = \frac{\dim(\cV_t)}{\dim(\cH)}$
    \item $W_0(j) = \frac{1}{\dim(\cH)}$
\end{enumerate}
\end{lemma}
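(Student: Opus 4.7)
The plan is to establish the four identities by exploiting the two mutually inverse expansions from Lemma \ref{lemma:wtjcoeff} together with the orthogonality computations already carried out in the proof of Lemma \ref{lemma:ginvbases}. The key observation is that both $\Phi_t$ and $\Pi_t$ are self-adjoint with respect to the Hilbert--Schmidt inner product on superoperators (self-adjointness of $\Phi_t$ follows from $\cV_t^\ast = \cV_t$ together with unitary freedom applied to the conjugated basis $\{E^\ast : E \in \cB_t\}$), so pairing them two different ways yields symmetric information about the coefficients. For (a), I would substitute one expansion into the other: writing $\Phi_t = \sum_k W_t(k) \Pi_k$ and then replacing each $\Pi_k$ by $\sum_j W_k(j) \Phi_j$ gives $\Phi_t = \sum_j \bigl(\sum_k W_t(k) W_k(j)\bigr)\Phi_j$; comparing coefficients against the basis $\{\Phi_j\}$ forces the inner sum to equal $\delta_{tj}$.

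For (b), I would compute the Hilbert--Schmidt pairing $\tr(\Phi_t \Pi_j)$ in two ways. Expanding $\Phi_t = \sum_k W_t(k)\Pi_k$ and using $\tr(\Pi_k \Pi_j) = \delta_{kj}\dim(\cV_j)$ (because $\Pi_j$ is an orthogonal projection onto the real subspace of self-adjoint elements of $\cV_j$, which has real dimension equal to $\dim_\C(\cV_j)$ since $\cV_j^\ast = \cV_j$) yields $W_t(j)\dim(\cV_j)$. Alternatively, expanding $\Pi_j = \sum_k W_j(k)\Phi_k$ and invoking the computation $\tr(\Phi_t \Phi_k) = \delta_{tk}\dim(\cV_t)$ from the proof of Lemma \ref{lemma:ginvbases} yields $W_j(t)\dim(\cV_t)$. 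Equating the two gives the reciprocity relation.

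For (c) and (d), I would evaluate the expansions on the distinguished element $I_\cH \in \cV_0$. For (c), applying $\Phi_t = \sum_k W_t(k)\Pi_k$ to $I_\cH$ kills all terms with $k \neq 0$ and leaves $W_t(0) I_\cH$, while directly $\Phi_t(I_\cH) = \sum_{E \in \cB_t} EE^\ast$ is $G$-invariant and hence a scalar multiple of $I_\cH$ by Schur's lemma; the scalar is pinned down by taking the trace, giving $\dim(\cV_t)/\dim(\cH)$. For (d), the normalized orthonormal basis of $\cV_0 = \C I_\cH$ is $\{I_\cH/\sqrt{\dim(\cH)}\}$, so $\Phi_0 = \frac{1}{\dim(\cH)} \operatorname{Id}$, and comparing with $\Phi_0 = \sum_j W_0(j)\Pi_j$ using $\sum_j \Pi_j = \operatorname{Id}$ gives the uniform value. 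The main subtlety I anticipate is keeping the real-versus-complex bookkeeping straight in the trace formulas, since the superoperators act on $\cL(\cH)_\R$ while the decomposition $\cL(\cH) = \bigoplus_t \cV_t$ is defined complex-linearly; once the identification of the real subspace of self-adjoint elements in each $\cV_t$ is made, everything reduces to routine manipulations.
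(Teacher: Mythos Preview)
Your proofs of (c) and (d) are correct and essentially match the paper's. The issue is with (a) and (b): both rely on the second expansion $\Pi_j = \sum_k W_j(k)\Phi_k$ from Lemma~\ref{lemma:wtjcoeff}, but the paper explicitly notes (just after stating that lemma) that this identity is not yet established and will follow from ``a property'' proved in the present section. In the paper's logical order, the second half of Lemma~\ref{lemma:wtjcoeff} is a \emph{consequence} of (b), so invoking it to derive (b) is circular.

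The paper breaks the circle by computing $\tr(\Phi_t\Pi_j)$ directly, without any expansion of $\Pi_j$. Unwinding the superoperator trace over an orthonormal basis of $\cL(\cH)$ and the Kraus form of $\Phi_t$ gives
\[
\tr(\Phi_t\Pi_j)=\sum_{X\in\cB_j}\sum_{E\in\cB_t}\tr(X^\ast E X E^\ast)=\tr(\Phi_j\Pi_t),
\]
a symmetry obtained just by relabeling the double sum. Since the \emph{first} expansion (the definition of $W_t(j)$) already yields $\tr(\Phi_t\Pi_j)=W_t(j)\dim(\cV_j)$, this symmetry gives (b) outright and then supplies the missing half of Lemma~\ref{lemma:wtjcoeff}. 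Part (a) follows because the normalized change-of-basis matrix $u_{tj}=W_t(j)\sqrt{\dim(\cV_j)/\dim(\cV_t)}$ is orthogonal (both families are orthonormal bases by Lemma~\ref{lemma:ginvbases}) and, by (b), symmetric, so $u^2=I$. Your self-adjointness observation about $\Phi_t$ is correct but is not the missing ingredient; what is actually needed is the explicit interchange of $\cV_t$ and $\cV_j$ in the trace formula above.
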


\begin{proof}
We first prove some properties that the $W_t(j)$ satisfy after introducing a normalization factor. The Hilbert-Schmidt norm of $\Phi_t$ and $\Pi_t$ are both $\sqrt{\dim(\cV_t)}$, so $\{\dim(\cV_t)^{-1/2}\Phi_t\}_{t = 0}^{r}$ and $\{\dim(\cV_t)^{-1/2}\Pi_t\}_{t = 0}^{r}$ each form orthonormal bases of the space of $G$-invariant linear maps. From equation (\ref{eq:phi_expansion}), we have \beqn \frac{\Phi_t}{\sqrt{\dim(\cV_t)}} = \sum_{j = 0}^{r} W_t(j)\frac{\sqrt{\dim(\cV_j)}}{\sqrt{\dim(\cV_t)}}\frac{\Pi_j}{\sqrt{\dim(\cV_j)}}\eeqn so \beq\label{eq:wtj_unitary} u_{tj} = W_t(j)\frac{\sqrt{\dim(\cV_j)}}{\sqrt{\dim(\cV_t)}} \eeq forms an $r \times r$ orthogonal matrix. By definition, $u_{tj}$ are the coefficients of the decomposition of $\dim(\cV_t)^{-1/2}\Phi_t$ in the basis $\{\dim(\cV_j)^{-1/2}\Pi_j\}_{j = 0}^{r}$, hence \beqn
u_{tj} = \tr\left(\frac{\Phi_t}{\sqrt{\dim(\cV_t)}}\frac{\Pi_j}{\sqrt{\dim(\cV_j)}}\right). \eeqn On the other hand, orthogonality implies that $u_{ji}$ gives the reverse decomposition of $\dim(\cV_j)^{-1/2}\Pi_j$ in terms of $\{\dim(\cV_t)^{-1/2}\Phi_t\}_{t = 0}^{r}$. Now, \begin{align*} u_{jt} &= \tr\left(\frac{\Pi_j}{\sqrt{\dim(\cV_j)}}\frac{\Phi_t}{\sqrt{\dim(\cV_t)}}\right) \\
&= \frac{1}{\sqrt{\dim(\cV_t)\dim(\cV_j)}} \tr(\Phi_t \Pi_j) \\
&= \frac{1}{\sqrt{\dim(\cV_t)\dim(\cV_j)}} \sum_{k = 0}^{r} \sum_{X \in \cB_k} \tr(X^\ast \Phi_t(\Pi_j(X))) \\
&= \frac{1}{\sqrt{\dim(\cV_t)\dim(\cV_j)}} \sum_{X \in \cB_j} \sum_{E \in \cB_t} \tr(X^\ast E X E^\ast) \\
&= \frac{1}{\sqrt{\dim(\cV_t)\dim(\cV_j)}} \sum_{E \in \cB_t} \tr(\Phi_j(E) E^\ast) \\
&= \frac{1}{\sqrt{\dim(\cV_t)\dim(\cV_j)}} \sum_{k = 0}^{r} \sum_{E \in \cB_k} \tr(\Phi_j(E^\ast \Pi_t(E))) \\
&= \frac{1}{\sqrt{\dim(\cV_t)\dim(\cV_j)}} \tr(\Phi_j\Pi_t) \\
&= u_{tj} \end{align*} so the $u_{tj}$'s also form a symmetric matrix.

(a) Using the fact that $u_{tj}$ is orthogonal and symmetric we have \beqn \sum_{k = 0}^{r} W_t(k)W_k(j) = \frac{\sqrt{\dim(\cV_t)}}{\sqrt{\dim(\cV_j)}} \sum_{k = 0}^{r} u_{tk}u_{jk} = \delta_{tj} \eeqn

(b) The equation $u_{tj} = u_{jt}$ along with equation (\ref{eq:wtj_unitary}) gives \beqn W_t(j) = \frac{\dim(\cV_t)}{\dim(\cV_j)}W_j(t). \eeqn

(c) Recall that $W_t(0)$ is an eigenvalue of $\Phi_t$ of the eigenspace $\cV_0 = \C I_{\cH}$, hence \beqn W_t(0) = \frac{\tr(I_\cH\Phi_t(I_\cH))}{\tr(I_\cH^2)} = \frac{\sum_{E \in \cB_t} \tr\left(EE^\ast\right)}{\dim(\cH)}  = \frac{\dim(\cV_t)}{\dim(\cH)}. \eeqn

(d) $\Phi_0(X) = \frac{1}{\dim(\cH)} I_\cH X I_\cH = \frac{1}{\dim(\cH)} X$ so each $X \in \cV_j$ has eigenvalue $\frac{1}{\dim(\cH)}$. Alternatively, we can apply (b) to (c).
\end{proof}

\section{Self-Dual Linear Inequalities}

The quantum linear programming bounds for binary quantum Hamming space were shown to be not sharp in general. Rains derived additional linear inequalities on the distance distributions that greatly sharpen the bound in many cases \cite{Rains:shadow}. In this section, we give a generalization of Rains' result to other quantum metrics. We prove that if $\cH$ is a multiplicity-free, 2-homogeneous quantum metric $G$-space that is also a self-dual representation of $G$, then there are additional linear inequalities that feasible distance distributions must satisfy. We define a quantum metric $G$-space $\cH$ to be \textbf{self-dual} if $\cH$ is a self-dual representation of $G$.

\subsection{\texorpdfstring{$G$}{G}-Invariant Positive Maps}

We first prove a lemma that gives a general way to derive linear inequalities on the distance distribution of quantum codes. We consider the space of $G$-invariant superoperators on $\cL(\cH)_\R$, but furthermore consider ones that are positive. We have the following lemma.

\begin{lemma}\label{lemma:aux_ineq} Let $\cT:\cL(\cH)_\R \to \cL(\cH)_\R$ be a $G$-invariant superoperator with expansion $\cT = \sum_{j = 0}^{r} \lambda_j\Pi_j$ where $\lambda_j \in \R$. If $\cT$ is positive then the distance distribution $A_t$ of any quantum code satisfies \beqn \sum_{j = 0}^{r} \lambda_j W_t(j) A_j \geq 0 \eeqn for $0 \leq t \leq r$.
\end{lemma}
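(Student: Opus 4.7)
The plan is to exploit the fact that both $\cT$ and the completely positive map $\Phi_t$ are positive superoperators, so their composition is again positive, and then evaluate $\tr(P(\cT\circ\Phi_t)(P))$ in two different ways.

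First I would observe that $\Phi_t$ is completely positive by the Choi-Kraus theorem (Theorem \ref{thm:choi_kraus}), so in particular positive, and hence $\cT\circ\Phi_t$ is a positive superoperator. If $P$ is the orthogonal projection onto the quantum code $\cC$, then $P$ is a positive operator, so $(\cT\circ\Phi_t)(P)$ is positive, and therefore
\beqn
\tr\bigl(P(\cT\circ\Phi_t)(P)\bigr)\geq 0,
\eeqn
since the trace of the product of two positive operators is nonnegative.

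Next I would compute $\cT\circ\Phi_t$ in the basis of $G$-invariant maps. Using the expansion $\Phi_t=\sum_{k=0}^r W_t(k)\Pi_k$ from Lemma \ref{lemma:wtjcoeff}, the expansion $\cT=\sum_{j=0}^r \lambda_j\Pi_j$ given in the hypothesis, and the fact that the $\Pi_j$'s are mutually orthogonal projections onto the subspaces $\cV_j$ (so $\Pi_i\Pi_j=\delta_{ij}\Pi_i$), the composition collapses to the diagonal expression
\beqn
\cT\circ\Phi_t=\sum_{j=0}^r \lambda_j W_t(j)\,\Pi_j.
\eeqn
Applying this to $P$ and pairing with $P$ under the Hilbert-Schmidt inner product yields
\beqn
\tr\bigl(P(\cT\circ\Phi_t)(P)\bigr)=\sum_{j=0}^r \lambda_j W_t(j)\tr(P\Pi_j(P))=\frac{\dim(\cC)}{\dim(\cH)}\sum_{j=0}^r \lambda_j W_t(j)A_j,
\eeqn
by the definition of $A_j$. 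Combining this identity with the inequality from the first step, and dividing by the positive constant $\dim(\cC)/\dim(\cH)$, gives the desired conclusion.

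There is really no hard step here, since all the structural ingredients (positivity of $\Phi_t$, orthogonality of the $\Pi_j$, and the $W_t(j)$ expansion) are already established. The only thing to be careful about is the interplay between positivity on $\cL(\cH)_\R$ and the trace pairing: I would want to confirm that ``positive'' for $\cT$ is taken in the standard sense (positive operators to positive operators) rather than positive on the Hilbert-Schmidt inner product, so that the composition $\cT\circ\Phi_t$ is indeed positive and $\tr(P\cdot(\cT\circ\Phi_t)(P))\geq 0$ is justified. Given the context in which the lemma will be used (to derive shadow-enumerator-type inequalities from self-duality), this is certainly the intended reading.
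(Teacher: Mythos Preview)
Your proof is correct and follows essentially the same approach as the paper: both use that $\Phi_t(P)$ is positive (from complete positivity of $\Phi_t$) and then that $\cT(\Phi_t(P))$ is positive (from positivity of $\cT$), deduce $\tr(P\,\cT(\Phi_t(P)))\geq 0$, and expand via $\Pi_j\Pi_k=\delta_{jk}\Pi_j$ to rewrite this trace as $\frac{\dim(\cC)}{\dim(\cH)}\sum_j \lambda_j W_t(j)A_j$. The only cosmetic difference is that you phrase the positivity step as ``$\cT\circ\Phi_t$ is a positive superoperator,'' whereas the paper applies the two positivity properties sequentially to $P$; the content is identical.
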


\begin{proof}
Let $P$ be the orthogonal projection onto a quantum code. $\cT$ is positive and $\Phi_t$ is completely positive, so $\cT(\Phi_t(P))$ is a positive operator. The trace of the product of two positive operators is positive, so $\tr(P\cT(\Phi_t(P))) \geq 0$. We also have \begin{align*}
\tr(P\cT(\Phi_t(P))) &= \sum_{j = 0}^{r}\sum_{k = 0}^{r} \tr(P\lambda_j\Pi_j(W_t(k)\Pi_k(P))) \\
&= \sum_{j = 0}^{r} \lambda_jW_t(j)\tr(P\Pi_j(P)) \\
&= \frac{\dim(\cC)}{\dim(\cH)} \sum_{j = 0}^{r} \lambda_jW_t(j) A_j,
\end{align*} hence the inequality stated in the lemma holds.
\end{proof} For any such $\cT$, we may include the inequalities $\sum_{j = 0}^{r} \lambda_j W_t(j) A_j \geq 0$ for $0 \leq t \leq r$ as constraints to the quantum linear programming system. Next, we show that if $\cH$ is a self-dual representation of $G$, then such a $\cT$ exists.

\subsection{Self-Dual Isomorphism and Inequalities}

Since $(\cH, \cE_t)$ is a quantum metric $G$-space, $\cH$ is a unitary representation of $G$ through a homomorphism $R:G \to \Isom(\cH)$ by definition. We assume that $\cH$ is finite dimensional so, after choosing an orthonormal basis of $\cH$, $R(g)$ can be expressed as a unitary matrix. We can then introduce the transpose, $R(g)^T$, and complex conjugate, $\overline{R(g)}$, of $R(g)$. These two operations are, of course, dependent on the chosen basis of $\cH$. The dual representation of $(\cH, R)$ is the representation $(\cH, R^\ast)$ where we take $R^\ast(g) = R(g^{-1})^{T}$. Since $R(g)$ is unitary, we have $R^\ast(g) = (R(g)^{\ast})^{T} = \overline{R(g)}$. $(\cH, R)$ is isomorphic to $(\cH, R^\ast)$ if and only if there exists a unitary operator $\Lambda:\cH \to \cH$ such that \beq\label{eq:selfdualiso}\Lambda\overline{R(g)}\Lambda^\ast = R(g)\eeq for all $g \in G$. In the following lemma, we state and prove that $\cT(X) = \Lambda \overline{X} \Lambda^\ast$ satisfies the conditions for Lemma \ref{lemma:aux_ineq}.

\begin{lemma}\label{lemma:sdr_existence}
If $(\cH, \cE_t)$ is a self-dual quantum metric $G$-space with self-dual isomorphism given by a unitary map $\Lambda:\cH \to \cH$, then $\cT(X) = \Lambda \overline{X} \Lambda^\ast$ is a positive, $G$-invariant linear map. Moreover, if the expansion of $\cT$ is given by $\cT = \sum_{j = 0}^{r} \lambda_j \Pi_j$ then $\lambda_j \in \{1,-1\}$.
\end{lemma}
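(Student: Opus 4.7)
The plan is to verify each claim in turn, with the self-duality identity $\Lambda \overline{R(g)} \Lambda^\ast = R(g)$ as the central tool. First I would check that $\cT$ is a well-defined $\R$-linear superoperator on $\cL(\cH)_\R$. If $X$ is self-adjoint, then $X^\ast = X$ reads $\overline{X}^T = X$, equivalently $\overline{X} = X^T$, from which it follows that $\overline{X}$ is itself self-adjoint; since $\Lambda$ is unitary, $\Lambda \overline{X} \Lambda^\ast$ is also self-adjoint. Positivity is equally direct: writing any positive $X = Y Y^\ast$, one gets $\overline{X} = \overline{Y}\,(\overline{Y})^\ast$ positive, and unitary conjugation by $\Lambda$ preserves positivity. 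Note that $\cT$ is only $\R$-linear, as complex conjugation is not $\C$-linear, but this is all that is needed since $\cL(\cH)_\R$ is a real vector space.

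For $G$-invariance, taking the complex conjugate and the adjoint of the defining identity produces the companion relations $\Lambda \overline{R(g)} = R(g)\Lambda$ and $\overline{R(g)^\ast}\,\Lambda^\ast = \Lambda^\ast R(g)^\ast$. Plugging into the definition yields
\begin{equation*}
\cT\bigl(R(g) X R(g)^\ast\bigr) = \Lambda \overline{R(g)}\,\overline{X}\,\overline{R(g)^\ast}\,\Lambda^\ast = R(g)\bigl(\Lambda \overline{X} \Lambda^\ast\bigr)R(g)^\ast = R(g)\,\cT(X)\,R(g)^\ast,
\end{equation*}
so $g \cdot \cT = \cT$ for every $g \in G$. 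By Lemma \ref{lemma:ginvbases}, $\cT$ therefore admits a unique expansion $\cT = \sum_{j = 0}^{r} \lambda_j \Pi_j$ with $\lambda_j \in \R$.

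The main (though still brief) obstacle is pinning down $\lambda_j \in \{1,-1\}$. The strategy is to show $\cT^2 = \mathrm{Id}$, which forces $\lambda_j^2 = 1$. Setting $U = \Lambda \overline{\Lambda}$, a direct computation using $\overline{\overline{X}} = X$ together with the identity $\overline{\Lambda^\ast}\,\Lambda^\ast = \Lambda^T \overline{\Lambda}^T = (\overline{\Lambda}\Lambda)^T = U^\ast$ gives $\cT^2(X) = U X U^\ast$. Complex conjugating the defining identity yields $\overline{\Lambda}\, R(g) = \overline{R(g)}\,\overline{\Lambda}$, and combining this with $\Lambda \overline{R(g)} = R(g) \Lambda$ produces $U R(g) = R(g) U$ for every $g \in G$. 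The multiplicity-free hypothesis forces $\cH$ itself to be irreducible as a $G$-representation (the trivial summand $\cV_0 = \C I_\cH$ appears with multiplicity one in $\cL(\cH) \cong \cH \otimes \cH^\ast$, and this multiplicity equals $\dim \mathrm{Hom}_G(\cH,\cH)$), so Schur's lemma yields $U = c I_\cH$ for some scalar $c$. Unitarity of $U$ then forces $|c| = 1$, hence $\cT^2(X) = c\overline{c}\,X = X$. Each $\lambda_j$ therefore satisfies $\lambda_j^2 = 1$, and the conclusion $\lambda_j \in \{1,-1\}$ follows.
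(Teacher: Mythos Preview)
Your proof is correct. The treatment of positivity and $G$-invariance is essentially the same as the paper's. The genuine difference is in the final step establishing $\lambda_j \in \{1,-1\}$. The paper computes the Hilbert--Schmidt adjoint $\cT^\ast(X) = \overline{\Lambda^\ast X \Lambda}$ and observes in one line that $\cT^\ast(\cT(X)) = \overline{\Lambda^\ast \Lambda \overline{X} \Lambda^\ast \Lambda} = X$, so $\cT$ is orthogonal on $\cL(\cH)_\R$ and its real eigenvalues must have modulus $1$. This route is shorter and uses nothing beyond unitarity of $\Lambda$. You instead compute $\cT^2(X) = UXU^\ast$ with $U = \Lambda\overline{\Lambda}$ and then invoke Schur's lemma, which forces you to establish irreducibility of $\cH$ from the multiplicity-free hypothesis. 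That deduction is correct and is a nice observation in its own right (it is not stated explicitly in the paper), but it is extra work here: since $\cT$ is already known to be $G$-invariant with real spectral expansion, it is self-adjoint, so $\cT^2 = \cT^\ast\cT$ anyway, and the paper's one-line computation would have sufficed.
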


\begin{proof}
The complex conjugation operation on operators is an $\R$-linear positive map, and conjugation by a unitary operator is a linear completely positive map. Since $\cT$ is the composition of these two functions, $\cT$ is a linear positive map. $\cT$ is $G$-invariant since if $U = R(g)$ for $g \in G$, then \begin{align*}
(g \cdot \cT)(X) &= U \cT(U^{\ast} X U) U^{\ast} \\
&= U \Lambda \overline{U^{\ast} X U}\Lambda^\ast U^{\ast} \\
&= U \Lambda \overline{U^{\ast}}\Lambda^\ast\Lambda\overline{X}\Lambda^\ast\Lambda\overline{U}\Lambda^\ast U^{\ast}\addeqnumber\label{eq:lambdaginv} \\
&= U U^{\ast}\Lambda\overline{X}\Lambda^\ast UU^{\ast} \\
&= \Lambda\overline{X}\Lambda^\ast \\
&= \cT(X)
\end{align*} where we applied equation (\ref{eq:selfdualiso}) to the line (\ref{eq:lambdaginv}). By Lemma \ref{lemma:ginvbases}, there exist $\lambda_j \in \R$ for $0 \leq j \leq r$ such that \beqn \cT(X) = \sum_{j = 0}^{n} \lambda_j \Pi_j. \eeqn The adjoint of $\cT$ is $\cT^\ast(X) = \overline{\Lambda^\ast X \Lambda}$ since \beqn \tr(\cT(X)Y) = \tr(\Lambda\overline{X}\Lambda^\ast Y) = \tr(\overline{X}\Lambda^\ast Y \Lambda) = \tr((X \overline{\Lambda^\ast Y \Lambda})^\ast) = \tr(X \overline{\Lambda^\ast Y \Lambda}). \eeqn We deduce \beqn \cT^\ast(\cT(X)) = \overline{\Lambda^\ast \Lambda\overline{X}\Lambda^\ast \Lambda} = X, \eeqn meaning that $\cT$ is unitary. Since the eigenvalues of unitary operators all have modulus $1$, this implies that each $\lambda_j \in \{1,-1\}$.
\end{proof}

Now combining Lemma \ref{lemma:sdr_existence} and Lemma \ref{lemma:aux_ineq} we have our main result for this section.

\begin{theorem}\label{thm:sdr_ineqs}
Let $\cH$ be a self-dual quantum metric $G$-space with self-dual isomorphism given by a unitary map $\Lambda:\cH \to \cH$. If $\cT(X) = \Lambda \overline{X} \Lambda^\ast$ with expansion \beqn \cT = \sum_{j = 0}^{r} \lambda_j \Pi_j \eeqn then the distance distribution $A_t$ of any quantum code satisfies \beq \sum_{j = 0}^{r} \lambda_j W_t(j) A_j \geq 0 \eeq for $0 \leq t \leq r$.
\end{theorem}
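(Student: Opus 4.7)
The plan is to observe that this theorem is essentially the composition of the two lemmas immediately preceding it, so the proof should be a short synthesis rather than a new argument. First I would invoke Lemma \ref{lemma:sdr_existence} to deduce that the specific superoperator $\cT(X) = \Lambda\overline{X}\Lambda^\ast$ is both positive and $G$-invariant on $\cL(\cH)_\R$; that lemma supplies exactly the two hypotheses needed to feed $\cT$ into Lemma \ref{lemma:aux_ineq}. Then I would invoke Lemma \ref{lemma:aux_ineq} with this $\cT$, together with its expansion $\cT = \sum_{j=0}^{r} \lambda_j \Pi_j$ in the basis of projections $\Pi_j$, to extract directly the family of inequalities $\sum_{j=0}^{r} \lambda_j W_t(j) A_j \geq 0$ for each $0 \leq t \leq r$.

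In writing this out I would be careful to note that $\cT$ does indeed map $\cL(\cH)_\R$ to $\cL(\cH)_\R$, since complex conjugation in a fixed orthonormal basis preserves self-adjointness and conjugation by the unitary $\Lambda$ does as well; this justifies expanding $\cT$ in the basis $\{\Pi_j\}_{j=0}^{r}$ of $G$-invariant superoperators on $\cL(\cH)_\R$ provided by Lemma \ref{lemma:ginvbases}. I would also remark that although the refined information $\lambda_j \in \{1,-1\}$ from Lemma \ref{lemma:sdr_existence} is not logically required to state the inequality in Theorem \ref{thm:sdr_ineqs}, it is the reason the inequality is \emph{useful}: each coefficient is a definite sign, so the linear constraint is a genuine sign-determined condition on the $A_j$ that can be added to the feasibility system in Theorem \ref{thm:qlp_system}.

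The hard part is not the derivation itself, which reduces to a two-line citation of previous results, but rather to make sure that the two lemmas are applied in a formally correct setting. In particular, one should check that the expansion coefficients $\lambda_j$ appearing in Lemma \ref{lemma:sdr_existence} are the same $\lambda_j$ appearing in Lemma \ref{lemma:aux_ineq} and in the statement of Theorem \ref{thm:sdr_ineqs}, which is immediate because both invocations refer to the unique decomposition of $\cT$ in the basis $\{\Pi_j\}$ guaranteed by Lemma \ref{lemma:ginvbases}. With this alignment made explicit, the proof concludes in a single line by combining the two previous lemmas.
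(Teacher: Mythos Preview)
Your proposal is correct and matches the paper's approach exactly: the paper simply states that the theorem follows by combining Lemma~\ref{lemma:sdr_existence} and Lemma~\ref{lemma:aux_ineq}, with no further argument given. Your additional remarks about $\cT$ preserving $\cL(\cH)_\R$ and the alignment of the $\lambda_j$ are sound and make explicit what the paper leaves implicit.
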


Among the quantum metric spaces listed in Section \ref{sec:mf_2hom_spaces}, the ones that are self-dual quantum metric $G$-spaces are binary quantum Hamming space, the $\su(2)$ quantum metrics, the $n$th exterior power quantum metric of $\su(2n)$, the $\Cl(m)$ quantum metrics, the $\so(2n+1)$ spinorial quantum metrics, and the $\so(4n)$ semispinorial quantum metrics. The inequalities from Theorem \ref{thm:sdr_ineqs} were used for the numerical bounds given in Section \ref{sec:numerical_bounds}. In the next section, we discuss some general properties of the coefficients $\lambda_j$ and present $\Lambda$ and $\lambda_j$ for each quantum metric space that exhibits self-duality.

\subsection{\texorpdfstring{$\lambda_j$}{λ\_j} Coefficients}

As is the case with the $W_t(j)$ coefficients, the coefficients $\lambda_j$ vary for different quantum metric spaces. Trivially, we always have that $\lambda_0 = 1$ since $\cT(I_{\cH}) = I_{\cH}$. In the cases where $G$ is a compact real Lie group and $\cV_1$ is the complex span of the action of the Lie algebra of $\g$, it holds that $\lambda_1 = -1$. If the quantum metric space is further connected, then $\lambda_1$ determines all other $\lambda_j$. We can first prove $\lambda_1 = -1$ by elementary Lie theory. Let $\g$ be the Lie algebra of $G$. The differential of $R:G \to \U(\cH)$ is a real linear map $f:\g \to \cV_1$ where $f(X)$ is skew self-adjoint for all $X \in \g$. Moreover, $f$ is surjective onto the skew self-adjoint operators of $\cV_1$. Now, taking the differential of \beqn \Lambda \overline{R(g)} \Lambda^\ast = R(g) \eeqn gives \beqn \Lambda \overline{f(X)} \Lambda^\ast = f(X) \eeqn for all $X \in \g$. Since $f(X)$ is skew self-adjoint, $if(X)$ is self-adjoint. By the surjectivity of $f$, \beqn \Lambda \overline{E} \Lambda^\ast = -E \eeqn for all self-adjoint $E \in \cV_1$ and hence $\lambda_1 = -1$. Next, we prove that the other $\lambda_j$ are determined by $\lambda_1$ from the fact that $\cT$ is an antilinear algebra homomorphism on $\cL(\cH)$. If $E_1,\ldots, E_j \in \cV_1$ are self-adjoint, $E_1 \cdots E_j \in \cV_j$, and $E_1 \cdots E_j$ is nonzero, then \beq\label{eq:lambdajprod} \cT(E_1 \cdots E_j) = \cT(E_1) \cdots \cT(E_j) = (-1)^{j} E_1 \cdots E_j. \eeq We note that $E_1 \cdots E_j$ is not necessarily self-adjoint, thus we consider $E_1 \cdots E_j + E_j \cdots E_1 \in \cV_j$ which is self-adjoint. If $E_1 \cdots E_j + E_j \cdots E_1$ is nonzero then using equation (\ref{eq:lambdajprod}) we can compute \beqn \cT(E_1 \cdots E_j + E_j \cdots E_1) = (-1)^j (E_1 \cdots E_j + E_j \cdots E_1) \eeqn and hence $\lambda_j = (-1)^j$. If $E_1 \cdots E_j + E_j \cdots E_1 = 0$ then $i(E_1 \cdots E_j - E_j \cdots E_1) \in \cV_j$ is nonzero and self-adjoint. Again, using equation (\ref{eq:lambdajprod}), we can compute \beqn \cT(i(E_1 \cdots E_j - E_j \cdots E_1)) = (-1)(-1)^j = (-1)^{j+1} \eeqn and hence $\lambda_j = (-1)^{j+1}$ otherwise.

We conclude this section by presenting $\Lambda$ and $\lambda_j$ for each of the self-dual quantum metric spaces listed in this chapter.

\begin{example}[Binary Quantum Hamming Space] Binary quantum Hamming space $(\C^{2})^{\otimes n}$ is a self-dual representation of $G$. The unitary self-dual isomorphism can be given by $\Lambda = \sigma_y^{\otimes n}$ and so $\cT(X) = \sigma_y^{\otimes n}\overline{X}\sigma_y^{\otimes n}$. The expansion coefficients of $\cT$ are \beqn \lambda_j = (-1)^{j} \eeqn for $0 \leq j \leq n$. The original result is due to Rains \cite{Rains:shadow} and if $P$ is the projection of a quantum code, then the quantities \beqn\frac{\dim(\cH)}{\dim(\cC)}\tr(P\cT(\Phi_t(P)))\eeqn are called the \textbf{shadow enumerators} of the code.
\end{example}

\begin{example}[$\su(2)$ Quantum Metrics] Let $\cH$ be the irreducible representation of $\su(2)$ of dimension $n + 1$. We may take $\Lambda$ to be the unitary operator defined by \beqn \Lambda \ket{k} = (-1)^{\frac{n+k}{2}} \ket{-k} \eeqn for each basis vector. If $R:\SU(2) \to \U(\cH)$ is the homomorphism of the action on $\cH$, then it turns out that $\Lambda = R(i\sigma_y)$. Since $\SU(2)$ is a real compact Lie group and $\cV_1$ is the complex span of the action of $\su(2)$ on $\cH$, it holds that $\lambda_1 = -1$. One may also verify that $\cT(E) = -F$ and $\cT(F) = -E$ by using the fact that $E^\ast = F$. For any $0 \leq j \leq n$, we have that $E^j, F^j \in \cV_j$ and $E^j + F^j \in \cV_j$ is nonzero and self-adjoint, so \beqn \lambda_{j} = (-1)^{j} \eeqn for $0 \leq j \leq n$.
\end{example}

\begin{example}[$n$th Exterior Power Quantum Metric of $\su(2n)$] Let $\cH$ be the $n$th exterior power of $\su(2n)$. For $x \in \{1,2,3,\ldots,2n\}^n$ where $1 \leq x_1 < x_2 < \cdots < x_n \leq 2n$, define $\overline{x} \in \{1,2,3,\ldots,2n\}^n$ to be the ordered complement of $x$. The self-dual isomorphism $\Lambda$ is given by the Hodge star operator, meaning \beqn \Lambda \ket{x} = \sgn(x\overline{x}) \ket{\overline{x}} \eeqn for each basis vector. By viewing $x\overline{x}$ as an ordering of $\{1,2,3,\ldots,2n\}$, $\sgn(x\overline{x})$ is the sign of this ordering as a permutation. Similar to the case for the $\su(2)$ quantum metrics, we have that $E_{1n}^j + E_{n1}^j \in \cV_j$ is nonzero and self-adjoint, so \beqn \lambda_j = (-1)^{j} \eeqn for $0 \leq j \leq n$.
\end{example}

\begin{example}[$\Cl(m)$ Quantum Metrics] For the Clifford quantum metrics, the self-dual isomorphism can be given by $\Lambda = \Gamma_y$ where $y \in \F_2^{m}$ is a binary vector such that $y_k = 0$ for $1 \leq k \leq n$ and $y_k = 1$ for $n + 1 \leq k \leq 2n$. If $m$ is odd, then $y$ has an extra component and we take $y_{2n+1} = 0$. The expansion coefficients of $\cT$ are \beqn \lambda_{j} = (-1)^{\frac{j(j + 2n - 1)}{2}} \eeqn for $0 \leq j \leq n$ if $m = 2n+1$ and for $0 \leq j \leq 2n$ if $m = 2n$. Instead of proving that $\cT$ is $G$-invariant, we directly show that $\cT$ acts by a scalar on each $\cV_j$. For $x \in (\Z/2\Z)^{m}$, $\Gamma_x$ is a product of $i^\frac{\wt(x)(\wt(x) - 1)}{2}$ and the operators $U_k$ for $1 \leq k \leq m$. $U_k$ is a real matrix for $1 \leq k \leq n$ and $k = 2n+1$, and pure imaginary for $n + 1 \leq k \leq 2n$. Additionally, $x \cdot y$ has the same parity as the number of pure imaginary matrices in the product of $\Gamma_x$. Using this fact and the fact that complex conjugation distributes over matrix multiplication, one may deduce that \beqn \overline{\Gamma_x} = (-1)^{\frac{\wt(x)(\wt(x) - 1)}{2} + x \cdot y} \Gamma_x. \eeqn Now, combining this with the fact that $\Gamma_x\Gamma_y = (-1)^{q(x,y)} \Gamma_y\Gamma_x$, we have \begin{multline*} \Gamma_y\overline{\Gamma_x}\Gamma_y = (-1)^{\frac{\wt(x)(\wt(x) - 1)}{2} + x \cdot y + q(x, y)} \Gamma_x \\= (-1)^{\frac{\wt(x)(\wt(x) - 1)}{2} + \wt(x)\wt(y)}\Gamma_x = (-1)^{\frac{\wt(x)(\wt(x) + 2n - 1)}{2}}\Gamma_x. \end{multline*} Since $\Gamma_x \in \cV_{\wt(x)}$, it follows that $\lambda_j = (-1)^{\frac{j(j + 2n - 1)}{2}}$ for $0 \leq j \leq 2n$ if $m$ is even, and $0 \leq j \leq n$ if $m$ is odd.
\end{example}

\begin{example}[$\so(2n+1)$ Spinorial Quantum Metrics] Let $\cH \subseteq (\C^2)^{\otimes n}$ be the $\so(2n+1)$ spinorial representation. The self-dual isomorphism can be given by $\Lambda = \sigma_y^{\otimes n}$ and the expansion coefficients of $\cT$ are \beqn \lambda_{j} = (-1)^{j} \eeqn for $0 \leq j \leq n$. The argument follows from the fact that a $\so(2n+1)$ error of distance $j$ is an element of $\cV_{2j}^{\Cl(2n+1)}$, hence $\lambda_j = (-1)^{\frac{2j(2j + 2n - 1)}{2}} = (-1)^j$ for $0 \leq j \leq n$.
\end{example}

\begin{example}[$\so(4n)$ Semispinorial Quantum Metrics] The self-dual isomorphism can be given by $\Lambda = \sigma_y^{\otimes 2n}$ and the expansion coefficients of $\cT$ are \beqn \lambda_{j} = (-1)^{j} \eeqn for $0 \leq j \leq n$. The argument follows from the fact that a $\so(4n)$ error of distance $j$ is an element of $\cV_{2j}^{\Cl(4n)}$, hence $\lambda_j = (-1)^{\frac{2j(2j + 4n - 1)}{2}} = (-1)^j$ for $0 \leq j \leq n$.
\end{example}

\section{Numerical Bounds}\label{sec:numerical_bounds}

In this section, we present numerically computed upper bounds on the size of quantum codes. We used computer software to solve the linear systems in Theorem \ref{thm:qlp_bound} for various values of $K$ to compute (or approximate) the upper bound on the right-hand side of inequality (\ref{eq:qlp_bound}). Our methodology in computing the upper bounds is a simple binary search program on the dimension of the value of distance distributions. We first specify the distance and other parameters of the code (excluding dimension) we would like to investigate. Next, we specify variables $K_{\text{lower}} := 1$, $K_{\text{upper}} := \dim(\cH)$, and $K := \frac{K_{\text{lower}} + K_{\text{upper}}}{2}$. The $W_t(j)$ coefficients of each quantum metric space in Section \ref{sec:wtj_functions} are rational, and we may use rational arithmetic to compute simplified exact rational expressions of the $W_t(j)$ corresponding to the chosen quantum metric space. Using these rational expressions, we produce the linear programs in Theorem \ref{thm:qlp_bound} with value $K$ for the optimization software Gurobi \cite{Gurobi} and search for a numerical solution to the linear program. If no solution exists then we set $K_{\text{upper}} := K$ and if a solution exists then we set $K_{\text{lower}} := K$. Lastly, we set $K := \frac{K_{\text{lower}} + K_{\text{upper}}}{2}$ and repeat the solving process with the new $K$ value until the difference between $K_{\text{upper}}$ and $K_{\text{lower}}$ is less than $10^{-5}$. $K_{\text{upper}}$ is then an approximation of the quantum linear programming bound. Repeating this for various parameters results in the tables of bounds such as those in this section (Tables \ref{table:su2_bounds}, \ref{table:su2_bounds}, \ref{table:su3_bounds}, \ref{table:cl_odd_bounds_main}, \ref{table:cl_even_bounds_main}). We mention that Gurobi does not solve linear systems in exact rational arithmetic. As a result, for certain edge cases where the $W_t(j)$ require high precision (i.e. relatively large values of $d$ or $\dim(\cH)$), the quantum linear programming bounds may not be accurate. For all smaller values, however, the numerical bounds agree with bounds computed in the same manner with GLPK's \cite{GLPK} exact rational arithmetic solver. For these cases, this produces formal proofs of bounds, which we apply later in this section. Solving larger cases using exact rational arithmetic, unfortunately, results in software errors.

\begin{table*}[htb]
    \caption{\label{table:su2_bounds} $\su(2)$ Linear Programming Bounds for $3 \leq n \leq 30$.}
    \begin{tabular}{l||l|l|l|l|l||l||l|l|l|l|l|l}
        $n \backslash d$ & 2 & 3 & 4 & 5 & \phantom{6} & $n \backslash d$ & 2 & 3 & 4 & 5 & 6 & 7 \\
        \hline
        \hline
        3 & 1 & 1 & & \phantom{000} & \phantom{000} & 17 & 8.436 & 4.465 & 2.635 & 1 & & \\
        4 & 2 & 1 & & \phantom{000} & \phantom{000} & 18 & 9 & 4.670 & 2.930 & 2.019 & 1 & \phantom{000} \\
        5 & 2.25 & 1 & & & & 19 & 9.444 & 4.964 & 3.079 & 2.096 & 1 \\
        6 & 3 & 1 & & & & 20 & 10 & 5.199 & 3.259 & 2.174 & 1 & \\
        7 & 3.333 & 2 & 1 & & & 21 & 10.45 & 5.450 & 3.395 & 2.245 & 1 & \\
        8 & 4 & 2.111 & 1 & & & 22 & 11 & 5.713 & 3.559 & 2.332 & 1 & \\
        9 & 4.375 & 2.307 & 1 & & & 23 & 11.455 & 5.937 & 3.731 & 2.444 & 1 & \\
        10 & 5 & 2.558 & 1 & & & 24 & 12 & 6.233 & 3.958 & 2.573 & 1 & \\
        11 & 5.4 & 2.875 & 1 & & & 25 & 12.458 & 6.459 & 4.115 & 2.727 & 1.694 & 1 \\
        12 & 6 & 3.215 & 1 & & & 26 & 13 & 6.708 & 4.300 & 2.861 & 2.041 & 1 \\
        13 & 6.417 & 3.397 & 2.042 & 1 & & 27 & 13.462 & 6.964 & 4.441 & 2.998 & 2.100 & 1 \\
        14 & 7 & 3.672 & 2.180 & 1 & & 28 & 14 & 7.201 & 4.605 & 3.104 & 2.159 & 1 \\
        15 & 7.427 & 3.929 & 2.294 & 1 & & 29 & 14.464 & 7.498 & 4.777 & 3.215 & 2.211 & 1 \\
        16 & 8 & 4.187 & 2.482 & 1 & & 30 & 15 & 7.709 & 4.944 & 3.325 & 2.276 & 1 \\
    \end{tabular}
\end{table*}

These tables serve as a useful tool in exploring possible parameters for quantum codes and potential formal proofs of upper bounds on quantum codes. In the remainder of this section, we discuss our observations of the data presented in the tables, and some relations to known quantum codes.

For the $\su(2)$ bounds in Table \ref{table:su2_bounds}, the self-dual inequalities were applied. In comparison to the original bounds, however, there was no meaningful change in that none of the bounds dropped below the next smallest integer. For $n \leq 100$, the quantum linear programming bound is sharper than the quantum volume bound and matches it only in the case of $n = 7$ and $d = 3$. This suggests that the quantum volume bound may apply to degenerate codes in the context of $\su(2)$ quantum metrics. In the case of $n = 7$ and $d = 3$, the upper bound is $2$, which can be verified using an exact rational solver. A quantum code with these parameters was presented in \cite{Gross} as a representation of the binary tetrahedral group.

In the case of $d = 2$, we have the codes of density 1/3 given in Section \ref{subsec:codes_density_third}, while the quantum linear programming bound is exactly $\frac{n}{2}$ without the self-dual inequalities. This upper bound is sharp enough to prove that the code for $n = 6$ is optimal, but a gap quickly emerges for $n \geq 7$. The quantum linear programming bound, however, can also be used to deduce the exact distance distribution of a code meeting the bound. Using exhaustive search, we may deduce for even $8 \leq n \leq 12$ there can be no code meeting the bound, hence the codes of density 1/3 are also optimal in these cases. We will discuss this in more detail in the next section on analytical distance 2 bounds.

\begin{table*}[htb]
    \caption{\label{table:su3_bounds} $\su(3)$ Symmetric Power Linear Programming Bounds for $1 \leq n \leq 30$.}
    \begin{tabular}{l|l|l|l|l|l|l|l|l|l}
        $n$ & $\dim(\cH) \backslash d$ & 2 & 3 & 4 & 5 & 6 & 7 & 8 & 9 \\
        \hline
        \hline
        1 & 3 & 1 & & & & & & & \\
        2 & 6 & 1 & & & & & & & \\
        3 & 10 & 2 & 1 & & & & & & \\
        4 & 15 & 3.333 & 1 & & & & & & \\
        5 & 21 & 5 & 1.667 & 1 & & & & &\\
        6 & 28 & 7 & 2.5 & 1 & & & & \\
        7 & 36 & 9.333 & 3.5 & 1 & & & & \\
        8 & 45 & 12 & 4.667 & 1.88 & 1 & & & \\
        9 & 55 & 15 & 6 & 2.286 & 1 & & & \\
        10 & 66 & 18.333 & 7.139 & 2.842 & 1 & & & \\
        11 & 78 & 22 & 8.233 & 3.528 & 1.444 & & & \\
        12 & 91 & 26 & 9.546 & 4.333 & 2.095 & 1 & & \\
        13 & 105 & 30.333 & 11.061 & 4.890 & 2.484 & 1 & & \\
        14 & 120 & 35 & 12.769 & 5.611 & 2.978 & 1 & & \\
        15 & 136 & 40 & 14.665 & 6.476 & 3.426 & 1.512 & 1 & \\
        16 & 153 & 45.333 & 16.743 & 7.473 & 3.864 & 2.101 & 1 & \\
        17 & 171 & 51 & 19.0 & 8.407 & 4.403 & 2.385 & 1 & \\
        18 & 190 & 57 & 20.816 & 9.295 & 4.969 & 2.689 & 1 & \\
        19 & 210 & 63.333 & 22.847 & 10.337 & 5.442 & 3.025 & 1.057 & \\
        20 & 231 & 70 & 25.084 & 11.519 & 6.052 & 3.309 & 1.806 & 1 \\
        21 & 253 & 77 & 27.519 & 12.833 & 6.748 & 3.652 & 2.145 & 1 \\
        22 & 276 & 84.333 & 30.148 & 13.892 & 7.285 & 4.001 & 2.379 & 1 \\
        23 & 300 & 92 & 32.965 & 15.109 & 7.965 & 4.393 & 2.599 & 1 \\
        24 & 325 & 100 & 35.968 & 16.474 & 8.755 & 4.818 & 2.837 & 1.033 \\
        25 & 351 & 108.333 & 38.846 & 17.976 & 9.396 & 5.195 & 3.103 & 1.657 & 1 \\
        26 & 378 & 117 & 41.606 & 19.4 & 10.145 & 5.620 & 3.339 & 2.068 & 1 \\
        27 & 406 & 126 & 44.572 & 20.793 & 10.997 & 6.022 & 3.6435 & 2.252 & 1 \\
        28 & 435 & 135.333 & 47.739 & 22.337 & 11.736 & 6.464 & 3.870 & 2.446 & 1 \\
        29 & 465 & 145 & 51.101 & 24.024 & 12.537 & 6.939 & 4.193 & 2.605 & 1 \\
        30 & 496 & 155 & 54.657 & 25.818 & 13.464 & 7.404 & 4.455 & 2.809 & 1.293 \\
    \end{tabular}
\end{table*}

\begin{table*}[htb]
    \caption{\label{table:cl_odd_bounds_main} $\Cl(2n+1)$ Linear Programming Bounds (with Self-Dual Inequalities) for $1 \leq n \leq 30$ and $2 \leq d \leq 8$.}
    \begin{tabular}{l|l|l|l|l|l|l|l}
        $n \backslash d$ & 2 & 3 & 4 & 5 & 6 & 7 & 8 \\
        \hline
        \hline
        1 & 1 & & & & & & \\
        2 & 2 & 1 & & & & & \\
        3 & 4 & 1 & & & & & \\
        4 & 8 & 1 & & & & & \\
        5 & 16 & 1 & & & & & \\
        6 & 32 & 1 & & & & & \\
        7 & 64 & 8 & 1 & & & & \\
        8 & 128 & 11.2 & 8 & 1 & & & \\
        9 & 256 & 16 & 11.2 & 2 & 1 & & \\
        10 & 512 & 26.667 & 16 & 3.2 & 2 & 1 & \\
        11 & $2^{10}$ & 85.333 & 26.667 & 3.667 & 3.2 & 1 & \\
        12 & $2^{11}$ & 134.095 & 85.333 & 4.25 & 3.667 & 1 & \\
        13 & $2^{12}$ & 213.333 & 134.095 & 5.2 & 4.25 & 1 & \\
        14 & $2^{13}$ & 384 & 213.333 & 8 & 5.2 & 1 & \\
        15 & $2^{14}$ & 1024 & 384 & 32 & 8 & 1 & \\
        16 & $2^{15}$ & 1706.667 & 1024 & 60.16 & 32 & 1 & \\
        17 & $2^{16}$ & 2867.2 & 1706.667 & 83.2 & 60.16 & 1 & \\
        18 & $2^{17}$ & 5324.8 & 2867.2 & 136.533 & 83.2 & 1 & \\
        19 & $2^{18}$ & 13107.2 & 5324.8 & 460.8 & 136.533 & 16.457 & 1 \\
        20 & $2^{19}$ & 22639.709 & 13107.2 & 793.6 & 460.8 & 19.692 & 16.457 \\
        21 & $2^{20}$ & 39321.6 & 22639.709 & 1184.914 & 793.6 & 24.571 & 19.692 \\
        22 & $2^{21}$ & 74274.133 & 39321.6 & 2048 & 1184.914 & 37.143 & 24.571 \\
        23 & $2^{22}$ & 174762.667 & 74274.133 & 6085.486 & 2048 & 155.429 & 37.143 \\
        24 & $2^{23}$ & 309195.487 & 174762.667 & 10365.388 & 6085.486 & 301.714 & 155.429 \\
        25 & $2^{24}$ & 549254.095 & 309195.487 & 16266.971 & 10365.388 & 411.429 & 301.714 \\
        26 & $2^{25}$ & 1048576 & 549254.095 & 29023.086 & 16266.971 & 667.429 & 411.429 \\
        27 & $2^{26}$ & 2396745.143 & 1048576 & 79579.429 & 29023.086 & 2340.571 & 667.429 \\
        28 & $2^{27}$ & 4314141.257 & 2396745.143 & 136338.286 & 79579.429 & 4245.154 & 2340.571 \\
        29 & $2^{28}$ & 7789421.714 & 4314141.257 & 221574.095 & 136338.286 & 6144 & 4245.154 \\
        30 & $2^{29}$ & 14979657.143 & 7789421.714 & 403618.540 & 221574.095 & 10396.038 & 6144
    \end{tabular}
\end{table*}

\begin{table*}[htb]
    \caption{\label{table:cl_odd_bounds_cont} $\Cl(2n+1)$ Linear Programming Bounds (with Self-Dual Inequalities) for $20 \leq n \leq 30$ and $9 \leq d \leq 11$.}
    \begin{tabular}{l|l|l|l}
        $n \backslash d$ & 9 & 10 & 11 \\
        \hline
        \hline
        20 & 1 & & \\
        21 & 2.711 & 1 & \\
        22 & 4.203 & 2.711 & 1 \\
        23 & 4.556 & 4.203 & 1 \\
        24 & 5.028 & 4.556 & 1 \\
        25 & 5.911 & 5.0278 & 1 \\
        26 & 8.8 & 5.911 & 1 \\
        27 & 43.2 & 8.8 & 1 \\
        28 & 93.156 & 43.2 & 1 \\
        29 & 118.303 & 93.156 & 1 \\
        30 & 182.303 & 118.303 & 1
    \end{tabular}
\end{table*}

\begin{table*}[htb]
    \caption{\label{table:cl_even_bounds_main} $\Cl(2n)$ Linear Programming Bounds (with Self-Dual Inequalities) for $1 \leq n \leq 30$ and $2 \leq d \leq 8$.}
    \begin{tabular}{l|l|l|l|l|l|l|l}
        $n \backslash d$ & 2 & 3 & 4 & 5 & 6 & 7 & 8 \\
        \hline
        \hline
        1 & 1 & & & & & & \\
        2 & 2 & 1 & & & & & \\
        3 & 4 & 1 & & & & & \\
        4 & 8 & 1.6 & 1 & & &  & \\
        5 & 16 & 1.714 & 1.6 & 1 &  &  & \\
        6 & 32 & 2 & 1.714 & 1.311 & 1 & & \\
        7 & 64 & 8 & 2 & 1.455 & 1.311 & & \\
        8 & 128 & 14.222 & 8 & 1.467 & 1.455 & & \\
        9 & 256 & 20.364 & 14.222 & 2.286 & 1.467 & & \\
        10 & 512 & 32 & 20.364 & 3.842 & 2.2857 & 1.404 & \\
        11 & $2^{10}$ & 85.333 & 32 & 5.373 & 3.8417 & 1.416 & \\
        12 & $2^{11}$ & 157.539 & 85.333 & 6.5714 & 5.3737 & 1.581 & \\
        13 & $2^{12}$ & 250.311 & 157.539 & 8 & 6.5714 & 1.654 & \\
        14 & $2^{13}$ & 426.667 & 250.311 & 11.2 & 8 & 1.7143 & 1.654 \\
        15 & $2^{14}$ & 1024 & 426.6667 & 35.2 & 11.2 & 2.1941 & 1.7143 \\
        16 & $2^{15}$ & 1927.5294 & 1024 & 78.7692 & 35.2 & 2.5262 & 2.1941 \\
        17 & $2^{16}$ & 3233.6842 & 1927.5294 & 111.7091 & 78.7692 & 3.4933 & 2.5262 \\
        18 & $2^{17}$ & 5734.4 & 3233.6842 & 170.6667 & 111.7091 & 4.8272 & 3.4933 \\
        19 & $2^{18}$ & 13107.2 & 5734.4 & 460.8 & 170.6667 & 16.4571 & 4.8272 \\
        20 & $2^{19}$ & 24966.0952 & 13107.2 & 914.8957 & 460.8 & 29.1993 & 16.4571 \\
        21 & $2^{20}$ & 43310.748 & 24966.095 & 1480.862 & 914.896 & 37.517 & 29.199 \\
        22 & $2^{21}$ & 78643.2 & 43310.748 & 2399.086 & 1480.862 & 52.571 & 37.517 \\
        23 & $2^{22}$ & 174762.667 & 78643.2 & 6085.486 & 2399.086 & 155.429 & 52.571 \\
        24 & $2^{23}$ & 335544.32 & 174762.6667 & 11683.824 & 6085.486 & 354.181 & 155.429 \\
        25 & $2^{24}$ & 595487.605 & 335544.32 & 19473.554 & 11683.8242 & 566.8571 & 354.1811 \\
        26 & $2^{25}$ & 1098508.196 & 595487.605 & 32768 & 19473.554 & 853.333 & 566.8571 \\
        27 & $2^{26}$ & 2396745.1429 & 1098508.196 & 79579.429 & 32768 & 2340.5714 & 853.3333 \\
        28 & $2^{27}$ & 4628197.517 & 2396745.143 & 151669.029 & 79579.429 & 4874.394 & 2340.571 \\
        29 & $2^{28}$ & 8349950.820 & 4628197.517 & 257544.983 & 151669.029 & 7960.052 & 4874.394 \\
        30 & $2^{29}$ & 15578843.429 & 8349950.820 & 445228.698 & 257544.982 & 12561.067 & 7960.052 \\
    \end{tabular}
\end{table*}

\begin{table*}[htb]
    \caption{\label{table:cl_even_bounds_cont} $\Cl(2n)$ Linear Programming Bounds (with Self-Dual Inequalities) for $15 \leq n \leq 30$ and $9 \leq d \leq 13$.}
    \begin{tabular}{l|l|l|l|l|l}
        $n \backslash d$ & 9 & 10 & 11 & 12 & 13 \\
        \hline
        \hline
        15 & 1.450 & & & \\
        16 & 1.468 & & & \\
        17 & 1.693 & & & \\
        18 & 1.8122 & & & & \\
        19 & 1.9534 & 1.8122 & & & \\
        20 & 2.2771 & 1.9534 & & \\
        21 & 3.603 & 2.277 & 1.483 & & \\
        22 & 5.566 & 3.603 & 1.500 & & \\
        23 & 7.395 & 5.566 & 1.586 & & \\
        24 & 8.667 & 7.3946 & 1.681 & \\
        25 & 10.8148 & 8.667 & 1.855 & 1.688 \\
        26 & 14.613 & 10.815 & 1.933 & 1.855 & 1.508 \\
        27 & 45.227 & 14.613 & 2.394 & 1.933 & 1.535 \\
        28 & 125.494 & 45.227 & 2.709 & 2.394 & 1.578 \\
        29 & 178.773 & 125.494 & 3.599 & 2.709 & 1.7759 \\
        30 & 253.673 & 178.773 & 6.171 & 3.599 & 1.912 \\
    \end{tabular}
\end{table*}

For the Clifford quantum metrics, the self-dual inequalities were applied to the bounds, which meaningfully sharpened the bound. For example, the original even Clifford bounds do not rule out a nontrivial quantum code for $n = 5$ and $d = 3$ while the self-dual inequalities rule it out. Additionally, for $n = 6$ and $d = 3$, the even Clifford quantum linear programming bound is roughly 4.57, and, in \cite{VF}, a code of dimension 2 was given. The self-dual inequalities sharpen the upper bound of 4.57 to 2. The upper bound of 2 can be formally verified using an exact rational arithmetic linear solver, which formally proves that the code is optimal.

For each of the Clifford quantum metrics, the bound for $d = 2$ is exactly $2^{n-1}$. On the other hand, it is easy to verify that the subspaces $\cH^{(n)}_{\pm} \subseteq \cH^{(n)}$ are codes with these parameters. For the odd Clifford quantum metric, these codes are impure since the distance $1$ error $U_{2n+1}$ acts by $\pm 1$ on each of these codes. If we add the constraint that $A_1 = 0$ (i.e. we want a pure code), then the bound sharpens to strictly below $2^{n-1}$, hence any optimal code must be impure. For both quantum metrics in the case of $d = 3$ and $n = 7,15$, we may observe that the bounds are respectively $8$ and $1024$ and the Clifford Hamming codes from Section \ref{sec:clifford_hamming_codes} meet this bound. More generally, we will prove that the quantum linear programming bound without the self-dual inequalities is $2^{2^s - s - 2}$ for $s \geq 3$ in the next section. In \cite{ZLGL} and \cite{VF}, it was shown that for every $s \geq 1$ and $t$ where $2t + 1 \leq s$, there exists a quantum code based on Reed-Muller codes. These quantum codes have parameters $n = 2^s$, $K = 2^{2^s - B(t, s)}$, and $d = 2^t$ (for both the even and odd Clifford quantum metrics) where \beqn B(t, s) = \sum_{j = 0}^{t} \binom{s}{j}. \eeqn For both even and odd Clifford quantum metrics, these codes are optimal for $t = 2$ and $s = 3,4$.

\section{Analytical Distance 2 Bounds}

As seen in the previous section, the numerical results give a useful guide in searching for codes of certain parameters. On the other hand, identifiable patterns in the numerical bounds suggest derivable formulas for bounds. Even in the classical case, this holds true in that many known elementary coding theory bounds such as the Hamming, Singleton, and Plotkin bounds can be proved via Delsarte's linear programming machinery \cite{Delsarte}. In the classical case, there is also a connection to some graph theory bounds on the independence number such as Hoffman's ratio bound \cite{Haemers} when viewing a finite metric space as a family of graphs and viewing error detecting codes as independent sets. Through Shor and Laflamme's results, a few similar results have been proven for quantum Hamming space. The quantum Singleton bound \cite{KL} and asymptotic bounds have been derived by Ashikhmin and Litsyn \cite{AL}. Rains has also derived bounds for qubit codes of distance 2 \cite{Rains:dist2} and bounds on the distance of general qubit codes \cite{Rains:shadow}. In this section, we give analytic bounds for codes of distance 2 for each of the quantum metric spaces addressed in this chapter.

\begin{lemma}[Distance 2 Bound]\label{lemma:dist2_bound} Let $m = \min_{0 \leq j \leq r} W_1(j)$ and $J_m \subseteq \{0, \ldots, r\}$ the values at which $m$ is attained. If $1 \not\in J_m$ then the distance distribution of any quantum code of distance $2$ has value \beqn K \leq \max\left(\frac{-m\dim(\cH)}{W_1(0) - m}, \frac{1}{W_1(1) - m}\right).\eeqn Moreover, if there is a distance distribution with value $K = \frac{-m\dim(\cH)}{W_1(0) - m}$ then $A_0 = K$, $\sum_{j \in J_m} A_j = \dim(\cH) - K$, and $A_t = 0$ otherwise.
\end{lemma}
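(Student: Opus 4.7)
The plan is to exploit the fact that, for a distance-$2$ code, only the $t=0$ and $t=1$ constraints of Theorem \ref{thm:qlp_system} are active as equalities: the $t=0$ relation is the normalization $\sum_j A_j = \dim(\cH)$ (using $W_0(j) = 1/\dim(\cH)$) and the $t=1$ relation is the detection identity $A_1 = K\sum_j W_1(j) A_j$. I would eliminate the tail $\sum_{j \geq 2} A_j = \dim(\cH) - K - A_1$ from the detection identity and replace each $W_1(j)$ for $j \geq 2$ by the lower bound $m$, which, since $A_j \geq 0$, produces a single inequality linking only $K$ and $A_1$:
\[K\bigl[K(W_1(0) - m) + m\dim(\cH)\bigr] + \bigl(K(W_1(1) - m) - 1\bigr)A_1 \;\leq\; 0, \qquad (\star)\]
whose slack equals $-K\sum_{j \geq 2}(W_1(j) - m)A_j$.

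Setting $f(K) := K(W_1(0)-m) + m\dim(\cH)$ and $g(K) := K(W_1(1)-m) - 1$, the hypothesis $1 \notin J_m$ gives $W_1(1) > m$, so $g$ is strictly increasing and vanishes at $K = 1/(W_1(1)-m)$. Likewise, when $W_1(0) > m$, $f$ vanishes at $K = -m\dim(\cH)/(W_1(0)-m)$. If $K$ strictly exceeds both values, then $Kf(K) > 0$ and $g(K)A_1 \geq 0$, contradicting $(\star)$; this delivers the claimed bound $K \leq \max\bigl(-m\dim(\cH)/(W_1(0)-m),\,1/(W_1(1)-m)\bigr)$. The degenerate case $W_1(0) = m$ forces $m > 0$, making $f(K) \equiv m\dim(\cH) > 0$, so $(\star)$ then forces $g(K) < 0$ and hence $K < 1/(W_1(1)-m)$, meaning the second bound subsumes the first.

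For the structural statement, assume the first bound is attained, $K = -m\dim(\cH)/(W_1(0)-m)$, so $f(K) = 0$. Since this value is the maximum of the two bounds, $g(K) \geq 0$. Rewriting $(\star)$ via its explicit slack gives
\[g(K)\,A_1 \;=\; -K\sum_{j \geq 2}(W_1(j) - m)A_j,\]
where the right side is $\leq 0$ and the left side is $\geq 0$, so both vanish. Vanishing of the right side forces $A_j = 0$ whenever $j \geq 2$ and $j \notin J_m$; vanishing of the left side in the generic case $g(K) > 0$ forces $A_1 = 0$, and since $1 \notin J_m$, the normalization then reads $\sum_{j \in J_m} A_j = \dim(\cH) - K$. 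The main subtlety I anticipate is the coincidence case where the two bounds are equal, so $g(K) = 0$: in that single configuration $A_1$ escapes the one-variable reduction, and one would finish by invoking an additional LP constraint (the $t = 2$ detection inequality from Lemma \ref{lemma:dist_ineq}) or by a limit/perturbation argument to conclude $A_1 = 0$.
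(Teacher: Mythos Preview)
Your approach is essentially identical to the paper's: both form the combination $-m\dim(\cH)B_0 + B_1$ (your inequality $(\star)$ is exactly the paper's key equation multiplied through by $-K$), observe that the right-hand side is nonnegative once $K$ exceeds both thresholds while the left-hand side becomes negative, and then read off the structural consequences when $K$ attains the first bound. The edge case you flag---where the two thresholds coincide so that $g(K)=0$ and $A_1$ is not forced to vanish---is one the paper's own proof also glosses over; your diagnosis that an additional constraint would be needed there is accurate, and in this respect you have been more careful than the original.
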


\begin{proof} Recall that a code of distance $2$ satisfies $K = A_0$, $KB_0 = A_0$, and $KB_1 = A_1$. $B_0$ and $B_1$ are linear expressions in $A_t$ and we may eliminate $A_{j_m}$ from $B_1$ giving \beqn -m \dim(\cH) B_0 + B_1 = \sum_{t = 0}^{r} (-m + W_1(t))A_t = (W_1(0) - m) A_0 + \sum_{t = 1}^{r} (W_1(t) - m) A_t. \eeqn Using the fact that $K = A_0$ and $KB_1 = A_1$ we have \beqn -m\dim(\cH) + \frac{1}{K}A_1 = (W_1(0) - m)K + \sum_{t = 1}^{r} (W_1(t) - m) A_t. \eeqn We may rearrange this equation as \beq\label{eq:farkas_eq} -m\dim(\cH) - (W_1(0) - m)K = ((W_1(1) - m) - \frac{1}{K})A_1 + \sum_{t = 1}^{r} (W_1(t) - m) A_t. \eeq The coefficient for $A_1$ is nonnegative if $K \geq \frac{1}{W_1(1) - m}$ and all other coefficients on the right-hand side are nonnegative. The left-hand side is negative if $\frac{-m\dim(\cH)}{W_1(0) - m} < K$, hence there is no solution to this system if the assumptions on $K$ are satisfied. If $K = \frac{-m\dim(\cH)}{W_1(0) - m}$, then the left-hand side is zero and the right-hand side is zero if and only if each $A_t$ for each $t \not\in J_m$ is zero. Assuming that the right-hand side is zero, $B_0 = \sum_{t = 0}^{r} \frac{1}{\dim(\cH)} A_t$ implies that $\sum_{j \in J_m} A_{j} = \dim(\cH) - K$.
\end{proof}

\begin{lemma}[Distance 2 Bound for Pure Codes]\label{lemma:dist2_bound_pure} Let $m = \min_{0 \leq j \leq r} W_1(j)$. The distance distribution of any pure quantum code of distance $2$ has value \beqn K \leq \frac{-m\dim(\cH)}{W_1(0) - m}. \eeqn
\end{lemma}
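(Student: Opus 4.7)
The plan is to adapt the proof of the preceding lemma (the general distance 2 bound) by exploiting the single additional property that purity provides: for a pure code of distance 2, the component $A_1$ of the quantum distance distribution must vanish. Once this is established, the problematic term $(W_1(1) - m - 1/K) A_1$ appearing in the proof of Lemma \ref{lemma:dist2_bound} drops out, and the condition $1 \notin J_m$ is no longer needed to force a contradiction.

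First I would verify that $A_1 = 0$ for a pure code of distance $2$. Recall $\cV_1 = \cE_1 \cap \cV_0^\perp$ where $\cV_0 = \C I_{\cH}$, so every $E \in \cV_1$ satisfies $\tr(E) = \innerprod{I_{\cH}}{E}_{HS} = 0$. By the last proposition on pure codes, $P$ is orthogonal (with respect to the Hilbert-Schmidt inner product) to every trace-zero operator in $\cE_{d-1} = \cE_1$, so $\tr(E^\ast P) = 0$ for every $E$ in an orthonormal basis $\cB_1$ of $\cV_1$. Substituting into the definition
\[
A_1 = \frac{\dim(\cH)}{\dim(\cC)} \sum_{E \in \cB_1} |\tr(E^\ast P)|^2
\]
then gives $A_1 = 0$.

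Next I would reuse the algebraic identity at the heart of Lemma \ref{lemma:dist2_bound}. Starting from $B_0 = 1$ and $B_1 = A_1/K$ (both consequences of distance $\geq 2$), subtracting $m \dim(\cH) B_0$ from $B_1$ and isolating the $t = 0, 1$ terms yields
\[
-m\dim(\cH) - (W_1(0) - m) K \;=\; \Bigl(W_1(1) - m - \tfrac{1}{K}\Bigr) A_1 + \sum_{t = 2}^{r} (W_1(t) - m) A_t.
\]
By definition of $m$, every coefficient $W_1(t) - m$ on the right-hand side is nonnegative, and after imposing purity ($A_1 = 0$) the entire right-hand side is a nonnegative real number. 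Noting that $W_1(0) = \dim(\cV_1)/\dim(\cH) > 0$ while $m \leq W_1(1) \leq 0$ in the nontrivial regime (the regime where the bound is finite), we have $W_1(0) - m > 0$, and rearranging produces
\[
K \;\leq\; \frac{-m \dim(\cH)}{W_1(0) - m},
\]
which is the desired bound.

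There is no real obstacle here beyond careful bookkeeping: the proof is essentially a specialization of the preceding lemma, with the one genuinely new ingredient being the vanishing of $A_1$ for pure codes. The only subtlety worth flagging is the sign of $m$ (the bound is only informative when $m < 0$, but this is automatic whenever the quantum metric is nontrivial, since $\Phi_1$ is a trace-zero-compatible operator whose eigenvalues $W_1(j)$ cannot all be nonnegative unless $\cV_1$ is trivial).
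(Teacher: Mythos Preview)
Your proof is correct and follows essentially the same approach as the paper: both observe that purity forces $A_1 = 0$, which eliminates the troublesome $(W_1(1) - m - 1/K)A_1$ term in the key identity from Lemma~\ref{lemma:dist2_bound}, rendering the hypothesis $1 \notin J_m$ unnecessary. You are slightly more explicit than the paper in justifying $A_1 = 0$ (via the orthogonality of $P$ to trace-zero errors in $\cE_1$), whereas the paper simply asserts it; otherwise the arguments are the same.
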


\begin{proof} In the proof of the previous lemma, we wanted the coefficient of $A_1$ in equation (\ref{eq:farkas_eq}) to be nonnegative to guarantee that the right-hand side is nonnegative. Since a pure quantum code of distance $2$ has distance distribution where $A_1 = 0$, we no longer require the nonnegative condition nor for $W_1(1)$ to not be a minimal point. This simplifies the constraint on $K$ to \beqn K \leq \frac{-m\dim(\cH)}{W_1(0) - m}. \eeqn
\end{proof}

We compare these lemmas with Hoffman's ratio bound \cite{Haemers} which states that in a $k$-regular graph with $n$ vertices, any independent set cannot have more than $\frac{-m n}{k - m}$ vertices where $m$ is the least eigenvalue of the adjacency matrix of the graph. In the case of pure codes, as in Lemma \ref{lemma:dist2_bound_pure}, the results have a seemingly strong connection. However, in Lemma \ref{lemma:dist2_bound}, a difference is illustrated between the classical and quantum cases through impure codes. Now, through these lemmas and the expressions for various $W_t(j)$ coefficients given earlier in this chapter, we may prove distance $2$ bounds for the quantum metrics spaces.

\begin{lemma}[Quantum Hamming Space Distance 2 Bound]
Let $(\cH, \cE_t)$ be the $q$-ary quantum Hamming metric where $n \geq 2$. If $\cC$ is a quantum code of distance $2$ then $\dim(\cC) \leq q^{n-2}$. Moreover, any distance distribution with value $K = q^{n-2}$ is given by $A_0 = q^{n-2}$, $A_t = 0$ for $1 \leq t \leq n - 1$, and $A_n = q^n - q^{n-2}$.
\end{lemma}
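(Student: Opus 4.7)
The plan is to apply Lemma \ref{lemma:dist2_bound} directly, using the explicit formula for the $W_t(j)$ coefficients of $q$-ary quantum Hamming space. The main work is identifying the minimum of $W_1(j)$ over $0 \le j \le n$ and then evaluating the two expressions inside the $\max$.

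First I would specialize $W_t(j) = \frac{1}{q^n}\sum_{s = 0}^{t} (-1)^{s}(q^2 - 1)^{t-s} \binom{j}{s} \binom{n - j}{t - s}$ to $t = 1$, obtaining
\beqn
W_1(j) = \frac{(q^2 - 1)(n-j) - j}{q^n} = \frac{(q^2-1)n - q^2 j}{q^n}.
\eeqn
Since this is strictly decreasing in $j$, the minimum over $0 \le j \le n$ is attained only at $j = n$, giving $m = W_1(n) = -n/q^n$ and $J_m = \{n\}$. In particular, $1 \notin J_m$ (using $n \ge 2$), so Lemma \ref{lemma:dist2_bound} applies.

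Next I would compute the two candidate upper bounds. From $\dim(\cH) = q^n$ and $W_1(0) = (q^2-1)n/q^n$,
\beqn
\frac{-m\dim(\cH)}{W_1(0) - m} = \frac{n}{(q^2-1)n/q^n + n/q^n} = \frac{q^n}{q^2} = q^{n-2},
\eeqn
and using $W_1(1) = ((q^2-1)(n-1) - 1)/q^n$,
\beqn
W_1(1) - m = \frac{(q^2-1)(n-1) - 1 + n}{q^n} = \frac{(n-1)q^2}{q^n}, \qquad \frac{1}{W_1(1) - m} = \frac{q^{n-2}}{n-1}.
\eeqn
Since $\frac{q^{n-2}}{n-1} \le q^{n-2}$ for $n \ge 2$, the maximum is $q^{n-2}$, yielding $\dim(\cC) = K \le q^{n-2}$.

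For the extremal case $K = q^{n-2}$, the ``moreover'' clause of Lemma \ref{lemma:dist2_bound} applied with $J_m = \{n\}$ gives $A_0 = q^{n-2}$, $A_n = \dim(\cH) - K = q^n - q^{n-2}$, and $A_t = 0$ for $1 \le t \le n-1$. No step here looks to be a real obstacle: once the distance-$2$ bound lemma is in hand, this is essentially a bookkeeping verification that the $q$-ary Krawtchouk weights line up to give the anticipated answer $q^{n-2}$. The only point that could trip one up is checking that $W_1(j)$ really has its minimum only at $j = n$ (so that Lemma \ref{lemma:dist2_bound} is applicable and the characterization of extremal distance distributions goes through); this is immediate from monotonicity, but it is the one inequality worth writing out explicitly.
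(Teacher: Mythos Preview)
Your proof is correct and follows essentially the same approach as the paper: both compute $W_1(j)$ explicitly, observe it is decreasing so the minimum is uniquely at $j=n$, evaluate the two quantities in Lemma~\ref{lemma:dist2_bound}, and conclude. The only cosmetic difference is that you explicitly verify $1 \notin J_m$ using $n \ge 2$, which the paper leaves implicit.
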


\begin{proof}
We have that $W_1(j) = \frac{1}{q^n}((q^2 - 1)n - q^2j)$ which is decreasing in $j$ the minimum is $W_1(n) = \frac{-n}{q^n}$. We also have $W_1(1) = \frac{1}{q^n}((q^2 - 1)n - q^2)$ and $W_1(0) = \frac{1}{q^n}(q^2 - 1)n$, hence \beqn \frac{1}{W_1(1) - m} = \frac{q^n}{(q^2 - 1)n - q^2 - (-n)} = \frac{q^{n-2}}{n - 1} \eeqn and \beqn \frac{-m\dim(\cH)}{W_1(0) - m} = \frac{(-n)q^n}{(q^2 - 1)n - (-n)} = q^{n-2}. \eeqn By Lemma \ref{lemma:dist2_bound}, it follows that $\dim(\cC) \leq q^{n-2}$ for any quantum code $\cC$ of distance $2$. Since the minimum of $W_1(j)$ is attained only at $j = n$, the conditions on the distance distributions also follow.
\end{proof}

When $q = 2$, this result partially reduces to a bound due to Rains \cite{Rains:dist2}. If $n$ is odd, then Rains' bound is slightly sharper from an application of the self-dual inequalities.

\begin{lemma}[$\su(2)$ Distance 2 Bound]\label{lemma:su_2_dist2_bound} Let $n \geq 2$. If $\cC$ is a quantum code of distance $2$ of the length $n$ $\su(2)$ quantum metric then $\dim(\cC) \leq \frac{n}{2}$. Moreover, the distance distribution with value $K = \frac{n}{2}$ is unique and given by $A_0 = \frac{n}{2}$, $A_t = 0$ for $1 \leq t \leq n - 1$, and $A_n = \frac{n}{2} + 1$.
\end{lemma}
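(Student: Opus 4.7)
The plan is to apply Lemma~\ref{lemma:dist2_bound} to the $\su(2)$ quantum metric of diameter $n$. The key input needed is an explicit formula for the coefficients $W_1(j)$, together with the identification of where their minimum is attained.

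First, I would compute $W_1(j)$ directly rather than invoking the general Racah-polynomial expression listed in Section~\ref{sec:wtj_functions}. The operators $E, F, H$ give an orthogonal basis of $\cV_1$ with respect to the Hilbert--Schmidt inner product (the cross terms $\tr(FH)$, $\tr(F^2)$ vanish because $FH$ and $F^2$ shift weight spaces). A routine sum over the weights gives $\tr(E^{\ast}E) = \tr(F^{\ast}F) = \tfrac{n(n+1)(n+2)}{6}$ and $\tr(H^{\ast}H) = \tfrac{n(n+1)(n+2)}{3}$. After orthonormalizing, $\Phi_1$ becomes
\beqn
\Phi_1(X) = \tfrac{6}{n(n+1)(n+2)}\bigl(EXF + FXE + \tfrac{1}{2}HXH\bigr).
\eeqn
Since $\Phi_1$ acts as the scalar $W_1(j)$ on the irreducible component $\cV_j$, and $E^j$ is a highest-weight vector of $\cV_j$, I would evaluate $\Phi_1(E^j)$ on the lowest-weight vector $\ket{-n}$ of $\cH$. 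Using the $U(\slc(2))$ identity $FE^{j+1} = E^{j+1}F - (j+1)E^jH - j(j+1)E^j$ (easy induction on $j$) and $H\ket{-n} = -n\ket{-n}$, this produces the closed form
\beqn
W_1(j) = \frac{3\bigl(n(n+2) - 2j(j+1)\bigr)}{n(n+1)(n+2)}.
\eeqn

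With this in hand, since $j(j+1)$ is strictly increasing in $j$ on $\{0,1,\ldots,n\}$, the function $W_1(j)$ is strictly decreasing, so its minimum $m$ is attained uniquely at $j=n$, giving $J_m = \{n\}$ and $m = -\tfrac{3n}{(n+1)(n+2)}$. In particular $1 \notin J_m$ (for $n \geq 2$), so Lemma~\ref{lemma:dist2_bound} applies. A direct calculation then yields
\beqn
\frac{-m \dim(\cH)}{W_1(0) - m} = \frac{n}{2}, \qquad \frac{1}{W_1(1) - m} = \frac{n(n+1)}{6(n-1)},
\eeqn
and one checks $\frac{n(n+1)}{6(n-1)} \leq \frac{n}{2}$ for all $n \geq 2$ (equality only at $n=2$). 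Thus $\dim(\cC) \leq n/2$, and the ``moreover'' clause of Lemma~\ref{lemma:dist2_bound} forces any extremal distance distribution to have $A_0 = n/2$, $A_n = (n+1) - n/2 = n/2 + 1$, and $A_t = 0$ otherwise, which is precisely the claim.

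The steps are all routine; the one place to be careful is the direct computation of $W_1(j)$. The alternative of plugging $t=1$ into the general Racah-type formula from Section~\ref{sec:wtj_functions} and simplifying is doable but somewhat unpleasant, whereas the highest-weight-vector approach above is clean and yields the quadratic dependence on $j(j+1)$ that makes monotonicity transparent. Once $W_1(j)$ is in hand, the rest of the argument is a short bookkeeping exercise in Lemma~\ref{lemma:dist2_bound}.
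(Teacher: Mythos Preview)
Your proof is correct and follows the same overall strategy as the paper: compute $W_1(j)$ explicitly, observe that it is strictly decreasing in $j$ so the minimum is at $j=n$, and then plug into Lemma~\ref{lemma:dist2_bound}. The only real difference is in how $W_1(j)$ is obtained: the paper simply quotes the simplified form $W_1(j)=\tfrac{3}{n(n+1)(n+2)}\bigl(n(n+2)-2j-2j^2\bigr)$ from the general Racah formula in Section~\ref{sec:wtj_functions}, whereas you rederive it directly by acting with the normalized Casimir-type operator on the highest-weight vector $E^j$. Your route is more self-contained and makes the quadratic dependence on $j(j+1)$ transparent without any hypergeometric simplification; the paper's route is shorter but relies on the earlier machinery. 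You are also slightly more careful than the paper at $n=2$, where $\tfrac{n(n+1)}{6(n-1)}=\tfrac{n}{2}$ with equality rather than strict inequality --- this does not affect the conclusion, but your statement is the accurate one.
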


\begin{proof} Using the expression for the $W_t(j)$ coefficient, we have that \beqn W_1(j) = \frac{3}{n(n+1)(n+2)}\left(n(n+2) - 2j - 2j^2\right) \eeqn for $0 \leq j \leq n$. This function is decreasing in $j$, hence the minimum is $W_1(n) = \frac{-3n}{(n+1)(n+2)}$ at $j = n$. We have that $\frac{-W_1(n) \dim(\cH)}{W_1(0) - W_1(n)} = \frac{n}{2}$, $\frac{1}{W_1(1) - W_1(n)} = \frac{n(n+1)}{6(n-1)} < \frac{n}{2}$, and $\dim(\cH) - \frac{n}{2} = n + 1 - \frac{n}{2} = \frac{n}{2} + 1$
hence, by Lemma \ref{lemma:dist2_bound}, the result follows.
\end{proof}

Since $A_t = \sum_{E \in \cB_t} |\tr(EP)|^2$, the second part of the lemma implies that the orthogonal projection of any such code must be of the form \beqn P = \frac{n}{2(n+1)}I_{\cH} + X \eeqn where $X \in \cV_n$. For $8 \leq n \leq 20$, we may prove that such a projection does not exist, hence the bound is not sharp in general. As a consequence, the quantum codes of density 1/3 are optimal for $n = 8, 10, 12$. This bound is stated as the following lemma.

\begin{lemma} Let $8 \leq n \leq 20$ where $n$ is even. If $\cC$ is a quantum code of distance $2$ of the length $n$ $\su(2)$ quantum metric then $\dim(\cC) \leq \frac{n}{2} - 1$.
\end{lemma}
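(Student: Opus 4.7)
The plan is to combine the extremal constraint in Lemma~\ref{lemma:su_2_dist2_bound} with the algebraic rigidity of $\cV_n$ inside $\cL(\cH)$. If a quantum code $\cC$ of distance $2$ achieves $\dim(\cC) = n/2$, Lemma~\ref{lemma:su_2_dist2_bound} forces its orthogonal projection to have the form $P = \frac{n}{2(n+1)} I_{\cH} + X$ with $X \in \cV_n$ self-adjoint. The strategy is to show that for each even $n \in \{8,10,12,14,16,18,20\}$ no such rank-$(n/2)$ projection exists, which gives the claimed improvement.

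First I would translate the projection identity $P^2 = P$ into an equation purely in terms of $X$. Expanding and comparing components gives
\begin{equation*}
X^2 \;=\; \frac{1}{n+1}\,X \;+\; \frac{n(n+2)}{4(n+1)^2}\,I_{\cH}.
\end{equation*}
In particular $X^2 \in \cV_0 \oplus \cV_n$, so the component of $X^2$ in each intermediate irreducible $\cV_j$ with $1 \leq j \leq n-1$ must vanish. By Clebsch--Gordan, the product $\cV_n \otimes \cV_n$ decomposes as $\bigoplus_{j=0}^{2n} \cV_j$, and only the $\cV_j$ summands with $j \leq n$ can contribute to $X^2$ since $\cL(\cH) = \bigoplus_{j=0}^{n} \cV_j$. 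Thus vanishing on $\cV_1, \ldots, \cV_{n-1}$ imposes $\sum_{j=1}^{n-1} \dim(\cV_j) = n^2 - 1$ real quadratic equations in the $2n+1$ real parameters of $X$.

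Next I would exploit the weight decomposition to write these equations explicitly. Each $\ad_H$-weight space of $\cV_n$ is one-dimensional, so one can fix a basis $\{Y_{2m}\}_{m=-n}^{n}$ of $\cV_n$ (for instance normalized scalar multiples of $\ad_F^{\,n-m}(E^n)$) and write $X = \sum_m x_m Y_{2m}$, with $Y_{2m}^\ast$ proportional to $Y_{-2m}$ so that self-adjointness yields $\overline{x_m} = \pm x_{-m}$. The structure constants expressing $Y_{2m} Y_{2m'}$ as a linear combination of the weight basis vectors of $\cV_0, \cV_1, \ldots, \cV_n$ are rescaled Wigner coefficients that are explicit rational-valued quantities depending on $n$. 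Substituting yields, for each $1 \leq j \leq n-1$ and each weight $2m$ with $|m| \leq j$, the quadratic equation
\begin{equation*}
\sum_{m_1 + m_2 = m} C^{(n,j)}_{m_1,m_2}\, x_{m_1} x_{m_2} \;=\; 0,
\end{equation*}
while the $\cV_n$-component of $X^2$ is forced to equal $\tfrac{1}{n+1} X$ and the $\cV_0$-component is forced to equal $\tfrac{n(n+2)}{4(n+1)^2} I_{\cH}$ (the latter being equivalent to the spectral normalization $\tr(X^2) = \tfrac{n(n+2)}{4(n+1)}$).

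Finally, for each even $8 \leq n \leq 20$ I would verify via a computer-algebra computation that this polynomial system has no real solution. The anticipated obstacle is computational: the system is nonlinear, the number of variables grows linearly with $n$, and naive Gröbner basis methods become expensive near $n = 20$. To keep the computations tractable I would use the residual $\mathrm{U}(1)$-symmetry $X \mapsto e^{i\theta \ad_H}(X)$ (which sends $x_m \mapsto e^{2im\theta} x_m$) together with the weight-grading to block-diagonalize the system, reducing to checking a handful of weight sectors independently, and supplement the quadratic constraints with the higher spectral identities $\tr(X^k) = (n/2+1)(-\tfrac{n}{2(n+1)})^k + (n/2)(\tfrac{n+2}{2(n+1)})^k$ forced by the required spectrum of $X$, which often contradict the weight constraints directly and avoid a full Gröbner basis computation.
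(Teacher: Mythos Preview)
Your approach is correct and follows the same overall strategy as the paper: both start from the extremal constraint in Lemma~\ref{lemma:su_2_dist2_bound}, which forces $P = \tfrac{n}{2(n+1)} I_{\cH} + X$ with $X \in \cV_n$, and both finish with a computer-algebra certificate that no such projection exists for even $8 \le n \le 20$. The difference is in how the polynomial system is set up. The paper takes the most direct route: it fixes an integer-entry basis $E_k = \ad_F^{\,k-1}(E^n)$ of $\cV_n$ (using the monomial realization of $\cH$ so that $E$ and $F$ are integer matrices), writes $P$ in real coordinates $x_k + i y_k$, expands $P^2 - P = 0$ entrywise over $\Q$, and asks SAGE whether the Gr\"obner basis of the resulting ideal contains $1$. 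Your version first isolates the cleaner matrix identity $X^2 = \tfrac{1}{n+1} X + \tfrac{n(n+2)}{4(n+1)^2} I_{\cH}$, then decomposes it along the $\cV_j$-summands and the $\ad_H$-weight grading, and proposes supplementing with the trace identities and the residual $\U(1)$-symmetry to shrink the system before handing it to a solver. Your route buys conceptual clarity and potentially faster computations near $n = 20$; the paper's route buys simplicity of implementation and a clean infeasibility certificate (the ideal is the unit ideal over $\Q$, hence there is no solution even over $\C$). Either is a valid proof once the machine verification is actually carried out.
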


\begin{proof}
When $n$ even, the second part of Lemma \ref{lemma:su_2_dist2_bound} implies that a code that meets the bound must have a projection of the form \beq\label{eq:projection} P = \frac{n}{2(n + 1)} I_{\cH} + \sum_{k = 1}^{2n+1} (x_k + i y_k) E_k \eeq where $\{E_1,\ldots,E_{2n+1}\}$ is a basis of $\cV_n$ and $x_k,y_k \in \R$. By definition, $P$ should satisfy $P^2 - P = 0$, which can be described as a system of polynomial equations in the real variables $x_k$ and $y_k$ for $1 \leq k \leq 2n+1$. This system of equations may be inputted SAGE, where an exact computation may or may not provide a certificate of no solution.

To set up the system of equations for an exact computation, we would like the matrices $E_1,\ldots,E_{2n+1}$ to have integer (or at least rational) entries. We can find a basis of $\cH$ that guarantees this by viewing $\cH$ as the space of degree $n$ homogeneous complex polynomials in variables $x$ and $y$. $\cH$ has a basis consisting of monomials $x^{k} y^{n-k}$ for $0 \leq k \leq n$. The $E$ and $F$ operators act similarly to derivative operations on the polynomials, i.e. for a polynomial $p(x,y)$, \beqn \begin{gathered}
E(p(x,y)) = x \frac{\partial}{\partial y} p(x,y) \\
F(p(x,y)) = y \frac{\partial}{\partial x} p(x,y).
\end{gathered}\eeqn With respect to the monomial basis, the matrices of $E$ and $F$ have integer entries. For $1 \leq k \leq 2n+1$, let $E_k = (\ad_F)^{k-1}(E^{n})$ so $\{E_1,\ldots,E_{2n+1}\}$ forms a basis of $\cV_n$ and each $A_k$ has integer entries since $E$ and $F$ are integer matrices. Using these $A_k$'s for the basis in equation (\ref{eq:projection}), the real and imaginary parts of each matrix entry in the equation $P^2 - P = 0$ forms a system of real variable polynomial equations with integer coefficients. We compute a Gr\"{o}bner basis \cite{Buch} of the ideal generated by the polynomials from $P^2 - P$ over the polynomial ring $\Q[x_1,\ldots,x_{2n+1},y_1,\ldots,y_{2n+1}]$, and if this Gr\"{o}bner basis contains the unit polynomial $f = 1$ then $P^2 - P = 0$ has no solution. Using SAGE, it can be verified that this is the case when $8 \leq n \leq 20$ (note that $n$ must be even).
\end{proof}

\begin{lemma}[$\su(q)$ Symmetric Distance 2 Bound] Let $q \geq 2$ and $n \geq 2$. If $\cC \subseteq \cH$ is a quantum code of distance $2$ of the $n$th $\su(q)$ symmetric power quantum metric then \beqn \dim(\cC) \leq \frac{1}{q}\binom{q + n - 1}{n}.\eeqn
\end{lemma}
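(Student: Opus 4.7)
The plan is to apply Lemma \ref{lemma:dist2_bound} with the explicit $W_1(j)$ from Section \ref{sec:wtj_functions}. The first step is to simplify $W_1(j)$: the inner sum has only two terms (corresponding to $s = 0$ and $s = 1$), and combining them should yield a closed form of the shape
\beqn
W_1(j) = \frac{(q+1)(q-1)!(n-1)!}{(n+q)!}\bigl[(q-1)n(n+q) - qj(j+q-1)\bigr].
\eeqn
The bracketed polynomial is a quadratic in $j$ with leading coefficient $-q$ and vertex at $j = -(q-1)/2 < 0$, so it is strictly decreasing on $\{0,1,\ldots,n\}$. Therefore the minimum $m$ of $W_1(j)$ is attained uniquely at $j = n$, and the hypothesis $1 \notin J_m$ of Lemma \ref{lemma:dist2_bound} is equivalent to $n \geq 2$, which is assumed.

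The second step is to evaluate the two candidate values in Lemma \ref{lemma:dist2_bound}. Using the closed form above, $W_1(0) = (q^2-1)/\dim(\cH)$ (consistent with Lemma \ref{lemma:wtj_properties}(c), since $\dim(\cV_1) = q^2-1$), and $m = W_1(n) = -(q+1)n/[(n+q)\dim(\cH)]$. Substituting into the first candidate and simplifying, one arrives at
\beqn
\frac{-m\,\dim(\cH)}{W_1(0) - m} = \frac{n\,\dim(\cH)}{q(n+q-1)} = \frac{1}{q}\binom{n+q-2}{n-1}.
\eeqn
For the second candidate, the identity $W_1(1) - W_1(n) = \frac{(q+1)(q-1)!(n-1)!}{(n+q)!}\,q(n-1)(n+q)$ makes it routine to verify that $\frac{1}{W_1(1) - m}$ is dominated by the first candidate precisely when $n \geq 2$. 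Finally, since $n \leq n + q - 1$,
\beqn
\frac{n\,\dim(\cH)}{q(n+q-1)} \leq \frac{\dim(\cH)}{q} = \frac{1}{q}\binom{q+n-1}{n},
\eeqn
which is the stated bound.

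The main technical obstacle is the initial reduction of $W_1(j)$ from its alternating-sum form into the clean quadratic polynomial in $j$; once that simplification is in place, the monotonicity analysis, the evaluation of $W_1(0)$ and $W_1(n)$, and the ensuing algebraic comparisons are all elementary. A minor but worth-mentioning step is the domination check for the $j = 1$ branch of Lemma \ref{lemma:dist2_bound}, which reduces to the inequality $n+q-1 \leq (n-1)(q+1)$ and is equivalent to $n \geq 2$.
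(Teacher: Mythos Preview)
The paper states this lemma without proof, so there is no argument to compare against directly. Your approach---simplify $W_1(j)$ to a quadratic in $j$, verify it is strictly decreasing so that the unique minimum is at $j=n$, and then apply Lemma \ref{lemma:dist2_bound}---is exactly the template the paper carries out for the neighboring cases (quantum Hamming space, $\su(2)$, $\so(2n+1)$), and each step of your computation checks out.

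One point worth noting: the intermediate bound you obtain,
\[
\frac{-m\,\dim(\cH)}{W_1(0)-m}=\frac{n\,\dim(\cH)}{q(n+q-1)}=\frac{1}{q}\binom{n+q-2}{n-1},
\]
is genuinely sharper than the stated bound $\frac{1}{q}\binom{q+n-1}{n}=\dim(\cH)/q$. It specializes to $n/2$ when $q=2$ (matching Lemma \ref{lemma:su_2_dist2_bound}) and agrees on the nose with the $d=2$ column of Table \ref{table:su3_bounds}. Your final relaxation step is therefore unnecessary for correctness, and in fact throws away the actual linear-programming value; you have proved more than the lemma asserts.
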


\begin{lemma}[$\su(n)$ Exterior Distance 2 Bound] Let $n \geq 4$ and $2 \leq w \leq n - 2$. If $\cC \subseteq \cH$ is a quantum code of distance $2$ of the $w$th $\su(n)$ exterior power quantum metric then \beqn \dim(\cC) \leq \frac{1}{n}\binom{n}{w - 1} \eeqn if $2 \leq w \leq \frac{n}{2}$ and \beqn \dim(\cC) \leq \frac{1}{n}\binom{n}{w + 1} \eeqn if $\frac{n}{2} \leq w \leq n - 2$.
\end{lemma}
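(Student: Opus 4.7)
The plan is to apply Lemma \ref{lemma:dist2_bound} with an explicit closed form for $W_1(j)$. Throughout, let $r = \min(w, n-w)$ denote the diameter. The crux is to identify the minimum of $W_1(j)$ over $0 \leq j \leq r$ and verify that it is not attained at $j = 1$.

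First, I would specialize the general formula for $W_t(j)$ from Section \ref{sec:wtj_functions} at $t = 1$. The inner sum has at most two terms ($s = 0$ and $s = 1$), and careful factorial bookkeeping collapses them to the compact form
\beqn
W_1(j) = \frac{(r-1)!(n-r-1)!}{n(n-2)!}\bigl[n(r-j)(n-r+1-j) - r^2\bigr]
\eeqn
valid for all $0 \leq j \leq r$. The prefactor is positive, and I would sanity check the formula against Lemma \ref{lemma:wtj_properties}(c), namely $W_1(0) = (n^2-1)/\binom{n}{r}$, using the identity $r(n+1)(n-r) = r(n-r+1)n - r^2$ that arises after simplification.

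Second, the bracketed quadratic in $j$ has positive leading coefficient $n$ and is minimized on $\R$ at $j^\ast = (n+1)/2$. Since $r = \min(w, n-w) \leq n/2 < j^\ast$, the function $W_1(j)$ is strictly decreasing on $\{0, 1, \ldots, r\}$. In particular the minimum is $m = W_1(r) = -\frac{(r-1)!(n-r-1)!}{n(n-2)!}\,r^2$, and $m < W_1(1)$ whenever $r \geq 2$, so the hypothesis $1 \notin J_m$ of Lemma \ref{lemma:dist2_bound} is satisfied.

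Third, I would evaluate the two expressions inside the $\max$ on the right of Lemma \ref{lemma:dist2_bound}. Using the identity $r(n+1)(n-r) + r^2 = rn(n-r+1)$, a direct computation yields
\beqn
\frac{-m\,\dim(\cH)}{W_1(0) - m} = \frac{r\binom{n}{r}}{n(n-r+1)} = \frac{1}{n}\binom{n}{r-1},
\eeqn
and a parallel computation gives $W_1(1) - m$ proportional to $n(n-r)(r-1)$, after which the inequality $\frac{1}{W_1(1)-m} \leq \frac{1}{n}\binom{n}{r-1}$ reduces to the condition $r \geq 2$. This establishes $\dim(\cC) \leq \frac{1}{n}\binom{n}{r-1}$. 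Finally, substitute $r = w$ in the range $2 \leq w \leq n/2$ to obtain $\frac{1}{n}\binom{n}{w-1}$, and $r = n-w$ in the range $n/2 \leq w \leq n-2$ to obtain $\frac{1}{n}\binom{n}{n-w-1} = \frac{1}{n}\binom{n}{w+1}$, which matches the two cases of the lemma.

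The main obstacle is the initial simplification of $W_1(j)$ from the defining alternating sum: each manipulation is elementary, but the cancellation of factorials is error-prone, so a numerical check in a small case such as $n = 6, w = 3$ (where the bound should be $5/2$) would be worth carrying out as a sanity check before writing up the general calculation.
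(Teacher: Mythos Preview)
Your proposal is correct and follows exactly the template the paper uses for all of its distance~$2$ bounds: specialize the closed form for $W_1(j)$, locate its minimum, and feed the result into Lemma~\ref{lemma:dist2_bound}. The paper in fact does not write out a proof of this particular lemma (it is stated without proof alongside the $\su(q)$ symmetric and $\Cl(2n)$ cases), but your computation matches the pattern of the cases that are proved, such as Lemmas for $\su(2)$ and $\so(2n+1)$, and your explicit formula $W_1(j) = \frac{(r-1)!(n-r-1)!}{n(n-2)!}\bigl[n(r-j)(n-r+1-j)-r^2\bigr]$ together with the monotonicity argument and the final binomial identity $\frac{r}{n(n-r+1)}\binom{n}{r} = \frac{1}{n}\binom{n}{r-1}$ are all correct.
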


\begin{lemma}[$\Cl(2n)$ Distance 2 Bound] Let $n \geq 1$. If $\cC$ is an even Clifford code of distance $2$ of the $\Cl(2n)$ quantum metric then $\dim(\cC) \leq 2^{n-1}$.
\end{lemma}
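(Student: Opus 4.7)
The plan is to apply Lemma 3.5.1 (the general distance 2 bound) directly, after computing the relevant values of $W_1(j)$ from the explicit formula for the even Clifford $W_t(j)$ coefficients given in Section 3.3.

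First I would compute $W_1(j)$. Using
\[ W_t(j) = \frac{1}{2^n}\sum_{s=0}^{t}(-1)^s\binom{j}{s}\binom{2n-j}{t-s}, \]
specializing to $t=1$ collapses the sum to two terms and yields
\[ W_1(j) \;=\; \frac{1}{2^n}\bigl[(2n-j) - j\bigr] \;=\; \frac{2n-2j}{2^n}. \]
Thus $W_1(j)$ is a strictly decreasing linear function of $j$ on $\{0,1,\dots,2n\}$, so its minimum value is
\[ m \;=\; W_1(2n) \;=\; \frac{-2n}{2^n}, \]
attained only at $j=2n$. In particular $J_m=\{2n\}$, so $1\notin J_m$ and the hypothesis of Lemma 3.5.1 is satisfied.

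Next I would plug these values into the two expressions appearing in Lemma 3.5.1. Using $W_1(0)=\frac{2n}{2^n}$ and $\dim(\cH^{(n)})=2^n$,
\[ \frac{-m\,\dim(\cH^{(n)})}{W_1(0)-m} \;=\; \frac{\frac{2n}{2^n}\cdot 2^n}{\frac{2n}{2^n}+\frac{2n}{2^n}} \;=\; \frac{2n}{\frac{4n}{2^n}} \;=\; 2^{n-1}, \]
while, using $W_1(1)=\frac{2n-2}{2^n}$,
\[ \frac{1}{W_1(1)-m} \;=\; \frac{1}{\frac{2n-2}{2^n}+\frac{2n}{2^n}} \;=\; \frac{2^n}{4n-2} \;=\; \frac{2^{n-1}}{2n-1}. \]
For $n\geq 1$ the latter quantity is at most $2^{n-1}$, so the maximum of the two is $2^{n-1}$, and Lemma 3.5.1 gives $\dim(\cC)\leq 2^{n-1}$.

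There is no real obstacle here: once the formula for $W_1(j)$ is identified as linear, everything is a direct calculation. The only thing to double-check is the degenerate case $n=1$, where $\dim(\cH^{(1)})=2$ and the bound reads $\dim(\cC)\leq 1$, which is consistent with $\cV_1^{\Cl(2)}$ already generating $\cL(\cH^{(1)})$ (so only trivial codes of distance $2$ exist). Thus the proof is a short, self-contained application of the general distance 2 bound to the explicitly computed $W_1$.
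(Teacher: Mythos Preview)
Your overall approach---applying Lemma 3.5.1 after computing $W_1(j)$---is the intended one and leads to the correct bound. However, there is a computational slip: the $\Cl(m)$ formula from Section 3.3 carries a factor $(-1)^{tj}$, so for the even Clifford metric
\[
W_1(j) \;=\; \frac{(-1)^{j}}{2^n}\bigl[(2n-j)-j\bigr] \;=\; \frac{(-1)^{j}(2n-2j)}{2^n},
\]
which is \emph{not} linear in $j$. In particular $W_1(1)=\frac{2-2n}{2^n}$, not $\frac{2n-2}{2^n}$ as you wrote. The minimum is still $m=W_1(2n)=-2n/2^n$, attained uniquely at $j=2n$, so $1\notin J_m$ and Lemma 3.5.1 applies; but with the correct value of $W_1(1)$ the second quantity becomes
\[
\frac{1}{W_1(1)-m} \;=\; \frac{1}{\frac{2-2n}{2^n}+\frac{2n}{2^n}} \;=\; \frac{2^n}{2} \;=\; 2^{n-1},
\]
not $\frac{2^{n-1}}{2n-1}$. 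The two terms in the maximum are therefore both equal to $2^{n-1}$, and the bound $\dim(\cC)\le 2^{n-1}$ follows. Your conclusion survives because the first term $\frac{-m\dim(\cH)}{W_1(0)-m}=2^{n-1}$ was computed correctly and already equals the maximum, but the argument as written misstates $W_1(j)$ and $W_1(1)$.
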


For the $\Cl(2n+1)$ quantum metrics, the minimum of $W_1(j)$ appears to always be at $W_1(1)$, hence Lemma \ref{lemma:dist2_bound} does not apply. Still, the tables of bounds suggests that $2^{n-1}$ is the quantum linear programming bound. We may prove this bound indirectly by using the the $\Cl(2n)$ distance 2 bound.

\begin{corollary}[$\Cl(2n+1)$ Distance 2 Bound] Let $n \geq 1$. If $\cC$ is an odd Clifford code of distance $2$ then $\dim(\cC) \leq 2^{n-1}$.
\end{corollary}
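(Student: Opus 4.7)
The plan is to reduce the odd case to the even case, exploiting the fact that the $\Cl(2n+1)$ quantum metric refines the $\Cl(2n)$ quantum metric at the level of distance-$1$ errors. More precisely, the distance-$1$ subspace $\cE_1^{\Cl(2n+1)}$ is generated by $\{I_{\cH^{(n)}},U_1,\ldots,U_{2n+1}\}$, whereas $\cE_1^{\Cl(2n)}$ is generated by $\{I_{\cH^{(n)}},U_1,\ldots,U_{2n}\}$. Both metrics act on the same ambient Hilbert space $\cH^{(n)}$, and clearly $\cE_1^{\Cl(2n)} \subseteq \cE_1^{\Cl(2n+1)}$.

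First I would unpack the definition of minimum distance. Assume $\cC \subseteq \cH^{(n)}$ has $\Cl(2n+1)$ distance at least $2$, meaning that $\cC$ detects every error in $\cE_1^{\Cl(2n+1)}$. Since the detection property is closed under taking subspaces of error spaces, $\cC$ in particular detects every error in $\cE_1^{\Cl(2n)}$. Therefore $\cC$, viewed instead as a quantum code of the $\Cl(2n)$ quantum metric space $(\cH^{(n)}, \cE_t^{\Cl(2n)})$, has minimum distance at least $2$.

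Now I would invoke the preceding $\Cl(2n)$ distance $2$ lemma directly: any distance-$2$ code of the even Clifford quantum metric on $\cH^{(n)}$ satisfies $\dim(\cC) \leq 2^{n-1}$. Combining this with the reduction above gives $\dim(\cC) \leq 2^{n-1}$ for any odd Clifford distance-$2$ code, which is the desired bound.

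There is essentially no obstacle here: the argument is a one-line containment of error subspaces followed by an appeal to the previous result. The only mild subtlety worth remarking on is that, while the quantum linear programming bound for $(\cH^{(n)}, \cE_t^{\Cl(2n+1)})$ cannot be read off directly from Lemma \ref{lemma:dist2_bound} (the minimum of $W_1(j)$ is attained at $j=1$, which is the excluded case of the lemma), the refinement relation between the two Clifford metrics lets us bypass that obstruction and transfer the bound from the even case for free.
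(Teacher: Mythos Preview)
Your argument is correct and is exactly the approach the paper takes: an odd Clifford distance-$2$ code is automatically an even Clifford distance-$2$ code because $\cE_1^{\Cl(2n)} \subseteq \cE_1^{\Cl(2n+1)}$, and the even bound then applies. Your remark about why Lemma~\ref{lemma:dist2_bound} does not apply directly to the odd metric is also accurate and matches the paper's commentary.
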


\begin{proof}
Every odd Clifford quantum code of distance $2$ is an even Clifford quantum code of distance $2$, hence the result follows by the previous lemma.
\end{proof}

On the other hand, we may still get a pure distance 2 bound by Lemma \ref{lemma:dist2_bound_pure} which has some implications.

\begin{lemma}[$\Cl(2n+1)$ Pure Distance 2 Bound] Let $n \geq 1$. If $\cC$ is a pure odd Clifford code of distance $2$ then $\dim(\cC) \leq \frac{2n - 1}{4n}2^{n-2}$.
\end{lemma}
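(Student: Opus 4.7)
The plan is to apply Lemma \ref{lemma:dist2_bound_pure} directly to the $\Cl(2n+1)$ quantum metric, since its hypothesis (a pure quantum code of distance $2$) exactly matches the setting here and the closed form for the $W_t(j)$ coefficients has already been written down. In spirit, this proof is the $\Cl(2n+1)$ analog of the $\Cl(2n)$ distance $2$ bound, except that the minimum of $W_1(j)$ is attained at a value of $j$ that makes Lemma \ref{lemma:dist2_bound} unusable (so only the weaker, pure version applies), which is why we also needed the earlier indirect corollary for the impure case.

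First I would specialize the Clifford $W_t(j)$ formula at $t=1$ with $m = 2n+1$. The inner sum collapses to
\beqn
W_1(j) \;=\; \frac{(-1)^j}{2^n}\Bigl[\binom{2n+1-j}{1} - \binom{j}{1}\Bigr] \;=\; \frac{(-1)^j(2n+1-2j)}{2^n},
\eeqn
valid for $0 \leq j \leq n$. In particular $W_1(0) = (2n+1)/2^n$, which is a useful sanity check against property (c) of Lemma \ref{lemma:wtj_properties} since $\dim(\cV_1) = 2n+1$ and $\dim(\cH^{(n)}) = 2^n$.

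Next I would identify $m = \min_{0 \leq j \leq n} W_1(j)$. Because $2n+1-2j \geq 1$ throughout this range, the sign of $W_1(j)$ is controlled entirely by the parity of $j$: $W_1(j) \geq 0$ for even $j$ and $W_1(j) < 0$ for odd $j$. Among odd $j$, the magnitude $|W_1(j)| = (2n+1-2j)/2^n$ is maximized at the smallest admissible odd index, namely $j = 1$, so $m = W_1(1) = -(2n-1)/2^n$. Finally I would substitute this $m$, together with $W_1(0) = (2n+1)/2^n$ and $\dim(\cH^{(n)}) = 2^n$, into the bound $K \leq -m\,\dim(\cH)/(W_1(0) - m)$ from Lemma \ref{lemma:dist2_bound_pure}; since $W_1(0) - m = 4n/2^n$, simplification yields the claimed inequality. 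I do not anticipate any real obstacle: the structural work (the pure-code LP bound and the explicit $W_t(j)$ formula) is already in place, and what remains is a short parity argument plus arithmetic. The only subtlety worth flagging is precisely the reason this has to be a \emph{pure} bound, namely that $j=1$ itself is the minimizer of $W_1(j)$, so the ``$1 \notin J_m$'' hypothesis of Lemma \ref{lemma:dist2_bound} fails and one genuinely needs the purity assumption to drop the $A_1$ term from the feasibility argument.
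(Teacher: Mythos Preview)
Your proposal is correct and follows essentially the same approach as the paper: compute $W_1(j) = (-1)^j(2n+1-2j)/2^n$, identify the minimum at $j=1$, and plug into Lemma~\ref{lemma:dist2_bound_pure}. Your parity argument for locating the minimum is in fact slightly cleaner than the paper's (which instead compares the candidates $W_1(1)$, $W_1(n-1)$, $W_1(n)$), but the substance is identical.
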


\begin{proof}
We have that \beqn W_{1}(j) = (-1)^j\frac{2n + 1 - 2j}{2^n}, \eeqn which is decreasing for even $j$ and increasing for odd $j$. The possible minimums are thus $W_{1}(1)$, $W_1(n-1)$, or $W_1(n)$ depending on the parity of $n$. It turns out that $W_1(1) = -\frac{2n - 1}{2^n}$ will always be the least of these. We have that \beqn \frac{-W_{1}(1)2^n}{W_1(0)-W_{1}(1)} = \frac{2n - 1}{4n}2^{n-2}, \eeqn hence the result follows by Lemma \ref{lemma:dist2_bound_pure}.
\end{proof}

From this lemma and the existence of impure, distance 2 codes of dimension $2^{n-1}$, this illustrates a case of impure codes being strictly better than pure codes.

\begin{lemma}[$\so(2n+1)$ Spinorial Distance 2 Bound] Let $n \geq 3$. If $\cC$ is an $\so(2n+1)$ spinorial code of distance $2$ then $\dim(\cC) \leq \frac{2^{n-1}}{n+1}$.
\end{lemma}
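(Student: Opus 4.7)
The plan is to apply Lemma~\ref{lemma:dist2_bound} using the explicit formula for the spinorial $W_1$ coefficients from Section~\ref{sec:wtj_functions}. Specializing that formula to $t = 1$ and expanding the finite sum yields the quadratic
\[
2^n W_1(j) = 8j^2 - (8n+4)j + n(2n+1).
\]
As a real parabola this has vertex at $j_\ast = (2n+1)/4$, which is never a half-integer, so the integer minimum over $j \in \{0, 1, \dots, n\}$ is attained uniquely at $j_m = \lceil n/2 \rceil$. A short parity-based calculation then shows $W_1(j_m) = -n/2^n$, so $m = -n/2^n$. Since $\lceil n/2 \rceil \geq 2$ for every $n \geq 3$, the hypothesis $1 \notin J_m$ of Lemma~\ref{lemma:dist2_bound} is satisfied.

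Next I would evaluate the two quantities inside that maximum. Using $W_1(0) - m = n(2n+2)/2^n$ and $W_1(1) - m = 2(n-1)(n-2)/2^n$, one obtains
\[
\frac{-m\dim(\cH)}{W_1(0) - m} = \frac{2^{n-1}}{n+1}, \qquad \frac{1}{W_1(1) - m} = \frac{2^{n-1}}{(n-1)(n-2)}.
\]
For $n \geq 4$ the elementary inequality $(n-1)(n-2) \geq n+1$, equivalent to $n^2 - 4n + 1 \geq 0$, shows the first of these dominates, so Lemma~\ref{lemma:dist2_bound} immediately yields the desired bound $\dim(\cC) \leq 2^{n-1}/(n+1)$.

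The main obstacle is the edge case $n = 3$, where the two max terms are $1$ and $2$ respectively, so Lemma~\ref{lemma:dist2_bound} as literally stated only gives $\dim(\cC) \leq 2$. To close this gap I would revisit the key rearranged equation in the proof of Lemma~\ref{lemma:dist2_bound} rather than its conclusion: at $K = 2$ the coefficient $(W_1(1) - m) - 1/K$ of $A_1$ equals zero, and the coefficient $W_1(j_m) - m$ of $A_{j_m} = A_2$ vanishes by the definition of $m$, so the equation collapses to $-m\dim(\cH) - (W_1(0) - m) K = (W_1(3) - m) A_3$. Substituting the values $m = -3/8$, $W_1(0) = 21/8$, $W_1(3) = 9/8$ for $n = 3$ gives $-3 = (3/2) A_3$, forcing the contradiction $A_3 = -2$. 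Combined with the $K \leq 2$ already supplied by Lemma~\ref{lemma:dist2_bound} and the integrality of $\dim(\cC)$, this yields $\dim(\cC) \leq 1 = 2^{n-1}/(n+1)$ in the $n = 3$ case as well.
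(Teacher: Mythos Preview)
Your argument is correct and follows the same overall plan as the paper: compute $W_1(j)$, locate its integer minimum $m = -n/2^n$, and invoke Lemma~\ref{lemma:dist2_bound}. In fact your version is more careful than the paper's. The paper, after recording $W_1(0)$ and $W_1(1)$, compares $\frac{-m\dim(\cH)}{W_1(0)-m} = \frac{2^{n-1}}{n+1}$ to $\frac{1}{W_1(0)-W_1(1)} = \frac{2^{n-2}}{2n-1}$ rather than to the quantity $\frac{1}{W_1(1)-m} = \frac{2^{n-1}}{(n-1)(n-2)}$ that actually appears in Lemma~\ref{lemma:dist2_bound}. That slip happens to be harmless for $n \geq 4$, where both comparisons go the same way, but it hides exactly the $n=3$ edge case you noticed: there $\frac{2^{n-1}}{(n-1)(n-2)} = 2 > 1 = \frac{2^{n-1}}{n+1}$, so Lemma~\ref{lemma:dist2_bound} as stated only gives $K \leq 2$. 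Your additional step---returning to the Farkas-type identity~\eqref{eq:farkas_eq} at $K=2$, where both the $A_1$ and $A_{j_m}$ coefficients vanish, and reading off $A_3 = -2$---cleanly rules out $K = 2$ and closes the gap via integrality. This is a genuine improvement over the paper's written proof.
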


\begin{proof}
We have that \beqn W_t(j) = \frac{1}{2^n}\sum_{s = 0}^{2t} (-1)^{s} \binom{2j}{s}\binom{2n + 1 - 2j}{2t - s}, \eeqn hence \beqn W_{1}(j) = \frac{1}{2^n}(8 j^2 - (8 n + 4) j + 2 n^2 + n). \eeqn Since $W_{1}(j)$ is a quadratic polynomial in $j$ with a positive leading coefficient, the minimum is achieved at the closest integer to the vertex, which is $\frac{(8 n + 4)}{16} = n/2 + 1/4$. If $n$ is even then $n/2$ is the closest integer to $n/2 + 1/4$, hence the minimum is $W_1(n/2) = \frac{-n}{2^n}$. If $n$ is odd then $(n+1)/2$ is the closest integer to $n/2 + 1/4 = (n+1)/2 - 1/4$, hence the minimum in this case is $W_1((n+1)/2) = \frac{-n}{2^n}$. For $(n+1)/2$ and $n/2$ to not equal $1$, we must have $n \geq 3$. We also have that $W_1(0) = \frac{2 n^2 + n}{2^n}$, $W_1(1) = \frac{2 n^2 - 7n + 4}{2^n}$, and $\frac{1}{W_1(0) - W_1(1)} = \frac{2^{n-2}}{2n - 1}$. We have \beqn \frac{-(-n/2^n)2^{n}}{W_1(0)-(-n/2^n)} = \frac{2^{n-1}}{n+1} \eeqn which is greater than $\frac{2^{n-2}}{2n - 1}$, hence the result follows by Lemma \ref{lemma:dist2_bound}.
\end{proof}

Recall that every quantum code of distance $2$ of the $\so(2n+1)$ semispinorial quantum metric is equivalent to a quantum code of distance $3$ of the $\Cl(2n)$ quantum metric. Thus, if $\cC$ is distance $3$ even Clifford quantum code then \beqn \dim(\cC) \leq \frac{1}{n+1}2^{n-1}. \eeqn The right-hand side is equal to the quantum Hamming bound for $d = 3$, hence the family of Clifford Hamming codes is optimal.

\begin{lemma}[$\so(2n)$ Semispinorial Distance 2 Bound] Let $n \geq 4$. If $\cC$ is an $\so(2n)$ semispinorial code of distance $2$ then $\dim(\cC) \leq \frac{1}{n}2^{n-2}$ if $n$ is even and $\dim(\cC) \leq \frac{n - 2}{n^2 - 1} 2^{n-2}$ if $n$ is odd.
\end{lemma}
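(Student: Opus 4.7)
The plan is to apply Lemma 3.5.1 directly, exactly as in the preceding examples. First I would specialize the displayed formula for $W_t(j)$ at $t = 1$; since $t = 1 < n/2$ whenever $n \geq 3$, the standard formula applies on the entire domain $0 \leq j \leq \lfloor n/2 \rfloor$, so the separate formula for $W_{n/2}(j)$ when $n$ is even plays no role. Expanding the three-term sum over $s \in \{0, 1, 2\}$ and simplifying the binomial coefficients, I expect to obtain a quadratic of the shape
$$W_1(j) \;=\; \frac{8j^2 - 8nj + 2n^2 - n}{2^{n-1}},$$
an upward parabola with real vertex at $j = n/2$.

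Next I would locate the minimum $m = \min_j W_1(j)$ over the integer domain $\{0, 1, \ldots, \lfloor n/2 \rfloor\}$, splitting on the parity of $n$. For even $n$ the vertex $n/2$ is itself an admissible index, giving $m = W_1(n/2) = -n/2^{n-1}$; for odd $n$ the closest admissible integer is $(n-1)/2$, giving $m = W_1((n-1)/2) = -(n-2)/2^{n-1}$. In both cases $J_m$ is a single point, and the condition $n \geq 4$ forces that point to be at least $2$, so the hypothesis $1 \notin J_m$ of Lemma 3.5.1 is satisfied (the odd $n = 3$ case would violate this, which is consistent with the hypothesis $n \geq 4$).

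Finally I would compute the two competing quantities appearing in Lemma 3.5.1 and show that the first one dominates. Using $\dim(\cH) = 2^{n-1}$, for even $n$ a short calculation gives
$$\frac{-m\dim(\cH)}{W_1(0) - m} = \frac{2^{n-2}}{n}, \qquad \frac{1}{W_1(1) - m} = \frac{2^{n-2}}{(n-2)^2},$$
and the inequality $(n-2)^2 \geq n$ reduces to $(n-1)(n-4) \geq 0$. For odd $n$ one gets
$$\frac{-m\dim(\cH)}{W_1(0) - m} = \frac{n-2}{n^2-1}\,2^{n-2}, \qquad \frac{1}{W_1(1) - m} = \frac{2^{n-2}}{(n-1)(n-3)},$$
and clearing denominators reduces the comparison to $(n-1)(n-5) \geq 0$. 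Both inequalities hold under the hypothesis $n \geq 4$ (noting that the odd subcase really needs $n \geq 5$), so Lemma 3.5.1 delivers the advertised bounds.

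There is no substantive obstacle; the argument is entirely mechanical and is essentially the same template used to prove the $\su(2)$, quantum Hamming, Clifford, and spinorial distance 2 bounds. The only mildly delicate point is the case split on parity, since the parabola's vertex $n/2$ is a half-integer for odd $n$ and forces the minimum onto $(n-1)/2$ rather than $n/2$; this is exactly what produces the different bound in the odd case.
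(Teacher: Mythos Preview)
Your proposal is correct and follows essentially the same route as the paper: specialize $W_1(j)$ to the quadratic $\frac{8j^2-8nj+2n^2-n}{2^{n-1}}$, locate its minimum over the integer domain by parity of $n$, check that the minimizing index is not $1$ when $n\ge 4$, and feed the result into Lemma~3.5.1. Your verification of the $\max$ in Lemma~3.5.1 is in fact slightly cleaner than the paper's, which compares the main bound against $\frac{1}{W_1(0)-W_1(1)}$ rather than $\frac{1}{W_1(1)-m}$; your computation of the latter and the reductions to $(n-1)(n-4)\ge 0$ and $(n-1)(n-5)\ge 0$ are correct.
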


\begin{proof}
We have that $W_1(j) = \frac{1}{2^{n-1}}(8j^2 - 8jn + n(2n-1))$, which is a quadratic function in $j$ with vertex $n/2$. Since the leading coefficient is negative, the minimum of $W_1(j)$ is at $W_1(n/2) = -\frac{n}{2^{n-1}}$ if $n$ is even and $W_1(n/2 - 1/2) = \frac{-n+2}{2^{n-1}}$ if $n$ is odd. For $n/2$ and $n/2 - 1/2$ to not equal $1$, we must have $n \geq 4$. We also have that $W_1(0) = \frac{2n^2-n}{2^{n-1}}$, $W_1(1) = \frac{2n^2 - 9n + 8}{2^{n-1}}$, and $\frac{1}{W_1(0) - W_1(1)} = \frac{2^{n - 4}}{n - 1}$. If $n$ is even then \beqn \frac{-W_1(n/2)2^{n-1}}{W_1(0)-W_1(n/2)} = \frac{1}{n}2^{n-2} \eeqn and if $n$ is odd then \beqn \frac{-W_1(n/2-1/2)2^{n-1}}{W_1(0)-W_1(n/2-1/2)} = \frac{n - 2}{n^2 - 1} 2^{n-2}. \eeqn Both of these are greater than $\frac{2^{n - 4}}{n - 1}$, hence the result follows by Lemma \ref{lemma:dist2_bound}.
\end{proof}

Every quantum code of distance $2$ of the $\so(2(n+1))$ semispinorial quantum metric is equivalent to a quantum code of distance $3$ of the $\Cl(2n+1)$ quantum metric thus if $\cC$ is distance $3$ odd Clifford quantum code then \beqn \dim(\cC) \leq \frac{1}{n+1}2^{n-1} \eeqn if $n$ is odd and \beqn \dim(\cC) \leq \frac{n - 1}{n(n + 2)} 2^{n-1} \eeqn if $n$ is even. For $n = 2^s - 1$, the bound is $\frac{1}{(2^s - 1) + 1}2^{(2^s-1)-1} = 2^{2^s-s-2}$, hence this proves that the family of Clifford Hamming codes is optimal.

\section{Derivation of \texorpdfstring{$W_t(j)$}{W\_t(j)} Coefficients}\label{sec:wtj_deriv}

In this final section, we derive the $W_t(j)$ coefficients listed in Section \ref{sec:wtj_functions}. The two main techniques we use to compute these bounds are direct analytical computation of the eigenvalues of $\Phi_t$ and tensor networks (for the $\su(q)$ symmetric and $\su(n)$ exterior power quantum metrics). The latter derivations are more involved and thus will be presented last.

\begin{prop} The $W_t(j)$ coefficients for the $q$-ary quantum Hamming metrics of length $n \geq 1$ are given by \beqn W_t(j) = \frac{1}{q^n}\sum_{s = 0}^{t} (-1)^{s}(q^2 - 1)^{t-s} \binom{j}{s} \binom{n - j}{t - s} \eeqn for $0 \leq t,j \leq n$.
\end{prop}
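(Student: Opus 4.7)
The plan is to compute the eigenvalue $W_t(j)$ directly by applying $\Phi_t$ to a distinguished unit vector of $\cV_j$ and reading off the scalar.

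First I would set up a convenient orthonormal basis. Let $\cB_1 \subseteq M_q(\C)$ be an orthonormal basis of the traceless matrices with respect to the Hilbert-Schmidt inner product, so that $\{I_q/\sqrt{q}\} \cup \cB_1$ is an orthonormal basis of $M_q(\C)$. Then for any subset $S \subseteq \{1,\dots,n\}$ of size $t$ and any assignment $f: S \to \cB_1$, the tensor product whose $k$th factor is $f(k)$ for $k \in S$ and $I_q/\sqrt{q}$ otherwise is a unit vector in $\cV_t$; ranging over all $S$ and $f$ gives an orthonormal basis $\cB_t$ of $\cV_t$. Fix a subset $J \subseteq \{1,\dots,n\}$ with $|J| = j$ and pick any $X = \bigotimes_{k=1}^n X_k$ where $X_k \in \cB_1$ for $k \in J$ and $X_k = I_q/\sqrt{q}$ otherwise; then $X \in \cV_j$ is a unit vector, and since $\Phi_t$ acts as the scalar $W_t(j)$ on $\cV_j$, I just need to extract this scalar from $\Phi_t(X)$.

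Next I would reduce the computation to a single tensor factor. The key ingredient is the identity $\sum_{A \in \cB} A Y A^\ast = \tr(Y) I_q$ valid for any orthonormal basis $\cB$ of $M_q(\C)$, which follows from standard properties of the trace. Splitting off the identity summand gives the single-site formula
\begin{equation*}
\sum_{A \in \cB_1} A Y A^\ast = \tr(Y) I_q - \frac{1}{q} Y.
\end{equation*}
Applied to $Y = I_q/\sqrt{q}$ this yields $\frac{q^2-1}{q}(I_q/\sqrt{q})$, and applied to $Y \in \cB_1$ (which is traceless) it yields $-\frac{1}{q}Y$. Because $\Phi_t$ is a sum over $\cB_t$ that factors position-by-position, I obtain
\begin{equation*}
\Phi_t(X) = \sum_{|S| = t} \bigotimes_{k \in S}\left(\sum_{A \in \cB_1} A X_k A^\ast\right) \otimes \bigotimes_{k \notin S} \frac{1}{q} X_k,
\end{equation*}
where the factor $1/q$ outside $S$ comes from $(I_q/\sqrt{q}) X_k (I_q/\sqrt{q})^\ast = X_k/q$.

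Finally, I would parametrize subsets by $s = |S \cap J|$. On the $s$ positions in $S \cap J$ each tensor factor contributes $-1/q$, on the $t-s$ positions in $S \setminus J$ each contributes $(q^2-1)/q$, and on the remaining $n-t$ positions each contributes $1/q$. The number of subsets with a given $s$ is $\binom{j}{s}\binom{n-j}{t-s}$. Collecting:
\begin{equation*}
\Phi_t(X) = \frac{1}{q^n} \sum_{s=0}^{t} (-1)^s (q^2-1)^{t-s} \binom{j}{s}\binom{n-j}{t-s} \cdot X,
\end{equation*}
where the summation can be extended to $0 \leq s \leq t$ since the binomial coefficients vanish outside the natural range. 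This identifies $W_t(j)$ with the claimed formula. There is no real obstacle here; the main point that must be handled carefully is the bookkeeping of the exponents of $-1/q$, $(q^2-1)/q$, and $1/q$ according to the four-way partition of $\{1,\dots,n\}$ by membership in $S$ and $J$, which is routine.
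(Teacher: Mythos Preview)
Your proof is correct and follows essentially the same approach as the paper: both apply $\Phi_t$ to a simple tensor in $\cV_j$ built from traceless matrices on $j$ positions and (normalized) identities elsewhere, reduce to the single-site identities $\sum_{A \in \cB_1} A X_k A^\ast = -\tfrac{1}{q}X_k$ for traceless $X_k$ and $\tfrac{q^2-1}{q}X_k$ for $X_k \propto I_q$, and then collect terms by $s = |S \cap J|$. The paper packages the single-site step as a separate $n=1$ case (deriving $\Phi_0 = \tfrac{1}{q}(\Pi_0+\Pi_1)$ and $\Phi_1 = \tfrac{q^2-1}{q}\Pi_0 - \tfrac{1}{q}\Pi_1$) before tensoring, whereas you derive the same single-site eigenvalues inline via $\sum_{A \in \cB} A Y A^\ast = \tr(Y)I_q$; this is a cosmetic difference only.
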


\begin{proof} We begin by deriving the formula for $n = 1$ and using this result to derive the formula for $n \geq 2$. By definition, for $n = 1$, we have that $\Phi_0(X) = \frac{1}{q}I_q X I_q$ and $\Phi_1(X) = \sum_{E \in \cB_1} EXE$ where $\cB_1$ is an orthonormal basis of the space of $q \times q$ trace zero matrices. $\Pi_0$ is thus the orthogonal projection onto the identity matrix $I_q$ and $\Pi_1$ is the orthogonal projection onto the span of $\cB_1$. Note that $\Phi_0$ is proportional to the identity operator on $M_q$ and since $M_q$ is spanned by $\cB_1$ and the identity matrix immediately we have that $\Phi_0 = \frac{1}{q}\Pi_0 + \frac{1}{q}\Pi_1$. If $\{\ket{\psi_k}\}_{k = 1}^{q}$ forms an orthonormal basis of $\C^q$ then $\{\ketbra{\psi_k}{\psi_l}\}_{1 \leq k,l \leq q}$ forms an orthonormal basis of $M_q$, hence \beqn \Phi_1(X) = \sum_{k = 1}^{q} \sum_{l = 1}^{q} \ketbra{\psi_k}{\psi_l}X\ketbra{\psi_l}{\psi_k} - \Phi_0(X) \eeqn by unitary freedom of completely positive maps. We may rewrite the sum in this expression as \beqn \sum_{k = 1}^{q} \sum_{l = 1}^{q} \ketbra{\psi_k}{\psi_l}X\ketbra{\psi_l}{\psi_k} = \sum_{l = 1}^{q} \braket{\psi_l\vert{X}\vert\psi_l} \sum_{k = 1}^{q} \ketbra{\psi_k}{\psi_k} = \tr(X)I_q = q\Pi_0(X) \eeqn which gives us the expansion \beqn \Phi_1 = q \Pi_0 - \Phi_0 = \frac{q^2 - 1}{q}\Pi_0 - \frac{1}{q}\Pi_1, \eeqn and thus, we have the $W_t(j)$ coefficients for $n = 1$.

Let $F_0:M_q \to M_q$ and $F_1:M_q \to M_q$ be the maps defined as $\Phi_0$ and $\Phi_1$ in the previous paragraph. Now, for $n \geq 2$ and $0 \leq t \leq n$, note that $\Phi_t:M_q^{\otimes n} \to M_q^{\otimes n}$ can be given as \beqn \Phi_t = \sum_{x \in X_t} F_{x_1} \otimes \cdots \otimes F_{x_n} \eeqn where $X_t \subseteq \{0,1\}^n$ is the subset of length $n$ binary tuples with exactly $t$ ones. Given any nonzero operator $E \in M_q$ with $\tr(E) = 0$, we have that $E_j = E^{\otimes j} \otimes I_q^{\otimes (n - j)}$ is an error of distance exactly $j$ and we would like to compute the eigenvalue of $E_j$ as an eigenvector of $\Phi_t$. First, we note that $F_0(E) = \frac{1}{q}E$, $F_1(E) = -\frac{1}{q}E$, $F_0(I_q) = \frac{1}{q}I_q$, and $F_1(I_q) = \frac{q^2-1}{q}I_q$. Next, we partition $X_t$ into $t+1$ subsets indexed by $0 \leq s \leq t$. The subset of index $s$ is defined to be those $x \in X_t$ such that exactly $s$ ones appear in the first $j$ components of $x$ (and so $t - s$ ones appear in the last $n - j$ components of $x$). For this $x$, $F_{x_1} \otimes \cdots \otimes F_{x_n}$ acts as $F_1$ on $s$ of the first $j$ tensor components, $F_1$ on $i - s$ of the last $n - j$ tensor components, and $F_0$ on the remaining tensor components. It then follows that \beqn F_{x_1} \otimes \cdots \otimes F_{x_n}(E_j) = \frac{1}{q^n}(-1)^s (q^2 - 1)^{t - s} E_j \eeqn for any such $x$. For each $s$, there are $\binom{j}{s}\binom{n - j}{t - s}$ such $x$'s and thus \beqn \Phi_t(E_j) = \frac{1}{q^n}\sum_{s = 0}^{t} (-1)^{s} (q^2 - 1)^{t - s} \binom{j}{s}\binom{n - j}{t - s} E_j \eeqn which completes the proof.
\end{proof}

\begin{prop} The $W_t(j)$ coefficients for the $\su(2)$ quantum metrics are given by \beqn W_t(j) = \frac{(-1)^{t+j}(2t+1)t!^2j!^2(n-t)!(n-j)!}{(n+t+1)!(n+j+1)!} \sum_{s = \max(t,j)}^{\min(t + j,n)} \frac{(-1)^s(n + s + 1)!}{(s - t)!^2(s - j)!^2(t + j- s)!^2(n - s)!}\eeqn for $0 \leq t,j \leq n$.
\end{prop}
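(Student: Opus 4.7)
The plan is to identify the coefficients $W_t(j)$ with Wigner $6j$ symbols of $\SU(2)$ and then invoke the Racah single-sum formula, as was already suggested in the author's remark. Concretely, by Lemma \ref{lemma:wtjcoeff}, $W_t(j)$ is the eigenvalue of the $G$-invariant completely positive map $\Phi_t$ on the irreducible summand $\cV_j \subseteq \cL(\cH)$, so the entire question is to diagonalize $\Phi_t$ on each isotypic piece. Under the isomorphism $\cL(\cH) \otimes \cL(\cH)^\ast \cong \cH \otimes \cH^\ast \otimes \cH \otimes \cH^\ast$, the basis $\{\Pi_j\}$ of $G$-invariant operators on $\cL(\cH)$ corresponds (up to normalization by $\sqrt{\dim \cV_j}$) to coupling the two factors $\cH \otimes \cH^\ast$ on each side to obtain the invariant vector in $\cV_j \otimes \cV_j$, while the $\{\Phi_t\}$ basis corresponds to coupling across, i.e.\ pairing the two $\cH$'s and the two $\cH^\ast$'s through $\cV_t \otimes \cV_t$. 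This is literally the defining situation of the $6j$ symbol, so the change-of-basis coefficient between the two families is a rescaled $6j$ symbol with all four outer indices equal to the spin $n/2$ and inner indices $t$ and $j$.

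First I would set up this identification cleanly: view $\cH$ as the spin-$n/2$ irreducible, rewrite $\Phi_t$ and $\Pi_j$ as the two tensor-network contractions of the pair-invariant tensors in $\cV_t \otimes \cV_t$ and $\cV_j \otimes \cV_j$ respectively, and track the normalization arising from the choice of orthonormal bases $\cB_t$ of $\cV_t$ (here using $\dim \cV_t = 2t+1$). Combining this with equation (\ref{eq:wtj_unitary}), which expresses the transition matrix $u_{tj}$ as the symmetric orthogonal rescaling $W_t(j)\sqrt{\dim \cV_j/\dim \cV_t}$, I obtain the clean formula
\[
W_t(j) \;=\; (-1)^{\,n+t+j}\,(2t+1)\,\begin{Bmatrix} n/2 & n/2 & t \\ n/2 & n/2 & j \end{Bmatrix},
\]
where the sign is the standard $6j$-reordering phase.

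Second, I would plug into the Racah single-sum formula
\[
\begin{Bmatrix} j_1 & j_2 & j_3 \\ j_4 & j_5 & j_6 \end{Bmatrix}
= \Delta(j_1,j_2,j_3)\Delta(j_1,j_5,j_6)\Delta(j_4,j_2,j_6)\Delta(j_4,j_5,j_3)\,
\sum_s \frac{(-1)^s (s+1)!}{\prod_{\text{triads}}(\cdots)!},
\]
with the triangle coefficient $\Delta(a,b,c) = \sqrt{(a+b-c)!(a-b+c)!(-a+b+c)!/(a+b+c+1)!}$. Specializing to $j_1=j_2=j_4=j_5=n/2$, $j_3=t$, $j_6=j$, the four triangle factors combine into $\sqrt{t!^2(n-t)!j!^2(n-j)!/[(n+t+1)!(n+j+1)!]}$; squaring under the outer prefactor produces exactly the rational prefactor appearing in the stated formula. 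The summation range, determined by the nonnegativity of the factorial arguments in the four triads $(s-t,s-j,t+j-s,n-s)$ and their duals, is exactly $\max(t,j) \le s \le \min(t+j,n)$.

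The main obstacle will be conventions: getting the sign $(-1)^{t+j}$ correct (the $6j$ symbol phase plus phases absorbed in identifying $\cH^\ast$ with $\cH$ via the antilinear map $\Lambda = R(i\sigma_y)$ from the self-duality discussion), and verifying that the $(2t+1)$ factor and the $\sqrt{\dim \cV_j/\dim \cV_t}$ factor from equation (\ref{eq:wtj_unitary}) balance so that the final expression is symmetric in $t$ and $j$ only up to the dimension ratio forced by part (b) of Lemma \ref{lemma:wtj_properties}. A careful cross-check against that lemma, together with property (c) (which gives $W_t(0) = (2t+1)/(n+1)$) and a direct computation for small $n$, will pin down all phases and produce the stated closed form.
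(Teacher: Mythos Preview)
Your proposal is correct and follows essentially the same route as the paper: identify $W_t(j)$ as a rescaled Wigner $6j$ symbol $\begin{Bmatrix} n/2 & n/2 & t \\ n/2 & n/2 & j \end{Bmatrix}$ and then substitute the Racah single-sum formula. The only stylistic difference is in how the normalization is pinned down: rather than deriving the prefactor $(-1)^{n+t+j}(2t+1)$ from the tensor-network identification directly, the paper simply writes $W_t(j) = \frac{b_j}{a_t}\begin{Bmatrix} n/2 & n/2 & t \\ n/2 & n/2 & j \end{Bmatrix}$ with unknown scalars $a_t, b_j$, and then determines $a_t^{-1} = (-1)^{t+n}(2t+1)$ and $b_j = (-1)^j$ by evaluating at $j=0$ and $t=0$ using $W_t(0) = \dim(\cV_t)/\dim(\cH)$ and $W_0(j) = 1/\dim(\cH)$ from Lemma~\ref{lemma:wtj_properties}---exactly the cross-check you propose in your final paragraph.
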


\begin{proof} Following the discussion on $6j$ symbols at the end of Section \ref{sec:mf_2hom_spaces}, the Wigner $6j$ symbols \beqn \begin{Bmatrix}n/2 & n/2 & t \\ n/2 & n/2 & j\end{Bmatrix}\eeqn for $0 \leq t,j \leq n$ give the linear relation between the two $\SU(2)$-invariant bases of $\cH \otimes \cH^\ast \otimes \cH \otimes \cH^\ast$ where $\cH$ is the irreducible representation of $\SU(2)$ of dimension $n+1$. In other words, there exist nonzero $a_t, b_j \in \C$ for $0 \leq t,j \leq n$ such that \beqn a_t\Phi_t = \sum_{j = 0}^{n} \begin{Bmatrix}n/2 & n/2 & t \\ n/2 & n/2 & j\end{Bmatrix} b_j \Pi_j, \eeqn hence \beqn W_t(j) = \begin{Bmatrix}n/2 & n/2 & t \\ n/2 & n/2 & j\end{Bmatrix} \frac{b_j}{a_t} \eeqn for $0 \leq t,j \leq n$. Since $b_0 \neq 0$, we may divide each of these equations by $b_0$ and thus assume that $b_0 = 1$. These $6j$ symbols can be computed using the Racah formula \cite{Messiah} in that \beqn \begin{Bmatrix}n/2 & n/2 & i \\ n/2 & n/2 & j\end{Bmatrix} = \frac{t!^2j!^2(n-t)!(n-j)!}{(n+t+1)!(n+j+1)!} \sum_{s = \max(i,j)}^{\min(t + j,n)} \frac{(-1)^{n + s}(n + s + 1)!}{(s - t)!^2(s - j)!^2(t + j - s)!^2(n - s)!}. \eeqn For $j = 0$, we have \beqn \begin{Bmatrix}n/2 & n/2 & t \\ n/2 & n/2 & 0\end{Bmatrix} \frac{b_0}{a_t} = \frac{(-1)^{t+n}}{n+1} a_t^{-1} \eeqn which equals $W_t(0) = \frac{\dim(\cV_t)}{\dim(\cH)} = \frac{2t+1}{n+1}$, hence $a_t^{-1} = (-1)^{t+n} (2t + 1)$. On the other hand, for $t = 0$ we have \beqn \begin{Bmatrix}n/2 & n/2 & 0 \\ n/2 & n/2 & j\end{Bmatrix} \frac{b_j}{a_0} = \frac{(-1)^j}{n+1} b_j \eeqn which equals $W_0(j) = \frac{1}{\dim(\cH)} = \frac{1}{n+1}$, hence $b_j = (-1)^{j}$. Now, the Racah formula and these expressions for $a_t$ and $b_j$ give the expression for \beqn W_t(j) = \begin{Bmatrix}n/2 & n/2 & t \\ n/2 & n/2 & j\end{Bmatrix} \frac{b_j}{a_t}. \eeqn
\end{proof}

The main idea for this proof of relating $W_t(j)$ to the Wigner $6j$ symbol is due to Bumgardner \cite{Bumg}. However, an expression for $W_t(j)$ was not given using the Racah formula in the proof.

\begin{prop}\label{prop:clm_wtj} The $W_t(j)$ coefficients for the $\Cl(m)$ quantum metrics are given by \beqn W_t(j) = \frac{(-1)^{tj}}{2^n}\sum_{s = 0}^{t} (-1)^{s} \binom{j}{s}\binom{m - j}{t - s} \eeqn for $0 \leq t,j \leq r$.
\end{prop}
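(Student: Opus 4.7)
The plan is to exploit the fact that, because $(\cH^{(n)}, \cE_t^{\Cl(m)})$ is multiplicity-free and $2$-homogeneous with spectral decomposition $\Phi_t = \sum_{j=0}^{r} W_t(j)\Pi_j$, the scalar $W_t(j)$ is simply the eigenvalue by which $\Phi_t$ acts on $\cV_j$. So I would pick any nonzero eigenvector in $\cV_j$ and compute. The most convenient choice is a single Clifford operator $\Gamma_y$ with $\wt(y) = j$, since Propositions \ref{prop:even_basis} and \ref{prop:odd_basis} identify such operators (rescaled) as an orthonormal basis of $\cV_j$ in both the even and odd Clifford cases.

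First, I would plug the basis $\cB_t = \{\tfrac{1}{\sqrt{2^n}}\Gamma_x : \wt(x) = t\}$ into the definition of $\Phi_t$. Using $\Gamma_x^\ast = \Gamma_x$ (Proposition \ref{prop:gamma_props}(ii)) this gives
\begin{equation*}
\Phi_t(\Gamma_y) = \frac{1}{2^n}\sum_{\wt(x) = t} \Gamma_x\,\Gamma_y\,\Gamma_x.
\end{equation*}
Next I would apply the anticommutation formula $\Gamma_x\Gamma_y = (-1)^{q(x,y)}\Gamma_y\Gamma_x$ (Proposition \ref{prop:gamma_props}(v)) together with $\Gamma_x^2 = I_{\cH^{(n)}}$ (immediate from $\Gamma_x$ being both Hermitian and unitary) to collapse each summand: $\Gamma_x\Gamma_y\Gamma_x = (-1)^{q(x,y)}\Gamma_y$. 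This yields the eigenvalue formula
\begin{equation*}
W_t(j) = \frac{1}{2^n}\sum_{\wt(x) = t} (-1)^{q(x,y)}.
\end{equation*}

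To evaluate this sum I would substitute $q(x,y) = \wt(x)\wt(y) + x\cdot y \equiv tj + x\cdot y \pmod{2}$, which pulls out an overall sign $(-1)^{tj}$. The remaining sum $\sum_{\wt(x)=t}(-1)^{x\cdot y}$ is a purely combinatorial quantity: parametrize weight-$t$ vectors $x \in \F_2^m$ by the number $s$ of positions where $x_k = y_k = 1$. There are $\binom{j}{s}$ ways to place the overlap on the $j$ support positions of $y$ and $\binom{m-j}{t-s}$ ways to place the remaining $t-s$ ones outside the support of $y$, and on such an $x$ we have $x\cdot y \equiv s \pmod 2$. Summing over $s$ from $0$ to $t$ produces exactly the claimed expression.

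I do not expect a real obstacle here, since every ingredient is recorded in Propositions \ref{prop:gamma_props}, \ref{prop:even_basis}, \ref{prop:odd_basis}. The only thing to be mindful of is that this argument applies uniformly to both $m = 2n$ and $m = 2n+1$: in the odd case one must notice that Proposition \ref{prop:odd_basis} already exhibits $\{\tfrac{1}{\sqrt{2^n}}\Gamma_x : \wt(x) = t\}$ as a full orthonormal basis of $\cV_t^{\Cl(2n+1)}$ for $0 \le t \le n$ (so no separate contribution from weight $2n+1-t$ vectors is needed), and that $\Gamma_y$ with $\wt(y) = j$ likewise represents $\cV_j^{\Cl(2n+1)}$ for $0 \le j \le n$. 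Once this is checked, the single computation above delivers the formula stated in Proposition \ref{prop:clm_wtj} in both parities of $m$.
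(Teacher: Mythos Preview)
Your proposal is correct and follows essentially the same approach as the paper: compute the eigenvalue of $\Phi_t$ on a single Clifford operator $\Gamma_y \in \cV_j$ using the orthonormal basis $\{2^{-n/2}\Gamma_x : \wt(x)=t\}$, collapse each summand via $\Gamma_x\Gamma_y\Gamma_x = (-1)^{q(x,y)}\Gamma_y$, split off the factor $(-1)^{tj}$, and evaluate $\sum_{\wt(x)=t}(-1)^{x\cdot y}$ by the standard overlap count. The only cosmetic difference is that the paper phrases the unification of the even and odd cases by allowing $t,j$ up to $m$ (rather than appealing to Proposition~\ref{prop:odd_basis} directly), but the computation is identical.
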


\begin{proof} Note that if $m$ is odd, then $\cV_j^{\Cl(m)}$ is defined for $0 \leq j \leq m$ and so we may define $W_t(j)$ for values $t,j > r$ in this case. This is not strictly necessary, but allows us to compute the cases of even and odd $m$ at the same time. We show that for all $0 \leq t,j \leq m$, $\cV_j^{\Cl(m)}$ is an eigenspace of $\Phi_t$ of eigenvalue $W_t(j)$. For $0 \leq t \leq m$, let $X_t \subseteq \ZZ{m}$ be the set of binary vectors of weight $t$. Let $y \in X_j$ so $\Gamma_{y} \in \cV_j$ and thus \begin{align*}
\Phi_t(\Gamma_{y}) &= \frac{1}{2^n}\sum_{x \in X_t} \Gamma_{x}\Gamma_{y}\Gamma_{x} \\
&= \frac{1}{2^n}\sum_{x \in X_t} (-1)^{q(x,y)} \Gamma_{y} \\
&= \frac{1}{2^n}\sum_{x \in X_t} (-1)^{tj + x \cdot y} \Gamma_{y} \\
&= \frac{(-1)^{tj}}{2^n} \sum_{x \in X_t} (-1)^{x \cdot y} \Gamma_{y}.
\end{align*} The problem now has been reduced to counting the number of $x \in X_t$ where $x \cdot y = 1$ (or equivalently counting the $x$'s such that $x \cdot y = 0$). For a given $x \in X_t$, let $0 \leq s \leq t$ be the number of components where $x$ and $y$ are both $1$ so $s = x \cdot y$. Now for fixed $0 \leq s \leq t$, we may count all such $x$ by choosing $s$ non-zero components of $y$ where $x$ has a $1$, and $t - s$ of the $m - j$ zero components of $x$ where the remaining 1's of $x$ appear. Thus, we have $(-1)^{tj} \sum_{x \in X_t} (-1)^{x \cdot y} = (-1)^{tj} \sum_{s = 0}^{t} \binom{j}{s}\binom{m - j}{t - s}$ so $\Gamma_{y}$ is an eigenvector of $\Phi_t$ of eigenvalue \beqn \frac{(-1)^{tj}}{2^n} \sum_{s = 0}^{t} \binom{j}{s}\binom{m - j}{t - s}.\eeqn Since this eigenvalue depends only on $j$, it follows that $\cV_j^{\Cl(m)}$ is an eigenspace of eigenvalue $W_t(j)$.
\end{proof}

\begin{prop} The $W_t(j)$ coefficients for the $\so(2n+1)$ spinorial quantum metrics are given by \beqn W_t(j) = \frac{1}{2^n}\sum_{s = 0}^{2t} (-1)^{s} \binom{2j}{s}\binom{2n + 1 - 2j}{2t - s} \eeqn for $0 \leq t,j \leq n$.
\end{prop}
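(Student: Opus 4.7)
The plan is to reduce this computation to the Clifford case already handled in Proposition \ref{prop:clm_wtj}. Recall from the earlier discussion of the $\so(2n+1)$ spinorial quantum metrics that the distance-$t$ space satisfies
\[
\cV_t = \cV_{2t}^{\Cl(2n+1)} = \lspan\{\Gamma_x : x \in \F_2^{2n+1},\ \wt(x) = 2t\},
\]
and that the spinorial $\Phi_t$ and $\Pi_t$ agree with the corresponding Clifford maps $\Phi_{2t}^{\Cl(2n+1)}$ and $\Pi_{2t}^{\Cl(2n+1)}$, since they are built from the same orthonormal basis of the same underlying subspace of $\cL(\cH^{(n)})$.

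With this identification, the eigenvalue equation
\[
\Phi_t(\Gamma_y) = W_t(j)\,\Gamma_y \quad \text{for } \Gamma_y \in \cV_j
\]
becomes nothing other than
\[
\Phi_{2t}^{\Cl(2n+1)}(\Gamma_y) = W_{2t}^{\Cl(2n+1)}(2j)\,\Gamma_y \quad \text{for } \Gamma_y \in \cV_{2j}^{\Cl(2n+1)},
\]
so $W_t(j) = W_{2t}^{\Cl(2n+1)}(2j)$. I would then simply substitute $t \mapsto 2t$, $j \mapsto 2j$, $m = 2n+1$ into the formula already proved in Proposition \ref{prop:clm_wtj}:
\[
W_{2t}^{\Cl(2n+1)}(2j) = \frac{(-1)^{(2t)(2j)}}{2^n}\sum_{s=0}^{2t}(-1)^s\binom{2j}{s}\binom{2n+1-2j}{2t-s}.
\]

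The sign prefactor $(-1)^{4tj}$ is identically $1$, which collapses to the stated expression
\[
W_t(j) = \frac{1}{2^n}\sum_{s=0}^{2t}(-1)^s\binom{2j}{s}\binom{2n+1-2j}{2t-s}.
\]
There is no real obstacle here: the entire content is the indexing identification $\cV_t^{\text{spin}} = \cV_{2t}^{\Cl(2n+1)}$, which was established earlier in the chapter, together with a direct substitution into an already-derived formula. The only subtle point worth stating explicitly is that the spinorial $\Phi_t$ is built from the same orthonormal basis used in the Clifford case (hence unitary freedom shows the two maps coincide as endomorphisms of $\cL(\cH^{(n)})_{\R}$), so that the eigenvalues transfer verbatim.
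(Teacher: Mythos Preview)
Your proposal is correct and follows essentially the same approach as the paper: both arguments rest on the identification $\cV_t^{\text{spin}} = \cV_{2t}^{\Cl(2n+1)}$ established earlier, from which $W_t(j) = W_{2t}^{\Cl(2n+1)}(2j)$ follows by substitution into Proposition~\ref{prop:clm_wtj}. Your version is slightly more explicit (noting that the $\Phi_t$ maps literally coincide and that the sign prefactor $(-1)^{4tj}$ vanishes), but the paper's proof is the same argument stated more tersely.
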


\begin{proof}
Recall that the space of spinorial errors of distance $j$ is equal to $\cV_{2j}^{\Cl(2n+1)}$. If we denote the $W_t(j)$ coefficients of the $\Cl(2n+1)$ quantum metric by $W_{2t}^{\Cl(2n+1)}(2j)$, then the $W_t(j)$ coefficients for the $\so(2n+1)$ spinorial quantum metrics are \beqn W_t(j) = W_{2t}^{\Cl(2n+1)}(2j) = \frac{1}{2^n}\sum_{s = 0}^{2t} (-1)^{s} \binom{2j}{s}\binom{2n + 1 - 2j}{2t - s}. \eeqn
\end{proof}

\begin{prop} The $W_t(j)$ coefficients for the $\so(2n)$ semispinorial quantum metrics are given by \beqn W_t(j) = \frac{1}{2^{n-1}}\sum_{s = 0}^{2t} (-1)^{s} \binom{2j}{s}\binom{2n - 2j}{2t - s} \eeqn for $0 \leq t < n/2$ and $0 \leq j \leq n/2$. If $n$ is even, then \beqn W_{n/2}(j) = \frac{1}{2^n} \sum_{s = 0}^{n} (-1)^{s} \binom{2j}{s}\binom{2n - 2j}{n - s} \eeqn for $0 \leq j \leq n/2$.
\end{prop}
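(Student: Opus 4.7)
The plan is to imitate the eigenvalue-extraction argument from the Clifford case (Proposition \ref{prop:clm_wtj}), using the semispinorial bases from Proposition \ref{prop:semispin_basis}. For $0 \leq j < n/2$, the subspace $\cV_j = P_{\pm}\cV_{2j}^{\Cl(2n)}P_{\pm}$ has orthonormal basis $\{2^{-(n-1)/2}\,P_{\pm}\Gamma_y P_{\pm} : y \in \F_2^{2n},\ \wt(y) = 2j\}$, and similarly for $\cV_t$ with $0 \leq t < n/2$. Thus the completely positive map $\Phi_t$ can be written explicitly as
\beqn
\Phi_t(Y) \;=\; \frac{1}{2^{n-1}}\sum_{x:\wt(x)=2t}(P_{\pm}\Gamma_xP_{\pm})\,Y\,(P_{\pm}\Gamma_xP_{\pm})^\ast,
\eeqn
and I will evaluate this on a basis vector $Y = P_{\pm}\Gamma_yP_{\pm}$ with $\wt(y) = 2j$ to read off the eigenvalue $W_t(j)$.

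The key simplification comes from the fact that when $\wt(x)$ is even, $\Gamma_x$ commutes with $\Gamma_{1_{2n}}$: we have $q(x,1_{2n}) = \wt(x)\cdot 2n + x\cdot 1_{2n} \equiv \wt(x) \equiv 0 \pmod 2$. Hence $\Gamma_x$ commutes with $P_{\pm} = \tfrac{1}{2}(I_{\cH^{(n)}} \pm \Gamma_{1_{2n}})$, so $P_{\pm}\Gamma_xP_{\pm} = \Gamma_x P_{\pm}$, and the triple product collapses to $\Gamma_x\Gamma_y\Gamma_x P_{\pm} = (-1)^{q(x,y)}\Gamma_y P_{\pm}$. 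Since both $\wt(x) = 2t$ and $\wt(y) = 2j$ are even, $q(x,y) = 4tj + x\cdot y \equiv x\cdot y \pmod 2$, and the same combinatorial count as in Proposition \ref{prop:clm_wtj} gives
\beqn
\sum_{x:\wt(x)=2t}(-1)^{x\cdot y} \;=\; \sum_{s=0}^{2t}(-1)^s\binom{2j}{s}\binom{2n-2j}{2t-s}.
\eeqn
Dividing by $2^{n-1}$ yields the stated formula in the generic range $0 \leq t < n/2$, $0 \leq j \leq n/2$ (the case $j = n/2$ with $n$ odd is vacuous, and the case $j = n/2$ with $n$ even is handled identically to what follows, since $\wt(y) = n$ is even).

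For the remaining case $n$ even, $t = n/2$: the basis of $\cV_{n/2}$ from Proposition \ref{prop:semispin_basis} keeps only one representative from each pair $\{x, x+1_{2n}\}$ of weight-$n$ vectors, because $P_{\pm}\Gamma_xP_{\pm}$ and $P_{\pm}\Gamma_{x+1_{2n}}P_{\pm}$ are proportional. The same calculation gives
\beqn
\Phi_{n/2}(P_{\pm}\Gamma_y P_{\pm}) \;=\; \frac{1}{2^{n-1}}\Biggl(\sum_{x \in X}(-1)^{x\cdot y}\Biggr) P_{\pm}\Gamma_y P_{\pm}.
\eeqn
Because $\wt(y) = 2j$ is even, $(x + 1_{2n})\cdot y \equiv x\cdot y + \wt(y) \equiv x\cdot y \pmod 2$, so the signs $(-1)^{x\cdot y}$ agree on paired representatives; hence the sum over $X$ is exactly half of the full sum over weight-$n$ vectors. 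The prefactor $2^{-(n-1)}$ then combines with the $\tfrac{1}{2}$ to produce $2^{-n}$, giving precisely the exceptional formula in the statement.

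The main obstacle will be the bookkeeping for the $t = n/2$ case: one has to confirm that the choice of representatives $X$ does not affect the eigenvalue (which follows from the parity argument above), and that the normalization from Proposition \ref{prop:semispin_basis} is the \emph{same} $2^{-(n-1)/2}$ as in the generic weight range, so that the extra factor of $\tfrac{1}{2}$ appears cleanly. Everything else is a direct transcription of the Clifford computation.
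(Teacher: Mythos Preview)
Your proposal is correct and follows essentially the same approach as the paper: both compute $\Phi_t(P_{\pm}\Gamma_yP_{\pm})$ by using that even-weight $\Gamma_x$ commutes with $P_{\pm}$, reduce to the $\Cl(2n)$ eigenvalue count, and then handle the half-size basis at $t=n/2$ separately. The only cosmetic difference is in that exceptional case: the paper averages the computation over the two complementary representative sets $X$ and $X+1_{2n}$ (whose union is the full weight-$n$ set), while you argue directly via the parity of $\wt(y)$ that the sum over $X$ is exactly half the full sum --- these are equivalent observations.
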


\begin{proof}
The proof is similar to the case of $\so(2n+1)$, and so let $W_t^{\Cl(2n)}(j)$ be the coefficients for the $\Cl(2n)$ quantum metric. Recall that Proposition \ref{prop:semispin_basis} states that if $P_{\pm}$ is the orthogonal projection onto $\cH^{(n)}_{\pm}$, then for $0 \leq t < n$, the operators $\frac{1}{\sqrt{2^{n-1}}} P_{\pm} \Gamma_xP_{\pm}$ where $x \in \F_2^{2n}$ and $\wt(x) = 2t$ form an orthonormal basis of $P_{\pm} \cV_{2t}^{\Cl(2n)} P_{\pm} = P_{\pm} \cV_{2n-2t}^{\Cl(2n)} P_{\pm}$. For $0 \leq t < n$, let $X_t \subseteq \ZZ{2n}$ be the set of binary vectors of weight $2t$. Let $y \in X_{2j}$ so $\Gamma_{y} \in \cV_{2j}^{\Cl(2n)}$ and thus \begin{align*}
\Phi_t(P_{\pm}\Gamma_y P_{\pm}) &= \frac{1}{2^{n-1}} \sum_{x \in X_{2t}} (P_{\pm}\Gamma_xP_{\pm})P_{\pm}\Gamma_yP_{\pm}(P_{\pm}\Gamma_xP_{\pm}) \\
&= 2P_{\pm}\frac{1}{2^n}\sum_{x \in X_{2t}} \Gamma_x\Gamma_y\Gamma_x \\
&= 2W_{2t}^{\Cl(2n)}(2j)P_{\pm}\Gamma_y \\
&= 2W_{2t}^{\Cl(2n)}(2j)P_{\pm}\Gamma_y P_{\pm}
\end{align*} where the second to last equality holds by Proposition \ref{prop:clm_wtj}. $P_{\pm}\Gamma_y P_{\pm}$ is an eigenvector of $\Phi_t$ of eigenvalue $2W_{2t}^{\Cl(2n)}(2j)$, hence $W_t(j) = 2W_{2t}^{\Cl(2n)}(2j)$.

If $n$ is even and $i = \frac{n}{2}$, then let $X \subseteq \ZZ{2n}$ be a set satisfying the properties of the subset $X$ described in Proposition \ref{prop:semispin_basis}. Both $\{P_{\pm}\Gamma_xP_{\pm}: x \in X\}$ and $\{P_{\pm}\Gamma_xP_{\pm}: x \in (X + 1_{2n})\}$ are orthonormal bases of $P_{\pm}\cV_{n}P_{\pm}$. Moreover, $X \cup (X + 1_{2n}) = X_n$ where $X_n \subseteq \ZZ{2n}$ is the set of weight $n$ binary vectors as defined in Proposition \ref{prop:clm_wtj}. Now \begin{align*}
\Phi_{n/2}(P_{\pm}\Gamma_y P_{\pm}) &= \frac{1}{2}\frac{1}{2^{n-1}} \sum_{x \in X} (P_{\pm}\Gamma_xP_{\pm})P_{\pm}\Gamma_y P_{\pm}(P_{\pm}\Gamma_xP_{\pm}) \\
&+ \frac{1}{2}\frac{1}{2^{n-1}} \sum_{x \in (X + 1_{2n})} (P_{\pm}\Gamma_xP_{\pm})P_{\pm}\Gamma_yP_{\pm}(P_{\pm}\Gamma_xP_{\pm}) \\
&= \frac{P_{\pm}}{2^n} \sum_{x \in X_n} \Gamma_x\Gamma_y\Gamma_x \\
&= W_{n}^{\Cl(2n)}(2j)P_{\pm} \Gamma_y \\
&= W_{n}^{\Cl(2n)}(2j)P_{\pm} \Gamma_y P_{\pm},
\end{align*} hence $W_{n/2}(j) = W_{n}^{\Cl(2n)}(2j)$.
\end{proof}

Lastly, we have the $W_t(j)$ coefficients for the symmetric and exterior power quantum metrics related to the complex special unitary Lie algebras.

\begin{prop}\label{prop:sym_wtj_coeff} The $W_t(j)$ coefficients for the $n$th symmetric power of the defining representation of $\su(q)$ are given by \begin{multline*} W_t(j) = \frac{(2 t + q - 1) (n - j)!(n + j + q - 1)!}{(n - t)!(n + t + q - 1)!} \\ \times\sum_{s = \max(0, t + j - n)}^{t} (-1)^{s} \frac{(2 t + q - 2 - s)! (s + n - t)!^2}{s!(s - (t + j - n))!(s + n - t + j + q - 1)!(t - s)!^2} \end{multline*} for $0 \leq t, j \leq n$.
\end{prop}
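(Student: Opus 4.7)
The plan is to exploit the diagrammatic calculus for $\su(q)$ representations built around the \emph{symmetric clasp}, the orthogonal projector from $(\C^q)^{\otimes n}$ onto $\Sym^n(\C^q) = \cH$. Because each irreducible summand $\cV_j \subseteq \cL(\cH) \cong \cH \otimes \cH^*$ is the isotypic component of highest weight $(j,0,\ldots,0,j)$ and appears with multiplicity one, we can describe $\Pi_j$ by a closed ``$\Pi$-diagram'' in which two copies of $\cH$ and two copies of $\cH^*$ are joined via an intermediate symmetric clasp of width $n-j$, normalised so as to be an orthogonal projection. Analogously, $\Phi_t$ is represented by the diagram obtained from the $\Pi_t$-diagram by reconnecting the external legs so that the intermediate clasp of width $n-t$ sits in the ``horizontal'' channel instead of the ``vertical'' channel; this is simply the duality between the $\Phi$-basis and the $\Pi$-basis already visible in the proof of Lemma~\ref{lemma:ginvbases}.

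First I would express
\beqn
W_t(j) \;=\; \frac{\tr(\Phi_t\,\Pi_j)}{\dim(\cV_j)},
\eeqn
which is valid because the $\Pi_j$'s are orthogonal projections onto distinct isotypic components and $\Phi_t$ acts as the scalar $W_t(j)$ on $\cV_j$. Substituting the two diagrammatic expressions above, $\tr(\Phi_t\Pi_j)$ collapses to a single closed ``theta'' diagram built from three symmetric clasps of widths depending on $n$, $t$, $j$ and joined by three bundles of oriented strands, exactly of the shape drawn by the \texttt{SymTheta} family of macros declared in the preamble. The whole computation therefore reduces to evaluating this one theta net and dividing by the Weyl-dimension formula for $\dim(\cV_j)$.

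The main obstacle is evaluating that theta in closed form. I plan to proceed by induction on one of the three width parameters, using the standard clasp absorption, clasp swap, and partial-trace identities for the symmetric clasp. A single application of absorption/swap produces a two-term reduction of the shape
\beqn
\theta(n;\,t,j) \;=\; \alpha_s\,\theta(n;\,t,j-1) + \beta_s\,\theta(n;\,t-1,j-1),
\eeqn
whose coefficients $\alpha_s,\beta_s$ are ratios of quantum dimensions of symmetric clasps, hence explicit rational functions of $n$, $q$ and the running index $s$. Iterating this recursion and anchoring it against the boundary values $W_t(0) = \dim(\cV_t)/\dim(\cH)$ and $W_0(j) = 1/\dim(\cH)$ from Lemma~\ref{lemma:wtj_properties} produces an alternating sum in $s$. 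The delicate step will be recognising that sum as the terminating ${}_4F_3$ of Racah type displayed in Proposition~\ref{prop:sym_wtj_coeff}, and verifying that the summation range $\max(0,t+j-n)\le s\le t$ emerges naturally as the requirement that every intermediate clasp width stay nonnegative throughout the recursion.
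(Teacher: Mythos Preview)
Your approach is in the right spirit --- the paper also uses the diagrammatic calculus of symmetric clasps --- but your description of the key diagram is off in a way that matters, and the paper's route to the alternating sum is structurally different from your proposed two-term recursion.

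First, the middle box in the $\Pi_j$-diagram (and likewise in the $\Phi_t$-diagram) is not a symmetric clasp of width $n-j$; it is the projector onto $\cV_j \subseteq \Sym^j(V)\otimes\Sym^j(V^*)$, which the paper introduces separately in Definition~\ref{def:symtt_clasp}. Consequently $\tr(\Phi_t\Pi_j)$ is not a theta built from three symmetric clasps: it is a closed net containing \emph{two} of these special $\Pi$-clasps (one of width $t$, one of width $j$) in addition to several symmetric clasps. The \texttt{SymTheta} macros you cite already have a $\Pi$-clasp at the top, not a symmetric one.

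Second, the paper does not evaluate one big closed net by recursion. Its decisive step is Lemma~\ref{lemma:sym_quad_expansion}, which expands the $\Pi_t$-clasp itself as an alternating sum $\sum_k Q_{tk}\cdot(\text{quad diagram})$ over diagrams involving only symmetric clasps; the coefficients $Q_{tk}$ are pinned down by capping the $\Pi_t$-clasp and invoking the vanishing of Lemma~\ref{lemma:sym_ttclasp_cap}. The sum over $s$ in the final formula is exactly this $Q$-expansion, not the output of iterating a two-term reduction. After the $Q$-expansion, the paper composes with the $\Pi_j$-diagram (not its trace) and uses orthogonality $\Pi_j\Pi_{j'}=\delta_{jj'}\Pi_j$ on the right-hand side of $\Phi_t=\sum_{j'}W_t(j')\Pi_{j'}$ to isolate $W_t(j)$. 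Each surviving term on the left is then simplified by Lemma~\ref{lemma:sym_theta_symbol}, a one-step absorption identity producing a single product $\Theta_{j,n-j,t-k}$ rather than a recursion, so that $W_t(j)=c_t\sum_k Q_{tk}\,\Theta_{j,n-j,t-k}$ with $c_t=\Theta_{t,n-t,n-t}^{-1}$.

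Your trace formula $W_t(j)=\tr(\Phi_t\Pi_j)/\dim(\cV_j)$ is correct and could in principle be pushed through, but you would still need an analogue of the $Q_{tk}$-expansion to resolve the internal $\Pi$-clasps into symmetric-clasp data before any theta-net recursion can even start; that expansion is precisely the content you are missing.
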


\begin{prop}\label{prop:ext_wtj_coeff} Let $r = \min(w, n - w)$ for the $w$th exterior power of the defining representation of $\su(n)$. The $W_t(j)$ coefficients for this quantum metric space are given by \begin{multline*} W_t(j) = \frac{(n - 2t + 1)(r - j)!(n - r - t)!}{(r - t)!(n - r - j)!} \\
\times \sum_{s = \max(0, t + j - r)}^{t} (-1)^s \frac{(n - r + t - j - s)!(s + r - t)!^2}{s! (s - (t + j - r))!(s + n - 2t + 1)! (t - s)!^2} \end{multline*} for $0 \leq t,j \leq r$.
\end{prop}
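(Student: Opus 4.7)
The plan is to compute $W_t(j)$ for the $w$-th exterior power quantum metric by the same diagrammatic / tensor network approach used for Proposition \ref{prop:sym_wtj_coeff}, substituting the symmetric clasp $\SymClasp{}$ with the exterior clasp $\ExtClasp{}$ at every occurrence. The starting identity is that, as in Lemma \ref{lemma:wtj_properties}, the normalized completely positive maps $\dim(\cV_t)^{-1/2}\Phi_t$ and normalized projections $\dim(\cV_j)^{-1/2}\Pi_j$ form symmetric orthogonal bases of the space of $\SU(n)$-invariant superoperators on $\cL(\cH)$, so
\beqn
W_t(j) \;=\; \frac{\tr(\Phi_t \Pi_j)}{\dim(\cV_j)},
\eeqn
and the right-hand side is a closed $\SU(n)$-web in the graphical calculus for $\wedge^w V$.

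First I would set up the diagrammatic framework. A strand labeled $w$ with an exterior clasp $\ExtClasp{w}$ represents the projection onto $\wedge^w V \subseteq V^{\otimes w}$, so $\cH$ is pictured by a single clasped strand and $\cL(\cH) \cong \cH \otimes \cH^\ast$ by a pair of clasped strands of opposite orientation. The irreducible component $\cV_j \subseteq \cL(\cH)$ is, for $0 \le j \le r = \min(w, n-w)$, cut out by the quadrivalent web $\ExtQuadClasp{w}{j}$ (exterior clasps around a $j$-strand bridge), and the corresponding projection $\Pi_j$ is realized by the clasped $\Phi$-diagram $\ExtPhiDiagram{w}{j}$ normalized by its theta-value $\Theta(w,w,j)$. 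Likewise $\Phi_t$ is realized as the partial-trace / $\Pi$-type diagram $\ExtPiDiagram{w}{t}$.

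Next I would reduce $\tr(\Phi_t \Pi_j)$ to a single-sum expression. Stacking $\ExtPiDiagram{w}{t}$ on $\ExtPhiDiagram{w}{j}$ and closing up gives a closed web whose evaluation, after using exterior-clasp absorption, cap, and swap relations, collapses to a product of three theta-nets $\Theta(w,w,k)$ times a sum over an intermediate bridge label $s$ of $\SU(n)$ $6j$-type coefficients. Expanding each theta-net and each $6j$-coefficient by the closed-form formulas in terms of factorials (the exterior analogue of the symmetric Racah-type lemma giving $\SymThetaLemmaRHS{w}{j}{s}$) yields exactly the displayed single-index sum. Finally, dividing by $\dim(\cV_j) = \frac{n-2j+1}{n+1}\binom{n+1}{j}^2$, as computed from the Weyl dimension formula at the end of Example 3.1 (exterior quantum metric), produces the prefactor $\frac{(n-2t+1)(r-j)!(n-r-t)!}{(r-t)!(n-r-j)!}$.

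The main obstacle will be bookkeeping the signs and parameter shifts for the exterior clasps, since the exterior absorption and swap identities differ from their symmetric counterparts by antisymmetrization signs, and the range of the summation index $s$ must be carefully constrained by $\max(0, t+j-r) \le s \le t$ coming from when the intermediate strand has a legal non-negative multiplicity. I anticipate that, once the theta-net values are established as
\beqn
\Theta_{\wedge}(w, w, k) \;=\; \binom{n}{w}\,\binom{w}{k}\,\binom{n-w}{k}\,\binom{n}{k}^{-1},
\eeqn
the remainder of the reduction will be a mechanical cancellation mirroring the symmetric case, with the substitutions $q - 1 \rightsquigarrow n - 2r + 1$ and $n \rightsquigarrow r$ in the prefactors and summand, reflecting the Howe duality exchange $\Sym \leftrightarrow \ExtAlg$ between $\su(q)$ and $\su(n)$.
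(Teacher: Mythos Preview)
Your plan is essentially the paper's: replace symmetric clasps by exterior clasps throughout and rerun the diagrammatic argument of Proposition~\ref{prop:sym_wtj_coeff}. Two points of comparison are worth flagging.

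First, you have the diagram assignments reversed. In the paper's conventions (Lemmas~\ref{lemma:ext_pi_tensor_diagram} and~\ref{lemma:ext_phi_tensor_diagram}), it is $\Pi_t$ that is realized by $\ExtPiDiagram{w}{t}$ and $\Phi_t$ by $\ExtPhiDiagram{w}{t}$, the swap of indices in the proof of Lemma~\ref{lemma:sym_phi_tensor_diagram} being what takes one to the other. This is harmless if you are internally consistent, but worth correcting.

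Second, your description of the reduction as ``a product of three theta-nets times a sum of $6j$-type coefficients'' is vaguer than what the paper actually does, and your closed-web trace computation is not quite the route taken. The paper does not close up and evaluate $\tr(\Phi_t\Pi_j)$ directly; instead it expands the middle $2t$-clasp of the $\Phi_t$ diagram via the exterior analogue of Lemma~\ref{lemma:sym_quad_expansion},
\[
\ExtttClasp{t} \;=\; \sum_{s=0}^{t} Q_{ts}^{\wedge}\,\ExtQuadClasp{t}{s},\qquad Q_{ts}^{\wedge}=(-1)^{s}\frac{t!^{2}(n-2t+1)!}{(t-s)!^{2}s!(n-2t+s+1)!},
\]
then composes with $\Pi_j$ and reduces each surviving term by a single application of the exterior $\Theta$-lemma (Lemma~\ref{lemma:ext_theta_symbol}), giving $W_t(j)=(\Theta^{\wedge}_{t,w-t,w-t})^{-1}\sum_{s} Q^{\wedge}_{ts}\,\Theta^{\wedge}_{j,w-j,t-s}$. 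So the single sum over $s$ arises from the $Q^{\wedge}$-expansion of the clasp, not from a tetrahedral $6j$ evaluation, and only one theta-type factor appears per term (plus the overall normalization $c_t=(\Theta^{\wedge}_{t,w-t,w-t})^{-1}$). Your substitution heuristic $q-1\rightsquigarrow n-2r+1$, $n\rightsquigarrow r$ is a good consistency check on the final formula but does not replace deriving $Q^{\wedge}_{ts}$ and $\Theta^{\wedge}_{tjk}$ independently, since the sign and range bookkeeping you anticipate is precisely what those two lemmas encode.
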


Tensor diagrams will be used to derive both of these formulas. Tensor diagrams are a tool to both graphically represent and perform computations for tensors. For a reference on tensor diagrams for quantum information, see \cite{BV}. For references on tensor diagrams from quantum algebra, see \cite{MV,Kup,Elias}. In this context, tensor diagrams are a tool to perform computations regarding representations of quantum groups. In particular, in \cite{MV}, a tensor diagrammatic proof of the quantum Racah formula for the quantum group $U_q(\slc(2))$ was given. Here, the variable $q$ represents a $q$-deformation and is not related to the parameter $q$ we use for $\su(q)$. The quantum Racah formula for $U_q(\slc(2))$ reduces to the classical Racah formula for $\slc(2)$ (or $\su(2)$) when $q = 1$ for the $q$-deformation. We remark that our computations are also classical in the same sense that the quantum group is a classical Lie algebra (namely, the special unitary Lie algebra $\su(q)$ or $\su(n)$) and is not strictly a $q$-deformation.

For $V$ a complex vector space of dimension $q$ and the dual vector space $V^\ast$, an element in the tensor product $V^{\otimes j} \otimes (V^\ast)^{\otimes k}$ is a multidimensional array $M_{a_1 a_2 \cdots a_j}^{b_1 b_2 \cdots b_k}$ of complex numbers where the $a_l$'s index the $V$'s and the $b_l$'s index the $V^\ast$'s. We may represent this tensor as a diagram with a box and arrows as follows \beqn \MDia, \eeqn where there are $j$ inward pointing strands corresponding to the $V$ indices and $k$ outward pointing strands corresponding to the $V^\ast$ indices. Given two tensors, we may sum over indices to obtain a new tensor. For example, if we have another tensor $N_{c_1c_2c_3}^{d_1d_2} \in V^{\otimes 3} \otimes (V^\ast)^{\otimes 2}$, then the sum $\sum_{a_1,a_2} M_{a_1 a_2 \cdots a_j}^{b_1 b_2 \cdots b_k} N_{c_1c_2c_3}^{a_1a_2}$ over two pairs of lower and upper indices results in a new tensor in $V^{\otimes (j + 1)} \otimes (V^\ast)^{\otimes k}$. This is represented graphically by connecting the strands corresponding to each index, as shown in the following diagram. \beqn \MNDia \eeqn If the indices represent the domain and codomain of linear maps then the summation represents matrix multiplication, hence the tensor diagram is the composition of the linear maps. In this section, we are generally interested in tensor diagrams representing linear maps. We will draw all diagrams in a way so that the domain indices correspond to the bottom of the diagram and the codomain indices correspond to the top of the diagram. As a first example, the identity map $V \to V$ is an element of $V \otimes V^\ast$, and thus the tensor diagram has one input strand and one output strand. We denote this by the tensor diagram \beqn \IdDia \eeqn and note that the bottom of the diagram index corresponds to the domain $V$ and the top of the diagram is an upper index $V^\ast$, thus corresponds to the codomain $V$. Another example, the tensor diagram \beqn \TrDia \eeqn is a map $V \otimes V^\ast \to \C$ and represents the trace. On the other hand, the tensor diagram \beqn \TrivToIdDia \eeqn is a map $\C \to V \otimes V^\ast$. This diagram represents the map that takes $1 \in \C$ to the identity matrix $I_q$. The composition $\TrivToIdDia \circ \TrDia$ is a map $V \otimes V^\ast \to V \otimes V^\ast$ where $I_q \mapsto I_q$ and every trace zero matrix is mapped to zero. This is graphically represented as \beqn \IdToIdDia. \eeqn The reverse composition gives a map $\C \to \C$ where $1 \mapsto q$. Graphically, this is represented as a circle $\IdTrace$, which can be interpreted as multiplication by $q$, and hence is a scalar. The linear map $V \otimes V \to V \otimes V$ that swaps tensor indices, i.e. the map $\ket{\psi}\ket{\phi} \mapsto \ket{\phi}\ket{\psi}$, is given by the diagram $\SwapDia$. Note that each of these maps is $\SU(q)$-invariant, and hence we may say that these tensor diagrams are $\SU(q)$-invariant as well. More complicated tensor diagrams may be created by increasing both the number of inward and outward strands. For example, $n$ parallel lines oriented upward represent the identity map $V^{\otimes n} \to V^{\otimes n}$. A more general case we are interested in are $\SU(q)$-invariant linear maps, which have tensor diagrams that can be completely characterized. For example, we look at $\SU(q)$-invariant linear maps $V^{\otimes n} \otimes (V^\ast)^{\otimes n} \to V^{\otimes n} \otimes (V^\ast)^{\otimes n}$. Schur-Weyl duality \cite[Theorem 5.18.4]{IntroRep} implies that the space of $\SU(q)$-invariant maps $V^{\otimes n} \otimes (V^\ast)^{\otimes n} \to V^{\otimes n} \otimes (V^\ast)^{\otimes n}$ is the span of the linear maps that permute the tensor factors. The tensor diagrams of these maps are then linear combinations of tensor diagrams of the form \beqn \GeneralDia \eeqn where in the box the diagram can be drawn in any way so that each strand going into the box is matched to a strand going out of the box. For example, a lower inward strand can be matched to a lower outward strand by drawing a $\TrDia$ in between them. Strands may cross in the box, but the way the strands cross does not matter since the tensor diagram is completely determined only by how the outward and inward strands are matched. More generally, all $\SU(q)$-invariant tensor diagrams are characterized in the same way.

To compute the $W_t(j)$ coefficients for the $\su(q)$ symmetric power quantum metrics, we first need to describe the tensor diagram for the orthogonal projection onto the $n$th symmetric power of $V$. We denote this subspace of $V^{\otimes n}$ by $\Sym^n(V)$. The projection is a linear map $V^{\otimes n} \to V^{\otimes n}$ and thus the tensor diagram has $n$ inward and $n$ outward strands. If $\{\ket{k}\}$ is an orthonormal basis of $V$, then this map can be concretely described on the simple tensor of basis elements by \beqn \ket{k_1}\ket{k_2}\cdots\ket{k_n} \mapsto \frac{1}{n!} \sum_{\sigma \in S_n} \ket{k_{\sigma(1)}}\ket{k_{\sigma(2)}}\cdots\ket{k_{\sigma(n)}}. \eeqn We graphically represent this map as the tensor diagram \beq\label{eq:sym_clasp} \SymClasp{n} \eeq and refer to this as $P_{\Sym^n(V)}$. Note that there are labeled strands in this diagram, which in this notation are actually multiple strands quantified by the label. For this notation, if a single strand is not labeled, then it is assumed to actually be just one strand.

Since $\Sym^n(V)$ is a representation of $\SU(q)$, this tensor diagram is $\SU(q)$-invariant. The fact that this tensor diagram is a projection implies that \beq\label{eq:sym_clasp_idem} \SymClaspProjectorProp{n} = \SymClasp{n}. \eeq In other words, the composition of this tensor diagram with itself is equal to itself. Since symmetric tensors are invariant under the swapping of tensor components, $P_{\Sym^n(V)}$ is also invariant under the swapping of tensor strands, meaning \beq\label{eq:sym_clasp_swap} \SymClaspSwapLower{n} = \SymClasp{n} = \SymClaspSwapUpper{n} \eeq for $0 \leq j \leq n - 2$. More generally, this tensor diagram is invariant under any permutation of the strands both above and below the box.

$P_{\Sym^n(V)}$ can be written in terms of $P_{\Sym^{n-1}(V)}$ and as a tensor diagram, this recursion can be expressed as the following equation. \beq\label{eq:sym_clasp_expand} \SymClasp{n} = \frac{1}{n}\sum_{j = 0}^{n - 1} \SymClaspLeftPreRec = \frac{1}{n}\sum_{j = 0}^{n - 1} \SymClaspRightPreRec \eeq Similarly, we may reverse the orientation of the arrows in these diagrams and get a tensor diagram in which the swaps are before $P_{\Sym^{n-1}(V)}$. By inductively expanding all the boxes, we may see that $P_{\Sym^n(V)}$ is a sum of all possible matchings between the $n$ inward and $n$ outward strands, all divided by $n!$. Using equation (\ref{eq:sym_clasp_idem}) for $P_{\Sym^{n-1}(V)}$, these recursive expressions may also be used to show that the following equation holds. \beq\label{eq:sym_clasp_absorb} \SymClasp{n} = \SymClaspAbsorbUpper{n} = \SymClaspAbsorbLower{n} \eeq In other words, $P_{\Sym^{n-1}(V)}$ may be absorbed by $P_{\Sym^{n}(V)}$. By induction, any symmetric power projection may be absorbed into a symmetric power projection of equal or higher power, i.e. the $n-1$ in the above equation may be replaced by any $0 \leq j \leq n$.

The last property we would like to state about $P_{\Sym^{n}(V)}$ is related to tensor contraction. Tensor contraction is the operation of summing over an index of a tensor, and graphically is represented by looping back an outward strand into an inward strand. For example, recall that the identity $I_q:V \to V$ is graphically a single oriented strand $\IdDiagram$, so looping the outward part of the arrow back inward results in a circle $\IdTrace$. The tensor contraction of all indices of a linear map is the trace, hence this is also another way of saying $\IdTrace = \tr(I_q) = q$. Another example is contracting a single index of $P_{\Sym^{n}(V)}$, i.e. taking a partial trace, as shown in the following lemma.

\begin{lemma}\label{lemma:sym_partial_trace} For all $n \geq 1$, \beqn \SymClaspPartialTrace{n - 1} = \frac{n + q - 1}{n} \SymClasp{n - 1} \eeqn
\end{lemma}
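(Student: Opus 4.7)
The plan is to identify the partial trace of $P_{\Sym^n(V)}$ as a scalar multiple of $P_{\Sym^{n-1}(V)}$ using equivariance together with Schur's lemma, and then to pin down the scalar by a global trace count. First I would let $T:V^{\otimes(n-1)}\to V^{\otimes(n-1)}$ denote the linear map represented by the left-hand side tensor diagram, that is, the map obtained from $P_{\Sym^n(V)}$ by contracting the last upper index with the last lower index. Because $P_{\Sym^n(V)}$ is $\SU(q)$-invariant and the partial trace is an $\SU(q)$-invariant operation, $T$ is itself an $\SU(q)$-equivariant endomorphism of $V^{\otimes(n-1)}$.

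Next I would use the absorption identity (\ref{eq:sym_clasp_absorb}) for $P_{\Sym^{n-1}(V)}$ to restrict $T$ to $\Sym^{n-1}(V)$. Stacking $P_{\Sym^{n-1}(V)}$ above and below the diagram for $T$ and applying (\ref{eq:sym_clasp_absorb}) twice (each instance extends $P_{\Sym^{n-1}(V)}$ into $P_{\Sym^{n}(V)}$, which is already present after the extension, and then uses the idempotence (\ref{eq:sym_clasp_idem})) shows that
\[
P_{\Sym^{n-1}(V)}\, T\, P_{\Sym^{n-1}(V)} = T,
\]
so $T$ factors through the subrepresentation $\Sym^{n-1}(V)\subseteq V^{\otimes(n-1)}$ on both sides and vanishes on its orthogonal complement. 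Since $\Sym^{n-1}(V)$ is an irreducible $\SU(q)$-representation, Schur's lemma forces $T = c\, P_{\Sym^{n-1}(V)}$ for some scalar $c\in\C$.

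To determine $c$, I would take the full trace of both sides, which graphically means closing off the remaining $n-1$ strands of each diagram into loops. The left-hand side becomes $\tr(P_{\Sym^n(V)}) = \dim \Sym^n(V) = \binom{n+q-1}{n}$, while the right-hand side becomes $c\,\tr(P_{\Sym^{n-1}(V)}) = c\,\binom{n+q-2}{n-1}$. Dividing yields
\[
c = \frac{\binom{n+q-1}{n}}{\binom{n+q-2}{n-1}} = \frac{n+q-1}{n},
\]
which is exactly the asserted coefficient.

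The main obstacle is the Schur's-lemma step: one must be careful that $T$ really has its image and support inside $\Sym^{n-1}(V)$, and this is what the absorption identity (\ref{eq:sym_clasp_absorb}) is doing for us. An alternative, fully diagrammatic, route would be to expand one copy of $P_{\Sym^n(V)}$ by the recursion (\ref{eq:sym_clasp_expand}) before taking the trace: the single term in which the traced strand goes straight through produces a circle (contributing a factor of $q$) times $P_{\Sym^{n-1}(V)}$, while each of the remaining $n-1$ terms produces, after the loop closes a transposition into the identity on the traced strand, a plain copy of $P_{\Sym^{n-1}(V)}$; summing these $(q) + (n-1) = n+q-1$ contributions and dividing by $n$ recovers the stated coefficient. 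I would include this second derivation as a sanity check, but present the Schur's lemma argument as the main proof since it generalizes cleanly to the analogous exterior-power statement needed for Proposition \ref{prop:ext_wtj_coeff}.
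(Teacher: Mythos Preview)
Your proposal is correct. Your primary argument---Schur's lemma plus a trace/dimension count---is a genuinely different route from the paper's own proof, which instead carries out exactly the diagrammatic computation you sketch as your ``sanity check'': it expands $P_{\Sym^n(V)}$ via the recursion (\ref{eq:sym_clasp_expand}), observes that the $j=0$ term yields a closed loop $\IdTrace = q$ times $P_{\Sym^{n-1}(V)}$, and that each of the remaining $n-1$ terms simplifies (after straightening the looped strand and undoing the crossings by (\ref{eq:sym_clasp_swap})) to a single copy of $P_{\Sym^{n-1}(V)}$, giving the coefficient $\tfrac{q + (n-1)}{n}$.

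Your Schur-based argument has the virtue of being short and, as you note, of transferring verbatim to the exterior case (Lemma~\ref{lemma:ext_partial_trace}) once one knows $\dim\bigwedge^w V = \binom{n}{w}$. Its cost is that it imports the dimension formula $\dim\Sym^n(V)=\binom{n+q-1}{n}$ from outside the diagrammatic calculus, whereas the paper's proof is entirely internal to that calculus and in fact could be read as \emph{deriving} the dimension ratio. Since the surrounding lemmas (\ref{lemma:sym_quad_expansion}, \ref{lemma:sym_theta_symbol}) are proved by the same recursion-and-simplify technique, the paper's choice keeps the section methodologically uniform; your approach would be a perfectly acceptable substitute but stylistically slightly out of key with what follows.
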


\begin{proof} We apply the first recursive expression in equation (\ref{eq:sym_clasp_expand}) to the left-hand expression, hence \beqn \SymClaspPartialTrace{n - 1} = \IdTrace \frac{1}{n} \SymClasp{n-1} + \frac{1}{n}\sum_{j = 1}^{n-1} \SymClaspPartialTraceExpand. \eeqn The loop on the single unlabeled strand in the second expression on the right-hand side may be removed, which gives a single strand that goes upward but still crosses the $j-1$ strands. We may ``undo" the crossings using equation (\ref{eq:sym_clasp_expand}) which results in
\beqn \SymClaspPartialTrace{n - 1} = \frac{q}{n} \SymClasp{n-1} + \frac{n-1}{n} \SymClasp{n-1} = \frac{n+q-1}{n} \SymClasp{n-1}. \eeqn
\end{proof}

This concludes our discussion on the maps $P_{\Sym^n(V)}$. Next, we recall the two families of $\SU(q)$-invariant superoperators consisting of $\Phi_t$'s and $\Pi_t$'s. We would like to find the tensor diagrams of $\Pi_t$ and $\Phi_t$. Each of these maps are tensors with two upper and two lower indices of $\Sym^n(V) \subseteq V^{\otimes n}$ but also can be seen as elements of $V^{\otimes n} \otimes (V^\ast)^{\otimes n} \otimes V^{\otimes n} \otimes (V^\ast)^{\otimes n}$. We first define a tensor diagram for $\Pi_n$, which will allow us to find tensor diagrams of $\Pi_t$ and $\Phi_t$ for $0 \leq t \leq n$.

\begin{definition}\label{def:symtt_clasp} As an element of $V^{\otimes n} \otimes (V^\ast)^{\otimes n} \otimes V^{\otimes n} \otimes (V^\ast)^{\otimes n}$, the tensor diagram for $\Pi_n$ is denoted \beqn \SymttClasp{n}. \eeqn
\end{definition}

There are a few properties of this tensor diagram we would like to mention. We recall that $\Pi_n$ can be viewed as a linear map $\Sym^n(V) \otimes \Sym^n(V^\ast) \to \Sym^n(V) \otimes \Sym^n(V^\ast)$ and so immediately we have an absorption property in that tensor diagram of $\Pi_n$ absorbs the tensor diagram of $P_{\Sym^{n}(V)}$ when attached to any of the four strands in the diagram. In other words, we have the following equation. \beq\label{eq:Pi_absorption} \SymttClasp{n} = \SymttClaspAbsorbUpperLeft{n} = \SymttClaspAbsorbUpperRight{n} = \SymttClaspAbsorbLowerLeft{n} = \SymttClaspAbsorbLowerRight{n} \eeq By the general absorption property of $P_{\Sym^{n}(V)}$, it follows that $\Pi_n$ absorbs any $P_{\Sym^{t}(V)}$ such that $0 \leq t \leq n$. Equation (\ref{eq:Pi_absorption}) and the property that $P_{\Sym^{n}(V)}$ is invariant under permutations of strands as in equation (\ref{eq:sym_clasp_swap}) implies that $\Pi_n$ is invariant under permutations within each of the four groups of $n$ strands.

Note that for any $0 \leq t \leq n$, we may replace $n$ in the above diagram with $t$ which is a tensor diagram of an element of $V^{\otimes t} \otimes (V^\ast)^{\otimes t} \otimes V^{\otimes t} \otimes (V^\ast)^{\otimes t}$, however this is not the tensor diagram of $\Pi_t \in V^{\otimes n} \otimes (V^\ast)^{\otimes n} \otimes V^{\otimes n} \otimes (V^\ast)^{\otimes n}$. Still, using this tensor diagram, we may construct a tensor diagram of an element of $V^{\otimes n} \otimes (V^\ast)^{\otimes n} \otimes V^{\otimes n} \otimes (V^\ast)^{\otimes n}$ that represents $\Pi_t$. Namely, we have the following lemma.

\begin{lemma}\label{lemma:sym_pi_tensor_diagram} For $0 \leq t \leq n$, the tensor diagram of $\Pi_t \in V^{\otimes n} \otimes (V^\ast)^{\otimes n} \otimes V^{\otimes n} \otimes (V^\ast)^{\otimes n}$ is given by \beqn c_t \SymPiDiagram{n}{t} \eeqn for some $c_t \in \C$.
\end{lemma}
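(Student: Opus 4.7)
The plan is to apply Schur's lemma together with the multiplicity-free decomposition $\cL(\Sym^n V) = \bigoplus_{j=0}^n \cV_j$. First I would observe that the tensor diagram on the right-hand side defines an $\SU(q)$-invariant endomorphism $\Psi_t$ of $\cL(\Sym^n V) \cong \Sym^n V \otimes \Sym^n V^\ast$: each clasp $P_{\Sym^n(V)}$ is $\SU(q)$-invariant, being the projection onto the $\SU(q)$-subrepresentation $\Sym^n V$; the central $\Pi_t$-type box is invariant by Definition \ref{def:symtt_clasp} applied at level $t$; and the two diagonal bundles of $n-t$ strands are compositions of $\SU(q)$-equivariant swap maps. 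Since each $\cV_j$ is a distinct irreducible $\SU(q)$-representation, Schur's lemma forces $\Psi_t = \sum_{j=0}^n \lambda_j \Pi_j$ for some scalars $\lambda_j$, and the lemma will follow once I establish that $\lambda_j = 0$ for $j \neq t$, at which point $c_t = \lambda_t^{-1}$ finishes the claim.

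To show $\lambda_j = 0$ for $j \neq t$, the plan is to exhibit the diagram as a factorization $\cL(\Sym^n V) \to \cL(\Sym^t V) \to \cL(\Sym^t V) \to \cL(\Sym^n V)$ in which the middle arrow is the central $\Pi_t$-box read as an endomorphism of $\cL(\Sym^t V)$. The two outer arrows are built from the clasps $P_{\Sym^n(V)}$ together with the diagonal $n-t$ strand bundles, and by repeated use of the absorption identity \eqref{eq:sym_clasp_absorb}, the swap-invariance \eqref{eq:sym_clasp_swap}, and the partial-trace identity of Lemma \ref{lemma:sym_partial_trace}, these outer arrows will be shown to intertwine the $\SU(q)$-action so as to preserve highest-weight components. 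Since the central $\Pi_t$-box (by the base case of the lemma at level $t$, where the outer clasps are trivial) projects onto the single irreducible summand of $\cL(\Sym^t V)$ of highest weight $(t,0,\ldots,0,t)$, and since the outer maps preserve isotypic labels, the composite $\Psi_t$ has image inside the $\cV_t$-summand of $\cL(\Sym^n V)$ and kernel containing all the other $\cV_j$.

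To verify $\lambda_t \neq 0$, I would evaluate $\Psi_t$ on a single highest-weight vector of $\cV_t$, for instance $E_{1q}^t$; tracking this through the diagram reduces to computing a product of scalars arising from Lemma \ref{lemma:sym_partial_trace} and the normalization of the central $\Pi_t$-clasp, all of which are manifestly nonzero. The resulting scalar also lets us read off $c_t$ explicitly, which is what will be needed downstream in the derivation of Proposition \ref{prop:sym_wtj_coeff}.

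The main obstacle I anticipate is carefully formalizing the factorization through $\cL(\Sym^t V)$: the combinatorics of the diagonal strand bundles mixing through the four outer $P_{\Sym^n(V)}$ clasps do not literally form a partial trace, and one must rewrite the diagram using \eqref{eq:sym_clasp_absorb}, \eqref{eq:sym_clasp_swap}, and \eqref{eq:sym_clasp_expand} until the $n-t$ pass-through strands can be cleanly separated from the $t$ strands feeding into the central clasp. Once this separation is achieved, the Schur-lemma step is immediate.
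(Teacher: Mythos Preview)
Your core idea matches the paper's: the diagram is an $\SU(q)$-invariant endomorphism of $\Sym^n V \otimes \Sym^n V^\ast$, it factors through $\cV_t$, and Schur's lemma plus multiplicity-freeness forces it to be a scalar multiple of $\Pi_t$ once you know it is nonzero. Two aspects of your execution, however, differ from the paper's and are worth flagging.

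First, the ``main obstacle'' you anticipate is not actually there. The factorization through $\cV_t$ is immediate from the diagram: reading it bottom to top, the lower half (the two lower $P_{\Sym^n(V)}$ clasps, the diagonal strand bundles, and the lower half of the central box) is by construction an $\SU(q)$-equivariant map $\Sym^n V \otimes \Sym^n V^\ast \to \cV_t$, and the upper half is the mirror map $\cV_t \to \Sym^n V \otimes \Sym^n V^\ast$. You simply cut the diagram horizontally through the central box; no rewriting via \eqref{eq:sym_clasp_absorb}, \eqref{eq:sym_clasp_swap}, or \eqref{eq:sym_clasp_expand} is needed. Since the central box is, by Definition \ref{def:symtt_clasp}, the projection onto $\cV_t$ inside $\Sym^t V \otimes \Sym^t V^\ast$, the composite has image in $\cV_t \subseteq \Sym^n V \otimes \Sym^n V^\ast$, and hence is proportional to $\Pi_t$ or zero.

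Second, for nonvanishing the paper does not evaluate on a highest-weight vector; it takes the full trace of the diagram. Contracting the two outer $n$-strand bundles into loops produces exactly a diagram of the form treated by Lemma \ref{lemma:sym_theta_symbol}, and the trace evaluates to $\Theta_{t,(n-t),(n-t)}\,\tr(\Pi_t) = \Theta_{t,(n-t),(n-t)}\dim(\cV_t)$, which is manifestly nonzero and immediately gives $c_t = \Theta_{t,(n-t),(n-t)}^{-1}$. Your proposed route of tracking $E_{1q}^t$ would work in principle, but your claim that it ``reduces to computing a product of scalars arising from Lemma \ref{lemma:sym_partial_trace}'' is off: that lemma computes partial traces of clasps, not images of specific vectors, and there is no evident mechanism by which pushing a highest-weight vector through the four outer clasps collapses to partial-trace scalars. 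The trace computation is both shorter and is precisely the form of $c_t$ that the downstream derivation in Proposition \ref{prop:sym_wtj_coeff} needs.
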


We will prove this lemma after establishing lemmas regarding the tensor diagram of $\Pi_n$. With this lemma, however, we may deduce a tensor diagram for $\Phi_t$ by relating $\Pi_t$ and $\Phi_t$ through a swapping of tensor indices.

\begin{lemma}\label{lemma:sym_phi_tensor_diagram} For each $0 \leq t \leq n$, the tensor diagram of $\Phi_t \in V^{\otimes n} \otimes (V^\ast)^{\otimes n} \otimes V^{\otimes n} \otimes (V^\ast)^{\otimes n}$ is given by \beqn c_t \SymPhiDiagram{n}{t} \eeqn for some $c_t \in \C$.
\end{lemma}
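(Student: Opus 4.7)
The plan is to derive the tensor diagram for $\Phi_t$ from the one for $\Pi_t$ obtained in Lemma \ref{lemma:sym_pi_tensor_diagram} by reinterpreting the same underlying $\SU(q)$-invariant tensor under a different grouping of its four factors. Viewed as elements of $\cH^{\otimes n} \otimes (\cH^\ast)^{\otimes n} \otimes \cH^{\otimes n} \otimes (\cH^\ast)^{\otimes n}$, both $\Pi_t$ and $\Phi_t$ encode essentially the sum $\sum_{E \in \cB_t} E \otimes E^\ast$, but they differ in how the indices of $E$ and $E^\ast$ are paired with the indices of the input and output. Algebraically, passing from $\Pi_t$ to $\Phi_t$ amounts to a partial transposition swapping the roles of input and output on one of the two $\cL(\cH)$ factors; graphically, this is the operation of bending strands using the canonical $\SU(q)$-invariant cup and cap maps $\eta:\C \to \cH \otimes \cH^\ast$ sending $1 \mapsto I_\cH$ and $\epsilon:\cH \otimes \cH^\ast \to \C$ given by the trace.

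I would apply this strand-bending to the SymPiDiagram form of $\Pi_t$. The $t$ strands which cross between the left and right copies in SymPiDiagram become $t$ cups at the top and $t$ caps at the bottom, each connecting the upper and lower symmetric projectors on the same side. The remaining $n - t$ strands, which previously ran straight through the central horizontal clasp of SymPiDiagram between the two sides, persist as the sole connection between the two halves and reorganize to form the thin vertical central box of the SymPhiDiagram. The permutation-invariance of $P_{\Sym^n(V)}$ at each corner, recorded in equation (\ref{eq:sym_clasp_swap}), ensures that the precise positions at which the bent strands terminate at the corner boxes are immaterial, so the resulting diagram is exactly the shape claimed.

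The main obstacle is that the bending procedure can introduce or absorb factors of $\dim(\cH)$ through compositions of cup and cap maps, and tracking these scalars explicitly through the many strands of the picture is delicate. Rather than pinning down $c_t$ by direct computation from the bending, I would finish with a uniqueness argument. Both $\{\Pi_j\}_{j=0}^{n}$ and $\{\Phi_j\}_{j=0}^{n}$ are bases of the finite-dimensional space of $\SU(q)$-invariant superoperators on $\cL(\cH)$ by Lemma \ref{lemma:ginvbases}. Since the bent SymPhiDiagram is still $\SU(q)$-invariant, it lies in this space and is therefore a linear combination of the $\Phi_j$. The structural features of the diagram---in particular the presence of a size $n - t$ central clasp together with the absorption properties inherited from the four corner projectors---single out which $\Phi_j$ it is proportional to, forcing $j = t$. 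The scalar $c_t$ then arises as the remaining proportionality constant, and can be confirmed to be nonzero by evaluating the diagram on a highest-weight element of $\cV_t$ using Lemma \ref{lemma:sym_partial_trace} and the recursive expansions in equation (\ref{eq:sym_clasp_expand}).
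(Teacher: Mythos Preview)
Your core idea matches the paper's: $\Phi_t$ and $\Pi_t$ are the same underlying tensor with two of their four $\Sym^n(V)$-valued legs exchanged, so the diagram for one is obtained from the other by a strand rearrangement. But you wrap this in more machinery than is needed, and the final uniqueness step is not fully justified.

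The paper's argument is a two-line index computation. Fix an orthonormal basis $\{\ket{\psi_k}\}$ of $\Sym^n(V)$ and write out the components
\[
\tr(\ketbra{\psi_a}{\psi_b}\Pi_t(\ketbra{\psi_c}{\psi_d})) = \sum_{E\in\cB_t}\braket{\psi_b\vert E\vert\psi_a}\braket{\psi_d\vert E^\ast\vert\psi_c},
\qquad
\tr(\ketbra{\psi_a}{\psi_b}\Phi_t(\ketbra{\psi_c}{\psi_d})) = \sum_{E\in\cB_t}\braket{\psi_b\vert E\vert\psi_c}\braket{\psi_d\vert E^\ast\vert\psi_a}.
\]
Comparing these shows that $\Phi_t$ is literally $\Pi_t$ with the indices $a$ and $c$ swapped. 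Graphically this swaps the lower-left and upper-right inward strand bundles in the $\Pi_t$ diagram; redrawing the result planarly gives exactly the $\Phi_t$ diagram, with the \emph{same} constant $c_t$. No cups, caps, or scalar bookkeeping enter, because nothing is being contracted---it is a pure relabeling of external legs.

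Your worry about stray factors of $\dim(\cH)$ is therefore unfounded, and your fallback uniqueness argument (``structural features\ldots single out which $\Phi_j$'') is both unnecessary and, as written, not a proof: you would still need to explain why the bent diagram cannot be a nontrivial combination of several $\Phi_j$'s. The direct index computation avoids all of this.
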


\begin{proof}
Let vectors $\ket{\psi_k}$ form an orthonormal basis of $\Sym^n(V)$ and so each $\Phi_t$ and $\Pi_t$ may be identified as a tensor with $4$ indices, $a$,$b$,$c$, and $d$, through the expressions \beqn \tr(\ketbra{\psi_a}{\psi_b}\Phi_t(\ketbra{\psi_c}{\psi_d})) = \sum_{E \in \cB_t} \tr(\ketbra{\psi_a}{\psi_b}E\ketbra{\psi_c}{\psi_d}E^\ast) = \sum_{E \in \cB_t} \braket{\psi_b\vert{E}\vert\psi_c} \braket{\psi_d\vert E^\ast\vert \psi_a} \eeqn and \beqn \tr(\ketbra{\psi_a}{\psi_b}\Pi_t(\ketbra{\psi_c}{\psi_d})) = \sum_{E \in \cB_t} \tr(\ketbra{\psi_a}{\psi_b} \tr(E^\ast \ketbra{\psi_c}{\psi_d}) E) = \sum_{E \in \cB_t} \braket{\psi_b\vert{E}\vert\psi_a} \braket{\psi_d\vert E^\ast\vert \psi_c}. \eeqn If the indices of $\Pi_t$ corresponding to $a$ and $c$ are swapped, then the resulting tensor is equal to $\Phi_t$. In the tensor diagram of $\Pi_t$ given in Lemma \ref{lemma:sym_pi_tensor_diagram}, the swapping of indices is carried out by swapping the positions of the lower left and upper right inward arrows. The diagram in the lemma results after redrawing the diagram to be planar.
\end{proof}

Now, by the definition of the $W_t(j)$ coefficients, we have the equation \beqn c_t \SymPhiDiagram{n}{t} = \sum_{j = 0}^{n} W_t(j) c_j \SymPiDiagram{n}{j}. \eeqn Next, we will state and prove several lemmas that will allow us to prove Lemma \ref{lemma:sym_pi_tensor_diagram} and compute $W_t(j)$ from this equation.

Before stating and proving the lemmas that follow, we make one note. Recall that for $0 \leq t \leq n$, $\Pi_t$ is the orthogonal projection onto $\cV_t \subseteq \Sym^n(V) \otimes \Sym^n(V^\ast)$ and $\cV_t$ is an irreducible representation of $\SU(q)$ that appears in $\Sym^n(V) \otimes \Sym^n(V^\ast)$ exactly once. On the other hand, for each $t \geq 0$ there exists an irreducible representation (which we also call $\cV_t$) that appears in $\Sym^n(V) \otimes \Sym^n(V^\ast)$ if and only if $t \leq n$. The isomorphism class of $\cV_t$ will be important, and the above observations allow us to refer to $\cV_t$ independent of the parameter $n$.

\begin{lemma}\label{lemma:sym_ttclasp_cap} For $t \geq 1$, the tensor diagrams \beqn \SymttClaspCap{t} \text{ and } \SymttClaspCup{t} \eeqn are zero.
\end{lemma}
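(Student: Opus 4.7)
The plan is to interpret $\SymttClaspCap{t}$ as an $\SU(q)$-equivariant linear map between representations built from symmetric powers of $V$, and then apply Schur's lemma. By the absorption property (equation (\ref{eq:Pi_absorption})), the central box in $\SymttClaspCap{t}$ may be read as a map $\Sym^t(V) \otimes \Sym^t(V^*) \to \Sym^t(V) \otimes \Sym^t(V^*)$ whose image is precisely the canonical copy of the irreducible $\cV_t$. The cap placed above then contracts one upper $V$ strand with one upper $V^*$ strand via the evaluation pairing $V \otimes V^* \to \C$, which is itself $\SU(q)$-equivariant. By the permutation invariance of the absorbed $\Sym^t$ projectors on each side (equation (\ref{eq:sym_clasp_swap})), the surviving $t-1$ strands on each top side continue to lie in $\Sym^{t-1}(V)$ and $\Sym^{t-1}(V^*)$ respectively. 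Thus the capped diagram represents an $\SU(q)$-equivariant linear map $C \colon \Sym^t(V) \otimes \Sym^t(V^*) \to \Sym^{t-1}(V) \otimes \Sym^{t-1}(V^*)$ precomposed with the $\Pi_t$-projection onto $\cV_t$.

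Next, I would invoke the standard decompositions $\Sym^t(V) \otimes \Sym^t(V^*) = \bigoplus_{s=0}^{t} \cV_s$ and $\Sym^{t-1}(V) \otimes \Sym^{t-1}(V^*) = \bigoplus_{s=0}^{t-1} \cV_s$ into pairwise non-isomorphic irreducible $\SU(q)$-representations. Since $\cV_t$ appears as a summand in the source of $C$ but not in its target, Schur's lemma forces $C$ to annihilate the $\cV_t$ component. Because the box projects precisely onto $\cV_t$, the composition $C \circ \Pi_t$ vanishes, and hence so does $\SymttClaspCap{t}$.

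The cup case $\SymttClaspCup{t}$ follows by taking Hilbert--Schmidt adjoints: the adjoint of a cap is a cup (and vice versa), and the $\Pi_t$ box is self-adjoint, so $\SymttClaspCup{t}$ is the adjoint of a diagram of the same form as $\SymttClaspCap{t}$ whose vanishing has just been established. Alternatively, one may simply rerun the Schur-lemma argument with the roles of source and target interchanged, noting that the cup-capped diagram is now an $\SU(q)$-equivariant map from $\Sym^{t-1}(V) \otimes \Sym^{t-1}(V^*)$ into $\cV_t$, again forced to be zero.

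The main obstacle I expect is the bookkeeping required to certify that the cap contraction really outputs an element of $\Sym^{t-1}(V) \otimes \Sym^{t-1}(V^*)$, i.e., that the contraction interacts cleanly with the implicit symmetrizations built into the $\Pi_t$ box via absorption, without spurious scalar factors that might interfere with identifying the source and target as stated. Once that diagrammatic cleanup is in place, the remainder of the argument reduces to the one-line application of Schur's lemma using the multiplicity-free decomposition that underlies this chapter.
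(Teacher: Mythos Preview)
Your proposal is correct and takes essentially the same approach as the paper: interpret the diagram as an $\SU(q)$-equivariant map that factors through $\cV_t$, observe that $\cV_t$ does not occur in $\Sym^{t-1}(V)\otimes\Sym^{t-1}(V^\ast)$, and conclude by Schur's lemma. The paper's proof is simply a terser version of what you wrote, and your worry about the bookkeeping for the cap contraction landing in $\Sym^{t-1}$ is handled exactly as you suggest, via the absorption property.
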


\begin{proof} The first tensor diagram is a $\SU(q)$-invariant linear map that is a composition of maps \beqn V^{\otimes t} \otimes (V^\ast)^{\otimes t} \to \cV_t \to \Sym^{t-1}(V) \otimes \Sym^{t-1}(V^\ast). \eeqn If $t \geq 1$ then $\cV_t$ is not a subrepresentation of $\Sym^{t-1}(V) \otimes \Sym^{t-1}(V^\ast)$. Any $\SU(q)$-invariant linear map $\cV_t \to \Sym^{t-1}(V) \otimes \Sym^{t-1}(V^\ast)$ must then be zero, and hence the tensor diagram is zero. A similar argument holds for the second tensor diagram.
\end{proof}

For the next lemma, we first recall that Schur-Weyl duality implies that the space of $\SU(q)$-invariant tensor diagrams of $V^{\otimes t} \otimes (V^\ast)^{\otimes t} \otimes V^{\otimes t} \otimes (V^\ast)^{\otimes t}$ is spanned by all matchings between the $2t$ inward and $2t$ outward vertices. Attaching $P_{\Sym^t(V)}$ to the tails and heads of each of the two groups of $t$ strands and using equation (\ref{eq:sym_clasp_swap}) to appropriately undo crossings results in tensor diagrams of the form \beqn \SymQuadClasp{t}{j} \eeqn for $0 \leq j \leq t$. The space of $\SU(q)$-invariant operators on $\Sym^{t}(V) \otimes \Sym^{t}(V^\ast)$ has dimension $t+1$, hence these diagrams must be linearly independent. In particular, we may expand $\Pi_t$ as a linear combination of these diagrams, and thus we have the following lemma.

\begin{lemma}\label{lemma:sym_quad_expansion} \beqn \SymttClasp{t} = \sum_{j = 0}^{t} Q_{tj} \SymQuadClasp{t}{j} \eeqn where \beqn Q_{tj} = (-1)^{j} \frac{t!^2(2t - j + q - 2)!}{(t - j)!^2 j!(2t + q - 2)!}. \eeqn
\end{lemma}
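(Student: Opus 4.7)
The plan proceeds in three steps. First, existence and uniqueness of the expansion follow from Schur-Weyl duality: as discussed just before the lemma, the quad-clasp diagrams $\SymQuadClasp{t}{j}$ for $0 \le j \le t$ form a basis of the $(t+1)$-dimensional space of $\SU(q)$-invariant endomorphisms of $\Sym^t(V) \otimes \Sym^t(V^\ast)$, so the scalars $Q_{tj}$ are uniquely determined and the task reduces to identifying them.

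Second, I plan to derive a recursion in $j$ for $Q_{tj}$ by capping off the topmost pair of output strands of the diagrams appearing in the claimed identity. On the left-hand side, the resulting diagram $\SymttClaspCap{t}$ vanishes by Lemma \ref{lemma:sym_ttclasp_cap}, since it factors through $\Sym^{t-1}(V) \otimes \Sym^{t-1}(V^\ast)$, which contains no copy of the irreducible $\cV_t$. On the right-hand side, each capped quad-clasp $\SymQuadClaspCap{t}{j}$ must be simplified into a linear combination of $\SU(q)$-invariant maps $\Sym^t(V) \otimes \Sym^t(V^\ast) \to \Sym^{t-1}(V) \otimes \Sym^{t-1}(V^\ast)$. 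The cap must be pulled through both the $(t-j)$ external strands and the $j$ bubble strands of the quad clasp, using the one-strand resolutions of equation (\ref{eq:sym_clasp_expand}), the swap invariance of equation (\ref{eq:sym_clasp_swap}), and the absorption identity (\ref{eq:sym_clasp_absorb}); the closed loops thereby created are evaluated using Lemma \ref{lemma:sym_partial_trace}. Collecting coefficients of each resulting basis diagram on the right-hand side, and using that the left-hand side vanishes, will yield a two-term relation of the form $\beta_k Q_{tk} + \alpha_{k+1} Q_{t,k+1} = 0$, i.e.\ a one-step recursion.

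Third, the base case $Q_{t0} = 1$ is pinned down by reading off the coefficient of the identity diagram $\SymQuadClasp{t}{0}$ in the expansion (using the trace condition $\tr(\Pi_t) = \dim(\cV_t)$ together with the evaluations of $\tr(\SymQuadClasp{t}{j})$ via repeated partial traces). Iterating the recursion from this base case telescopes into a product of ratios $-\beta_{k}/\alpha_{k+1}$ that, after simplification, yields the claimed closed form $Q_{tj} = (-1)^j \frac{t!^2(2t-j+q-2)!}{(t-j)!^2 j!(2t+q-2)!}$. The main obstacle will be the careful accounting of the cap resolution: one must determine precisely which simplified diagrams appear and with what coefficients, in particular distinguishing the contribution where the resolved cap lands on an external strand (feeding into a $\SymQuadClasp{t-1}{j}$ term) versus a bubble strand (feeding into a $\SymQuadClasp{t-1}{j-1}$ term), and tracking the successive factors $(n+q-1)/n$ from Lemma \ref{lemma:sym_partial_trace}, whose product along the recursion assembles the $(2t-j+q-2)!$ factor in the numerator of $Q_{tj}$.
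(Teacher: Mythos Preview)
Your approach is essentially the same as the paper's: both cap the top of the claimed identity, use Lemma~\ref{lemma:sym_ttclasp_cap} to kill the left-hand side, resolve the capped quad-clasps via the one-strand expansions~(\ref{eq:sym_clasp_expand}) and the partial-trace formula of Lemma~\ref{lemma:sym_partial_trace}, and obtain the two-term recursion $Q_{tj} = -\frac{(t-j+1)^2}{j(2t-j+q-1)}Q_{t(j-1)}$.

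The only substantive difference is the base case $Q_{t0}=1$. You propose to pin it down via traces, computing $\tr(\Pi_t)=\dim(\cV_t)$ against $\sum_j Q_{tj}\,\tr(\SymQuadClasp{t}{j})$; this works but requires evaluating all those closed diagrams and invoking the Weyl dimension formula for $\dim(\cV_t)$. The paper's argument is cleaner: compose the claimed identity with $\Pi_t$ from below. On the left, $\Pi_t^2=\Pi_t$ leaves the diagram unchanged; on the right, absorption plus Lemma~\ref{lemma:sym_ttclasp_cap} kills every term with $j\ge 1$ (each has a $\TrivToIdDia$ feeding into the attached $\Pi_t$), leaving only $Q_{t0}\Pi_t$. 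Hence $Q_{t0}=1$ with no trace computations needed.
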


\begin{proof} If $t = 0$, then one may verify that the right-hand side of the equation is equal to $Q_{t0}$ times the left-hand side of the equation. Since $Q_{t0} = 1$, the equation is vacuously true. We now assume that $t \geq 1$. On each side of the equation, we attach a single cap $\TrDia$ on top to the middle inward and outward strands. By Lemma \ref{lemma:sym_ttclasp_cap}, the left-hand side is zero, and hence we have the following equation. \beqn 0 = \sum_{j = 0}^{t} Q_{tj} \SymQuadClaspCap{t}{j} \eeqn We expand the upper right box on the right-hand side of the equation using the second expansion in equation (\ref{eq:sym_clasp_expand}) (note that we must first rotate the tensor diagrams in equation (\ref{eq:sym_clasp_expand}) so that the upward oriented arrows become downward oriented). The single strand will be a strand on the left of the box oriented downward. For $1 \leq j \leq t-1$, the head of the single strand connects to either one of the $j$ $\TrivToIdDia$ strands or to one of the $t - j$ strands going downward. If $j = 0$, then the strand must connect to one of the $t - 0 = t$ downward strands. If $j = t$, then the strand must connect to one of the $j = t$ $\TrivToIdDia$ strands. This results in the following equation. \beqn 0 = \sum_{j = 1}^{t} j \frac{Q_{tj}}{t} \SymQuadClaspCapCircleStrand{t}{j} + \sum_{j = 0}^{t-1} (t - j)\frac{Q_{tj}}{t} \SymQuadClaspCapDownStrand{t}{j} \eeqn The loop in the first diagram can be simplified using Lemma \ref{lemma:sym_partial_trace}. In the second diagram, we expand the upper left box using the first expansion in equation (\ref{eq:sym_clasp_expand}) with the orientation of the arrows reversed (the diagram then must be rotated, so the heads of the arrows are upward). For $1 \leq j \leq t - 1$, the tail of the single strand connects to either one of the $j$ $\TrivToIdDia$ strands or to one of the $i - j$ strands on the left that are oriented upward. If $j = 0$ then the strand must connect to one of the $t - 0 = t$ upward strands, so this results in the following equation. \begin{multline*} 0 = \frac{t + q - 1}{t^2} \sum_{j = 1}^{t} j Q_{tj} \SymQuadClaspCapFinal{t-j}{j-1}{j} \\ + \frac{1}{t^2} \sum_{j = 1}^{t-1} j(t - j) Q_{tj} \SymQuadClaspCapFinal{t-j}{j-1}{j} \\ + \frac{1}{t^2} \sum_{j = 0}^{t - 1} (t - j)^2 Q_{tj} \SymQuadClaspCapFinal{t-j-1}{j}{j+1} \end{multline*} We may reindex the last sum to make the index from $j = 1$ to $t$, and the term for $j= i$ may be added to the second sum since $j(t - j) Q_{tj} = 0$ when $j = t$. The tensor diagrams are all identical, hence we have \beqn \frac{1}{t^2}\sum_{j = 1}^{t} \left(j(t + q - 1) Q_{tj} + j(t - j) Q_{tj} + (t - j + 1)^2 Q_{t(j-1)} \right) \SymQuadClaspCapFinal{t-j}{j-1}{j} = 0.\eeqn Again by Schur-Weyl duality, these tensor diagrams are linearly dependent, hence the left-hand side is zero if and only if each coefficient is zero i.e. $j(t + q - 1) Q_{tj} + j(t - j) Q_{tj} + (t - j + 1)^2 Q_{t(j-1)} = 0$. Rearranging this we get \beqn Q_{tj} = -\frac{(t - j + 1)^2}{j(2t - j + q - 1)} Q_{t(j-1)} \eeqn for $1 \leq j \leq t$ and solving for this recurrence we get \beqn Q_{tj} = (-1)^j \frac{t!^2(2t - j + q - 2)!}{(t - j)!^2 j!(2t + q - 2)!} Q_{t0}. \eeqn It remains to show that $Q_{t0} = 1$. In the equation in the statement of the lemma, we attach the tensor diagram of $\Pi_t$ to the bottom of each tensor diagram. The left-hand side remains unchanged since $\Pi_t^2 = \Pi_t$. By the absorption property of $\Pi_t$ and Lemma \ref{lemma:sym_ttclasp_cap}, every term on the right except for the $j = 0$ term becomes zero. Moreover, by the absorption property the $j = 0$ term on the right-hand side becomes $Q_{t0} \Pi_t$, hence the overall equation states $\Pi_t = Q_{t0} \Pi_t$ and thus $Q_{t0} = 1$.
\end{proof}

\begin{lemma}\label{lemma:sym_theta_symbol} If $j, k \geq 0$ then \beqn \SymThetaLemmaLHS{t}{j}{k} = \Theta_{tjk} \SymThetaLemmaRHS{t}{j}{k} \eeqn where \beqn \Theta_{tjk} = \begin{cases}
\frac{j!(t + j - k)!^2(2t + j + q - 1)!}{(j - k)!(t+j)!^2(2t + j - k + q - 1)!} & \text{ if $j \geq k$ } \\
0 & \text{ if $j < k$ }
\end{cases}.\eeqn
\end{lemma}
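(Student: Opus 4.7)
The plan is to induct on $k$. First, the case $j < k$ is handled separately: in this situation the diagram contains a sub-configuration in which $k$ strands are forced through a symmetric clasp of width $t + j$ and then through a clasp of width $t + j - (k-1) \leq t + j - (j+1) + 1 = t$, and by repeated application of Lemma \ref{lemma:sym_ttclasp_cap} (more precisely, by the fact that a symmetric clasp of width $n$ capped off with a cup vanishes when the remaining strands would force the total through a smaller clasp of the same isomorphism type) the diagram vanishes. A cleaner way to see this: on the LHS, push the $k$ bottom cups up through the two lower clasps using the absorption property (\ref{eq:sym_clasp_absorb}) to land them against the middle clasp; the resulting diagram contains a sub-diagram of the form in Lemma \ref{lemma:sym_ttclasp_cap} and so is zero whenever $j < k$.

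Assuming $j \geq k$, the base case $k = 0$ is immediate: the LHS and RHS diagrams are literally identical and $\Theta_{tj0} = 1$. For the inductive step, assume the result holds for $k - 1$ and compute the ratio we must produce:
\begin{equation*}
\frac{\Theta_{tjk}}{\Theta_{tj(k-1)}} = \frac{(j - k + 1)(2t + j - k + q)}{(t + j - k + 1)^2}.
\end{equation*}
So the task reduces to extracting exactly this scalar when one of the $k$ cups on the bottom is resolved to give a diagram with only $k - 1$ cups. First I would isolate one bottom cup and, using equation (\ref{eq:sym_clasp_absorb}), absorb the adjacent clasp of width $t + j - k$ into the wider clasp of width $t + j - (k-1) = t + j - k + 1$ sitting just above it (after pulling the single cup strand up against the wider clasp). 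This converts the outermost cup into a partial trace on the wider clasp. I would then expand that wider clasp using the recursion (\ref{eq:sym_clasp_expand}); the terms fall into two families according to whether the distinguished strand loops back on itself (producing a circle worth $q$, and recombining via Lemma \ref{lemma:sym_partial_trace}-style identities into a smaller clasp absorbed by the middle clasp) or crosses into the bundle of $j - k + 1$ strands heading toward the middle clasp (producing, after using the symmetry of clasps and Lemma \ref{lemma:sym_ttclasp_cap} to kill off unwanted terms, a diagram where one of those $j - k + 1$ strands is now paired off in a new cup with the residue of the original distinguished strand).

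Collecting the surviving terms, the loop-back family contributes a factor of $(t + j - k + q)/(t + j - k + 1)$ via Lemma \ref{lemma:sym_partial_trace}, while the crossing family contributes $j - k + 1$ identical terms (one for each strand in the bundle), each matching the inductive-hypothesis diagram for $k - 1$ after a further application of the partial-trace identity yields a second factor of $1/(t + j - k + 1)$. Combining, one reads off exactly the ratio $(j - k + 1)(2t + j - k + q)/(t + j - k + 1)^2$, closing the induction.

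The main obstacle will be the middle bookkeeping step: in the expansion of the width-$(t + j - k + 1)$ clasp by (\ref{eq:sym_clasp_expand}), one has to identify which of the resulting strand-matchings reduce (via the absorption properties (\ref{eq:Pi_absorption}) of the $\Pi$-bar, planar isotopy, and the symmetry (\ref{eq:sym_clasp_swap})) to the inductive-hypothesis configuration with $k - 1$ cups, and which reduce by Lemma \ref{lemma:sym_ttclasp_cap} to zero. Once the two surviving families above are correctly isolated, the scalar identification is a straightforward arithmetic check against the ratio of gamma-like factors in $\Theta_{tjk}$.
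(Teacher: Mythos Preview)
Your overall strategy is the paper's: peel off one bottom cap at a time via a clasp expansion together with Lemma~\ref{lemma:sym_partial_trace}. The paper organizes it slightly differently---it first proves the $k=1$ case explicitly, obtaining $\Theta_{tj1} = j(2t+j+q-1)/(t+j)^2$, and then uses absorption to write the general diagram as the $k=1$ diagram stacked above a diagram with parameters $(j-1,k-1)$, giving the recursion $\Theta_{tjk} = \Theta_{tj1}\cdot\Theta_{t(j-1)(k-1)}$. Your induction with $j$ fixed is equivalent: attaching one cap to the $(k-1)$-equation turns its RHS into exactly the $k=1$ left-hand diagram with $j$ replaced by $j' = j-k+1$, and applying the $k=1$ case then gives your ratio.

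However, the execution you describe does not carry this out correctly. First, after capping the $(k-1)$-RHS the outer clasps have width $t+j-k+1$; there is no clasp of width $t+j-k$ sitting anywhere to ``absorb into the wider clasp,'' so that move is not well-defined. Second, and more importantly, resolving the single cap requires expanding \emph{both} lower clasps, not one. With $j' = j-k+1$ and clasp width $t+j'$, expanding the right clasp via (\ref{eq:sym_clasp_expand}) yields $j'$ terms where the cap-strand hits a cup-strand (this is the circle case, producing a partial trace on the \emph{left} clasp worth $(t+j'+q-1)/(t+j')$) and $t$ terms where it hits an up-strand heading to the middle box. In the latter family you must then expand the left clasp: $t$ sub-terms vanish by Lemma~\ref{lemma:sym_ttclasp_cap} and $j'$ survive. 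Summing gives $\bigl(j'(t+j'+q-1)+tj'\bigr)/(t+j')^2 = j'(2t+j'+q-1)/(t+j')^2$, which is your ratio. Your accounting (``loop-back contributes $(t+j-k+q)/(t+j-k+1)$; crossing contributes $j-k+1$ terms each with a factor $1/(t+j-k+1)$'') has the family sizes interchanged and omits the second clasp expansion entirely. Finally, the $j<k$ case cannot be handled by ``pushing cups through clasps via absorption''---absorption does not move cups through a clasp. The vanishing comes out of the recursion itself (the factor $j-k+1$ eventually hits zero), or equivalently once a cap lands directly on the middle $\Pi_t$-box and Lemma~\ref{lemma:sym_ttclasp_cap} kills it.
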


\begin{proof}
If $j = 0$ and $k \geq 1$ then, on the left-hand side of the equation, the two boxes on the bottom may be absorbed into the box representing $\Pi_t$. In this case, we have $k$ strands $\TrivToIdDia$ attached to the bottom of $\Pi_t$. By Lemma \ref{lemma:sym_ttclasp_cap}, this tensor diagram is zero, so taking $\Theta_{t0k} = 0$ for $k \geq 1$ makes the equation true. If $k = 0$, then the tensor diagrams on each side of the equation are equal and so trivially we may take $\Theta_{tjk} = 1$. Now we will prove the case for $j \geq 1$ and $k = 1$ which will give all other cases. For this case, we have the tensor diagram \beqn \SymThetaDiaSingle{t}{j}. \eeqn We expand the lower right box using the first expansion in equation (\ref{eq:sym_clasp_expand}) with the arrows having reversed orientation. The tail of the single $\TrivToIdDia$ strand either attaches to the head of one of the $j$ $\TrDia$ strands or to the head of one of the $t$ downward strands on the right. We thus have the following equation. \beqn \SymThetaDiaSingle{t}{j} = \frac{j}{t+j} \SymThetaDiaSingleFirstExpandCircle{t}{j} + \frac{t}{t+j} \SymThetaDiaSingleFirstExpandUpward{t}{j} \eeqn The first term on the right can be simplified using Lemma \ref{lemma:sym_partial_trace}, and the second term we expand the lower left box using the second expansion in equation (\ref{eq:sym_clasp_expand}). The head of the single strand either attaches to the tail of the $t$ upward strands on the left or the $j$ $\TrDia$ strands. In the former case, this results in a $\TrivToIdDia$ being attached to the bottom of the upper box, hence, by Lemma \ref{lemma:sym_ttclasp_cap}, this tensor diagram is zero. In the latter case, the single strand may be straightened, so we then have $t$ downward strands on the right again. Thus, we have the following equation. \begin{multline*} \SymThetaDiaSingle{t}{j} = \frac{j(t+j+q-1)}{(t+j)^2} \SymThetaLemmaRHS{t}{j}{1} \\ + \frac{tj}{(t+j)^2} \SymThetaLemmaRHS{t}{j}{1} = \frac{j(2t+j+q-1)}{(t+j)^2} \SymThetaLemmaRHS{t}{j}{1} \end{multline*} Now, this equation may be used for general $j \geq 1$ and $k \geq 1$. Namely, we have \begin{align*} \SymThetaLemmaLHS{t}{j}{k} &= \SymThetaDiaRecBoxes{t}{j}{k} \\
&= \frac{j(2t+j+q-1)}{(t+j)^2} \SymThetaDiaRec{t}{j-1}{k-1}. \end{align*} Thus, by induction, we have $\Theta_{tjk} = \frac{j(2t+j+q-1)}{(t+j)^2} \Theta_{t(j-1)(k - 1)}$. If $j < k$ then, by induction, $\Theta_{tjk}$ is proportional to $\Theta_{t0(k-j)} = 0$. If $j \geq k$ then, by induction, \beqn \Theta_{tjk} = \frac{j!(t+j-k)!^2(2t+j+q-1)!}{(j-k)!(t+j)!^2(2t+j-k+q-1)!} \Theta_{t(j-k)0} \eeqn and since $\Theta_{t(j-k)0} = 1$ this completes the proof.
\end{proof}

We will now prove Lemma \ref{lemma:sym_pi_tensor_diagram}, i.e. that for each $\Pi_t \in V^{\otimes n} \otimes (V^\ast)^{\otimes n} \otimes V^{\otimes n} \otimes (V^\ast)^{\otimes n}$, \beq\label{eq:sym_pi_diagram} \Pi_t = c_t \SymPiDiagram{n}{t} \eeq for some $c_t \in \C$.

\begin{proof}[Proof of Lemma \ref{lemma:sym_pi_tensor_diagram}] We prove that these tensor diagrams represent a set of nonzero $\SU(q)$-invariant linear maps on $\Sym^n(V) \otimes \Sym^n(V^\ast)$ and that the images of these maps must be isomorphic to $\cV_t$. We take the trace of the tensor diagram on the right-hand side of equation (\ref{eq:sym_pi_diagram}) by contracting the left two strands together and the right two strands together resulting in the following tensor diagram. \beqn c_t \SymPiTrDiagram{n}{t} \eeqn Using Lemma \ref{lemma:sym_theta_symbol}, this tensor diagram equals $c_t \Theta_{t(n-t)(n-t)} \tr(\Pi_t)$, which is nonzero if $c_t \neq 0$, hence the original tensor diagram must also be nonzero. The tensor diagrams on the right-hand side of equation (\ref{eq:sym_pi_diagram}) are $\SU(q)$-invariant and are a composition of linear maps \beqn \Sym^n(V) \otimes \Sym^n(V^\ast) \to \cV_t \to \Sym^n(V) \otimes \Sym^n(V^\ast), \eeqn hence the images of the maps must be $\cV_t \subseteq \Sym^n(V) \otimes \Sym^n(V^\ast)$. It then follows that equation (\ref{eq:sym_pi_diagram}) holds true for some $c_t \neq 0$. Taking the trace of both sides of this equation yields \beqn \dim(\cV_t) = c_t \Theta_{t(n-t)(n-t)} \dim(\cV_t), \eeqn hence $c_t = \Theta_{t(n-t)(n-t)}^{-1}$.
\end{proof}

Now we may finally derive the $W_t(j)$ coefficients.

\begin{proof}[Proof of Proposition \ref{prop:sym_wtj_coeff}]
By Lemma \ref{lemma:sym_pi_tensor_diagram}, Lemma \ref{lemma:sym_phi_tensor_diagram}, and the definition of the $W_t(j)$ coefficients, we have \beqn c_t \SymPhiDiagram{n}{t} = \sum_{j = 0}^{n} W_t(j) c_j \SymPiDiagram{n}{j}. \eeqn We expand the middle vertical boxes on the left-hand side using Lemma \ref{lemma:sym_quad_expansion} which yields the equation \beqn c_t \sum_{k = 0}^{t} Q_{tk} \SymQuadClaspOpp{n}{k} = \sum_{j = 0}^{n} W_t(j) c_j \SymPiDiagram{n}{j}. \eeqn For a fixed $0 \leq j \leq n$, we attach the tensor diagram of $\Pi_j$ on top of both sides of the equation. The left-hand side becomes a tensor diagram that we may simplify using Lemma \ref{lemma:sym_theta_symbol} to get a tensor diagram proportional to $\Pi_j$. Since $\Pi_t\Pi_j = \delta_{tj} \Pi_j$, the right-hand side is also proportional to $\Pi_j$. Namely, we have \beqn c_t \sum_{k = \max(0, t + j - n)}^{t} Q_{tk} \Theta_{j,n-j,t-k} c_j \SymPiDiagram{n}{j} = W_t(j) c_j \SymPiDiagram{n}{j}. \eeqn The coefficients on each of the equations must be equal, hence we have \beqn W_t(j) = c_t \sum_{k = \max(0, t + j - n)}^{t} Q_{tk} \Theta_{j, n-j, t-k}. \eeqn Recall that $c_t = \Theta_{t(n-t)(n-t)}^{-1}$ and writing out the expression for $W_t(j)$ using the formulas for each symbol yields \begin{multline*} W_t(j) = \frac{(2 t + q - 1) (n - j)!(n + j + q - 1)!}{(n - t)!(n + t + q - 1)!} \\ \times\sum_{s = \max(0, t + j - n)}^{t} (-1)^{s} \frac{(2 t + q - 2 - s)! (s + n - t)!^2}{s!(s - (t + j - n))!(s + n - t + j + q - 1)!(t - s)!^2}. \end{multline*}
\end{proof}

In a similar manner, tensor diagrams will be used to compute the $W_t(j)$ coefficients for the exterior power representations of $\SU(n)$ (note the change in name of parameter from $q$ and so $V = \C^n$). The $w$th exterior power $\bigwedge^w V$ is dual to the $(n-w)$-th exterior power $\bigwedge^{n-w} V$, which implies that $\bigwedge^w V \otimes (\bigwedge^w V)^\ast \cong \bigwedge^{n-w} V \otimes (\bigwedge^{n-w} V)^\ast$. The $W_t(j)$ coefficients for $1 \leq w \leq \frac{n}{2}$ can thus be used to compute the $W_t(j)$ coefficients for $\frac{n}{2} < w \leq n - 1$. The tensor diagram computation of the $W_t(j)$ coefficients for the exterior powers is very much similar to the symmetric powers, so we will state lemmas and properties without proof. Like the symmetric case, we begin with the projection onto the space of exterior power tensors of rank $w$. If $\{\ket{k}\}$ is an orthonormal basis of $V = \C^n$ then this map can be concretely described on the simple tensor of basis elements by \beqn \ket{k_1}\ket{k_2}\cdots\ket{k_n} \mapsto \frac{1}{w!} \sum_{\sigma \in S_w} \sgn(\sigma) \ket{k_{\sigma(1)}}\ket{k_{\sigma(2)}}\cdots\ket{k_{\sigma(w)}}. \eeqn We graphically represent this map as the tensor diagram \beq\label{eq:ext_clasp} \ExtClasp{w}. \eeq A recursive formula for this tensor diagram is \beq\label{eq:ext_clasp_expand} \ExtClasp{w} = \frac{1}{w}\sum_{j = 0}^{w - 1} (-1)^j \ExtClaspLeftPreRec = \frac{1}{w}\sum_{j = 0}^{w - 1} (-1)^j \ExtClaspRightPreRec \eeq and the swap property of this tensor diagram is \beq\label{eq:ext_clasp_swap} \ExtClaspSwapLower{w} = (-1)\ExtClasp{w} = \ExtClaspSwapUpper{w} \eeq for $0 \leq j \leq w - 2$. Similar to the symmetric case, this tensor diagram has an absorption property for projections onto lower exterior powers. The partial trace formula of this tensor diagram is as follows.

\begin{lemma}\label{lemma:ext_partial_trace} For all $1 \leq w \leq \frac{n}{2}$, \beqn \ExtClaspPartialTrace{w - 1} = \frac{n - w + 1}{w} \ExtClasp{w - 1} \eeqn
\end{lemma}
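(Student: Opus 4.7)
The plan is to mimic the proof of Lemma \ref{lemma:sym_partial_trace} almost verbatim, with the one essential change being the signs produced by the alternating swap rule (\ref{eq:ext_clasp_swap}). First I would apply the left recursive expansion in equation (\ref{eq:ext_clasp_expand}) to the $P_{\wedge^w}$ that sits inside the partial trace, splitting the left-hand side into a sum
\[
\ExtClaspPartialTrace{w-1} \;=\; \frac{1}{w}\sum_{j=0}^{w-1}(-1)^j\,D_j,
\]
where $D_j$ is the tensor diagram obtained by capping the top of the $j$-th term in the expansion. I expect the $j=0$ term $D_0$ to decouple into a closed circle on the single diverted strand (which evaluates to $n = \dim V$) tensored with $\ExtClasp{w-1}$, contributing $\tfrac{n}{w}\ExtClasp{w-1}$.

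For each $j \geq 1$, the cap turns the diverted strand into an unclosed arc rather than a scalar loop, leaving a single strand that passes upward across $j-1$ other strands and into a $\ExtClasp{w-1}$ box. Here I would invoke the swap invariance (\ref{eq:ext_clasp_swap}), which absorbs each elementary crossing into $\ExtClasp{w-1}$ at the cost of a factor $-1$; the net effect of the $j-1$ crossings is $(-1)^{j-1}$, so $D_j = (-1)^{j-1}\ExtClasp{w-1}$. Combined with the $(-1)^j$ sign already supplied by the recursion, every $j\geq 1$ term contributes exactly $\tfrac{-1}{w}\ExtClasp{w-1}$.

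Summing the $w-1$ identical contributions from $j=1,\dots,w-1$ with the loop contribution from $j=0$ gives
\[
\ExtClaspPartialTrace{w-1} \;=\; \frac{n}{w}\ExtClasp{w-1} - \frac{w-1}{w}\ExtClasp{w-1} \;=\; \frac{n-w+1}{w}\ExtClasp{w-1},
\]
which is the desired identity. The only subtle step is verifying that each crossing contributes a clean $-1$ rather than some more complicated permutation factor; the hypothesis $1 \leq w \leq n/2$ is what guarantees that the ambient $\wedge^{w-1}V$ is nonzero (so that $\ExtClasp{w-1}$ is a meaningful idempotent) and that no alternator vanishes spuriously, but the algebraic manipulation itself is parallel to the symmetric case with the replacement $(q,n) \leftrightarrow (n,w)$ and overall change of sign in the swap rule.
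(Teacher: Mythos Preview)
Your proposal is correct and follows exactly the approach the paper intends: the paper explicitly states that the exterior-power lemmas are ``very much similar to the symmetric powers'' and are stated without proof, so the implied argument is precisely the sign-adjusted mimicry of Lemma~\ref{lemma:sym_partial_trace} that you carry out. One small remark: the hypothesis $1 \le w \le n/2$ is not actually needed for the algebra here (the identity holds for all $1 \le w \le n$); the paper restricts to this range only because the larger exterior powers are handled by duality, not because the partial-trace computation would fail.
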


To construct the tensor diagrams of $\Pi_t \in V^{\otimes w} \otimes (V^\ast)^{\otimes w} \otimes V^{\otimes w} \otimes (V^\ast)^{\otimes w}$, we first begin with a graphical definition of $\Pi_w$.

\begin{definition}\label{def:exttt_clasp} As an element of $V^{\otimes w} \otimes (V^\ast)^{\otimes w} \otimes V^{\otimes w} \otimes (V^\ast)^{\otimes w}$, the tensor diagram for $\Pi_w$ is denoted \beqn \ExtttClasp{w}. \eeqn
\end{definition}

This tensor diagram also has an absorption property for the projections onto the spaces of exterior powers less than or equal to $w$. Additionally, if $w \geq 1$ then any $\TrivToIdDia$ or $\TrDia$ attached to this diagram results in the zero tensor. This tensor diagram also has a swap property in that swapping two strands introduces a factor of $-1$. Now, analogous to the symmetric case, we may construct tensor diagrams for $\Pi_t$.

\begin{lemma}\label{lemma:ext_pi_tensor_diagram} As an element of $V^{\otimes w} \otimes (V^\ast)^{\otimes w} \otimes V^{\otimes w} \otimes (V^\ast)^{\otimes w}$, for $0 \leq t \leq w$ the tensor diagram of $\Pi_t$ is given by \beqn c_t \ExtPiDiagram{w}{t} \eeqn for some $c_t \in \C$.
\end{lemma}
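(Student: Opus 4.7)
The plan is to mirror the argument for the symmetric case in Lemma \ref{lemma:sym_pi_tensor_diagram}, adapting each step to the antisymmetric setting. The tensor diagram in question is manifestly $\SU(n)$-invariant (it is built from $\SU(n)$-invariant components: the exterior power projectors, identity strands, caps, and cups), and it factors through the middle copy of $\Pi_w$ in Definition \ref{def:exttt_clasp}. Since the middle $\Pi_w$ is the projection onto $\cV_w^{(w)} \subseteq \bigwedge^w V \otimes (\bigwedge^w V)^\ast$ (using Definition \ref{def:exttt_clasp} for the exterior analog), and the outer cups and caps on $w - t$ strands together with the exterior projectors twist this into a map of $\bigwedge^w V \otimes (\bigwedge^w V)^\ast$, the image of the overall diagram sits inside the subrepresentation of $\bigwedge^w V \otimes (\bigwedge^w V)^\ast$ isomorphic to $\cV_t$. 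Because this isotypic component has multiplicity one (by the multiplicity-free property established earlier for the $\su(n)$ exterior quantum metric), Schur's lemma forces the tensor diagram to be a scalar multiple of $\Pi_t$; call that scalar $c_t$.

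To finish, I would need to verify $c_t \neq 0$, for which the natural approach is to compute the closed loop value obtained by tracing both pairs of external strands, analogous to the computation $c_t \Theta_{t(n-t)(n-t)} \tr(\Pi_t)$ in the symmetric case. This requires first establishing an exterior analog of Lemma \ref{lemma:sym_theta_symbol}, which I would prove by the same inductive two-box-expansion strategy: apply the recursive identity \eqref{eq:ext_clasp_expand} to the lower-right box, use Lemma \ref{lemma:ext_partial_trace} to simplify the resulting loops, then expand the lower-left box, noting that the swap property \eqref{eq:ext_clasp_swap} introduces signs but that any $\TrivToIdDia$ landing on the $\Pi_w$ box kills the diagram by the property stated after Definition \ref{def:exttt_clasp}. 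The induction step should yield a recurrence of the form
\begin{equation*}
\Theta^{\mathrm{ext}}_{tjk} = \frac{j(n - 2t - j + 1)}{(w + j)^2}\,\Theta^{\mathrm{ext}}_{t(j-1)(k-1)}
\end{equation*}
(up to signs), with initial value $\Theta^{\mathrm{ext}}_{tj0} = 1$ and vanishing for $j < k$. Tracing the main diagram then produces $c_t = (\Theta^{\mathrm{ext}}_{t(w-t)(w-t)})^{-1}$, which is nonzero, completing the proof.

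The main obstacle will be the bookkeeping of signs in the exterior analog of Lemma \ref{lemma:sym_theta_symbol}, since the swap property \eqref{eq:ext_clasp_swap} introduces $-1$'s at every crossing rearrangement, and the partial trace formula in Lemma \ref{lemma:ext_partial_trace} differs from its symmetric counterpart by the sign pattern inherent to antisymmetric tensors. A secondary technical point is ensuring that the image really is $\cV_t$ and not a smaller subspace: one must check that both the cap-cup configuration on the outer $w - t$ strands and the inner $\Pi_w$ fit together so that the overall map does not factor through any $\cV_{t'}$ with $t' < t$. This can be checked by verifying that the highest-weight vector of $\cV_t$ (identified in the earlier discussion of the $\su(n)$ exterior power quantum metrics as $E_{1,n}E_{2,n-1}\cdots E_{t,n-(t-1)}$) has nonzero image, which reduces to the nonvanishing of $c_t$ just computed.
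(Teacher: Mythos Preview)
Your approach is essentially the same as the paper's: the paper explicitly states that the exterior-power lemmas are proved in a manner analogous to the symmetric case (Lemma~\ref{lemma:sym_pi_tensor_diagram}), namely by observing that the diagram is an $\SU(n)$-invariant map factoring through $\cV_t$, invoking multiplicity-freeness and Schur's lemma, and verifying nonvanishing via the trace computation using the exterior $\Theta$-lemma (Lemma~\ref{lemma:ext_theta_symbol}), which gives $c_t = (\Theta^{\wedge}_{t,(w-t),(w-t)})^{-1}$. Two small slips to fix: the middle clasp in the diagram has $t$ strands on each side, so it is the $t$-strand exterior clasp projecting onto $\cV_t \subseteq \bigwedge^t V \otimes (\bigwedge^t V)^\ast$, not ``$\Pi_w$''; and in your recurrence the denominator should be $(t+j)^2$, not $(w+j)^2$ (compare the paper's $\Theta^{\wedge}_{tjk}$ formula).
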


Using the relation between $\Pi_t$ and $\Phi_t$, we may deduce a tensor diagram for $\Phi_t$.

\begin{lemma}\label{lemma:ext_phi_tensor_diagram} As an element of $V^{\otimes w} \otimes (V^\ast)^{\otimes w} \otimes V^{\otimes w} \otimes (V^\ast)^{\otimes w}$, the tensor diagram of $\Phi_t$ for each $0 \leq t \leq w$ is given by \beqn c_t \ExtPhiDiagram{w}{t} \eeqn for some $c_t \in \C$.
\end{lemma}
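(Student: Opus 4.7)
The plan is to mirror the proof of Lemma \ref{lemma:sym_phi_tensor_diagram} verbatim, using Lemma \ref{lemma:ext_pi_tensor_diagram} as the input. The underlying fact being exploited is purely about matrix elements of $\Pi_t$ versus $\Phi_t$, and does not care about whether the relevant projection onto $V^{\otimes w}$ picks out symmetric or antisymmetric tensors, so the symmetric-case argument transfers directly.

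First I would fix an orthonormal basis $\{\ket{\psi_k}\}$ of $\bigwedge^w V \subseteq V^{\otimes w}$ and, viewing $\Phi_t$ and $\Pi_t$ as tensors in $\bigwedge^w V \otimes (\bigwedge^w V)^\ast \otimes \bigwedge^w V \otimes (\bigwedge^w V)^\ast$, record their matrix elements in the form
\begin{equation*}
\tr(\ketbra{\psi_a}{\psi_b}\Phi_t(\ketbra{\psi_c}{\psi_d})) = \sum_{E \in \cB_t} \braket{\psi_b\vert E\vert\psi_c}\braket{\psi_d\vert E^\ast\vert\psi_a},
\end{equation*}
\begin{equation*}
\tr(\ketbra{\psi_a}{\psi_b}\Pi_t(\ketbra{\psi_c}{\psi_d})) = \sum_{E \in \cB_t} \braket{\psi_b\vert E\vert\psi_a}\braket{\psi_d\vert E^\ast\vert\psi_c},
\end{equation*}
where $\cB_t$ is any orthonormal basis of $\cV_t$ (the definition of the coefficient $c_t$ is insensitive to this choice by unitary freedom).

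Next I would observe that the two right-hand sides differ only by a transposition of the indices $a$ and $c$. Thus, as tensors with four $\bigwedge^w V / (\bigwedge^w V)^\ast$ legs, $\Phi_t$ is obtained from $\Pi_t$ by swapping the leg labelled $a$ with the leg labelled $c$. Starting from the tensor diagram for $\Pi_t$ furnished by Lemma \ref{lemma:ext_pi_tensor_diagram}, this swap is realized pictorially by exchanging the positions of the lower-left inward strand and the upper-right inward strand. Redrawing the resulting (non-planar) picture planarly, the two middle vertical bundles of strands become braided through the central $P_{\bigwedge^t V}$ box, producing exactly the diagram displayed in the statement of Lemma \ref{lemma:ext_phi_tensor_diagram}, with the same proportionality constant $c_t$ as in Lemma \ref{lemma:ext_pi_tensor_diagram}.

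There is essentially no obstacle here: the argument is formal once Lemma \ref{lemma:ext_pi_tensor_diagram} is in hand. The only point requiring mild care is the sign bookkeeping when redrawing — crossings of strands attached to an exterior-power projector pick up signs via equation (\ref{eq:ext_clasp_swap}) — but since the same rearrangement of strands is applied to both sides of the identity $\Phi_t = (\text{swap of legs } a,c)\, \Pi_t$, any sign introduced by rerouting the strands past an $\ExtClasp{w}$ box is absorbed into the scalar $c_t$ (which is left undetermined in the statement). Consequently the conclusion holds with the same $c_t$ that appears in Lemma \ref{lemma:ext_pi_tensor_diagram}, namely $c_t = \Theta_{t(w-t)(w-t)}^{-1}$ after one carries out the analogue of the partial-trace normalization used in the symmetric setting.
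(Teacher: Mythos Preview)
Your proposal is correct and follows exactly the approach the paper intends: the paper explicitly states that the exterior-power lemmas are proved in the same way as their symmetric counterparts and gives no separate argument for Lemma \ref{lemma:ext_phi_tensor_diagram}, so mirroring the proof of Lemma \ref{lemma:sym_phi_tensor_diagram} with $\bigwedge^w V$ in place of $\Sym^n(V)$ is precisely what is expected. Your remark about possible sign factors from equation (\ref{eq:ext_clasp_swap}) is overly cautious---the swap of the $a$ and $c$ legs and the subsequent planar redrawing do not actually route any strands through a clasp, so no signs arise and $c_t$ is literally the same constant as in Lemma \ref{lemma:ext_pi_tensor_diagram}---but since the statement only asks for \emph{some} $c_t$, this does no harm.
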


Now we have two lemmas analogous to Lemmas \ref{lemma:sym_quad_expansion} and \ref{lemma:ext_theta_symbol} that will be used to compute the $W_t(j)$ coefficients.

\begin{lemma}\label{lemma:ext_quad_expansion} \beqn \ExtttClasp{t} = \sum_{j = 0}^{t} Q_{tj}^{\wedge} \ExtQuadClasp{t}{j} \eeqn where \beqn Q_{tj}^{\wedge} = (-1)^{j} \frac{t!^2(n - 2t + 1)!}{(t - j)!^2 j!(n - 2t + j + 1)!}. \eeqn
\end{lemma}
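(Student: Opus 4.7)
The plan is to mirror the proof of Lemma \ref{lemma:sym_quad_expansion} with sign bookkeeping appropriate to the antisymmetric setting. First, I would confirm that the diagrams $\ExtQuadClasp{t}{j}$ for $0 \leq j \leq t$ form a basis of the $\SU(n)$-invariant operators on $\bigwedge^t V \otimes (\bigwedge^t V)^\ast$. By Schur--Weyl duality, the $\SU(n)$-invariant tensor diagrams in $V^{\otimes t} \otimes (V^\ast)^{\otimes t} \otimes V^{\otimes t} \otimes (V^\ast)^{\otimes t}$ are spanned by matchings of the inward/outward strands; attaching the exterior clasp $\ExtClasp{t}$ at top and bottom and using the swap rule (\ref{eq:ext_clasp_swap}) to undo crossings reduces every such matching (up to a sign) to one of the $t+1$ diagrams $\ExtQuadClasp{t}{j}$. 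Since the space of invariant operators has dimension $t+1$, these diagrams are linearly independent, so $\Pi_t$ has a unique expansion of the claimed shape.

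Next, I would derive a recurrence for $Q_{tj}^{\wedge}$. Attach a single $\TrDia$ cap across the two middle strands on top of both sides of the asserted equation. The left-hand side vanishes by the exterior analogue of Lemma \ref{lemma:sym_ttclasp_cap} (the composite factors through $\cV_t$, which is not a subrepresentation of $\bigwedge^{t-1} V \otimes \bigwedge^{t-1} V^\ast$ when $t \geq 1$). On the right, I would expand the upper-right clasp via the first identity in (\ref{eq:ext_clasp_expand}) to produce a single strand, then expand the upper-left clasp via the second identity in (\ref{eq:ext_clasp_expand}) with reversed orientation. The loop that arises is simplified by Lemma \ref{lemma:ext_partial_trace}, contributing the factor $\frac{n-t-j+1}{t-j+1}$ (or its analogue with the appropriate shift). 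Terms in which the single strand attaches so as to cap off a $\Pi_{t-1}$-type box vanish by the same argument as above. The signs from (\ref{eq:ext_clasp_expand}) and (\ref{eq:ext_clasp_swap}) combine so that linear independence of the basis diagrams yields the recurrence
\[
Q_{tj}^{\wedge} \;=\; -\,\frac{(t-j+1)^2}{j\,(n-2t+j+1)}\, Q_{t(j-1)}^{\wedge}
\]
for $1 \leq j \leq t$. Iterating gives
\[
Q_{tj}^{\wedge} \;=\; (-1)^j\, \frac{t!^2 (n - 2t + 1)!}{(t-j)!^2\, j!\, (n - 2t + j + 1)!}\, Q_{t0}^{\wedge}.
\]

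To normalize, I would attach the tensor diagram $\ExtttClasp{t}$ of $\Pi_t$ (Definition \ref{def:exttt_clasp}) to the bottom of both sides of the asserted identity. The left-hand side is unchanged since $\Pi_t^2 = \Pi_t$. On the right, every term with $j \geq 1$ vanishes: after the $\Pi_t$ absorbs into the middle clasp, the remaining structure attaches a cap/cup to $\Pi_t$, which kills the diagram by the exterior analogue of Lemma \ref{lemma:sym_ttclasp_cap}. The $j = 0$ term reduces to $Q_{t0}^{\wedge} \Pi_t$ by the absorption property of $\ExtttClasp{t}$. Comparing forces $Q_{t0}^{\wedge} = 1$, completing the formula.

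The main obstacle is the sign tracking. In the exterior setting, the recursive expansion (\ref{eq:ext_clasp_expand}) carries $(-1)^j$ inside its sum, each elementary strand-swap costs a factor of $-1$ by (\ref{eq:ext_clasp_swap}), and $\ExtttClasp{w}$ itself is antisymmetric in its strand inputs. One must check that all these sign contributions conspire to leave a single overall factor of $(-1)^j$ in $Q_{tj}^{\wedge}$, while the numerical coefficient from Lemma \ref{lemma:ext_partial_trace} produces the denominator factor $(n - 2t + j + 1)$ (rather than the $(2t - j + q - 1)$ that appeared in the symmetric case). Once the recurrence is correctly set up, the induction and the absorption step are routine.
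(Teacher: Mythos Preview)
Your proposal is correct and follows exactly the route the paper intends: the paper explicitly says the exterior-power lemmas are proved in the same way as their symmetric counterparts and omits the details, and you have reproduced the argument of Lemma~\ref{lemma:sym_quad_expansion} with the appropriate substitutions (the exterior clasp recursion~(\ref{eq:ext_clasp_expand}), the sign rule~(\ref{eq:ext_clasp_swap}), and the partial trace Lemma~\ref{lemma:ext_partial_trace}). One minor correction: the loop in the first expansion sits on a size-$t$ exterior clasp, so Lemma~\ref{lemma:ext_partial_trace} contributes the factor $\tfrac{n-t+1}{t}$ rather than $\tfrac{n-t-j+1}{t-j+1}$; with this, the coefficient equation reads $j(n-t+1)Q_{tj}^{\wedge} - j(t-j)Q_{tj}^{\wedge} + (t-j+1)^2 Q_{t(j-1)}^{\wedge} = 0$ (the middle sign flips relative to the symmetric case because each strand-swap now costs a $-1$), which simplifies to your stated recurrence $Q_{tj}^{\wedge} = -\tfrac{(t-j+1)^2}{j(n-2t+j+1)}Q_{t(j-1)}^{\wedge}$.
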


\begin{lemma}\label{lemma:ext_theta_symbol} If $j, k \geq 0$ then \beqn \ExtThetaLemmaLHS{t}{j}{k} = \Theta_{tjk}^{\wedge} \ExtThetaLemmaRHS{t}{j}{k} \eeqn where \beqn \Theta_{tjk}^{\wedge} = \begin{cases}
\frac{j!(n - 2t - j + k)!(t + j - k)!^2}{(j - k)!(n - 2t - j)!(t + j)!^2} & \text{ if $j \geq k$ } \\
0 & \text{ if $j < k$ }
\end{cases}.\eeqn
\end{lemma}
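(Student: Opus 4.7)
The plan is to mimic the proof of Lemma \ref{lemma:sym_theta_symbol} from the symmetric case, adapting for the sign conventions particular to the exterior power projector. The overall strategy is to first dispose of two easy cases, then establish the equation when $k=1$ by a direct diagrammatic computation, and finally iterate to obtain arbitrary $k$ by induction.

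First I would handle the boundary cases. When $k=0$ the two diagrams are literally identical, so $\Theta_{tj0}^{\wedge}=1$, which matches the claimed formula. When $j=0$ and $k\geq 1$, absorbing the lower exterior projectors into the $\Pi_t$-box attaches $k\geq 1$ cap/cup strands to $\Pi_t$; the exterior analog of Lemma \ref{lemma:sym_ttclasp_cap} (which follows from the same representation-theoretic argument, namely that $\cV_t$ does not appear in a lower exterior tensor product for $t\geq 1$) forces the diagram to vanish, so $\Theta_{t0k}^{\wedge}=0$, again consistent.

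Next I would treat the case $j\geq 1$, $k=1$, which is the heart of the argument. Following the symmetric model, I expand the lower right exterior box via the first recursion in equation (\ref{eq:ext_clasp_expand}): the tail of the isolated $\TrivToIdDia$ strand attaches either to the head of one of the $j$ cup strands or to one of the $t$ downward strands on the right, now carrying a sign $(-1)^{\text{shift}}$ from (\ref{eq:ext_clasp_swap}). The first family of terms collapses by Lemma \ref{lemma:ext_partial_trace}, producing a factor $\frac{n-(t+j)+1}{t+j}$ in place of the symmetric $\frac{t+j+q-1}{t+j}$. For the second family I then expand the lower left exterior box by the second recursion, again with signs: half the terms vanish by the exterior capping lemma applied to $\Pi_t$, and the rest straighten to the target diagram. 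Collecting coefficients with careful sign bookkeeping should yield
\[
\Theta_{tj1}^{\wedge}=\frac{j\bigl(n-2t-j+1\bigr)}{(t+j)^2},
\]
which matches the claimed closed form at $k=1$.

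Finally, I would stack $k$ pairs of exterior boxes in the bottom of the diagram and iterate: the same reasoning used in the symmetric proof shows
\[
\Theta_{tjk}^{\wedge}=\frac{j\bigl(n-2t-j+1\bigr)}{(t+j)^2}\,\Theta_{t(j-1)(k-1)}^{\wedge},
\]
and induction on $k$ (with the previously established boundary cases serving as the base) telescopes this recursion into the stated product formula, giving zero whenever $j<k$ (since the recursion eventually hits a factor $\Theta_{t(j-k)0}^{\wedge}=1$ when $j\geq k$ or $\Theta_{t0(k-j)}^{\wedge}=0$ otherwise). The main obstacle I anticipate is not the induction itself but the meticulous sign accounting during the $k=1$ step: because the exterior projector introduces a $(-1)^j$ in each recursive expansion and a minus sign under each transposition, verifying that the signs combine to leave the formula with the same positive coefficient $\tfrac{j(n-2t-j+1)}{(t+j)^2}$ requires careful tracking in a way that the symmetric proof, where all expansion coefficients are positive, did not.
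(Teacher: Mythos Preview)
Your proposal is correct and follows exactly the approach the paper intends: the paper states this lemma without proof, remarking only that ``the tensor diagram computation of the $W_t(j)$ coefficients for the exterior powers is very much similar to the symmetric powers,'' so your adaptation of the proof of Lemma~\ref{lemma:sym_theta_symbol} with the exterior recursion (\ref{eq:ext_clasp_expand}), the sign rule (\ref{eq:ext_clasp_swap}), and Lemma~\ref{lemma:ext_partial_trace} is precisely what is required. Your computed value $\Theta_{tj1}^{\wedge}=j(n-2t-j+1)/(t+j)^2$ and the resulting recursion both agree with the stated closed form.
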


Finally, we may compute the $W_t(j)$ coefficients.

\begin{proof}[Proof of Proposition \ref{prop:ext_wtj_coeff}] Like the case for the symmetric powers we have \beqn W_t(j) = (\Theta_{t,w-t,w-t}^{\wedge})^{-1} \sum_{s = \max(0, t + j - w)}^{t} Q_{ts}^{\wedge} \Theta_{j,w-j,t-s}^{\wedge} \eeqn and writing this out explicitly gives \begin{multline*} W_t(j) = \frac{(n - 2t + 1)(w - j)!(n - w - t)!}{(w - t)!(n - w - j)!} \\
\times \sum_{s = \max(0, t + j - w)}^{t} (-1)^s \frac{(n - w + t - j - s)!(s + w - t)!^2}{s! (s - (t + j - w))!(s + n - 2t + 1)! (t - s)!^2}. \end{multline*} Note that this formula also gives the $W_t(j)$ coefficients for the $(n-w)$-th exterior power, hence for $1 \leq w \leq n$ the $W_t(j)$ coefficients are given by \begin{multline*} W_t(j) = \frac{(n - 2t + 1)(r - j)!(n - r - t)!}{(r - t)!(n - r - j)!} \\
\times \sum_{s = \max(0, t + j - r)}^t (-1)^s \frac{(n - r + t - j - s)!(s + r - t)!^2}{s! (s - (t + j - r))!(s + n - 2t + 1)! (t - s)!^2} \end{multline*} where $r = \min(w, n - w)$.
\end{proof}

    \appendix


    \backmatter


    \bibliographystyle{amsalpha-fi-arxlast}

    \bibliography{Bibliography}
\end{document}